\newtheorem{theorem}{Theorem} 
\newtheorem{sumtheorem}{Theorem}
\newtheorem{lemma}{Lemma} 
\newtheorem{corollary}{Corollary}
\newtheorem*{assumption*}{Assumption}
\newtheorem{definition}{Definition}
\newtheorem*{claim*}{Claim}
 \DeclareFontFamily{OMX}{MnSymbolE}{}
 \DeclareSymbolFont{MnLargeSymbols}{OMX}{MnSymbolE}{m}{n}
 \DeclareFontShape{OMX}{MnSymbolE}{m}{n}{
	 <-6>  MnSymbolE5
	<6-7>  MnSymbolE6
	<7-8>  MnSymbolE7
	<8-9>  MnSymbolE8
	<9-10> MnSymbolE9
   <10-12> MnSymbolE10
   <12->   MnSymbolE12
 }{}
 \DeclareFontShape{OMX}{MnSymbolE}{b}{n}{
	 <-6>  MnSymbolE-Bold5
	<6-7>  MnSymbolE-Bold6
	<7-8>  MnSymbolE-Bold7
	<8-9>  MnSymbolE-Bold8
	<9-10> MnSymbolE-Bold9
   <10-12> MnSymbolE-Bold10
   <12->   MnSymbolE-Bold12
 }{}
 \let\llangle\@undefined
 \let\rrangle\@undefined
 \DeclareMathDelimiter{\llangle}{\mathopen}%
					  {MnLargeSymbols}{'164}{MnLargeSymbols}{'164}
 \DeclareMathDelimiter{\rrangle}{\mathclose}%
					  {MnLargeSymbols}{'171}{MnLargeSymbols}{'171}
\newcommand{\const}{\,{\rm const}\,}
\newcommand{\td}{\text{d}}
\newcommand{\lie}{\mathcal{L}}
\newcommand{\ord}{\mathcal{O}}
\newcommand{\spn}{{\rm span}}
\title{\textbf{A classification of supersymmetric Kaluza-Klein black holes with a single axial symmetry}}
\author{David Katona\footnote{d.katona@sms.ed.ac.uk}
\\ \\ 
\small \sl School of Mathematics and Maxwell Institute for Mathematical Sciences, 
\\ 
\small \sl University of Edinburgh, King's Buildings, Edinburgh, EH9 3JZ, UK }
\date{}
\begin{document}
	\maketitle
	\begin{abstract}
		We extend the recent classification of five-dimensional, supersymmetric asymptotically flat black holes with only a single axial symmetry 
		to black holes with Kaluza-Klein asymptotics. This includes a similar class of solutions for which the supersymmetric Killing field 
		is generically timelike, and the corresponding base (orbit space of the supersymmetric Killing field) is of multi-centred 
		Gibbons-Hawking type. These solutions are determined by four harmonic functions on $\mathbb{R}^3$ with simple poles at the centres 
		corresponding to connected components of the horizon, and fixed points of the axial symmetry. The allowed horizon topologies are $S^3$, 
		$S^2\times S^1$, and lens space $L(p, 1)$, and the domain of outer communication may have non-trivial topology with non-contractible 2-cycles. 
		The classification also reveals a novel class of supersymmetric \mbox{(multi-)black} rings for which the supersymmetric Killing field is globally 
		null. These solutions are determined by two harmonic functions on  $\mathbb{R}^3$ with simple poles at centres corresponding to 
		horizon components. We determine the subclass of Kaluza-Klein black holes that 
		can be dimensionally reduced to obtain smooth, supersymmetric, four-dimensional multi-black holes. This gives a classification of 
		four-dimensional asymptotically flat supersymmetric multi-black holes first described by Denef {\it et al}.
	\end{abstract}	
\newpage
\tableofcontents

\newpage

\section{Introduction} \label{sec_intro}

Black holes are in the focus of gravitational research. In four-dimensional vacuum gravity, or Einstein-Maxwell theory, 
asymptotically flat black holes have a surprisingly simple moduli space due to the well-known uniqueness theorems 
(see e.g. \cite{heusler_black_1996,chrusciel_stationary_2012}). In contrast, higher dimensional general relativity has a much richer 
structure, and black hole uniqueness does not hold even in the asymptotically flat vacuum case (for review see e.g. \cite{hollands_black_2012}), which 
became clear with the discovery of rotating vacuum black holes with $S^2\times S^1$ horizon topology, known as black rings \cite{emparan_rotating_2002}. 
Rather surprisingly, for a range of asymptotic charges, black rings coexist with the spherical Myers-Perry black holes \cite{myers_black_1986}, 
providing an explicit example of non-uniqueness.

Much is known in general about higher dimensional stationary black holes. The topology of the horizon is restricted to be of positive Yamabe 
type, i.e. they admit metrics with positive scalar curvature~\cite{galloway_generalization_2006}, which becomes less restrictive 
as we go to higher dimensions. Further restrictions have been derived in the literature for black holes with an axial 
symmetry~\cite{hollands_further_2011}. This assumption was motivated by the rigidity theorem, which states that analytic solutions with 
rotating black holes must also admit an axial $U(1)$ symmetry~\cite{hollands_higher_2007,moncrief_symmetries_2008,hollands_stationary_2009}.  
The topology of the domain of outer communication (DOC) is also restricted by topological censorship. In the asymptotically flat case it is required 
to be simply connected~\cite{friedman_topological_1995}. This has been generalised to asymptotically Kaluza-Klein spacetimes~\cite{chrusciel_topological_2009},
where the quotient space of the DOC by the symmetry group corresponding to translations in the compact dimensions must be simply connected. 
In cases when the spacetime admits biaxial $U(1)^2$ symmetry, a uniqueness theorem for black holes has also been established~\cite{hollands_uniqueness_2008,hollands_uniqueness_2008-1,hollands_uniqueness_2011}, 
however it has been conjectured (for the vacuum case) that solutions with fewer symmetries must exist~\cite{reall_higher_2004}. Evidence for the 
existence of such solutions has long been gathering in the literature~\cite{emparan_new_2010,dias_instability_2010}, but 
the first explicit examples have been constructed only recently in five-dimensional supergravity~\cite{katona_supersymmetric_2023}.

Black hole non-uniqueness is also present in higher dimensional supergravity theories, even among supersymmetric solutions. For five-dimensional 
minimal supergravity the first known black hole was the BMPV solution \cite{breckenridge_d--branes_1997} with a spherical horizon. Later 
a black ring solution \cite{elvang_supersymmetric_2004}, and concentric black ring/black hole solutions \cite{gauntlett_concentric_2005} were also found. 
Asymptotic charges of the latter one can overlap with those of the BMPV solution, and even more surprisingly, they can have greater total horizon area 
than the corresponding BMPV solution. This has since been shown for even single-black hole solutions of this theory \cite{horowitz_comments_2017,breunholder_supersymmetric_2019}. 
This finding is particularly puzzling given that the microscopic derivation of Bekenstein-Hawking entropy in string theory for given charges matches 
the entropy of the BMPV solution \cite{breckenridge_d--branes_1997,strominger_microscopic_1996}. This `entropy enigma' is yet to be resolved, but 
microscopic counting of entropy provides further motivation to determine the moduli space of supersymmetric black holes.

Although the full moduli space is yet to be explored, this theory is known to admit various black hole solutions. On top of the aforementioned 
spherical and ring solutions, black holes with lens space horizons $L(p, 1)$ have been constructed with flat asymptotics \cite{kunduri_supersymmetric_2014,tomizawa_supersymmetric_2016,tomizawa_asymptotically_2017,breunholder_moduli_2019,breunholder_supersymmetric_2019}, 
but later also with Kaluza-Klein asymptotics \cite{tomizawa_kaluza-klein_2018}. An important feature that contributes to the richness of solutions is that this 
theory admits solitonic solutions, termed `bubbling spacetimes', which are smooth, horizonless solutions admitting non-trivial topology in the form 
of non-trivial 2-cycles, supported by magnetic fluxes of the Maxwell field \cite{bena_bubbling_2006,bena_black_2008}. It is also possible to construct black 
hole solutions with `bubbling' domain of outer communication, which include examples that contribute to the single black hole entropy enigma 
mentioned above\cite{kunduri_black_2014,horowitz_comments_2017,breunholder_supersymmetric_2019}.

The general {\it local} form of a supersymmetric solution of minimal five-dimensional supergravity is known \cite{gauntlett_all_2003,bena_one_2005}. 
Using Killing spinor bilinears, one can show that such a solution admits a globally defined causal Killing field, a scalar function, and three 
2-forms. When the supersymmetric Killing field is timelike, the solution takes the form of a timelike fibration over a hyper-K\"ahler base manifold 
for which the three supersymmetric 2-forms are the complex structures. Furthermore, if the solution admits a triholomorphic $U(1)$ isometry (i.e. 
an isometry that preserves the complex structures of the base) the base manifold takes the form of a multi-centred Gibbons-Hawking space 
\cite{hawking_gravitational_1977,gibbons_gravitational_1978,gibbons_hidden_1988}, and the solution is locally fully determined by four harmonic functions on 
$\mathbb{R}^3$ \cite{gauntlett_all_2003}. This is the case if the spacetime has biaxial symmetry ($U(1)^2$) \cite{breunholder_moduli_2019}, or a single 
axial symmetry which preserves the Killing spinor \cite{katona_supersymmetric_2023}.

Using the aforementioned local results, a number of classification theorems have been proven for supersymmetric black holes. The near-horizon 
geometries have been determined in \cite{reall_higher_2004} (assuming that the supersymmetric Killing field becomes timelike outside the horizon), 
and the possible geometries are locally isometric to\footnote{In \cite{reall_higher_2004} $T^3$ has also been derived as a possible geometry, 
however it is excluded by \cite{galloway_generalization_2006}.} $S^3$ or $S^2\times S^1$. In the asymptotically flat case global results are 
known as well. First it was shown that a locally spherical black hole with a supersymmetric Killing field that is timelike on the domain of 
outer communication must be isometric to the BMPV solution \cite{reall_higher_2004}. This assumption, however, is restrictive and excludes the 
majority of the moduli space of black holes (black lenses and black holes in bubbling spacetimes). A common feature of all 
near-horizon geometries, and in fact all the solutions mentioned so far, is biaxial ($U(1)^2$) isometry. In \cite{breunholder_moduli_2019} 
a classification of asymptotically flat black hole solutions with such symmetry has been achieved. These solutions have Gibbons-Hawking 
base with the associated harmonic functions having simple poles at collinear centres on $\mathbb{R}^3$, corresponding to horizon components 
or fixed points of the triholomorphic $U(1)$ symmetry.

Recently, these results have been generalised to asymptotically flat solutions admitting only a single axial symmetry that preserves the 
Killing spinor \cite{katona_supersymmetric_2023}. Similarly to the biaxial case, these solutions are of multi-centred type, with harmonic 
functions having simple poles at generic (not necessarily collinear) points. This provided the first explicit construction of a higher 
dimensional asymptotically flat black hole with just a single axial symmetry, confirming the conjecture of \cite{reall_higher_2004} for 
supersymmetric black holes.

In this paper we will focus on asymptotically Kaluza-Klein solutions, which are asymptotically diffeomorphic to a circle fibration over 
flat Minkowski space. This includes trivial circle products, and also non-trivial fibrations like the Kaluza-Klein monopole. 
Many explicit Kaluza-Klein black hole solutions of five-dimensional supergravity are known in the literature \cite{elvang_supersymmetric_2005,gaiotto_5d_2006,bena_black_2005,ishihara_kaluza-klein_2006-1,ishihara_kaluza-klein_2006,nakagawa_charged_2008,matsuno_rotating_2008,tomizawa_charged_2008,tomizawa_squashed_2009,tomizawa_compactified_2010,tomizawa_exact_2011,tomizawa_general_2013,tomizawa_kaluza-klein_2018}, 
furthermore, a uniqueness theorem for non-supersymmetric, biaxisymmetric, spherical black holes of this theory has also been proven \cite{tomizawa_uniqueness_2010}. 
A classification of more general or supersymmetric solutions is not yet available, however. The main motivation of studying such solutions is 
that they can be dimensionally reduced to four dimensions, which might have more physical relevance for our universe. For five-dimensional minimal 
supergravity, the bosonic sector of the reduced theory contains gravity, two Maxwell fields, and two scalars, a dilaton and an axion. Previously, 
an interesting connection has been unveiled between five-dimensional Kaluza-Klein and four-dimensional asymptotically flat black holes 
of supergravity theories \cite{elvang_supersymmetric_2005,gaiotto_5d_2006,gaiotto_new_2006,behrndt_exploring_2006}. Five-dimensional 
supersymmetric (multi-)black holes/rings with Taub-NUT base space correspond to (multi-)black hole solutions of the four-dimensional theory 
first derived by Denef {\it et al.} \cite{denef_supergravity_2000,denef_split_2001,bates_exact_2011}.

The purpose of this paper is to generalise the results of \cite{katona_supersymmetric_2023} to Kaluza-Klein asymptotics, and classify all 
supersymmetric black hole or soliton solutions with an axial symmetry that `commutes' with the supersymmetry, i.e. it preserves the Killing 
spinor. The latter assumption is the supersymmetric generalisation of the usual requirement that the axial symmetry commutes with the 
stationary symmetry. Furthermore, when considering reductions of the solutions to $4D$, as a result of this assumption we obtain 
supersymmetric solutions in the lower dimensional theory.

The main difference to the asymptotically flat case is that the supersymmetric Killing field is no longer naturally identified with 
the stationary Killing field. Instead, we assume that the stationary Killing field is a constant linear combination of the axial and supersymmetric 
Killing fields. This leads to -- from the five-dimensional perspective -- two qualitatively different classes of solutions depending on whether the supersymmetric Killing field is timelike 
on a dense submanifold or globally null. The former one is similar to the one found in the asymptotically flat case~\cite{katona_supersymmetric_2023},
and its classification is provided by the following theorem (for the full statement see Theorem \ref{thm_classification}).

\begin{sumtheorem}\label{sumthm_TL}
	A supersymmetric, asymptotically Kaluza-Klein (in the sense of Definition \ref{def_KK}) black hole or soliton solution of $D=5$ minimal supergravity with an axial symmetry that 
	preserves the Killing spinor and a supersymmetric Killing field that is not globally null must have a Gibbons-Hawking base (on a dense submanifold), 
	and is globally determined by four associated harmonic functions on $\mathbb{R}^3$ which are of `multi-centred' form, with parameters satisfying a set of 
	algebraic constraints. The centres either correspond to fixed points of the axial Killing field, or connected components of the horizon, each of which has 
	topology $S^3$, $L(p,1)$ or $S^2\times S^1$.
\end{sumtheorem}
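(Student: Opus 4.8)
The plan is to follow the strategy of the asymptotically flat classification \cite{katona_supersymmetric_2023}, adapting each step both to Kaluza-Klein asymptotics and to the fact that the supersymmetric Killing field $V$ is no longer the stationary Killing field but only a summand of it, $T = V + c\,\xi$ with $\xi$ the axial field. First I would invoke the local classification of supersymmetric solutions of $D=5$ minimal supergravity \cite{gauntlett_all_2003,bena_one_2005}: on the open dense set $\mathcal{M}_{\rm tl}$ where $V$ is timelike, the solution is a timelike fibration over a hyper-K\"ahler base $(B,h)$, with the scalar $f$, the one-form $\omega$ and the Maxwell field all built from data on $B$. Since $\xi$ preserves the Killing spinor it commutes with $V$ and descends to a Killing field $\hat\xi$ of $B$, and preservation of the spinor forces $\hat\xi$ to be triholomorphic. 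A triholomorphic circle action on a hyper-K\"ahler four-manifold puts $h$ into Gibbons-Hawking form \cite{hawking_gravitational_1977,gibbons_gravitational_1978,gibbons_hidden_1988}, so on $\mathcal{M}_{\rm tl}$ the solution is determined by four functions $H,K,L,M$ harmonic on an open subset $\hat B\subset\mathbb{R}^3$ — the image of the hyper-K\"ahler moment map of $\hat\xi$ — via the standard formulae, which I would record together with the expressions for $f$, $\omega$ and the 2-form data.

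Next comes the global structure of the orbit space. The stationary field $T$ and the axial field $\xi$ commute (both preserve the Killing spinor and $V$ is Killing), so I would analyse the quotient $\mathcal{B}$ of the domain of outer communication by the generated $\mathbb{R}\times U(1)$ action. Topological censorship for Kaluza-Klein asymptotics \cite{chrusciel_topological_2009} gives that the quotient of the DOC by translations in the compact direction is simply connected; combined with the structure theorem for the $\hat\xi$-action on $B$ (as in \cite{breunholder_moduli_2019,katona_supersymmetric_2023}) this should force $\hat B$ to be $\mathbb{R}^3$ with a finite set of distinguished points removed, the boundary components of $\mathcal{B}$ being of three types: isolated fixed points of $\xi$ in the bulk (``nuts'', where $H$ blows up), connected components of the axis of $\xi$, and connected components of the horizon. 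Using the classification of supersymmetric near-horizon geometries \cite{reall_higher_2004} I would show that the moment-map image of each horizon component is a single point, so that horizons sit ``at centres'', and that no closed causal curves are hidden in the null locus of $V$ (the evanescent ergosurface); this is where the Kaluza-Klein fall-off and global hyperbolicity of the DOC enter.

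Then I would examine the behaviour at each centre. Smoothness of the five-dimensional metric near a bulk fixed point of $\xi$ forces $H$ to have a simple pole whose residue fixes the $2\pi$-periodicity of the Gibbons-Hawking fibre, and forces $K,L,M$ to have at most simple poles there; at a horizon centre, matching to the allowed near-horizon geometries \cite{reall_higher_2004} again yields at most simple poles, with the residue of $H$ determining the horizon topology: $S^3$ when it has unit modulus, $L(p,1)$ when it has modulus $p$, and $S^2\times S^1$ when the horizon surrounds a Gibbons-Hawking centre. Hence $H,K,L,M$ are globally of multi-centred form, $H = h_\infty + \sum_i h_i/r_i$ and likewise for $K,L,M$. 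Finally, imposing the Kaluza-Klein asymptotic conditions of Definition~\ref{def_KK} fixes the constant terms (in particular $h_\infty\neq0$, and the constants in $K,L,M$ so that $T$ is asymptotically timelike and the solution asymptotes to the required circle fibration over $\mathbb{R}^{3,1}$), while smoothness across the axis, absence of closed causal curves at and between the centres, and the correct periodicity of $\xi$ translate into a finite list of algebraic constraints on the residues $h_i,k_i,l_i,m_i$ and the centre positions — the bubble-type equations together with the requisite positivity conditions. For a soliton there is no horizon, so every centre is a nut or an axis point. Assembling all of this gives Theorem~\ref{thm_classification}, of which Theorem~\ref{sumthm_TL} is the summary.

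I expect the main obstacle to be the global step: establishing that the orbit space is exactly $\mathbb{R}^3$ minus finitely many points and balls with the claimed boundary structure, that the triholomorphic action has only the expected fixed-point set and axis components, and above all controlling the locus where $V$ degenerates from timelike to null, so that the Gibbons-Hawking description — available a priori only on the dense set $\mathcal{M}_{\rm tl}$ — extends consistently across it. Carrying this through with the weaker (Kaluza-Klein rather than asymptotically flat) boundary conditions, where several of the rigidity, censorship and asymptotic-analysis inputs must be re-derived, is the technical heart of the argument.
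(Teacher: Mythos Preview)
Your overall strategy matches the paper's, but there are two substantive gaps in the global step and one error in the asymptotic analysis.

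The global step you flag as the main obstacle requires two lemmas you do not mention, both new relative to the asymptotically flat case. Before you can invoke topological censorship for the quotient by the compact direction \cite{chrusciel_topological_2009}, you must prove that the axial Killing field $W$ \emph{is} tangent to the Kaluza-Klein fibres; this is not assumed, and the paper establishes it (Lemma~\ref{lem_Wasymp}) by an asymptotic analysis of Killing fields on Kaluza-Klein ends combined with triholomorphicity (or causality of $V$ when $v_H\neq0$). Second, even once the $U(1)$-orbit space $\hat\Sigma$ is known to be simply connected, this does not let you integrate closed $U(1)$-invariant 1-forms such as $\iota_W X^{(i)}$ unless $\hat\Sigma$ is a \emph{smooth} manifold away from the fixed points --- which fails if there are exceptional orbits (orbits with finite nontrivial isotropy). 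The paper rules these out (Lemma~\ref{lem_exceptional}) by a direct argument using that $W$ preserves three linearly independent 2-forms; this step has no counterpart in \cite{katona_supersymmetric_2023}, where simple connectedness of the DOC itself was available and one integrated on the spacetime rather than on the orbit space. Without these two lemmas the functions $x^i$ are only local and the multi-centred argument does not get started.

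Your claim that the Kaluza-Klein asymptotics force $h_\infty\neq 0$ is wrong. The paper finds two sub-cases within the timelike class: an \emph{asymptotically timelike} case where $f\to 1$ and indeed $h=(\tilde L\gamma)^{-2}>0$, and an \emph{asymptotically null} case where $f\to 0$ at infinity (yet $V$ is still timelike on a dense set) with $h=0$, $k=\tilde L^{-1}$, $l=0$, $m=-\tilde L/2$ (equations \eqref{eq_harmoniccosntants_AT_thm}--\eqref{eq_harmoniccosntants_AN_thm}). In the latter, the condition that $V$ not be globally null becomes the requirement \eqref{eq_thm_TLcond} that some $h_i\neq 0$. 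Your scheme would miss this family entirely.

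Two smaller corrections: there are no ``axis components'' in the orbit space --- the triholomorphic $U(1)$ on a hyper-K\"ahler base has only isolated fixed points (nuts), bolts being excluded; and the $S^2\times S^1$ horizon arises precisely when the residue $h_i=0$, not when the horizon ``surrounds a Gibbons-Hawking centre''.
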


Despite the similarity to the asymptotically flat classification, there are technical differences in its proof. The main complication 
comes from the fact that the DOC is not simply connected in general, hence certain closed 1-forms do not necessarily define the global functions 
that play a key role in the asymptotically flat proof. In order to overcome this, first we show that the axial Killing field must 
be tangent to the Kaluza-Klein direction, then after excluding the possibility of orbits with a discrete isotropy group 
(exceptional orbits), we apply topological censorship to deduce 
simple connectedness of the orbit space of the axial Killing field~\cite{chrusciel_topological_2009}. This allows us to define certain 
$U(1)$-invariant functions globally. From this point, the proof is almost identical to that of the asymptotically flat case.

The lack of exceptional orbits follows purely from the axial Killing field preserving three linearly independent two-forms (Killing spinor 
bilinears, which are the complex structures of the hyper-K\"ahler base in the timelike case). This result, together with the 
orientability of the spacetime, restricts the possible horizon topologies to $S^3$, $L(p,1)$ or $S^2\times S^1$. These are the only 
allowed\footnote{Only orbifolds with positive Euler characteristic are allowed~\cite{hollands_further_2011}.} Seifert three-manifolds (with 
orientable fibres and base) that do not contain any orbifold points. Such orbifold points would require the presence of exceptional orbits 
in the DOC~\cite{hollands_further_2011}, which are ruled out as mentioned above.

Even in the case when the supersymmetric Killing field is timelike on a dense submanifold, the limit 
of its norm at infinity can be zero. We call this case {\it asymptotically null}, in contrast to the {\it asymptotically timelike} case, 
when we normalise the supersymmetric Killing field to have unit norm at infinity. These should not be confused with  
the globally null case, which we detail next.

The other main class of solutions contains spacetimes on which the supersymmetric Killing field is globally null, for which 
we have the following result (the full statement can be found in Theorem \ref{thm_classification_null}).
\begin{sumtheorem}\label{sumthm_null}
	An asymptotically Kaluza-Klein (in the sense of Definition \ref{def_KK}), supersymmetric black hole or soliton solution of 
	$D=5$ minimal supergravity with an axial Killing field $W$ preserving the Killing spinor and for which 
	the supersymmetric Killing field $V$ is globally null has a metric of the form 
	\begin{align}
		g &= -\frac{1}{\mathcal{G}}(\mathcal{Q}\td u^2 + 2\td u\td v) + \mathcal{G}^2\td x^i\td x^i\;, \label{eq_sum_null}
	\end{align}
	where $W=\partial_u$, $V=\partial_v$, and  $\mathcal{G},\mathcal{Q}$ are harmonic functions on $\mathbb{R}^3$ with simple poles at 
	centres corresponding to connected components of the horizon, each with topology $S^2\times S^1$.
\end{sumtheorem}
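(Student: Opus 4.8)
The plan is to follow the architecture of the proof of Theorem~\ref{sumthm_TL}, but starting from the \emph{null} branch of the local classification of supersymmetric solutions of $D=5$ minimal supergravity \cite{gauntlett_all_2003}, where much less freedom remains. In that branch the hypothesis that $V$ is Killing and globally null supplies local coordinates $(u,v,x^i)$ with $V=\partial_v$, a conjugate null coordinate $u$, and a three-dimensional ``transverse'' space carrying the coordinates $x^i$; the metric, the Maxwell field and the three scalar--two-form bilinears are then encoded in a function $\mathcal{G}$, a function $\mathcal{Q}$, a one-form $\omega=\omega_i\,\td x^i$ and a transverse three-metric, all independent of $v$, and the supersymmetry equations together with the Maxwell and Einstein equations reduce to a linear system on the transverse space. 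The first genuinely new input is the axial Killing field $W$: since $W$ preserves the Killing spinor it commutes with $V$ and preserves all three bilinears, so it acts as an isometry on the transverse geometry; feeding in the assumption that the stationary Killing field is a constant combination of $V$ and $W$ together with the asymptotically Kaluza--Klein fall-off of Definition~\ref{def_KK}, one can arrange $W=\partial_u$ (modulo adding a multiple of $V$ and constant rescalings) and conclude in particular that $g_{uv}=-1/\mathcal{G}$ is nowhere zero where $\mathcal{G}$ is finite, so $W$ has \emph{no} fixed points on the domain of outer communication. This is precisely why, unlike in Theorem~\ref{sumthm_TL}, every centre will be a horizon component and none a fixed point of the axial symmetry.

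Next I would settle the global structure. Exactly as in the proof of Theorem~\ref{sumthm_TL}, the fact that $W$ preserves three linearly independent two-forms rules out orbits with discrete isotropy; topological censorship for asymptotically Kaluza--Klein spacetimes \cite{chrusciel_topological_2009}, applied to the orbit space of $W$, then shows the relevant quotient is simply connected. This is the step that makes the closed one-form $\omega$ exact, so that it can be removed by a shift $v\mapsto v+\lambda(x)$, and that promotes the locally defined $\mathcal{G}$ and $\mathcal{Q}$ to global functions on $\mathbb{R}^3$; together with the field equations it also forces the transverse metric to be the flat metric on $\mathbb{R}^3$. After this gauge fixing the metric is exactly $g=-\mathcal{G}^{-1}(\mathcal{Q}\,\td u^2+2\,\td u\,\td v)+\mathcal{G}^2\,\td x^i\td x^i$, and the remaining content of the supersymmetry and field equations is the pair of decoupled statements that $\mathcal{G}$ and $\mathcal{Q}$ are harmonic on $\mathbb{R}^3$.

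It then remains to analyse the poles of $(\mathcal{G},\mathcal{Q})$ and the behaviour at infinity. A point of $\mathbb{R}^3$ where $\mathcal{G}$ or $\mathcal{Q}$ diverges is either an interior singularity, which is excluded, or a connected component of the horizon; a near-horizon analysis there---on this branch the horizon is a degenerate Killing horizon of a combination of $V$ and $W$, with model near-horizon geometry an $AdS_3\times S^2$ throat---shows that $\mathcal{G}$ and $\mathcal{Q}$ must both have simple poles at the point, with compatible leading coefficients, and that the horizon cross-section is a circle bundle over the $S^2$ linking the pole, necessarily the trivial bundle $S^2\times S^1$ (a non-trivial bundle would require an exceptional orbit, which is forbidden; equivalently, $S^2\times S^1$ is the only such Seifert three-manifold free of orbifold points and consistent with orientability). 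Matching the asymptotic end to the Kaluza--Klein model of Definition~\ref{def_KK} fixes the constant parts and the residues of $\mathcal{G}$ and $\mathcal{Q}$, and a concluding check that the metric admits no closed timelike curves and is regular throughout the domain of outer communication finishes the argument.

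I expect the main obstacle to be the global step. The local null-case equations are permissive about the topology of the transverse space and about the $u$-dependence of the data, so everything depends on extracting enough from the asymptotically Kaluza--Klein hypothesis, the absence of exceptional orbits, and topological censorship to kill that $u$-dependence, trivialise the transverse space to $\mathbb{R}^3$, and make the cohomology class of $\omega$ vanish. A secondary difficulty is the near-horizon analysis in the globally null setting: since $V$ is null everywhere, including on the horizon, one has to be careful about which Killing field generates the horizon and about exactly which regularity conditions this places on the orders and coefficients of the poles of $\mathcal{G}$ and $\mathcal{Q}$.
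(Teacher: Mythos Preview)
Your overall architecture is sound, but there is a genuine gap in the middle: you have misidentified both the data carried by the null local form and the mechanism by which the cross term is eliminated. In the null branch of \cite{gauntlett_all_2003} (after imposing $W=\partial_u$, which you correctly anticipate via Lemma~\ref{lem_Wnull}), the solution is encoded not just in $\mathcal{G}$, $\mathcal{Q}$ and a one-form, but in \emph{three} harmonic functions $\mathcal{G},\mathcal{K},\mathcal{Q}_0$ on $\mathbb{R}^3$. The one-form $\bm b$ appearing in the metric (your $\omega$) is \emph{not} closed; it satisfies $\star_3\td(\mathcal{G}^3\bm b)=\mathcal{G}\td\mathcal{K}-\mathcal{K}\td\mathcal{G}$, so simple connectedness of the orbit space does nothing to remove it. Moreover $\mathcal{Q}$ is a priori not harmonic: it obeys the inhomogeneous Poisson equation~(\ref{eq_Fcalc}) with a source quadratic in $\bm b$. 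Your claim that ``the remaining content of the supersymmetry and field equations is the pair of decoupled statements that $\mathcal{G}$ and $\mathcal{Q}$ are harmonic'' is therefore not yet justified.

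The paper closes this gap with a step you are missing entirely (Lemma~\ref{lem_K0}): the near-horizon regularity analysis produces, at each centre, the linear constraint $\gamma_0\kappa_{-1}-\kappa_0\gamma_{-1}=0$ on the local expansion coefficients of $\mathcal{G}$ and $\mathcal{K}$. Written globally this is a homogeneous linear system $A_{ij}\kappa_j=0$ for the residues $\kappa_i$ of $\mathcal{K}$, and because $\mathcal{G}>0$ on the DOC forces all $\gamma_i>0$, the matrix $A$ is strictly diagonally dominant and hence invertible. This forces $\mathcal{K}\equiv 0$, whereupon $\bm b$ becomes pure gauge (Corollary~\ref{cor_a}), the source in~(\ref{eq_Fcalc}) vanishes, and $\mathcal{Q}=\mathcal{Q}_0$ is harmonic. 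Without this argument you cannot reach the form~(\ref{eq_sum_null}). A minor secondary point: the transverse three-metric is already flat $\delta_{ij}$ from the local classification in \cite{gauntlett_all_2003}; this is not something you need to extract from the global analysis.
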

To our knowledge these `null' Kaluza-Klein black ring solutions have not been previously described. Similar `null' 
solutions have been previously found in~\cite{gibbons_higher-dimensional_1995}, which describe static black strings, 
those, however, cannot be compactified to obtain smooth, asymptotically Kaluza-Klein black holes, hence they are not 
part of our classification. As the supersymmetric Killing field does not become timelike outside the horizon in this case, 
the proof of the near-horizon classification in~\cite{reall_higher_2004} is no longer valid. Therefore, we extend this 
near-horizon analysis (with no symmetry assumptions) to the null case, and we find that the near-horizon 
geometry in the null case agrees with the null limit of the timelike case. This is in agreement with the results 
of~\cite{kayani_symmetry_2018} in which it has been shown that near-horizon geometries of this theory are necessarily 
maximally supersymmetric.

It turns out that, if one relaxes a condition on the harmonic functions of Theorem \ref{sumthm_TL} so that they allow the supersymmetric Killing 
field to be globally null (which is {\it a priori} not clear that one is allowed to do), then one obtains precisely the solutions of Theorem \ref{sumthm_null}. 
Presumably, this is because they have common higher dimensional origin, as both timelike and null solutions can be uplifted -- at least locally -- to obtain supersymmetric solutions of six dimensional 
minimal supergravity~\cite{gutowski_all_2003}. A general feature of the six-dimensional solutions is that the supersymmetric Killing field is 
null everywhere, similarly to the solutions in Theorem \ref{sumthm_null}. Interestingly, the cartesian coordinates in both Theorem \ref{sumthm_TL}  
and \ref{sumthm_null} (those of the Gibbons-Hawking base, and $x^i$ in (\ref{eq_sum_null}), respectively) originate from a Gibbons-Hawking 
base that is used to construct the six-dimensional solutions. It would be interesting to investigate the classification from a 
six-dimensional perspective.

In all cases we find that the axial Killing field is tangent to the Kaluza-Klein direction, thus it is natural to consider the 
reduction of these solutions to four dimensions, which are also supersymmetric. We prove that the obtained four-dimensional solution 
is smooth on and outside the horizon provided that there are no fixed points of the $U(1)$ Killing field. This is automatically true in the null class, but restricts 
the timelike class. Conversely, given a four-dimensional supersymmetric, asymptotically flat black hole solution of a certain 
four-dimensional supergravity with two scalar fields and Maxwell fields, it uplifts to a supersymmetric, asymptotically Kaluza-Klein solution of minimal supergravity in $D=5$.
A condition for a smooth uplift is that one of the Maxwell fields corresponds to a principal $U(1)$-bundle, and hence its magnetic 
charges are quantised. Thus, with this assumption, we obtain a classification theorem of four-dimensional supersymmetric black holes 
of the theory (for the detailed statement see Theorem \ref{thm_class_4D}).

\begin{sumtheorem}\label{sumthm_4D}
	Consider a supersymmetric, asymptotically flat black hole solution of $D=4$ $\mathcal{N}=2$ supergravity coupled to a vector multiplet 
	in which one of the Maxwell fields is the curvature of a connection on a principal $U(1)$-bundle over the spacetime. Furthermore, assume 
	that the supersymmetric Killing field is timelike on the DOC. Then the solution must belong to the class of multi-black holes derived in 
	\cite{denef_supergravity_2000,denef_split_2001,bates_exact_2011}.
\end{sumtheorem}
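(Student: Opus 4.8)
The strategy is to reduce Theorem C to Theorem A (the timelike classification, sumthm_TL) by \emph{uplifting} the given four-dimensional solution to five dimensions and then \emph{reducing back}. First I would set up the Kaluza-Klein correspondence precisely: a supersymmetric solution of $D=4$ $\mathcal{N}=2$ supergravity coupled to a vector multiplet has a bosonic sector with two Maxwell fields and two real scalars (a dilaton and an axion), which is exactly the theory one obtains by dimensionally reducing minimal $D=5$ supergravity on a circle. The hypothesis that one of the Maxwell fields, say $F^{(1)}$, is the curvature of a connection on a principal $U(1)$-bundle $P \to M_4$ means its magnetic charges through any 2-cycle lie in $2\pi\mathbb{Z}$; this is precisely the integrality condition needed to build the total space of $P$ as a smooth five-manifold with a free $U(1)$ action, on which the $D=5$ fields are globally defined. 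I would verify that the uplifted data $(g_5, F_5)$ solves the equations of motion of minimal $D=5$ supergravity and, crucially, that supersymmetry uplifts: the $D=4$ Killing spinor lifts to a $D=5$ Killing spinor invariant under the $U(1)$ action (this is where the ``one Maxwell field is a principal bundle curvature'' hypothesis does its work, since the fibre translation is then an isometry preserving the spinor).

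Next I would check the two structural hypotheses needed to invoke Theorem A. The axial symmetry: the $U(1)$ action on $P$ generated by fibre translation is a spacelike isometry with closed orbits that preserves the Killing spinor by construction, so it is an admissible axial Killing field $W$ in the sense of the five-dimensional classification. The asymptotics: asymptotic flatness of the $D=4$ solution, together with the specified fall-off of $F^{(1)}$ (with its quantized magnetic charge at infinity, i.e. a Kaluza-Klein monopole charge), translates into the uplifted metric being asymptotically Kaluza-Klein in the sense of Definition \ref{def_KK} --- the $D=5$ spacetime is asymptotically a (possibly nontrivially fibred) circle bundle over $\mathbb{R}^{1,3}$. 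The horizon: a smooth $D=4$ black hole horizon (a 2-sphere cross-section) lifts to a smooth $D=5$ horizon cross-section which is an $S^1$-bundle over $S^2$, i.e. a lens space $L(p,1)$ or $S^2 \times S^1$, consistent with the allowed topologies in Theorem A. Since the $D=4$ supersymmetric Killing field $V$ is assumed timelike on the DOC, the uplifted five-dimensional supersymmetric Killing field is timelike on a dense submanifold (it is the same vector field $V$, now on the five-manifold, and it can only fail to be timelike where the extra fibre direction conspires --- a measure-zero set), so we are in the timelike branch, not the globally null branch.

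Now Theorem A applies: the uplifted solution has a multi-centred Gibbons-Hawking base and is globally determined by four harmonic functions $H, K, L, M$ on $\mathbb{R}^3$ with simple poles at centres, subject to the stated algebraic constraints, with each centre a horizon component or a fixed point of $W$. Because we additionally assume (via smoothness of the $D=4$ solution on and outside the horizon, as discussed in the excerpt) that there are \emph{no fixed points} of $W$ --- every fixed point of the fibre action would be a point where the bundle $P$ degenerates, contradicting that $F^{(1)}$ is a genuine principal bundle curvature --- every centre is a horizon component. Reducing the Gibbons-Hawking form back along $W = \partial_\psi$ gives the four-dimensional metric, scalars and Maxwell fields explicitly in terms of the four harmonic functions; I would then match this, variable by variable, to the multi-centred solutions of Denef et al.\ \cite{denef_supergravity_2000,denef_split_2001,bates_exact_2011}, whose defining data is likewise a collection of harmonic functions on $\mathbb{R}^3$ with poles carrying the black-hole charges, constrained by the integrability (``bubble equations'') conditions. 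The algebraic constraints from Theorem A map precisely onto Denef's integrability conditions, completing the identification.

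The main obstacle I expect is \textbf{the asymptotic analysis in the uplift/reduction dictionary}: one must show that the rather weak-sounding hypothesis ``$F^{(1)}$ is a principal $U(1)$-bundle curvature'' plus $D=4$ asymptotic flatness forces the five-dimensional solution to be asymptotically Kaluza-Klein in the precise, strong sense of Definition \ref{def_KK} (including the required rate of approach and the global bundle structure near infinity), rather than merely ``having a compact direction.'' A secondary subtlety is checking that supersymmetry survives the uplift as a \emph{globally} defined, $U(1)$-invariant Killing spinor --- locally this is standard, but the global statement again rests on the principal-bundle hypothesis and on the spin structure lifting compatibly. Once the solution is known to satisfy the hypotheses of Theorem A, the remainder is essentially bookkeeping: tracking how the four harmonic functions and their pole data reduce, and recognizing Denef's family on the nose.
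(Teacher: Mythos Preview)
Your overall strategy --- uplift along the principal $U(1)$-bundle, verify the five-dimensional assumptions, apply the classification, reduce back --- is exactly the paper's approach. The identification of the asymptotic analysis and the global spinor as the delicate points is also apt, and your argument that there are no fixed points (because a principal bundle action is free) is correct and is indeed how the paper implicitly uses assumption~\ref{assumption_bundle}.

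However, there is a genuine gap. You claim that since $V^{(4)}$ is timelike on the four-dimensional DOC, the uplifted supersymmetric Killing field $V$ is timelike on a dense set, so ``we are in the timelike branch, not the globally null branch.'' This is not correct. The five-dimensional $V$ is \emph{not} simply the same vector field: it satisfies $\pi_*V=V^{(4)}$ but in general acquires a vertical (fibre) component, $V=\partial_0+c\,\partial_\psi$ in adapted coordinates. Its five-dimensional norm is
\[
g(V,V)=e^{\Phi/\sqrt{3}}\,g^{(4)}(V^{(4)},V^{(4)})+e^{-2\Phi/\sqrt{3}}[\eta(V)]^2,
\]
and the positive second term can exactly cancel the negative first term \emph{everywhere}. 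Indeed, the solutions of Theorem~\ref{thm_classification_null} (the null class) reduce to four-dimensional solutions (Theorem~\ref{thm_5Dto4Dnull}) in which $V^{(4)}=\partial_v$ is strictly timelike on the DOC, with $g^{(4)}(V^{(4)},V^{(4)})=-(-\mathcal{Q}\mathcal{G}^3)^{-1/2}<0$; yet by construction $V$ is globally null upstairs. So the hypothesis that $V^{(4)}$ is timelike does \emph{not} place you in the timelike branch of the five-dimensional classification.

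The fix is simply to invoke \emph{both} Theorem~A and Theorem~B: what the four-dimensional timelikeness of $V^{(4)}$ actually buys you is that the determinant $N=-e^{-\Phi/\sqrt{3}}g^{(4)}(V^{(4)},V^{(4)})$ is strictly positive on the DOC, i.e.\ assumption~\ref{ass5D_timelike} of the five-dimensional classification holds. Both branches then feed into Theorems~\ref{thm_5Dto4D} and~\ref{thm_5Dto4Dnull} respectively, and (as noted in Remark~\ref{rem_TLnull4D}) the null class is subsumed in the timelike formulae via the identification $H=0$, $K=\mathcal{G}$, $L=0$, $M=\mathcal{Q}/2$. Either way one lands in the Denef family, so your conclusion survives --- but the argument must route through both classification theorems, not just Theorem~A. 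A smaller omission: you should also check global hyperbolicity of the uplifted DOC, which the paper does by pushing an inextendible causal curve down to $\mathcal{M}_4$ and using the four-dimensional Cauchy surface.
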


The structure of this paper and the proof of theorems are as follows. In Section \ref{sec_recap} we briefly summarise the general 
form of such solutions based on \cite{gauntlett_all_2003}, and present the local solutions in the timelike and null case separately.
In Section \ref{sec_asymptotics} we derive that the $U(1)$ Killing field must be tangent to the Kaluza-Klein direction, by using that it preserves 
the Killing spinor bilinears, and showing that any Killing field must approach that of a flat space with a compact direction. In Section \ref{sec_NHnull} 
we perform a near-horizon analysis of the null case (following and completing that of~\cite{reall_higher_2004}).
In Section \ref{sec_orbit} we analyse the structure of the orbit space. In particular, we show that any exceptional orbits are excluded,  
then using topological censorship~\cite{chrusciel_topological_2009} we show that the cartesian coordinates defined by the axial Killing field and 
the two-form bilinears define a global chart on the orbit space. This also restricts the general form of the solution to those with associated 
harmonic functions having simple poles in both the null and timelike case. 
In Section \ref{sec_sufficient} we derive the sufficient conditions for smoothness of such multi-centred solutions. In Section 
\ref{ssec_classifictaion} we state and prove the main classification theorems (Theorem \ref{thm_classification}-\ref{thm_classification_null}). 
In Section \ref{sec_reduction} we perform a Kaluza-Klein reduction to four dimensions, and show that the four-dimensional 
solution is smooth on and outside the horizon provided there are no fixed points of the axial symmetry in $5D$. 
Finally, we use this correspondence to prove a classification theorem for four-dimensional asymptotically flat black hole solutions to minimal 
supergravity coupled to a vector multiplet (Theorem \ref{thm_class_4D}).


\section{Supersymmetric solutions in five dimensions with axisymmetry} \label{sec_recap}

The theory in consideration is the bosonic sector of five-dimensional minimal supergravity, given by the action 
\begin{equation}
	S = \frac{1}{16\pi G}\int R\star 1 -2 F\wedge\star F -\frac{8}{3\sqrt{3}}F\wedge F\wedge A\;, \label{eq_action}
\end{equation}
where $F=\td A$. We work in the conventions of~\cite{reall_higher_2004}, so the signature of the metric is `mostly plus'. 
The most general \textit{local} form of a supersymmetric solution has been derived in~\cite{gauntlett_all_2003} using Killing spinor 
bilinears which globally define a function $f$, a vector field $V$, and three 2-forms $X^{(i)}$, $i=1,2,3$. These satisfy the following 
algebraic equations.
\begin{align}
	g(V,V)&=-f^2\;, \label{eq_Vnorm}\\
	\iota_VX^{(i)}&=0\;, \label{eq_VX}\\
	\iota_V\star X^{(i)}&=-f X^{(i)}\;,\label{eq_VXdual} \\
	X^{(i)}_{\mu\lambda}X^{(j)}_{\nu}{}^\lambda &= \delta_{ij}(f^2g_{\mu\nu}+V_\mu V_\nu)-f\epsilon_{ijk}X^{(k)}_{\mu\nu}\;, \label{eq_XXspacetime}
\end{align}
where $\epsilon_{ijk}$ is the Levi-Civita symbol with $\epsilon_{123}=1$. Using the Killing spinor equation, one further obtains 
\begin{align}
	&\td f = -\frac{2}{\sqrt{3}}\iota_VF\;,\label{eq_VF} \\
	&\nabla_{(\mu}V_{\nu)}=0\;,\\
	&\td V = -\frac{4}{\sqrt{3}}fF-\frac{2}{\sqrt{3}}\star(F\wedge V)\;,\label{eq_starVF}\\
	&\nabla_{\mu}X^{(i)}_{\nu\rho} = \frac{1}{\sqrt{3}}\left(2F_\mu{}^\sigma(\star X^{(i)})_{\sigma\nu\rho}-2F_{[\nu}{}^\sigma(\star X^{(i)})_{\rho]\mu\sigma}+g_{\mu[\nu}F^{\sigma\kappa}(\star X^{(i)})_{\rho]\sigma\kappa}\right)\;.\label{eq_DX}
\end{align}
In particular, $V$ is Killing, $X^{(i)}$ are closed, and $V$ preserves $X^{(i)}$ and the Maxwell field. 
Further specification of the local solution depends on whether $V$ is timelike or null in some open region, which we will describe in 
detail in the following sections.

In this paper we are considering asymptotically Kaluza-Klein spacetimes, for which we use the following 
definition. 
\begin{definition} \label{def_KK}
	$(\mathcal{M}, g)$ is stationary and asymptotically Kaluza-Klein, if 
	\begin{enumerate}[label=(\roman*)]	
		\item the domain of outer communication (DOC), denoted by $\llangle \mathcal{M} \rrangle$, has an end diffeomorphic to 
		$\mathbb{R}\times\Sigma_0$, with $\Sigma_0$ being a circle fibration over $\mathbb{R}^3\setminus B^3$, where $B^3$ denotes a 3-ball, 
		and the $\mathbb{R}$ factor corresponds to orbits of a timelike Killing field (stationary Killing field),\label{def_KK_top}
		\item the metric on this end can be written as $g_{\mu\nu}=\tilde g_{\mu\nu}+ \mathcal{O}(\tilde r^{-\tau})$ for some decay rate $\tau>0$ and
		\begin{equation}
			\tilde g=-\td u^0\td u^0 + \delta_{ij}\td u^i\td u^j + \tilde L^2\td \tilde\psi^2\;,\label{eq_metric_AKK}
		\end{equation}
		where $\tilde L$ is a positive constant, $u^0$ and $(u^i)_{i=1}^3$ are the pull-back of the cartesian coordinates on $\mathbb{R}\times \mathbb{R}^3$, 
		i.e. $\partial_0$ is the stationary Killing field, and 
		$\tilde r :=\sqrt{u^iu^j\delta_{ij}}$, $\tilde\psi$ is a $4\pi$-periodic coordinate on the fibres, and in these coordinates 
		$\partial_i^l\partial_{\tilde\psi}^k g_{\nu\rho} =\mathcal{O}(\tilde r^{-\tau-l})$ for  $1\le k+l\le 3$,\label{def_KK_FO}
		\item the components of the Ricci tensor in these coordinates fall off as $R_{\mu\nu}=\ord(\tilde r^{-\tau-2})$ and its $\mathbb{R}^3$ 
		derivatives as $\partial_iR_{\mu\nu}=\ord(\tilde r^{-\tau-2})$.\label{def_KK_Ricci}
	\end{enumerate}  \label{assumption_falloff}
\end{definition}

\noindent {\bf Remarks.}
\begin{enumerate}
	\item Definition \ref{def_KK} only requires that the spacetime is a circle fibration at infinity. This includes trivial 
	fibrations (standard Kaluza-Klein asymptotics), as well as non-trivial fibrations such as the Kaluza-Klein monopole. In terms of the `sphere' at infinity, Definition \ref{def_KK} allows not only $S^2\times S^1$, but also spherical geometries such as (squashed) $S^3$ or lens spaces, when the fibration is non-trivial. The nature of the fibration is determined by subleading terms (in coordinates of Definition \ref{def_KK}) in the metric 
	(see later in Section \ref{ssec_asymptotics}).
	\item Definition \ref{def_KK} requires that the circle direction has bounded length at infinity. It is motivated by the physical picture that one obtains a four-dimensional effective description when the Kaluza-Klein direction is small enough. Alternatively, one may consider different definitions, such as cubic volume growth of a spatial slice at infinity. This alternative definition also allows for the possibility of a spatial geometry such as the euclidean Kerr instanton, where the circle direction has unbounded length. Such spacetimes are not included in the present work. Nevertheless, it is an interesting question whether such supersymmetric solutions exist.
	\item The fall-off of the components of the Ricci tensor is in fact equivalent (through the Einstein equations) to 
	requiring that the Maxwell field falls off as $F\sim \ord(\tilde{r}^{-\tau/2-1})$ at spatial infinity. 
	This can be seen by looking at $T_{00}$, which is a positive definite quadratic in the components of the Maxwell field. The fall-off for 
	the third derivative of the metric and the derivative of the Ricci tensor is a technical assumption, to ensure that components of 
	the Riemann tensor fall off as $R^\mu{}_{\nu\lambda\kappa}=\ord(\tilde r^{-\tau-2})$ as shown in Appendix \ref{app_asymptotics}. 
	Alternatively, one can assume the fall-off of the Riemann tensor directly. 
\end{enumerate}

We now list our assumptions. These are, except for asymptotics, equivalent to Assumptions 1 and 2 of~\cite{katona_supersymmetric_2023}. 
We assume that $(\mathcal{M}, g, F)$ is a solution of (\ref{eq_action}) such that
\begin{enumerate}[label=(\roman*)]
	\item it admits a globally defined Killing spinor $\epsilon$ (i.e. supersymmetric), \label{ass5D_susy}
	\item the DOC is globally hyperbolic, that is, it admits a Cauchy surface $\Sigma$, \label{ass5D_GH}
	\item the spacetime is stationary and asymptotically Kaluza-Klein as in Definition \ref{def_KK},\label{ass5D_AKK}
	\item the supersymmetric Killing field $V$ is complete, \label{ass5D_Vcomplete}
	\item the horizon $\mathcal{H}$ admits a smooth, compact cross-section (which may not be connected), \label{ass5D_horizon}
	\item the closure of the Cauchy surface $\Sigma$ in $\mathcal{M}$ is a union of a compact set and an asymptotically Kaluza-Klein end,\label{ass5D_compact}
	\item $(\mathcal{M},g)$ admits a smooth Killing field $W$ with periodic orbits that preserves the Killing spinor $\epsilon$ and the Maxwell-field $F$,\label{ass5D_W}
	\item the stationary Killing field is an $\mathbb{R}$-linear combination of $V$ and $W$,\label{ass5D_stationary}
	\item at each point of the DOC there exists a timelike linear combination of $V$ and $W$, \label{ass5D_timelike}
	\item the metric and the Maxwell field are smooth ($C^\infty$) on and outside the horizon. \label{ass5D_smooth}
\end{enumerate}

From global hyperbolicity of the DOC and completeness of $V$ follows that the DOC has topology 
$\llangle\mathcal{M}\rrangle\simeq\mathbb{R}\times \Sigma$, where $\Sigma$ 
is a smooth manifold, which we can identify with the orbit space of $V$\footnote{
	See remark 2 after Assumption 1 in~\cite{katona_supersymmetric_2023}.
}. 

We emphasise that we do {\it not} assume that the stationary Killing field coincides with the supersymmetric Killing field, in contrast to 
the asymptotically flat solutions considered in \cite{katona_supersymmetric_2023}. Neither do we assume that the axial Killing field is 
tangent to the Kaluza-Klein direction at infinity. In Section \ref{sec_asymptotics} we show that the latter is indeed true, so the $U(1)$ 
Killing field (and the length of its orbits) is bounded in the asymptotic region. This means that its linear combination with the 
causal vector field $V$ can be timelike, so assumption \ref{ass5D_stationary} is natural to make. The solution in the coordinates adapted 
to these Killing fields generally will not be in the rest frame. Indeed, this is the case for the asymptotically Kaluza-Klein black 
holes considered in~\cite{elvang_supersymmetric_2005,tomizawa_kaluza-klein_2018}.

Another important difference to the asymptotically flat case is that in general the DOC is not simply connected. 
Therefore, we cannot {\it a priori} assume that closed 1-forms globally define scalar functions. However, in Section \ref{sec_orbit} we
will show that this is indeed the case for the $U(1)$-invariant one-forms considered during the proof.

An important consequence of assumption \ref{ass5D_W} is that all the Killing spinor bilinears are preserved by $W$, that is
\begin{align}
	W(f)=0\;, \qquad [V,W]=0\;, \qquad \lie_W X^{(i)}=0\;.\label{eq_Wbilinears}
\end{align}
The assumption that $W$ preserves $F$ is redundant. We will see that either the timelike or null region is dense in the DOC, and in both cases 
$F$ can be expressed by $U(1)$-invariant quantities (\ref{eq_Ftimelike}) and (\ref{eq_Maxwell_null}). Thus, by continuity of $\lie_WF$, the 
Maxwell field is $U(1)$-invariant everywhere.

\subsection{Timelike case with axial symmetry}

Let us define $\widetilde{\mathcal{M}}\subset\llangle \mathcal{M}\rrangle$ as the region where $f\neq 0$, that is, where $V$ is timelike. 
Around each point of $\widetilde{\mathcal{M}}$ one can define a local chart in which the metric is given by
\begin{equation}
	g|_{\widetilde{\mathcal{M}}} = -f^2(\td t + \omega)^2 + \frac{1}{f}h\;, \label{eq_hdef}
\end{equation}
where $V=\partial_t$, $\omega$ and $h$ is a 1-form and metric on the four-dimensional riemannian base manifold 
$\mathcal{B}:=\widetilde{\mathcal{M}}/\mathbb{R}_V\subset \Sigma$. Note that $h$ is invariantly defined on $\widetilde{\mathcal{M}}$, 
and $\omega$ is defined by $\iota_V \omega=0$ and $\td \omega=-\td ( f^{-2} V)$ up to a gradient. Since 
$\lie_V X^{(i)}=\iota_V X^{(i)}=0$, $X^{(i)}$ can be regarded as 2-forms on $\mathcal{B}$. Using properties of the
Killing spinor, one can show that $(\mathcal{B}, h, X^{(i)})$ is hyper-K\"ahler, i.e.\footnote{
	Unless otherwise stated, Greek letters $\mu,\nu...$ denote spacetime indices, while Latin letters $a,b,c...$ denote base space indices, 
	and $i,j,k...$ denote $\mathbb{R}^3$ indices. Indices are raised and lowered with $g$, $h$ and $\delta$ on the spacetime, base space,
	and $\mathbb{R}^3$ respectively.
}
\begin{align}
	\nabla^{(h)}_aX^{(i)}_{bc}&=0\;,\label{eq_delX}\\
	X^{(i)}_{ac} X^{(j) c}_{\phantom{(j) c} b} &= - \delta_{ij} h_{ab} + \epsilon_{ijk} X^{(k)}_{ab}\;, \label{eq_quaternion} \\
	\star_hX^{(i)} &= -X^{(i)}\;, \label{eq_XASD}
\end{align}
where $\nabla^{(h)}$ is the Levi-Civita connection of $h$, and $\star_h$ denotes the Hodge star on the base with orientation $\eta$ 
defined by the spacetime orientation $f(\td t+\omega)\wedge \eta$. On $\widetilde{\mathcal{M}}$ the Maxwell-field is given by 
\begin{equation}
	F = -\frac{\sqrt{3}}{2}\td \left(\frac{V}{f}\right) -\frac{1}{\sqrt{3}} G^+\;, \label{eq_Ftimelike}
\end{equation}
where $G^+= \frac{1}{2}(1+\star_h)f\td \omega$.

From our assumptions, using Lemma 1 of~\cite{katona_supersymmetric_2023} it follows 
that the two Killing fields commute, and $W$ defines a triholomorphic $U(1)$ action on $\mathcal{B}$ (i.e. $\mathcal{L}_W X^{(i)}=0$). 
We will use the gauge $\mathcal{L}_W t=0$, thus $W$ also preserves $\omega$. It is well-known that a hyper-K\"ahler four-manifold with 
triholomorphic $U(1)$ action can locally be written in Gibbons-Hawking form~\cite{gibbons_hidden_1988}
\begin{equation}
	h = \frac{1}{H}(\td \psi+\chi)^2+H\td x^i\td x^i, \label{eq_GHmetric}
\end{equation}
where $x^i$, $i=1,2,3$ are cartesian coordinates on $\mathbb{R}^3$, $W=\partial_\psi$, $H$ is a harmonic function on $\mathbb{R}^3$, 
and the 1-form $\chi$ satisfies
\begin{equation}
	\star_3 \td \chi = \td H \; . \label{eq_chieq}
\end{equation}
Here $\star_3$ denotes the usual Hodge star operator on $\mathbb{R}^3$ with respect to the euclidean metric. In this chart the cartesian 
coordinates are related to $X^{(i)}$ via 
\begin{equation}
	\td x^i = \iota_W X^{(i)}\;. \label{eq_dxi} 
\end{equation}
Note, that (\ref{eq_dxi}) can be regarded as an equation on $\mathcal{B}\subset\Sigma$, as both $X^{(i)}$ and $W$ can be regarded as tensors 
on $\Sigma$, the orbit space of $V$. An important difference to the asymptotically flat case is that (\ref{eq_dxi}) does not define the 
functions $x^i$ globally on the DOC.

A useful result of~\cite{gauntlett_all_2003} is that if the solution admits a $U(1)$ Killing field commuting with $V$ that is also triholomorphic 
on $\mathcal{B}$, then the whole solution is {\it locally} determined by $H$ and three further harmonic functions, $K$, $L$, $M$ on $\mathbb{R}^3$ as 
follows. Let us define a function and two 1-forms (up to a gradient) on $\mathbb{R}^3$ by 
\begin{align}
	\omega_\psi &= \frac{K^3}{H^2}+\frac{3}{2}\frac{KL}{H}+M \; , \label{eq_omegapsi} \\
	\star_3 \td\hat\omega &= H \td M - M \td H +\frac{3}{2} (K \td L-L\td K) \; , \label{eq_omegahat} \\
	\star_3 \td\xi &= -\td K \; . \label{eq_xi}
\end{align}
Then $f$ and $\omega$ can be written as
\begin{align}
	f &= \frac{H}{K^2+HL} \; , \label{eq_f}\\
	\omega &= \omega_\psi(\td \psi+\chi) + \hat\omega \; ,\label{eq_omega}
\end{align}
while the Maxwell field takes the form
\begin{equation}
	F = \td A = \frac{\sqrt{3}}{2} \td \left(f(\td t + \omega)-\frac{K}{H}(\td \psi + \chi)-\xi\right)\;.\label{eq_Maxwell}
\end{equation}
We would like to emphasise again that this is a fully {\it local} result on $\widetilde{\mathcal{M}}$. 
Given a local solution, the corresponding set of harmonic functions $H, K, L, M$ is not unique. Indeed, one can check that 
\begin{align}
	H' = H, \quad K' = K + c H, \quad L' = L -2 c K - c^2 H, \quad M' = M - \frac{3}{2} c L + \frac{3}{2} c^2 K + \frac{1}{2} c^3 H \label{eq_harmonic_gauge}
\end{align}
yield the same solution for any $c\in \mathbb{R}$.

Following~\cite{breunholder_moduli_2019}, let us define a key spacetime invariant as
\begin{equation}
	N :=-\begin{vmatrix}
		g(V, V) & g(V, W)\\
		g(W, V) & g(W,W)
	\end{vmatrix} = \frac{f}{H} = \frac{1}{K^2+HL} = (-\det g)^{-1/2}\; , \label{eq_Ndef}
\end{equation}
where the last three equalities are valid on $\widetilde{\mathcal{M}}$ when $N>0$. Note that $N$ is preserved by both Killing fields, and 
its zeros in $\llangle \mathcal{M}\rrangle$ exactly coincide with $\mathcal{F}=\{p\in\llangle \mathcal{M}\rrangle| W_p=0 \}$ by 
our assumption that the span of Killing fields is timelike on $\llangle \mathcal{M}\rrangle$~\cite{katona_supersymmetric_2023}. It is 
also worth noting that from (\ref{eq_dxi}) and (\ref{eq_XXspacetime}) we have
\begin{equation}
	g^{-1}(\td x^i,\td x^j) = N\delta^{ij}\; \label{eq_dxinorm}
\end{equation}
irrespective of whether $V$ is timelike.

We now use the facts above to prove the following result for a general (not necessarily timelike) solution.
\begin{lemma}\label{lem_timelike_null}
	Either $\widetilde{\mathcal{M}}$ is dense in $\llangle\mathcal{M}\rrangle$ or $V$ is null on $\llangle\mathcal{M}\rrangle$.
\end{lemma}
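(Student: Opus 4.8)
The plan is to reduce the dichotomy to a statement about the zero set of a real-analytic function. Write $Z := \{\,p\in\llangle\mathcal M\rrangle : f(p)=0\,\}=\{\,p : g_p(V,V)=0\,\}$, so that $\widetilde{\mathcal M}=\llangle\mathcal M\rrangle\setminus Z$ and the claim is precisely: $Z$ is either all of $\llangle\mathcal M\rrangle$ (i.e.\ $V$ null everywhere) or nowhere dense (i.e.\ $\widetilde{\mathcal M}$ dense). Since $\llangle\mathcal M\rrangle$ is connected, this follows at once provided $p\mapsto g_p(V,V)$ is real-analytic with respect to some real-analytic atlas on $\llangle\mathcal M\rrangle$: a real-analytic function on a connected real-analytic manifold that vanishes on a non-empty open set vanishes identically, and otherwise its zero set has empty interior. (If one is worried about connectedness of the DOC, the same argument applies componentwise, one component carrying the asymptotic end; I will take $\llangle\mathcal M\rrangle$ connected.)

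So the real work is local real-analyticity of the solution. By assumption \ref{ass5D_timelike}, through every $p\in\llangle\mathcal M\rrangle$ there is a constant linear combination $T=aV+bW$ that is timelike at $p$; since $V$ and $W$ are Killing and $a,b$ are constants, $T$ is itself a Killing field, and by continuity it is timelike on an open neighbourhood $U_p$ of $p$. On $U_p$ one passes to coordinates adapted to $T$ and reduces the Einstein--Maxwell--Chern--Simons equations associated to (\ref{eq_action}) along $T$; because $T$ is timelike there, the reduced system is elliptic in the transverse variables with coefficients depending analytically on the unknowns, so by elliptic regularity $g$ and $F$ are real-analytic on $U_p$. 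The analytic structures obtained on the various $U_p$ are compatible on overlaps (the analytic structure produced by elliptic regularity for a fixed smooth solution is intrinsic), hence they assemble into a real-analytic atlas on $\llangle\mathcal M\rrangle$ in which $g$ is real-analytic; then $V$, being a Killing field of an analytic metric, is real-analytic (it satisfies a determined linear second-order system with coefficients built from the now-analytic curvature), and therefore so is $g(V,V)=-f^2$. Alternatively one may simply invoke the known real-analyticity of supersymmetric solutions of this theory.

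With $g(V,V)$ real-analytic on the connected DOC the conclusion is immediate, as indicated above: if it vanishes on a non-empty open set it vanishes everywhere and $V$ is null on all of $\llangle\mathcal M\rrangle$; otherwise $Z$ has empty interior and $\widetilde{\mathcal M}$ is dense. The main obstacle is precisely the analyticity step — one genuinely needs a timelike Killing field near each point (this is where assumption \ref{ass5D_timelike}, rather than mere causality of $V$, enters), together with the patching of the local analytic structures. It is worth noting that a naive propagation-along-characteristics argument does not work here: differentiating (\ref{eq_VF}) and using the field equations shows that $f$ satisfies only a wave-type equation, for which unique continuation from an open set is false in general, so it is the elliptic structure hidden in the stationarity assumption that makes the argument go through.
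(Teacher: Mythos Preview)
Your argument is sound in outline but takes a genuinely different route from the paper. The paper does not invoke real-analyticity of the full five-dimensional solution at all. Instead it works at a putative boundary point $p\in\partial(\operatorname{int}\mathcal N)$, integrates the closed one-forms $\iota_WX^{(i)}$ on a simply connected neighbourhood to obtain the local submersion $\bm x:U\to\mathbb R^3$, and observes that $H:=f/N$ is a harmonic function of the $x^i$ (this is the Gibbons--Hawking data, which makes sense even where $f=0$ since $N>0$ near $p$). Harmonic functions on open subsets of $\mathbb R^3$ are real-analytic for free, so $H\equiv0$ on $\bm x(U\cap\operatorname{int}\mathcal N)$ forces $H\equiv0$ on $U$, hence $f\equiv0$ there, contradicting $p\in\partial(\operatorname{int}\mathcal N)$. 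Thus the only analyticity used is that of ordinary harmonic functions on flat space.

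Your approach trades this for a general structural fact: stationary solutions of the Einstein--Maxwell--Chern--Simons system are real-analytic in suitable coordinates. That is morally correct and would give the lemma in one stroke, but the step you flag as ``the real work'' is indeed nontrivial here and you have only sketched it. The M\"uller-zum-Hagen type argument requires checking that the quotient system along a timelike Killing field is genuinely elliptic with analytic nonlinearities for \emph{this} theory (the Chern--Simons term is first-order in $F$ and harmless, but one should say so), and the patching of local analytic structures obtained from \emph{different} Killing fields $T_p$ needs a device such as harmonic coordinates to make the analytic atlas intrinsic. Your aside that one may ``simply invoke the known real-analyticity of supersymmetric solutions of this theory'' is not, to my knowledge, an established result one can cite. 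So: your strategy works, but the paper's proof is more elementary and entirely self-contained, exploiting the specific harmonic-function structure already built into the classification rather than appealing to a general regularity theorem that would itself require justification.
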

\begin{proof}
	Let $\mathcal{N}$ be the set on which $V$ is null in the DOC, i.e. $\mathcal{N}:=f^{-1}(\{0\})\cap\llangle\mathcal{M}\rrangle$. 
	This is closed in the DOC by the continuity of $f$, hence 
	$\mathcal{N}=\operatorname{int} \mathcal{N}\cup \partial \mathcal{N}$. If $\operatorname{int} \mathcal{N}=\emptyset$, then 
	$\widetilde{\mathcal{M}}$ is dense in $\llangle\mathcal{M}\rrangle$. If $\operatorname{int} \mathcal{N}\neq\emptyset$, then assume for 
	contradiction that there exists some $p\in \partial\operatorname{int} \mathcal{N}$, and look at a simply connected neighbourhood $U$ of 
	$p$ in $\llangle\mathcal{M}\rrangle$. Since $U$ is simply connected, we can integrate (\ref{eq_dxi}) to 
	obtain $\bm x=(x^1, x^2, x^3):U\to \mathbb{R}^3$. Since $f(p)=0$, $p$ is not a fixed point of $W$ (otherwise the span of Killing fields 
	would be null contradicting assumption \ref{ass5D_timelike}). It follows that $N(p)>0$, and thus by continuity it is also positive on 
	(a possibly smaller) $U$, and by (\ref{eq_dxinorm}), $\bm x$ is a submersion, therefore open. We extend the definition of $H$ to $U$ by
	$H:=f/N$, which is smooth on $U$, and harmonic on the open set $\bm x(U)\subset\mathbb{R}^3$. In particular, it is zero on the open set 
	$\bm x(U\cap \operatorname{int}\mathcal{N})$. Since $H$ is harmonic and therefore analytic in $x^i$, it follows that $H\equiv 0$ on $U$, 
	but then $f\equiv0$ on $U$, which is a contradiction. We conclude that $\partial\operatorname{int}\mathcal{N}=\emptyset$, and hence it must 
	be that $\mathcal{N}=\llangle\mathcal{M}\rrangle$, so $V$ is null on $\llangle\mathcal{M}\rrangle$.
\end{proof}
In the context of classifying {\it global} solutions, we will refer to the first case in Lemma \ref{lem_timelike_null} as {\it timelike case}, 
and to the latter one as {\it null case}.

\subsection{Null case with axial symmetry} \label{sec_nullcase}

In this section we consider the case where $V$ is null on the DOC. The metric locally takes the form 
\cite{gauntlett_all_2003}
\begin{equation}
	g = -\mathcal{G}^{-1}(\mathcal{Q}\td u^2 + 2\td u\td v)+\mathcal{G}^2(\td x^i+b^i\td u)(\td x^i+b^i\td u)\;, \label{eq_nullmetric}
\end{equation}
where $\mathcal{Q}(u, \bm x)$, $b^i(u, \bm x)$ are functions, $\mathcal{G}(u, \bm x)$ for each $u$ is a harmonic function on $\mathbb{R}^3$ 
with cartesian coordinates $x^i$ (i.e. $\partial_i\partial_i\mathcal{G}(u, \bm x)=0$), $V=\partial_v$, and 
\begin{equation}
	X^{(i)}=\td u\wedge \td x^i.\label{eq_Xinull}
\end{equation}

We now assume the existence of an axial Killing field that preserves $V$ and $X^{(i)}$, and deduce the following lemma.
\begin{lemma} \label{lem_Wnull}
	For any point in $\llangle\mathcal{M}\rrangle$ there exist local coordinates in which the metric takes the form (\ref{eq_nullmetric}) 
	and the Killing field $W=\partial_u$, hence $\mathcal{G}, \mathcal{Q}, b^i$ only depend on $x^i$.
\end{lemma}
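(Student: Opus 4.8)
The goal is to show that the axial Killing field $W$ can be taken to be $\partial_u$ in the local null coordinates $(u,v,x^i)$ of \eqref{eq_nullmetric}. The starting point is the observation that \eqref{eq_Xinull} gives $X^{(i)} = \td u \wedge \td x^i$, and since $W$ preserves the $X^{(i)}$ (assumption \ref{ass5D_W}, as recorded in \eqref{eq_Wbilinears}), we have $\lie_W(\td u\wedge \td x^i) = 0$ for all $i$. Also $W$ preserves $V = \partial_v$, so $[W,\partial_v]=0$, i.e. the components of $W$ in these coordinates are $v$-independent. First I would extract algebraic consequences of $\lie_W X^{(i)}=0$: writing $W = W^u\partial_u + W^v\partial_v + W^i\partial_i$, compute $\lie_W(\td u\wedge\td x^i) = \td(\iota_W\td u)\wedge\td x^i + \td u\wedge\td(\iota_W\td x^i) = \td W^u\wedge \td x^i + \td u\wedge \td W^i$. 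Setting this to zero for each $i$ and examining components (using that $\td u,\td x^1,\td x^2,\td x^3$ are independent while $\td v$ is the remaining direction) should force $W^u$ to be a function of $u$ alone and $W^i$ to be essentially constant in the $x$-directions — more precisely, the $\td v\wedge\td x^i$ component gives $\partial_v W^u=0$, the $\td x^j\wedge\td x^i$ components give $\partial_j W^u = 0$ for $j\ne i$ hence $\partial_j W^u = 0$ for all $j$, so $W^u=W^u(u)$; and the $\td u\wedge\td v$ and $\td u\wedge\td x^j$ components constrain $\partial_v W^i = 0$ and $\partial_j W^i + \delta^i_j\,(\text{something}) $, leading to $W^i$ being affine in $x$ with the rotational part fixed.

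Next I would use the Killing equation for the remaining components. Having pinned down that $W^u = W^u(u)$, I would reparametrise $u$: choose a new coordinate $\hat u$ with $\td\hat u/\td u = 1/W^u(u)$ (valid wherever $W^u\ne 0$, which holds since $W$ is spacelike and nonzero away from its fixed set; near a point of the DOC we work on a neighbourhood where $W\ne 0$, and one must check $W^u\ne 0$ — this follows because if $W^u=0$ then $W$ would be tangent to the null hypersurfaces $u=\text{const}$ and, combined with $\lie_W X^{(i)}=0$ and $\iota_W X^{(i)} = W^u\td x^i - (\text{d}x^i\text{-components})$... actually $\iota_W X^{(i)} = W^u\,\td x^i - W^i\,\td u$, and one argues $W^u=0$ would make $W$ degenerate in a way incompatible with \eqref{eq_dxinorm}/span being timelike, or simply that $W^u$ is a nonzero constant after the algebra above). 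After this reparametrisation $W = \partial_{\hat u} + W^i\partial_i + W^v\partial_v$ with $W^i = W^i(\hat u)$ (the rotational/scaling part having been killed by the $\lie_W X^{(i)}=0$ analysis, since a genuine rotation would not preserve all three $\td x^i$ simultaneously, and a dilation is incompatible with preserving the flat $\delta_{ij}\td x^i\td x^j$ structure that $X^{(i)}$ encode). Then a further change of the $x^i$ and $v$ coordinates of the form $x^i \mapsto x^i - \int W^i\,\td\hat u$, $v\mapsto v - \int W^v\,\td\hat u$ (a $\hat u$-dependent translation, which preserves the form \eqref{eq_nullmetric} since it only shifts $b^i$ and $\mathcal{Q}$) brings $W$ to $\partial_{\hat u}$ exactly. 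Relabelling $\hat u$ as $u$, the metric still has the form \eqref{eq_nullmetric}, $W = \partial_u$, and $\lie_W g = 0$ then immediately gives that $\mathcal{G},\mathcal{Q},b^i$ are $u$-independent, i.e. functions of $x^i$ only.

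The main obstacle I anticipate is \textbf{controlling the residual coordinate freedom consistently}: the form \eqref{eq_nullmetric} is preserved only by a specific group of coordinate changes ($u\mapsto \tilde u(u)$, $v\mapsto v/\tilde u'(u) + (\ldots)$, $x^i\mapsto R^i{}_j(u)x^j + c^i(u)$ with $R\in SO(3)$, plus shifts of $v$), and I must check at each stage that the transformation I use to simplify $W$ stays within this group, so that I do not destroy the structure \eqref{eq_nullmetric} while fixing $W$. The cleanest route is probably: (1) algebra from $\lie_W X^{(i)} = 0$ and $[W,V]=0$ to show $W^u$ is a nonzero constant (after noting $W^u\ne 0$) and $W^i$ depends only on $u$ with no rotational part, (2) scale $u$ to set $W^u=1$, (3) apply a $u$-dependent translation in $x^i$ and $v$ — each of which is manifestly in the allowed group — to kill $W^i$ and $W^v$, (4) read off $u$-independence of the metric functions from $\lie_{\partial_u}g=0$. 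A secondary subtlety is the possibility $W^u = 0$ on some subset; I would handle this by showing $W^u$ is locally constant (from the algebra) hence either identically zero or nowhere zero on a connected neighbourhood, and ruling out the identically-zero case using that $\iota_W X^{(i)} = W^u\td x^i - W^i\td u$ together with $g^{-1}(\td x^i,\td x^j) = N\delta^{ij}$ from \eqref{eq_dxinorm} and the timelike-span assumption \ref{ass5D_timelike} force $W^u\ne 0$ (otherwise $W$ lies in the span of $\partial_v$ and the $\partial_i$-directions annihilated appropriately, making every linear combination of $V$ and $W$ non-timelike at that point).
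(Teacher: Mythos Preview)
Your approach is essentially the paper's: constrain the components of $W$ via $[V,W]=0$ and $\lie_W X^{(i)}=0$, then use the residual coordinate freedom of the null ansatz to straighten $W$ to $\partial_u$. Two small points where the paper's execution differs from your sketch. First, to obtain $W^u=W^u(u)$ the paper uses $\lie_W V^\flat=0$ (with $V^\flat=-\mathcal{G}^{-1}\td u$) rather than the component analysis of $\lie_W X^{(i)}=0$; this is marginally quicker. Second, when reparametrising via $\td u'=(W^u)^{-1}\td u$, the paper simultaneously rescales $x^{i\prime}=W^u x^i$ and $\mathcal{G}'=\mathcal{G}/W^u$, which is required so that $X^{(i)}=\td u'\wedge\td x^{i\prime}$ and the metric keep their standard form --- this is exactly the form-preservation issue you flagged as the main obstacle, and it is the concrete step your sketch omits. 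With that rescaling done \emph{first}, the subsequent $\lie_W X^{(i)}=0$ analysis gives $\partial_j W^i=0$ directly; without it your component analysis would yield $W^i=-(W^u)'(u)\,x^i+c^i(u)$, the dilation being what the rescaling absorbs. For $W^u\neq 0$ the paper simply notes $N=(W^u)^2/\mathcal{G}^2>0$ on the DOC, which is your timelike-span argument made precise.
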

\begin{proof}
	We will use properties of $W$ together with the gauge freedoms that preserve the form of the solution \cite{gauntlett_all_2003}. 
	From $[V, W]=0\implies W^\mu=W^\mu(u, x^i)$, and
	\begin{align}
		0=\lie_WV^\flat = \left(\frac{W(\mathcal{G})}{\mathcal{G}^2}-\frac{\partial_uW^u}{\mathcal{G}}\right)\td u-\frac{\partial_iW^u}{\mathcal{G}}\td x^i\;,\label{eq_lieWV}
	\end{align}
	so $W^u$ only depends on $u$.
	Note that $N=(W^u)^2/\mathcal{G}^2>0$ on $\llangle\mathcal{M}\rrangle$, thus $W^u(u)\neq 0$. We define $\td u'=(W^u(u))^{-1}\td u$,  
	$\mathcal{G}' = \mathcal{G} / W^u$, and $x^{i}{}' = W^u x^i$ so that the forms of $g, V, X^{(i)}$ are unchanged, while in these coordinates 
	$W = \partial_{u'}+W^{i}{}'\partial_{i}{}' + W^v\partial_v$. Then, in the new coordinates (omitting primes)
	\begin{equation}
		0=\lie_WX^{(i)}=\td\left(\iota_WX^{(i)}\right) =\td ( \td x^i - W^i\td u ) = \td u\wedge \td W^i\;,
		\end{equation}
	which implies $\partial_j W^i=0$, i.e. $W^i=W^i(u)$. The coordinate transformation $x^{i}{}'=x^i+v^i(u)$ preserves the form 
	of the solution (after redefining $\mathcal{Q}$ and $\bm b$), and the choice $\td v^i/\td u = -W^i(u)$ yields $W=\partial_u+W^v\partial_v$. 
	Finally, the remaining gauge freedom allows us to change the $v=\const$ surfaces by $v' = v+h(u, \bm x)$. With 
	$\partial_u h(u, \bm x) = -W^v(u, \bm x)$ the axial Killing field becomes $W=\partial_u$ as claimed. In this coordinate system all 
	metric functions are independent of $u$. 
\end{proof}

It is useful to note that the remaining gauge freedom that preserves the form of the metric, the two-forms $X^{(i)}$, and the Killing fields is 
\begin{equation}
	v'=v+h(\bm x)\;. \label{eq_gauge_freedom}
\end{equation} 
This, however, changes $\mathcal{Q}$ and $\bm b$ as~\cite{gauntlett_all_2003}
\begin{align}
	\mathcal{Q}' = \mathcal{Q} - 2 b^i \partial_i h + \mathcal{G}^{-3} \partial_i h \partial_i h\;, \qquad \bm b' = \bm b - \mathcal{G}^{-3}\td h\;, \label{eq_gauge_freedom_Qa}
\end{align} 
hence these quantities are not gauge-invariant.

The lack of dependence on $u$ simplifies the analysis of~\cite{gauntlett_all_2003}, and the full solution can be obtained as follows.
$b^i$ is determined up to a gradient term (corresponding to the gauge freedom (\ref{eq_gauge_freedom}-\ref{eq_gauge_freedom_Qa})) by 
\begin{equation}
	\star_3 \td (\mathcal{G}^3\bm b)=\mathcal{G}\td\mathcal{K}-\mathcal{K}\td \mathcal{G}\;, \label{eq_GK}
\end{equation}
where $\star_3$ is the Hodge-star on flat $\mathbb{R}^3$, and $\mathcal{K}(x^i)$ is another harmonic function on $\mathbb{R}^3$. Choosing 
positive orientation to be given by $\td v\wedge \td u \wedge \td x^1\wedge \td x^2\wedge \td x^3$, (\ref{eq_starVF}-\ref{eq_DX}) 
yields\footnote{
	The sign difference compared to~\cite{gauntlett_all_2003} is due to the opposite orientation chosen.
} 
\begin{equation}
	F = \frac{1}{2\sqrt{3}}\td u\wedge \td\left(\frac{\mathcal{K}}{\mathcal{G}}\right)+\frac{\sqrt{3}}{2}\star_3 \td \mathcal{G}\;.\label{eq_Maxwell_null}
\end{equation}
Defining $D_{\bm b}:=b^i\partial_i$
and $\mathcal{W}_{ij}:=-\delta_{ij}D_{\bm b}\mathcal{G}-\mathcal{G}\partial_jb^i$, $\mathcal{Q}$ is a solution of 
\begin{equation}
	\partial_i\partial_i\mathcal{Q}=-2\mathcal{G}^2D_{\bm b}\mathcal{W}_{ii} + 2\mathcal{G}\mathcal{W}_{(ij)}\mathcal{W}_{(ij)}+\frac{2}{3}\mathcal{G}\mathcal{W}_{[ij]}\mathcal{W}_{[ij]}\;.\label{eq_Fcalc}
\end{equation}
Note, that (\ref{eq_Fcalc}) only determines $\mathcal{Q}$ up to a harmonic function $\mathcal{Q}_0$, hence the local solution 
is determined by three harmonic functions $\mathcal{G}$, $\mathcal{K}$, $\mathcal{Q}_0$.

From Lemma \ref{lem_Wnull} follows that locally $\td x^i=\iota_WX^{(i)}$ (as in the timelike case (\ref{eq_dxi})). Our assumption that the 
span of Killing fields must be timelike excludes any fixed points of $W$ in the DOC for the null case, and therefore $N>0$ on the DOC. It 
follows that 
$\mathcal{G}$ is globally defined by
\begin{equation}
	\mathcal{G}^{-1} = -g(W, V)=\pm\sqrt{N}\neq0\;. \label{eq_Gdef}
\end{equation}
We can flip the sign of $\mathcal{G}$ by redefining $u\to-u$ and $\mathcal{Q}\to-\mathcal{Q}$, so without loss of generality we will take 
\begin{equation}
	\mathcal{G}>0 \label{eq_Gpos}
\end{equation} on the DOC. On the horizon $W$ must be orthogonal to the generators of the horizon (since every Killing field is tangent to the horizon), which must be 
proportional to $V$, thus $N=0$ and $\mathcal{G}$ must diverge.

Note that from (\ref{eq_Maxwell_null}) and Lemma \ref{lem_Wnull} it 
follows that 
\begin{equation}
	\iota_W F = \frac{1}{2\sqrt{3}}\td\left(\frac{\mathcal{K}}{\mathcal{G}}\right)\;. \label{eq_magnetic_pot_null}
\end{equation}
The left-hand side is invariantly defined, but since the DOC is not necessarily simply connected, this does not define $\mathcal{K}$ 
globally. Still, in each local (sufficiently small) patch we find that $\mathcal{K}/\mathcal{G}$ is bounded. In particular, near the horizon 
$\mathcal{K}$ can diverge at most as $\mathcal{G}$ does. Later we will show that (\ref{eq_magnetic_pot_null}) indeed globally defines 
$\mathcal{K}$ on $\llangle \mathcal{M}\rrangle$.


\section{Asymptotics} \label{sec_asymptotics}

In this section we determine the asymptotic behaviour of the $U(1)$ Killing field $W$, and the cartesian coordinates $x^i$ using 
Definition \ref{def_KK}. As discussed in the previous section, generally we take the stationary Killing field, $\partial_0$ in 
asymptotic coordinates (\ref{eq_metric_AKK}), to be a linear 
combination of the other two, i.e.
\begin{equation}
	\partial_0 = \gamma^{-1} V + \frac{v_H}{\tilde L} W \label{eq_stat_ss}
\end{equation}
with $\gamma, v_H$ constants. For this we first need to look at the asymptotic form of Killing fields in an asymptotically Kaluza-Klein 
spacetime.

\subsection{Asymptotic form of Killing fields}

We first narrow down the possible form of the axial Killing field near spatial infinity. Proposition 2.1 of~\cite{chrusciel_killing_2006} 
for asymptotically flat spacetime states that the Killing fields asymptotically approach those of Minkowski. The statement carries over to 
asymptotically Kaluza-Klein spacetimes.
\begin{lemma}
	Let $(\mathcal{M}, g)$ be stationary and asymptotically Kaluza-Klein as in Definition \ref{def_KK}, and $K$ a Killing field that commutes with 
	$\partial_0$. Then there exist constants $\Lambda_{ij}=-\Lambda_{ji}$ such that
	\begin{align}
		K_j-u^i\Lambda_{ij}=\begin{cases}
			\ord(\tilde r^{1-\tau}) \text{ for } \tau \neq 1\;,\\
			\ord(\log \tilde r)\text{ for } \tau=1 \;.\label{eq_KFLambda}
		\end{cases}
	\end{align}
	If all $\Lambda$ vanish then there exist constants $A_\mu$ such that
	\begin{align}
		K_\mu-A_\mu=\ord(\tilde r^{-\tau})\; .
	\end{align}
	If all $\Lambda = A=0$ then $K=0$.
	\label{lem_AKKKilling}
\end{lemma}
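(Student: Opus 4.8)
The plan is to follow the proof of Proposition~2.1 of~\cite{chrusciel_killing_2006} for asymptotically flat spacetimes, the one new ingredient being a Fourier decomposition in the compact fibre coordinate $\tilde\psi$, which replaces the Euclidean Laplacian on $\mathbb{R}^4$ used there by the operator on $\mathbb{R}^3\times S^1$. Since $K$ commutes with $\partial_0$, in the chart of Definition~\ref{def_KK} its components $K_\mu$ depend only on $(u^i,\tilde\psi)$, so the whole analysis lives on the end $\Sigma_0$. The engine is the integrability identity $\nabla_\mu\nabla_\nu K_\rho=R_{\rho\nu\mu}{}^\sigma K_\sigma$ for a Killing field, together with its contraction $\Box K_\mu+R_{\mu\nu}K^\nu=0$. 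By Definition~\ref{def_KK}\ref{def_KK_FO} the metric and its first derivatives approach the flat product $\tilde g$, and by Definition~\ref{def_KK}\ref{def_KK_Ricci} and the computation in Appendix~\ref{app_asymptotics} one has $R^\mu{}_{\nu\lambda\kappa}=\ord(\tilde r^{-\tau-2})$; hence each component $K_\mu$ solves $(\Delta_{\mathbb{R}^3}+\tilde L^{-2}\partial_{\tilde\psi}^2)K_\mu=0$ up to error terms of order $\ord(\tilde r^{-\tau-2})$ times $K$ and $\ord(\tilde r^{-\tau-1})$ times $\nabla K$.

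First I would establish an a priori polynomial bound $K_\mu=\ord(\tilde r^N)$. This is the only delicate point; I would import it from~\cite{chrusciel_killing_2006}, where it is obtained by combining the Bochner equation with the curvature fall-off in weighted elliptic estimates on dyadic annuli, an argument that is local near infinity and therefore insensitive to the presence of a compact factor. Granting it, expand $K_\mu=\sum_{n\in\mathbb{Z}}K_\mu^{(n)}(u^i)e^{in\tilde\psi/2}$. For $n\neq0$ the mode obeys a screened Poisson equation $(\Delta_{\mathbb{R}^3}-n^2/4\tilde L^2)K_\mu^{(n)}=\ord(\tilde r^{-\tau-1})$ on $\mathbb{R}^3\setminus B^3$, whose homogeneous solutions decay exponentially in $\tilde r$; convolving the source with the Yukawa kernel shows $K_\mu^{(n)}=\ord(\tilde r^{-\tau})$, so every non-zero mode is absorbed into the error. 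The zero mode satisfies $\Delta_{\mathbb{R}^3}K_\mu^{(0)}=\ord(\tilde r^{-\tau-1})$, so it is a harmonic function on $\mathbb{R}^3\setminus B^3$ of at most linear growth plus a particular solution; the harmonic part has the form $u^i c_{i\mu}+d_\mu+(\text{decaying})$, and $\nabla_{(\mu}K_{\nu)}=0$ forces $c_{i\mu}$ to be antisymmetric with respect to $\tilde g$. The boost components of this antisymmetric array are killed by $[K,\partial_0]=0$, and the components rotating $\mathbb{R}^3$ into the $\tilde\psi$-circle are excluded because a rotation in an $(u^i,\tilde L\tilde\psi)$-plane is not globally well-defined on a compact fibre; what remains is $K_j\sim u^i\Lambda_{ij}$ with $\Lambda_{ij}=-\Lambda_{ji}$, while $K_0$ and $K_{\tilde\psi}$ tend to constants.

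To extract the error rates, note that with $K=\ord(\tilde r)$ the integrability identity gives $\nabla\nabla K=\ord(\tilde r^{-\tau-1})$; integrating $\Delta_{\mathbb{R}^3}(\text{correction})=\ord(\tilde r^{-\tau-1})$ against the Newtonian Green's function produces a correction of order $\tilde r^{1-\tau}$ for $\tau\neq1$ and of order $\log\tilde r$ for $\tau=1$ (the logarithm being the potential of a source decaying like $\tilde r^{-2}$), which is (\ref{eq_KFLambda}). If all $\Lambda_{ij}=0$ then $K=\ord(1)$, the source improves to $\ord(\tilde r^{-\tau-2})$, and the same integration gives $K_\mu-A_\mu=\ord(\tilde r^{-\tau})$ for constants $A_\mu$. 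Finally, if in addition all $A_\mu=0$, then both $K$ and $\nabla K$ vanish in the limit at infinity; since the integrability identity makes $(K,\nabla K)$ satisfy a linear first-order ODE along any curve, a Killing field on a connected manifold is determined by this pair at one point, so the only possibility is $K\equiv0$.

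The main obstacle is the a priori polynomial bound that ignites the bootstrap: without any initial control on the growth of $K$, the Bochner equation cannot be inverted, so one must check that the estimate of~\cite{chrusciel_killing_2006} carries over to the $\mathbb{R}^3\times S^1$ end, which it does since it is purely local near infinity. A secondary nuisance is bookkeeping the critical values of $\tau$ (in particular $\tau=1$) at which logarithms appear, and verifying that the fibre–base cross terms in the linearised Killing operator are genuinely subleading so that the Fourier modes decouple to the needed order.
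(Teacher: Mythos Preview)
Your approach is genuinely different from the paper's and, with more care, would also succeed. The paper does not use the contracted Bochner equation or a Fourier decomposition at all; instead it follows Appendix~C of \cite{beig_killing_1996} and treats the pair $f^A=(K_\mu,\tilde r\,\nabla_\mu K_\nu)$ as a single vector, sets $X=\sum_A f^A f^A$, and from the first-order system $\partial_{\tilde r}K\sim\nabla K+\Gamma K$, $\partial_{\tilde r}\nabla K\sim R\,K+\Gamma\,\nabla K$ together with the fall-offs of Appendix~\ref{app_asymptotics} obtains the differential inequality $|\partial_{\tilde r}X|\le 2C' X/\tilde r$. Integrating this outward gives the a~priori polynomial bound directly (so it need not be imported), and a straightforward bootstrap in the system then produces the constants $\Lambda_{ij}$ and $A_\mu$ via the lemma of \cite{chrusciel_invariant_1988}. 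The vanishing of $\Lambda_{\tilde\psi\mu}$ is not argued heuristically but by integrating $\partial_{\tilde\psi}K_\mu-\Lambda_{\tilde\psi\mu}$ over $k$ periods and letting $k\to\infty$. Your Fourier route has the merit of making the role of the compact fibre very transparent (non-zero modes are exponentially suppressed), at the cost of more machinery; the paper's ODE route is more elementary and entirely self-contained.

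There is, however, a genuine gap in your final step. From $\Lambda=A=0$ you only have $K_\mu=\ord(\tilde r^{-\tau})$ and $\nabla K=\ord(\tilde r^{-\tau-1})$, i.e.\ decay at one fixed rate, not vanishing at a finite point. The statement ``a Killing field is determined by $(K,\nabla K)$ at one point'' therefore does not apply, and a linear first-order ODE along a radial ray with merely bounded coefficients does not force a solution decaying at infinity to vanish. What is needed is exactly what the paper does: iterate the bootstrap so that $K_\mu=\ord(\tilde r^{-\kappa})$ for every $\kappa>0$, and then integrate the \emph{other} side of the Gronwall inequality, $-2C'X/\tilde r\le\partial_{\tilde r}X$, to obtain $\tilde r_0^{2C'}X(\tilde r_0)\le\tilde r^{2C'}X(\tilde r)\to 0$, hence $X(\tilde r_0)=0$ at every finite $\tilde r_0$. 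In your framework the analogous fix is to keep feeding the improved decay of $K$ back into the source of your screened and unscreened Poisson equations until $K$ decays faster than any power, after which the same reverse-Gronwall (or an equivalent maximum-principle argument for the Yukawa/Newton potentials) closes the proof.
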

\begin{proof}
	The proof is identical to the one in Appendix C of~\cite{beig_killing_1996}, but for completeness we outline it here in a 
	bit more detail. It is well-known that for a Killing field $K$
	\begin{equation}
		\nabla_\mu\nabla_\nu K_\lambda =R^\kappa{}_{\mu\nu\lambda}K_\kappa\;.\label{eq_KFR}
	\end{equation}
	Using this, we can write
	\begin{align}
		\partial_\mu K_\nu &= \nabla_\mu K_\nu + \Gamma_{\mu \nu}^\kappa K_\kappa \;, \\
		\partial_\mu \nabla_\nu K_\lambda &= R^\kappa{}_{\mu\nu\lambda}K_\kappa + \Gamma_{\mu\nu}^\kappa\nabla_\kappa K_\lambda + \Gamma_{\mu\lambda}^\kappa \nabla_\nu K_\kappa \;. \label{eq_KdK}
	\end{align}
	This means that for the $\tilde r$-derivatives we obtain
	\begin{align}
		\partial_{\tilde r} K_\nu &= \frac{u^i}{\tilde r}\left(\nabla_i K_\nu + \Gamma_{i \nu}^\kappa K_\kappa\right) \;, \label{eq_rKdKa}\\
		\partial_{\tilde r} \nabla_\nu K_\lambda &=\frac{u^i}{\tilde r}\left( R^\kappa{}_{i\nu\lambda}K_\kappa + \Gamma_{i\nu}^\kappa\nabla_\kappa K_\lambda + \Gamma_{i\lambda}^\kappa \nabla_\nu K_\kappa \right)\;, \label{eq_rKdK}
	\end{align}
	where we used $\partial_{\tilde r} = \tilde r^{-1}u^i\partial_i$.
	In Appendix \ref{app_asymptotics} we derive from Definition \ref{def_KK} that
	\begin{align}
		&R^\kappa{}_{\mu\nu\lambda}=\ord(\tilde r^{-\tau-2})\;,\qquad \Gamma^{\mu}_{i\nu}=\ord(\tilde r^{-\tau-1})\;, \nonumber\\
		&\Gamma^{\mu}_{0\nu}=\ord(\tilde r^{-\tau-1})\;,\qquad \Gamma^{\mu}_{{\tilde{\psi}}{\tilde{\psi}}}=\ord(\tilde r^{-\tau})\;.\label{eq_falloff}
	\end{align}
	Let us define $X = \sum_A f^A f^A$ for $f^A = (K_\mu, \tilde r\nabla_\mu K_\nu)$, for which
	\begin{equation}
		|\partial_{\tilde{r}} X| = \frac{1}{\tilde r}\left|2\sum_{AB}C_{AB}f^Af^B\right|\le \frac{2C'X}{\tilde{r}} \;, \label{eq_Xest}
	\end{equation}
	where the explicit form of the matrix $C_{AB}(u^i, \tilde \psi)$ can be obtained from (\ref{eq_rKdK}) and $C'>0$ is a constant. 
	For a uniform bound on $C_{AB}$ (second relation in (\ref{eq_Xest})) we used that due to (\ref{eq_falloff}) 
	$C_{AB}(\tilde r, \tilde\theta,\tilde\phi,\tilde\psi)\le B(\tilde\theta,\tilde\phi,\tilde\psi)\tilde r^0\le C'$,
	where $\tilde \theta,\tilde\phi$ (together denoted by $\theta^A$) are angular coordinates on $S^2$ and $B$ is some function. Therefore, by integrating (\ref{eq_Xest}), there exists a $\beta$ 
	such that $K_\mu = \ord(\tilde{r}^{\beta})$, $\nabla_\mu K_\nu = \ord(\tilde{r}^{\beta-1})$. Let us assume that $\beta\ge1+\tau$.  Using 
	our estimates for $K, \nabla K$ in the right-hand side of (\ref{eq_rKdK}) and then (\ref{eq_rKdKa}), we obtain an improved estimate with 
	$\beta\to\beta-\tau$. We iterate this procedure until $1\le\beta<1+\tau$. Then (\ref{eq_KdK}) yields
	\begin{align}
		\partial_i\nabla_\mu K_\nu =\ord(\tilde r^{\beta-\tau-2})\;, \qquad \partial_{\tilde \psi}\nabla_\mu K_\nu =\ord(\tilde r^{\beta-\tau-1})\;.
	\end{align}
	From the Lemma\footnote{
		The Lemma considers a function on a boost-type domain in $4D$. Since here nothing depends on $u^0$, in our case it holds 
		for $\mathbb{R}\times \Sigma_0$. The proof carries over with the only change being that (A.2) has an extra integral in the 
		Kaluza-Klein direction that vanishes in the $\tilde r\to\infty$ limit due to $\partial_{\tilde \psi}\nabla_\mu K_\nu\to0$.
	} of Appendix A in~\cite{chrusciel_invariant_1988} it follows that $\nabla_\mu K_\nu-\Lambda_{\mu\nu}=\ord(\tilde r^{\beta-\tau-1})$
	with some $\Lambda_{\mu\nu}$ constants, which we first assume that are not all zero. This substituted back into (\ref{eq_KdK}) improves 
	the estimates to $\beta\to1$ and $K_\mu=\ord(\tilde r)$. Finally, we obtain
	\begin{equation}
		|K_\mu(u^0, \tilde r, \theta^A,\tilde \psi) -\Lambda_{\nu\mu}u^\nu|\le |K_\mu(u^0_0, \tilde r_0, \theta^A_0,\tilde \psi_0) -\Lambda_{\nu\mu}u_0^\nu| + \bigg|\int_\Gamma \td (K_\mu -\Lambda_{\nu\mu}u^\mu)\bigg|=\ord(\tilde r^{1-\tau})
	\end{equation}
	with $\Gamma$ connecting $u^\mu_0$ with $u^\mu$ for some $u^\mu_0$. Now, consider for $k\in\mathbb{Z}$ at a given $(t, \tilde r,\tilde\theta,\tilde\phi)$
	\begin{equation}
		\left|\int_0^{4\pi k}(\partial_{\tilde \psi} K_\mu-\Lambda_{\tilde \psi\mu})\td\tilde \psi \right|= 4\pi \left|k\Lambda_{\tilde \psi\mu}\right|\le C\tilde r^{1-\tau}\;.
	\end{equation}
	This should hold for $\forall k \in\mathbb{Z}$, which implies that $\Lambda_{\tilde \psi\mu}=0$. $\Lambda_{0i}=0$ by 
	$0=\lie_{\partial_0}K^\mu = \partial_0 K^{\mu}$. Thus, we obtained (\ref{eq_KFLambda}).

	If $\Lambda_{ij}=0$ we have $\nabla_\mu K_\nu = \ord(\tilde r^{-\tau})$ and $K_\mu = \ord(\tilde r^{1-\tau})$. By a similar procedure we 
	can improve this estimate by $-\tau$ at each iteration until $\partial_rK_\mu=\ord(\tilde r^{-k\tau})$ with $1-k\tau<0\le 1-(k-1)\tau$
	and $k\in \mathbb{N}$. Then again by Lemma of Appendix A in~\cite{chrusciel_invariant_1988} and the arguments above, there is a constant 
	$A_\mu$ such that $K_\mu - A_\mu=\ord(r^{-\tau})$.

	If $\Lambda_{ij}=0$ and $A_\mu=0$, by the iterative process $K_\mu = \ord(\tilde r^{-\kappa})$ for any $\kappa>0$. Then integrating 
	(\ref{eq_Xest}) as
	\begin{equation}
		-\frac{2C'X}{\tilde r}\le \partial_{\tilde r}X
	\end{equation}
	we obtain $\tilde r_0^{2C'}X(\tilde r_0) \le \tilde r^{2C'}X(\tilde r)\to 0$ as $\tilde r\to \infty$, thus $X(\tilde r_0)=0$, which means 
	that the Killing field and its first derivative is zero at a point, hence it is zero everywhere.
\end{proof}

\subsection{Chart at spatial infinity from supersymmetry}

In this section we construct a chart at spatial infinity defined from Killing spinor bilinears as described in Section \ref{sec_recap}. We 
also show that $W$ is tangent to the Kaluza-Klein direction.

The proof is quite different depending on whether $v_H=0$ or 
$v_H\neq0$ in (\ref{eq_stat_ss}), and for the former, we first need to derive the asymptotic form of the hyper-K\"ahler structure. 
In this case the supersymmetric Killing field $V$ coincides with the stationary one, which means $V$ is timelike 
in the asymptotic region (thus we are in the timelike case since $V$ is not globally null). Furthermore, its limit at infinity is also 
timelike (i.e. $\lim_{\tilde r\to\infty}g(V, V)<0$), hence we are in the {\it asymptotically timelike} case, as opposed to the (timelike or null) case 
when $\lim_{\tilde r\to\infty}g(V, V)=0$, which we call {\it asymptotically null}.

\pagebreak[1]
For the case $v_H=0$, let us normalise the Killing spinor such that $\gamma=1$, i.e. $V=\partial_0$ in the 
asymptotic coordinates 
(\ref{eq_metric_AKK}). Definition \ref{def_KK} implies that on the asymptotically Kaluza-Klein end we have\footnote{The solution is invariant 
under changing the sign of $f, H, L, \psi, t$ simultaneously, so without loss of generality we take $f>0$ at infinity.}
\begin{align}
	f &= 1+ \ord(\tilde r^{-\tau}) \; ,\\
	\omega &=  \ord(\tilde r^{-\tau}) \td u^a  \; ,\\
	h &= \underbrace{\delta_{ij}\td u^i \td u^j + \tilde L^2\td \tilde\psi^2}_{=:h_0}  +\ord(\tilde r^{-\tau})\td u^a\td u^b \;, \label{eq_hAE}
\end{align}
and the asymptotically Kaluza-Klein end $\Sigma_0\subset \mathcal{B}$ is a circle fibration over $\mathbb{R}^3\setminus B^3$. For the asymptotic 
form of the hyper-K\"ahler structure we prove the following lemma.

\begin{lemma}\label{lem_XAKK}
	Assuming $v_H=0$, on the asymptotically Kaluza-Klein end the complex structures of $(\mathcal{B}, h)$ can be written as\footnote{
		Error terms of tensors throughout this section refer to their components in the asymptotic coordinates of Definition \ref{def_KK}.
	}
	\begin{equation}
		X^{(i)} = \Omega_-^{(i)} + \mathcal{O}(\tilde r^{-\tau}) \; , \label{eq_XAF}
	\end{equation}
	where $\Omega_-^{(i)}$ are a standard basis of anti-self-dual 2-forms on $\mathbb{R}^3\times S^1$ with respect to the orientation 
	$\tilde L\td\tilde\psi\wedge \td u^1\wedge \td u^2\wedge\td u^3$,
	\begin{align}
		\Omega_-^{(i)} = \tilde L\td\tilde\psi\wedge \td u^i-\frac{1}{2}\epsilon_{ijk}\td u^j\wedge \td u^k.\label{eq_Omega}
	\end{align} 
\end{lemma}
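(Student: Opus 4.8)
The plan is to exploit the defining algebraic relations of the hyper-K\"ahler structure together with the already-established asymptotic form \eqref{eq_hAE} of $h$. First I would observe that the two-forms $X^{(i)}$ are anti-self-dual with respect to $h$ by \eqref{eq_XASD}; since $h = h_0 + \ord(\tilde r^{-\tau})$ and the Hodge star depends smoothly on the metric, the self-duality operator $\star_h$ equals $\star_{h_0}$ up to $\ord(\tilde r^{-\tau})$, so to leading order each $X^{(i)}$ is an anti-self-dual two-form on $\mathbb{R}^3\times S^1$ with the orientation fixed above. The space of such forms at each point is three-dimensional, spanned by the $\Omega_-^{(j)}$ of \eqref{eq_Omega}. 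Hence there is a matrix-valued function $A^{(i)}{}_j$ on the end with $X^{(i)} = A^{(i)}{}_j\,\Omega_-^{(j)} + \ord(\tilde r^{-\tau})$, and it remains to show $A^{(i)}{}_j \to \delta^i{}_j$ (after possibly fixing a constant rotation of the $\Omega_-^{(j)}$, which is a choice of "standard basis").

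Next I would feed this ansatz into the quaternionic relation \eqref{eq_quaternion}. Since the $\Omega_-^{(j)}$ satisfy exactly the same algebra, $\Omega_-^{(i)}{}_{ac}\Omega_-^{(j)c}{}_b = -\delta^{ij}(h_0)_{ab} + \epsilon_{ijk}\Omega_-^{(k)}{}_{ab}$, substituting and using $h = h_0 + \ord(\tilde r^{-\tau})$ forces, to leading order, $A^{(i)}{}_k A^{(j)}{}_k = \delta^{ij}$ and $\epsilon_{ijk}A^{(k)}{}_l = A^{(i)}{}_m A^{(j)}{}_n \epsilon_{mnl}$; i.e. $A$ is a (constant-to-leading-order) special orthogonal matrix. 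A constant $SO(3)$ rotation is precisely the residual freedom in choosing the standard basis $\Omega_-^{(i)}$ (this is the same rotation freedom noted for the $X^{(i)}$ globally), so after absorbing it we may take $A^{(i)}{}_j = \delta^i{}_j$, giving \eqref{eq_XAF}. To see that $A$ is genuinely constant and not merely $O(1)$, I would invoke closedness of the $X^{(i)}$ from \eqref{eq_DX}: $\td X^{(i)} = 0$ combined with $\td\Omega_-^{(j)} = 0$ shows $\td A^{(i)}{}_j \wedge \Omega_-^{(j)} = \ord(\tilde r^{-\tau-1})$, and projecting onto the anti-self-dual components (or using the fall-off of derivatives from Definition \ref{def_KK}\ref{def_KK_FO}) shows the angular and $\tilde\psi$ derivatives of $A$ decay, while the relation $\td x^i = \iota_W X^{(i)}$ of \eqref{eq_dxi} together with the known asymptotic form of $W$ from Lemma \ref{lem_AKKKilling} pins the radial behaviour; alternatively the argument of Lemma \ref{lem_AKKKilling} applied to the closed forms $X^{(i)}$ directly yields convergence to a constant.

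The main obstacle is the last point: upgrading "anti-self-dual to leading order with orthogonal coefficient matrix" to "converges to a specific constant basis at the stated rate $\ord(\tilde r^{-\tau})$". The pointwise algebra only constrains $A$ up to an $x$-dependent rotation; ruling out a slowly-varying rotation requires genuinely using that $X^{(i)}$ is closed (equivalently covariantly constant, \eqref{eq_delX}) and that the connection coefficients of $h$ fall off as in \eqref{eq_falloff}, so that $\nabla^{(h)}X^{(i)} = 0$ becomes $\partial_a A^{(i)}{}_j = \ord(\tilde r^{-\tau-1})$ in the asymptotic chart, which integrates to a constant limit with error $\ord(\tilde r^{-\tau})$. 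Care is also needed that the orientation conventions match: the spacetime orientation $f(\td t + \omega)\wedge\eta$ induces, via $h = h_0 + \ord(\tilde r^{-\tau})$, exactly the orientation $\tilde L\td\tilde\psi\wedge\td u^1\wedge\td u^2\wedge\td u^3$ on the base end, so that "anti-self-dual" is consistent on both sides of \eqref{eq_XAF} — I would check this sign bookkeeping explicitly before concluding.
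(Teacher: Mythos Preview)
Your proposal is essentially the paper's proof, just with the two main ingredients applied in the opposite order: the paper first uses $\nabla^{(h)}X^{(i)}=0$ together with the Christoffel fall-offs (from Ricci-flatness of $h$, Appendix \ref{app_asymptotics}) to show the \emph{components} $X^{(i)}_{ab}$ converge to constants $\bar X^{(i)}_{ab}$ at rate $\ord(\tilde r^{-\tau})$, and only then invokes anti-self-duality and the quaternion algebra \eqref{eq_quaternion} to identify those constants with $\Omega_-^{(i)}$ up to a global $SO(3)$ rotation. Your reversed order is fine but needs one extra remark you omit: before writing $X^{(i)}=A^{(i)}{}_j\Omega_-^{(j)}+\ord(\tilde r^{-\tau})$ you must already know $X^{(i)}_{ab}=\ord(1)$ (the paper gets this from $X^{(i)}\cdot X^{(i)}=-4$), otherwise $\star_h=\star_{h_0}+\ord(\tilde r^{-\tau})$ does not control the self-dual part.

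One caution: your suggested alternative of using $\iota_W X^{(i)}$ ``together with the known asymptotic form of $W$'' is circular in the $v_H=0$ case, since Lemma \ref{lem_Wasymp} invokes Lemma \ref{lem_XAKK} precisely there; stick with the $\nabla^{(h)}X^{(i)}=0$ argument you describe in your last paragraph, which is exactly what the paper does.
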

\begin{proof}
	The proof is analogous to the asymptotically flat case (Lemma 4 of~\cite{katona_supersymmetric_2023}). From the quaternion algebra 
	$X^{(i)}\cdot X^{(i)}=-4$ (no sum over $i$), hence to leading order $X^{(i)}_{ab}=\ord(1)$. The curvature
	of $h$ is Ricci-flat (since it is hyper-K\"ahler), and for that one can show that $\Gamma_{ia}^b=\ord(\tilde r^{-\tau-1})$ and 
	$\Gamma^a_{\tilde\psi\tilde\psi}=\ord(\tilde r^{-\tau})$ (see Appendix \ref{app_asymptotics}). Then $\nabla^{(h)} X^{(i)}=0$ implies that
	\begin{align}
		\partial_jX^{(i)}_{ab}=\ord(\tilde r^{-\tau-1}), \qquad \partial_{\tilde\psi} X^{(i)}_{jk}=\ord(\tilde r^{-\tau-1}), \qquad  \partial_{\tilde\psi} X^{(i)}_{\tilde\psi j}=\ord(\tilde r^{-\tau})\;,
	\end{align}
	which  after integration yields $X^{(i)}_{ab}=\bar{X}^{(i)}_{ab}+\ord(\tilde{r}^{-\tau})$, where $\bar{X}^{(i)}_{ab}$ are constants.

	Let us now define $\bar{X}^{(i)}_\pm:= \frac{1}{2}(1\pm\star_{h_0})\bar{X}^{(i)}$ as the SD/ASD part of $\bar{X}^{(i)}$ with respect to $h_0$ 
	(defined in (\ref{eq_hAE})). Then using 
	\begin{equation}
		\star_h X^{(i)} = \star_{h_0} X^{(i)} + \ord(\tilde r^{-\tau}) = \star_{h_0} \bar X^{(i)} + \ord(\tilde r^{-\tau}) = \bar X^{(i)}_+- \bar X^{(i)}_- + \ord(\tilde r^{-\tau})
	\end{equation}
	we deduce by anti self-duality of $X^{(i)}$ that the constant $\bar X^{(i)}_+ = \ord(\tilde{r}^{-\tau})=0$. Equation (\ref{eq_quaternion})
	implies that $\bar X^{(i)}$ obeys the quaternion algebra with respect to $h_0$. Since (\ref{eq_Omega}) forms a basis of ASD 2-forms with 
	respect to $h_0$ also satisfying the quaternion algebra, we can always perform a global $SO(3)$ rotation of ${X}^{(i)}_-$ such that 
	$\bar{X}^{(i)}_- = \Omega^{(i)}_-$.
\end{proof}

Next we use Lemma \ref{lem_AKKKilling} and triholomorphicity to deduce the asymptotic form of $W$. The following result holds for both 
timelike and null cases.
\begin{lemma}\label{lem_Wasymp}
	For any values of $v_H$, on the asymptotically Kaluza-Klein end we can choose coordinates such that the metric 
	is of the form (\ref{eq_metric_AKK}) with the stationary Killing field $\partial_0$, and the $U(1)$ Killing field is given by
	\begin{equation}
		W = \partial_{\tilde\psi}\; . \label{eq_WAE}
	\end{equation}
\end{lemma}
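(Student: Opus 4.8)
The plan is to apply Lemma \ref{lem_AKKKilling} to the axial Killing field $W$ and rule out, one by one, all asymptotic behaviours except $W \to \partial_{\tilde\psi}$. First I would note that $W$ commutes with $\partial_0$ (by assumption \ref{ass5D_W} together with \ref{ass5D_stationary}, since $[V,W]=0$ and $\partial_0$ is a linear combination of $V$ and $W$), so Lemma \ref{lem_AKKKilling} applies and gives antisymmetric constants $\Lambda_{ij}$ (the $\Lambda_{0i}$ and $\Lambda_{\tilde\psi\mu}$ components already vanish) with $W_j - u^i\Lambda_{ij} = \ord(\tilde r^{1-\tau})$ or $\ord(\log\tilde r)$. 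The key structural input is that $W$ has \emph{periodic} orbits of bounded parameter length: a Killing field whose leading behaviour is the rotation generator $u^i\Lambda_{ij}\,\partial_{u^j}$ of $\mathbb{R}^3$ has orbits that grow linearly in $\tilde r$, so $\Lambda_{ij}$ cannot have a nonzero rotational part acting on the $\mathbb{R}^3$ directions without the orbits being unbounded — and more sharply, a rotation in $\mathbb{R}^3$ has fixed points along an axis but the $4\pi$-periodicity structure of the end forces consistency with the fibration; I would argue $\Lambda_{ij}=0$ outright. (This is the analogue of the asymptotically flat argument that $W$ approaches a rotation of one of the $\mathbb{R}^2$ planes; here there is no second axial direction in $\mathbb{R}^3\setminus B^3$ for it to rotate, so it must instead be the fibre generator.)

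With $\Lambda_{ij}=0$, Lemma \ref{lem_AKKKilling} gives constants $A_\mu$ with $W_\mu - A_\mu = \ord(\tilde r^{-\tau})$. Now I would use that $W$ generates a $U(1)$ with \emph{closed} orbits: raising the index with $\tilde g$, the leading part of $W$ is the constant vector field $A^0\partial_0 + A^i\partial_{u^i} + \tilde L^{-2}A_{\tilde\psi}\partial_{\tilde\psi}$ on $\mathbb{R}\times\mathbb{R}^3\times S^1_{\tilde\psi}$. For this flow to have periodic orbits we need $A^0 = 0$ and $A^i = 0$ (a nonzero constant component along $\partial_0$ or along an $\mathbb{R}^3$ direction gives non-compact orbits), so only the $\partial_{\tilde\psi}$ component survives: $W = c\,\partial_{\tilde\psi} + \ord(\tilde r^{-\tau})$ for some constant $c$. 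Matching the period — $W$ has $2\pi$-normalised... actually $4\pi$-periodic orbits matching the coordinate $\tilde\psi$, or more invariantly, comparing the orbit of $W$ with that of $\partial_{\tilde\psi}$ — forces $c$ to be a nonzero rational; after rescaling $\tilde\psi$ (allowed, it only changes $\tilde L$) and possibly passing to the connected fibration data one gets $c=1$, i.e.\ $W = \partial_{\tilde\psi} + \ord(\tilde r^{-\tau})$. Finally, a standard argument (flowing the coordinate chart along $W$, or solving an ODE to straighten $W$ near infinity while preserving \eqref{eq_metric_AKK} up to the allowed error) upgrades this to an \emph{exact} equality $W = \partial_{\tilde\psi}$ in suitably adjusted coordinates of the form \eqref{eq_metric_AKK}.

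The main obstacle I anticipate is the step ruling out $\Lambda_{ij}\neq 0$ and pinning down the constant $c$: unlike the asymptotically flat case, where the topology of the sphere at infinity and the $U(1)^2$ or single-$U(1)$ rotation structure is transparent, here the "sphere at infinity" can be a nontrivial circle bundle (squashed $S^3$, lens space) and one must be careful that "$W$ has periodic orbits" genuinely excludes a rotational $\Lambda$ — the subtlety is that a rotation of $\mathbb{R}^3$ \emph{does} have $2\pi$-periodic orbits, so the exclusion must come from combining it with the requirement that $W$ extends to a globally well-defined $U(1)$ action compatible with the fibration and with $N = -(W^u)^2/\mathcal G^2$-type positivity/boundedness (assumption \ref{ass5D_timelike}), or from the fact that a rotational $\Lambda$ would force fixed points of $W$ on an asymptotic axis where the span of $V,W$ degenerates. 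I would handle this by noting $g^{-1}(\td x^i,\td x^j) = N\delta^{ij}$ from \eqref{eq_dxinorm} together with the asymptotic form of the $x^i$ (to be extracted from Lemma \ref{lem_XAKK} via $\td x^i = \iota_W X^{(i)}$), which ties the asymptotics of $W$ directly to the flat $\mathbb{R}^3$ coordinates and leaves $\partial_{\tilde\psi}$ as the only consistent possibility.
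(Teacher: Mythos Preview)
Your overall architecture is right: apply Lemma~\ref{lem_AKKKilling}, kill the rotational part $\Lambda_{ij}$, then kill the translational constants $A^0,A^i$ by periodicity, normalise, and straighten $W$ by a coordinate shift. The last three steps are essentially what the paper does. The genuine gap is in the crucial step $\Lambda_{ij}=0$, where your reasoning does not go through.

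Your first attempt (``orbits grow linearly in $\tilde r$'') is simply wrong: the rotation generator $\partial_\phi$ on $\mathbb{R}^3$ has $2\pi$-periodic orbits, as you yourself note later. Your fallback ideas are too vague to constitute an argument. In particular, the ``fixed points on the axis violate assumption~\ref{ass5D_timelike}'' idea fails in the case $v_H=0$: there $V=\gamma\partial_0$ is itself asymptotically timelike, so at a fixed point of $W$ on the rotation axis the span of $V,W$ is still timelike and no contradiction arises. The paper instead splits into two cases and uses genuinely different mechanisms. For $v_H\neq 0$ one exploits causality of $V$: writing $V=\gamma\partial_0-\gamma v_H\tilde L^{-1}W$ and computing $g(V,V)$, a rotational $W\sim\Lambda\partial_\phi$ would give $g(W,W)\sim\Lambda^2\tilde r^2\sin^2\theta$, forcing $g(V,V)>0$ at large $\tilde r$, which is impossible. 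For $v_H=0$ one uses that $W$ is triholomorphic on the hyper-K\"ahler base: by Lemma~\ref{lem_XAKK} the complex structures asymptote to the standard $\Omega_-^{(i)}$, and one checks directly that $\partial_\phi$ preserves only one of these three, contradicting $\lie_W X^{(i)}=0$. Your closing remark about $\iota_W X^{(i)}$ and Lemma~\ref{lem_XAKK} is pointing in the right direction for the $v_H=0$ case, but you need to make the triholomorphicity obstruction explicit, and you need a separate argument for $v_H\neq 0$.
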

\begin{proof}
	By Lemma \ref{lem_AKKKilling} the leading order behaviour of $W$ is determined by constants $\Lambda_{ij}$ and $A_\mu$ (using the 
	notation of the Lemma). Since $W$ has closed orbits, $A_0=0$ and $W^0$ must be subleading in $\tilde r$.

	First we will consider the general case $v_H\neq0$, and assume that $\Lambda_{ij}$ are not all zero for $W$. Then $W$ generates a rotation on 
	$\mathbb{R}^3$ (possibly simultaneously with a rotation in the Kaluza-Klein direction), therefore, without loss of generality we can write
	\begin{equation}
		W  =  \Lambda \partial_\phi+\ord(\tilde r^{1-\tau})\; , \label{eq_Wansatzr1}
	\end{equation}
	where $(\tilde r, \theta,\phi)$ are the usual spherical coordinates on $\mathbb{R}^3$ and $\Lambda$ is a constant. Then the norm of $V$
	\begin{equation}
		0\ge \gamma^{-2} g(V, V) = g_{00}+ \frac{v_H^2}{\tilde L^2} g(W,W) - 2\frac{v_H}{\tilde L} W_0 = \frac{v_H^2}{\tilde L^2} \Lambda^2\tilde r^2 \sin^2\theta + \ord(\tilde r^{2-\tau}) + \ord(1).
	\end{equation}
	It follows that $\Lambda=0$, therefore $\Lambda_{ij}=0$, 
	and by Lemma \ref{lem_AKKKilling}
	\begin{equation}
		W = A_i\partial_i + A_\psi\partial_\psi + \ord(\tilde r^{-\tau}).
	\end{equation}
	Since $W$ has closed orbits, $A_i=0$, therefore $W=\partial_{\tilde\psi} + \ord(\tilde r^{-\tau})$, where the normalisation has 
	been chosen such that $W$ has $4\pi$-periodic orbits.

	In the spacetime the integral curves of $W$ wind around the Kaluza-Klein direction, hence we can adapt 
	coordinates $(u^0, u^i,\tilde \psi)\to( u^0{}',u^i{}', \tilde\psi') = (u^0+ \lambda^0, u^i+ \lambda^i,\tilde \psi+ \lambda^\psi)$ 
	to the action of $W$ such that $W=\partial_{\tilde\psi'}$ exactly\footnote{
		$W$ acts freely for large enough $\tilde r$, as fixed points are excluded by the form of the metric, and we will see 
		in Section \ref{sec_orbit} that exceptional orbits are excluded by triholomorphicity.
	} and $V = \partial_0{}'$. To see this, note that $W$ commutes with the stationary Killing field, thus $\partial_0\lambda^\mu=0$, and the stationary Killing field is 
	unchanged by the coordinate transformation, i.e. $\partial_0=\partial_0{}'$. Since $\lambda=\ord(\tilde r^{-\tau})$, the 
	metric only receives $\ord(\tilde r^{-\tau})$ corrections, so $g'$ has the same form as (\ref{eq_metric_AKK}). Thus, (after omitting primes) 
	we get the claimed result.

	For the special case $v_H=0$ we work on the hyper-K\"ahler base and use the triholomorphic property of $W$. $\lie_WV = 0$ and $\lie_Wf=0$, 
	thus the projection of $\pi_*(W^0, W^a) = W^a$ defines a Killing vector on the base space. To leading order $W$ is equal to $\pi_*W$ 
	(thus in the following we do not distinguish between the two). Again, assuming that $\Lambda_{ij}$ are not all zero for $W$, without 
	loss of generality we can write it as (\ref{eq_Wansatzr1}). Using the asymptotic form of the complex structures from 
	Lemma \ref{lem_XAKK} one can check that $\partial_\phi$ preserves only one of them, which would contradict triholomorphicity. 
	It follows again that $\Lambda=0$, and by the same arguments as for the general case, $W$ must have the claimed form.

\end{proof}

Next, we construct the asymptotic charts for the cases when $V$ is asymptotically timelike or null separately. For the asymptotically timelike case we 
have the following result.

\begin{lemma} \label{lem_AKKGH}
	Assume that $V$ is asymptotically timelike. Then the base of the asymptotically Kaluza-Klein end is covered by a single chart in which 
	(together with the vertical coordinate $\psi$) the spacetime metric is of Gibbons-Hawking form ((\ref{eq_hdef}) with (\ref{eq_GHmetric})). 
	The Gibbons-Hawking coordinates are related to the asymptotic coordinates by 
	\begin{align}
		t = \gamma^{-1}u^0\;,& \qquad\qquad \psi = \tilde \psi+\frac{v_H}{\tilde L}u^0\sim\psi+4\pi\;, \label{eq_tAKK}\\
		&x^i =\tilde L\gamma u^i + \ord(\tilde r^{1-\tau})\;, \label{eq_xAKK}
	\end{align}
	where (with appropriate normalisation of $V$)
	\begin{equation}
		\gamma = \frac{1}{\sqrt{1-v_H^2}}\;,\label{eq_gammanorm}
	\end{equation}
	and the Killing fields are given by $V = \partial_t$ and $W=\partial_{\psi}$, and the cartesian coordinates $x^i$ provide a surjection to
	$\mathbb{R}^3\setminus B^3_R$ for some $R>0$.
\end{lemma}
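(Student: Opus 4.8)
The plan is to start from the asymptotic data already secured: by Lemma \ref{lem_Wasymp} we have $W = \partial_{\tilde\psi}$ on the end, and by Lemma \ref{lem_XAKK} (applicable since $v_H=0$ can be arranged, or more precisely since $V$ is asymptotically timelike so the timelike structure exists) the hyper-Kähler 2-forms satisfy $X^{(i)} = \Omega^{(i)}_- + \mathcal{O}(\tilde r^{-\tau})$ with the standard ASD basis \eqref{eq_Omega}. First I would treat the normalisation: from assumption \ref{ass5D_stationary}, $\partial_0 = \gamma^{-1}V + (v_H/\tilde L)W$, and imposing $g(V,V)\to -1$ at infinity together with $W=\partial_{\tilde\psi}$, $g(W,W)\to \tilde L^2$, a short computation with \eqref{eq_metric_AKK} gives $\gamma^{-2} = g_{00} + (v_H^2/\tilde L^2)g(W,W)$ in the limit, i.e. $\gamma^{-2} = -(-1) \cdot(\ldots)$; unwinding the signs yields $\gamma = (1-v_H^2)^{-1/2}$, which is \eqref{eq_gammanorm}. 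The relation $t = \gamma^{-1}u^0$ and $\psi = \tilde\psi + (v_H/\tilde L)u^0$ then follows by inverting \eqref{eq_stat_ss}: $V = \partial_t$ means $t$ is the parameter along $V$, and since $\partial_0 = \gamma^{-1}\partial_t + (v_H/\tilde L)\partial_{\tilde\psi}$ one reads off $u^0 = \gamma t$ and $\tilde\psi = \psi - (v_H/\tilde L)\gamma t = \psi - (v_H/\tilde L)u^0$; periodicity $\psi\sim\psi+4\pi$ is inherited from $\tilde\psi$.

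Next I would construct the cartesian coordinates $x^i$ on the base. On $\widetilde{\mathcal M}$ we have the pointwise-defined 1-forms $\iota_W X^{(i)}$, which are closed (since $X^{(i)}$ is closed and $W$ is Killing preserving $X^{(i)}$). Using the asymptotic expansion $X^{(i)} = \Omega^{(i)}_- + \mathcal{O}(\tilde r^{-\tau})$ and $W = \partial_{\tilde\psi}$, we get $\iota_W X^{(i)} = \iota_{\partial_{\tilde\psi}}\Omega^{(i)}_- + \mathcal{O}(\tilde r^{-\tau}) = \tilde L\,\td u^i + \mathcal{O}(\tilde r^{-\tau})$. Since the end $\Sigma_0$ is a circle fibration over $\mathbb{R}^3\setminus B^3$, the base of the fibration $\mathbb{R}^3\setminus B^3$ is simply connected (for a suitably large ball), so the closed 1-form $\iota_W X^{(i)}$ — which descends to the base because $\lie_W(\iota_W X^{(i)})=0$ and $\iota_W\iota_W X^{(i)}=0$ — integrates to a global function $x^i$ on (a neighbourhood of infinity in) the base, determined up to a constant. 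Integrating the expansion gives $x^i = \tilde L\, u^i + (\text{const}) + \mathcal{O}(\tilde r^{1-\tau})$, and absorbing the constant (and rescaling; note the factor $\gamma$ enters because $u^i$ here are the coordinates in which $\partial_0$ is unit-normalised, whereas the hyper-Kähler/Gibbons-Hawking $\mathbb{R}^3$ is the orbit space of $V$, related by the boost $\gamma$) we obtain \eqref{eq_xAKK}: $x^i = \tilde L\gamma u^i + \mathcal{O}(\tilde r^{1-\tau})$. From \eqref{eq_dxinorm}, $g^{-1}(\td x^i,\td x^j) = N\delta^{ij}$, and since $N = f/H \to 1/\tilde L^2$-ish limit is positive at infinity, the map $\bm x$ is a submersion there, hence open; combined with properness of $\tilde r$ it surjects onto $\mathbb{R}^3\setminus B^3_R$ for some $R$.

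Finally, I would assemble the Gibbons-Hawking form. We already know from Section \ref{sec_recap} that wherever $V$ is timelike and $W$ is triholomorphic, the base metric is locally Gibbons-Hawking \eqref{eq_GHmetric} with $W = \partial_\psi$ and $H$ harmonic. The content here is that this local statement globalises over the end: the single chart $(x^1,x^2,x^3)$ just built, together with the vertical coordinate $\psi$ (fibre coordinate for $W$), covers the whole end, and in it $H = f/N$ extends to a smooth harmonic function on $\mathbb{R}^3\setminus B^3_R$ with $H \to \tilde L^{-1}\cdot(\text{appropriate const})$; matching \eqref{eq_hAE} forces the leading constant value of $H$ and of $\chi$ (which must be $\mathcal{O}(\tilde r^{-\tau})$ after a gauge choice, since \eqref{eq_chieq} with $H\to$ const has no monopole term at this order unless the fibration is nontrivial — the Kaluza-Klein monopole case is accommodated by a subleading $1/\tilde r$ term in $H$). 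I expect the main obstacle to be the bookkeeping of the two different $\mathbb{R}^3$'s and the factor $\gamma$: the "asymptotic" $u^i$ of Definition \ref{def_KK} live in the frame where $\partial_0$ is the unit timelike Killing field, while the Gibbons-Hawking $x^i$ live in the orbit space of $V$, and one must track carefully how the boost relating $\partial_0$ and $V$ rescales distances on $\mathbb{R}^3$ (producing the $\gamma$ in \eqref{eq_xAKK} and the $(1-v_H^2)^{-1/2}$ in \eqref{eq_gammanorm}) while leaving the fibre coordinate shifted but not rescaled. The verification that $\chi$ can be globally chosen with the stated fall-off, using simple-connectedness of $\mathbb{R}^3\setminus B^3_R$, is the other point requiring care.
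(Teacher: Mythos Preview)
Your overall strategy matches the paper's: fix the normalisation of $V$ to obtain \eqref{eq_gammanorm}, define $(t,\psi)$ by \eqref{eq_tAKK}, compute $\iota_W X^{(i)}$ asymptotically, and integrate it on the simply connected base of the circle fibration to produce the $x^i$.

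The genuine gap is your invocation of Lemma \ref{lem_XAKK}. That lemma is stated and proved only under the hypothesis $v_H = 0$, because its proof uses the base metric $h$ in the form \eqref{eq_hAE} with $V = \partial_0$; when $v_H \neq 0$ the base (the orbit space of $V$) is not the $u^0=\text{const}$ slice in the asymptotic chart, so neither \eqref{eq_hAE} nor \eqref{eq_XAF} hold as written, and you get the wrong coefficient $\tilde L$ instead of $\tilde L\gamma$ in $\iota_W X^{(i)}$. Your parenthetical ``$v_H=0$ can be arranged'' is not correct --- $v_H$ is a physical parameter, not a gauge choice --- and your subsequent attempt to insert the missing $\gamma$ by appealing to ``how the boost rescales distances on $\mathbb{R}^3$'' is a heuristic, not a computation.

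The paper resolves precisely the obstacle you flag, by reversing the order of operations: \emph{first} change coordinates to $(t,\psi)$ via \eqref{eq_tAKK}, so that $V=\partial_t$ and $W=\partial_\psi$ exactly; \emph{then} observe that in these coordinates the base metric reads $h = \tilde L^2\gamma^2\,\td\psi^2 + \delta_{ij}\,\td u^i\td u^j + \ord(\tilde r^{-\tau})$, which has exactly the form \eqref{eq_hAE} with $\tilde L$ replaced by $\tilde L\gamma$; \emph{then} rerun the argument of Lemma \ref{lem_XAKK} verbatim with this replacement. This yields directly $X^{(i)} = \tilde L\gamma\,\td\psi\wedge\td u^i + \frac{1}{2}\epsilon_{ijk}\,\td u^j\wedge\td u^k + \ord(\tilde r^{-\tau})$ and hence $\iota_W X^{(i)} = \tilde L\gamma\,\td u^i + \ord(\tilde r^{-\tau})$, with no separate bookkeeping of the $\gamma$ factor required. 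Your remaining steps (descent to the base, integration on the simply connected $\mathbb{R}^3\setminus B^3$, surjectivity via $N>0$) are then correct as stated; the paper does not separately discuss the global choice of $\chi$ in this lemma, deferring that to the later regularity analysis.
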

\begin{proof}
	The supersymmetric Killing field is given by $V=\gamma \partial_0-\gamma v_H\tilde L^{-1}\partial_{\tilde\psi}$. Since it is asymptotically timelike, 
	we can normalise it such that
	\begin{equation}
		f^2 = -g(V, V) = \gamma^2(1-v_H^2)+ \ord(\tilde r^{-\tau}) = 1+\ord(\tilde r^{-\tau}),
	\end{equation}
	which in terms of the constants $\gamma, v_H$ yields (\ref{eq_gammanorm}) with $|v_H|\le 1$. Let us define
	\begin{align}
		t = \gamma^{-1}u^0, \qquad\qquad \psi = \tilde \psi+\frac{v_H}{\tilde L}u^0\;,
	\end{align}
	so that $W=\partial_\psi$ and $V=\partial_t$. In these coordinates the base metric becomes
	\begin{equation}
		h = \tilde L^2\gamma^2\td\psi^2 + \delta_{ij}\td u^i\td u^j +\ord(\tilde r^{-\tau}).
	\end{equation}
	This has the same form as the base metric in Lemma \ref{lem_XAKK} with $\tilde L^2\to \tilde L^2\gamma^2$. One can use similar 
	arguments to deduce that the hyper-K\"{a}hler two-forms are 
	\begin{equation}
		X^{(i)} = \tilde L\gamma \td\psi\wedge\td u^i + \frac{1}{2}\epsilon_{ijk}\td u^j\wedge\td u^k + \ord(\tilde r^{-\tau}).
	\end{equation}
	Hence, for the cartesian one-forms we get
	\begin{equation}
		\iota_WX^{(i)} = \tilde L\gamma\td u^i + \ord(\tilde r^{-\tau}).
	\end{equation}
	Since $\iota_W(\iota_WX^{(i)} )=0=\mathcal{L}_W(\iota_WX^{(i)})$, and $W$ is tangent to the circle fibres, the closed 1-form 
	$\iota_WX^{(i)}$ descends to the base of the fibration $\mathbb{R}^3\setminus B^3$, which is simply connected. Hence, by (\ref{eq_dxi}) 
	the cartesian coordinates $x^i$ can be globally integrated on (and uplifted to) the asymptotic end to get (\ref{eq_xAKK}). Since $(u^0, u^i)$ 
	form a single chart of the base of the asymptotically Kaluza-Klein end, so do the Gibbons-Hawking coordinates $(t, x^i)$. On this chart 
	$N = \tilde L^2\gamma^2+\ord(\tilde r^{-\tau})>0$, hence by (\ref{eq_Ndef}) the metric is invertible.
\end{proof}

Finally, we consider the case when $V$ is asymptotically (or globally) null, i.e. $\lim_{\tilde r\to\infty}g(V,V)=0$, and thus without loss of generality 
we can take
\begin{equation}
	V = \partial_0 - \tilde L^{-1}\partial_{\tilde{\psi}}. \label{eq_Vnull}
\end{equation} 
We then have the following result.
\begin{lemma}\label{lem_nullR3}
	Assume $V$ is asymptotically (possibly globally) null as in (\ref{eq_Vnull}). Then the base of the asymptotically Kaluza-Klein end is covered by a single 
	coordinate chart of $(t, x^i)$ with 
	\begin{align}
		t = u^0\;, \qquad x^i = \tilde L u^i+ \ord(\tilde r^{1-\tau/2})\;. \label{eq_xnull}
	\end{align}
	Let 
	\begin{equation}
		\psi = \tilde\psi + L^{-1}u^0\sim\psi+4\pi \label{eq_psinull}
	\end{equation}
	parametrise the fibres. In such a chart 
	$V=\partial_t$ and $W=\partial_\psi$.
\end{lemma}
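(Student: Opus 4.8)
The plan is to follow the same route as Lemma~\ref{lem_AKKGH}, but working directly with the Killing spinor two-forms $X^{(i)}$ rather than through a Gibbons--Hawking base, which degenerates in the asymptotically null regime ($f\to 0$).

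First I would fix the normalisation and coordinates. By Lemma~\ref{lem_Wasymp} we may assume that on the end $W=\partial_{\tilde\psi}$ exactly and $g=\tilde g+\ord(\tilde r^{-\tau})$ with $\tilde g$ as in~(\ref{eq_metric_AKK}). Writing $V=\gamma\partial_0-\gamma v_H\tilde L^{-1}\partial_{\tilde\psi}$ as in~(\ref{eq_stat_ss}) and evaluating to leading order, $g(V,V)=\gamma^2(v_H^2-1)+\ord(\tilde r^{-\tau})$; since $V$ is asymptotically null this forces $v_H=\pm1$, and after replacing $W$ by $-W$ if necessary I take $v_H=1$ and normalise the spinor so that $\gamma=1$, giving $V=\partial_0-\tilde L^{-1}\partial_{\tilde\psi}$, i.e.\ (\ref{eq_Vnull}). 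In particular $f^2=-g(V,V)=\ord(\tilde r^{-\tau})$, so $f=\ord(\tilde r^{-\tau/2})$, and by~(\ref{eq_VF}) together with the Maxwell fall-off $F=\ord(\tilde r^{-\tau/2-1})$ (Remark~3 after Definition~\ref{def_KK}) also $\td f=\ord(\tilde r^{-\tau/2-1})$. I then set $t=u^0$ and $\psi=\tilde\psi+\tilde L^{-1}u^0$.

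The main step, and the main obstacle, is to determine the asymptotic form of the $X^{(i)}$ without a hyper-K\"ahler base to appeal to. Using~(\ref{eq_DX})---whose right-hand side is $\ord(F)=\ord(\tilde r^{-\tau/2-1})$ rather than zero, which is precisely why only the slower rate $\tau/2$ survives here (as opposed to the rate $\tau$ in Lemma~\ref{lem_XAKK})---together with the fall-off of the connection and curvature from Appendix~\ref{app_asymptotics}, the argument of Lemma~\ref{lem_XAKK} gives, after radial integration, $X^{(i)}=\bar X^{(i)}+\ord(\tilde r^{-\tau/2})$ with $\bar X^{(i)}$ constant-coefficient two-forms in the asymptotic chart (the weaker bounds for the $\tilde\psi$- and $S^2$-derivatives still suffice, these coordinates being compact). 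Since $W=\partial_{\tilde\psi}$ and $V=\partial_0-\tilde L^{-1}\partial_{\tilde\psi}$, the relations $\iota_W(\iota_WX^{(i)})=0$ and $\iota_V(\iota_WX^{(i)})=-\iota_W(\iota_VX^{(i)})=0$ annihilate the $\td u^0$ and $\td\tilde\psi$ components of $\iota_WX^{(i)}$, so $\iota_WX^{(i)}=a^i{}_j\td u^j+\ord(\tilde r^{-\tau/2})$ with $a^i{}_j$ constant; and~(\ref{eq_dxinorm}) with $g^{jk}=\delta^{jk}+\ord(\tilde r^{-\tau})$ and $N=\tilde L^2+\ord(\tilde r^{-\tau})$ forces $a^i{}_ka^{i'}{}_k=\tilde L^2\delta^{ii'}$. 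A constant $SO(3)$ rotation of the triple $X^{(i)}$, which preserves~(\ref{eq_Vnorm})--(\ref{eq_XXspacetime}) and is consistent with the conventions of Section~\ref{sec_recap} (and with~(\ref{eq_Xinull}) in the globally null case), then brings this to $\iota_WX^{(i)}=\tilde L\,\td u^i+\ord(\tilde r^{-\tau/2})$.

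Finally I would globalise and conclude. The one-form $\iota_WX^{(i)}$ is closed, as $\td(\iota_WX^{(i)})=\lie_WX^{(i)}-\iota_W\td X^{(i)}=0$, is annihilated by both $W$ and $V$, and is invariant under both; hence it descends to a closed one-form on the base $\mathbb{R}^3\setminus B^3$ of the circle fibration, which is simply connected, and so integrates globally on the end to a function $x^i$, with the additive constant chosen so that $x^i=\tilde Lu^i+\ord(\tilde r^{1-\tau/2})$, i.e.\ (\ref{eq_xnull}). From $\td x^i=\iota_WX^{(i)}$ one reads off $V(x^i)=X^{(i)}(W,V)=0$ and $W(x^i)=X^{(i)}(W,W)=0$, while $V(t)=1$, $W(t)=0$, $V(\psi)=0$, $W(\psi)=1$; hence $V=\partial_t$ and $W=\partial_\psi$ in the chart $(t,\psi,x^i)$. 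Since $(u^0,u^i)$ is already a global chart on the base $\mathbb{R}\times(\mathbb{R}^3\setminus B^3)$ of the $W$-fibration, $(t,x^i)=(u^0,\tilde Lu^i)$ up to terms of order $\tilde r^{1-\tau/2}$ with derivatives of order $\tilde r^{-\tau/2}$, the $\td x^i$ are pointwise linearly independent by~(\ref{eq_dxinorm}) as $N>0$ on the end, and the leading term is injective; so $(t,x^i)$ is a diffeomorphism onto $\mathbb{R}\times(\mathbb{R}^3\setminus B^3_R)$ for some $R>0$, and $\psi\sim\psi+4\pi$ parametrises the fibres. The argument is uniform in the globally null and timelike-asymptotically-null cases, since it uses only the spacetime identities for $V$ and $X^{(i)}$ together with~(\ref{eq_dxinorm}) and $f\to0$ at infinity.
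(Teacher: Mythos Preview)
Your overall route is the same as the paper's---work directly with the spacetime bilinears $X^{(i)}$, use (\ref{eq_DX}) together with the Christoffel and Maxwell fall-off to show the relevant components approach constants, then contract with $W$ and integrate on the simply connected base $\mathbb{R}^3\setminus B^3$. The final paragraph (globalisation and the chart) is fine and matches the paper.

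There is, however, a genuine gap in your middle step. You write that the right-hand side of (\ref{eq_DX}) is $\ord(F)=\ord(\tilde r^{-\tau/2-1})$, but that right-hand side is schematically $F\cdot\star X^{(i)}$, so the estimate presupposes $X^{(i)}=\ord(1)$. In Lemma~\ref{lem_XAKK} this a~priori bound comes from $X^{(i)}\cdot X^{(i)}=-4$, but here the same contraction gives $4f^2\to 0$, so the norm argument you are invoking does not bound the components. The paper closes this gap differently: it first uses $\iota_VX^{(i)}=0$ (\ref{eq_VX}) to write $X^{(i)}=T^{(i)}_j(\td u^0+\tilde L\td\tilde\psi)\wedge\td u^j+\tfrac12 R^{(i)}_{jk}\td u^j\wedge\td u^k$, then the $00$ component of (\ref{eq_XXspacetime}) to get $T^{(i)}_j=\ord(1)$, and (\ref{eq_VXdual}) with $f=\ord(\tilde r^{-\tau/2})$ to get $R^{(i)}_{jk}=\ord(\tilde r^{-\tau/2})$; only then does it apply (\ref{eq_DX}) to $T^{(i)}_j$. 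You could alternatively salvage your version by a Gr\"onwall/bootstrap bound on $|X|$ as in Lemma~\ref{lem_AKKKilling}, since $|\partial_{\tilde r}X|\le C\tilde r^{-\tau/2-1}|X|$; but as written the step is unjustified.

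A minor point: from $a^i{}_ka^{i'}{}_k=\tilde L^2\delta^{ii'}$ you only get $a/\tilde L\in O(3)$, not $SO(3)$, so an $SO(3)$ rotation of the triple $X^{(i)}$ alone need not bring it to the identity. The paper sidesteps this by rotating the $u^i$ coordinates (which can be $O(3)$ without disturbing (\ref{eq_XXspacetime})); you could do the same, or combine your $SO(3)$ rotation of the $X^{(i)}$ with a reflection of one $u^i$.
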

\begin{proof}
	From (\ref{eq_VX}) it follows that the most general form of $X^{(i)}$ is 
	\begin{equation}
		X^{(i)} = T^{(i)}_j(\td u^0+ \tilde L\td\tilde\psi)\wedge\td u^j+\frac{1}{2}R^{(i)}_{jk}\td u^j\wedge \td u^k\;
	\end{equation}
	for some functions $T^{(i)}_j$ and $R^{(i)}_{jk}$. From the $00$ component of (\ref{eq_XXspacetime}) using (\ref{eq_metric_AKK}) we obtain
	\begin{equation}
		(\delta_{kl}+\ord(\tilde r^{-\tau}))T^{(i)}_k T^{(j)}_l = \delta_{ij} + \ord(\tilde r^{-\tau}),\label{eq_TO3}
	\end{equation}
	hence $T^{(i)}_j=\ord(1)$. From (\ref{eq_metric_AKK}), (\ref{eq_Vnull}) and (\ref{eq_Vnorm}) follows $f = \ord(\tilde r^{-\tau/2})$, and 
	using (\ref{eq_VXdual}), one can check that $R^{(i)}_{jk}=\ord(\tilde r^{-\tau/2})$.
	The spacetime covariant derivative of $X^{(i)}$ has the form $\nabla X^{(i)}\sim F \star X^{(i)}$ (\ref{eq_DX}), thus 
	\begin{equation}
		\partial_\mu T^{(i)}_{j} = \partial_\mu X^{(i)}_{0j} = \ord(\tilde r^{-\tau/2-1})\;, \label{eq_dT}
	\end{equation}
	where we used that the relevant Christoffel symbols decay as $\ord(\tilde r^{-\tau-1})$ (Appendix \ref{app_asymptotics}) and that 
	$F=\ord(\tilde r^{-\tau/2-1})$ (see Remark after Definition \ref{def_KK}). After integration of (\ref{eq_dT}) we get
	\begin{equation}
		T^{(i)}_j = \overline T^{(i)}_j + \ord(\tilde r^{-\tau/2})
	\end{equation}
	for some constants $\overline T^{(i)}_j$. Furthermore, these constants satisfy (\ref{eq_TO3}), hence $\overline T\in O(3)$. By a global 
	orthogonal transformation of $u^{i}$, we can arrange that 
	\begin{equation}
		X^{(i)}= (\td u^0+ \tilde L\td\tilde\psi)\wedge\td u^i + \ord(\tilde r^{-\tau/2})\;, 
	\end{equation}
	and thus 
	\begin{equation}
		\iota_WX^{(i)} =  \tilde L\td u^i+ \ord(\tilde r^{-\tau/2}).
	\end{equation}
	By the same argument as in Lemma \ref{lem_AKKGH} $x^i$ can be integrated to get 
	\begin{equation}
		x^i = \tilde L u^i+ \ord(\tilde r^{1-\tau/2}), 
	\end{equation}
	and $(t=u^0, x^i)$ is a global chart on the base of the asymptotically Kaluza-Klein end. 
\end{proof}

The notation of the coordinates ($t, \psi, x^i$) matches that of the timelike (but here asymptotically null) case ((\ref{eq_hdef}) and 
(\ref{eq_GHmetric})), but the result is valid for the globally null case as well. Then the coordinates of (\ref{eq_nullmetric}) are given 
by $v=u^0$ and $u=\psi$ instead.

We have thus in each case constructed a chart on the asymptotically Kaluza-Klein end that is adapted to the Killing spinor bilinears. 
In particular, a suitable time coordinate $t$ and the cartesian coordinates $x^i$, related to the bilinears $X^{(i)}$ by (\ref{eq_dxi}), 
provide a single chart of the base of the fibration at infinity.


\section{Near-horizon analysis of null supersymmetric solutions}\label{sec_NHnull}

\subsection{Classification of null supersymmetric near-horizon geometries}
The near-horizon analysis and classification of near-horizon geometries in~\cite{reall_higher_2004} assume that the supersymmetric Killing 
field becomes timelike in some neighbourhood of the horizon, which is no longer true in the globally null case. In this section we show that 
the same results apply in the null case, and we deduce some technical results for later use.

The results of~\cite{kayani_symmetry_2018} show that a supersymmetric near-horizon geometry, even in the null class, must be maximally 
supersymmetric. These solutions were classified in~\cite{gauntlett_all_2003}, and thus the near-horizon geometry must be $AdS_3\times S^2$ 
or a plane wave solution. While there is only one way $AdS_3\times S^2$ is realised as a near-horizon geometry of a null solution, as 
it has a unique null supersymmetric Killing field, if one were to use these results, one would need to determine how the plane wave geometry 
is compatible with being a near-horizon geometry. Instead, we prefer to give our self-contained treatment using the method 
of~\cite{reall_higher_2004}, which will reveal that the only other possible near-horizon geometry is the trivial plane wave with flat 
geometry.

Following~\cite{reall_higher_2004} we introduce Gaussian null coordinates near a connected component of the horizon. In these coordinates 
the horizon is at $\lambda =0$, $V=\partial_v$, and the metric takes the form
\begin{equation}
	g = 2\td \lambda \td v + 2\lambda h \td v + \gamma\;,\label{eq_metricNH}
\end{equation}
where $\gamma$ and $h$ are a family of metrics and 1-forms on the 3-manifolds $H_{v,\lambda}$, which are the $v=\const$, 
$\lambda = \const$ hypersurfaces. Note that since $V=\partial_v$ is Killing, the metric components do not depend on $v$. 
Let $H=H_{v, 0}$ denote a spatial cross-section of the horizon. 
Equations (\ref{eq_VX}-\ref{eq_VXdual}) imply that we can choose a coframe $\{Z^{(i)}\}_{i=1}^3$ of $\gamma$ such that
\begin{equation}
	X^{(i)} = (\td\lambda + \lambda h)\wedge Z^{(i)}\;. \label{eq_XiNH}
\end{equation}
Equation (\ref{eq_starVF}) implies that $V$ is hypersurface orthogonal, which is equivalent to 
\begin{equation}
	\hat \td h = \lambda h\wedge \partial_\lambda h \;,\label{eq_dh} 
\end{equation}
where $\hat \td$ denotes the exterior derivative projected onto $H_{v, \lambda}$ (i.e. it does not include $\lambda$ derivatives). 
Closedness of $X^{(i)}$ is equivalent to (\ref{eq_dh}) together with 
\begin{equation}
	\hat\td Z^{(i)} = \partial_\lambda\left(\lambda h\wedge Z^{(i)}\right) \;.\label{eq_dZ}
\end{equation}

To determine the Maxwell field, following~\cite{gauntlett_all_2003}, it is convenient to work in the spacetime coframe 
\begin{align}
	e^+ = \td \lambda + \lambda h\;, \qquad e^- = \td v\;, \qquad e^i = Z^{(i)}\;.\label{eq_vielbeinNH}
\end{align}
Equation (\ref{eq_VF}) implies that $F = F_{+i}e^+\wedge e^i + \frac{1}{2}F_{ij}e^i\wedge e^j$. $F_{ij}$ is determined by (\ref{eq_starVF}), 
while $++j$ component of (\ref{eq_DX}) determines $F_{+i}$. Choosing $e^-\wedge e^+\wedge e^1\wedge e^2\wedge e^3$ to be positively oriented, 
after some algebra we get
\begin{equation}
	F = \frac{\sqrt{3}}{2}\left[\frac{1}{3}\epsilon_{ijk}\gamma^{-1}\left( Z^{(j)}, \partial_\lambda Z^{(k)}\right)X^{(i)}  -\star_\gamma(h+\lambda\partial_\lambda h) \right]\;,\label{eq_MxwNH}
\end{equation}
where $\star_\gamma$ is the Hodge star of $\gamma$ with orientation $e^1\wedge e^2\wedge e^3$, and we used that $X^{(i)}=e^+\wedge e^i$.

Equation (\ref{eq_MxwNH}) agrees with (3.35) of~\cite{reall_higher_2004} in the null limit, but its derivation does not rely on the assumption 
that $V$ becomes timelike outside the horizon. The analysis of~\cite{reall_higher_2004} determines the leading order behaviour of the metric 
quantities $h, Z^{(i)}, \gamma$, which applies without modification. For completeness, we sketch it here as well. (\ref{eq_dh}) and Bianchi 
identity for (\ref{eq_MxwNH}) implies 
\begin{align}
	\hat\td h = 0\; \qquad\text{and}\qquad \hat \td \star_\gamma h =0 \quad \text{ on $H$,} \label{eq_dhdh}
\end{align}
respectively, from which follows
\begin{equation}
	0=-(\hat\td\star_\gamma\hat\td\star_\gamma +\star_\gamma\hat\td\star_\gamma\hat\td)h =\hat\nabla^2 h - \widehat{\operatorname{Ric}}\cdot h\quad \text{ on $H$}\;, \label{eq_Laplace}
\end{equation}
where $\hat\nabla$ and $\widehat{\operatorname{Ric}}$ are the Levi-Civita connection of $\gamma$ and its Ricci tensor, and $\cdot$ is with respect to $\gamma^{-1}$. 
The $j+k$ component of (\ref{eq_DX}) at $\lambda=0$ yields an expression for $\hat\nabla Z^{(i)}$ on $H$, which after taking another derivative 
and antisymmetrising yields 
\begin{equation}
	\widehat{\operatorname{Ric}} =h^2\gamma - h\otimes h - \hat\nabla h \quad \text{ on $H$.} \label{eq_Ricci}
\end{equation}
Then considering the integral $I = \int_H |\nabla h|^2_\gamma \operatorname{dvol}$, after integration by parts and using (\ref{eq_dhdh}-\ref{eq_Ricci}), one can 
show that 
\begin{equation}
	\hat\nabla h =0 \quad \text{ on $H$.} \label{eq_nablah}
\end{equation}
Equation (\ref{eq_dZ}) implies that $Z^{(i)}$ are hypersurface orthogonal, and without loss of generality there exist coordinates $z^i$ and a function 
$K(z)$ such that 
\begin{align}
	Z^{(i)} = K \td z^i + \ord(\lambda)\; \qquad\text{and}\qquad h = \hat\td \log K + \ord(\lambda).
\end{align}
Equation (\ref{eq_nablah}) imposes a condition on $K$, which has two solutions, one corresponding to flat near-horizon geometry with 
$T^3$ horizon topology, which is not allowed~\cite{galloway_generalization_2006}, the other one is
\begin{equation}
	K = K_0 \exp(-\psi)\;, 
\end{equation} 
where $\psi = \frac{1}{2}\log(z^iz^i)$ and $K_0$ is some constant. This corresponds to $S^2\times S^1$ horizon geometry with 
\begin{align}
	h = -\td\psi+\ord(\lambda)\;, \quad Z^{(i)} = K_0(\hat x^i \td \psi + \td \hat x^i)+\ord(\lambda)\;, \quad  \gamma = K_0^2(\td\psi^2 + \td\hat x^i\td\hat x^i)+\ord(\lambda)\;,
\end{align}
where we introduced $\hat x^i = z^i \exp(-\psi)$ that satisfy $\hat x^i\hat x^i=1$ and $\td \hat x^i\td\hat x^i$ is the round metric on $S^2$.

\subsection{Imposing axial symmetry}\label{ssec_axialNH}

Building on the above results of~\cite{reall_higher_2004}, it has been shown in~\cite{katona_supersymmetric_2023} that a $U(1)$ Killing field 
that preserves $V$ and $X^{(i)}$ must be of the form $W = W^\psi \partial_\psi + W^v\partial_v +\ord (\lambda)$ for a constant non-zero $W^\psi$ 
and a function $W^v$ on $H$. In fact, since $W$ is spacelike on the horizon\footnote{If $W$ were null at the horizon, it would be parallel to 
$V=\partial_v$, and since $[V, W]=0$, $W^v$ would not depend on $v$. Thus, the coordinate $v$ would be periodic by the periodicity of the orbits of $W$, 
which cannot happen.}, one can choose $v=\const$ surfaces such that $W$ is tangent to $H$, i.e. $W^v=0$, 
in which case one can show that 
\begin{equation}
	W=W^\psi\partial_\psi \label{eq_WNH}
\end{equation}
exactly in some neighbourhood of the horizon (see Remark after Lemma 6 
of~\cite{katona_supersymmetric_2023}). In the following we will work in such a coordinate system.

Another result of~\cite{katona_supersymmetric_2023} is that $\iota_W X^{(i)}\propto\td\lambda$\footnote{In fact, the explicit form of $W$ 
is not necessary for this result. From (\ref{eq_dxinorm}) $\td x^i$ are null on the horizon, and 
since $\iota_V\iota_WX^{(i)}=0$ by (\ref{eq_VX}), $\iota_WX^{(i)}\propto V^\flat = \td \lambda$ on $H$.}, hence if one can integrate 
(\ref{eq_dxi}) to obtain functions $x^i$, these are constant on each connected component of the horizon (i.e. each connected component of the 
horizon is mapped to a point in $\mathbb{R}^3$). In the null case the horizon topology is $S^2\times S^1$ and $W$ is tangent to the 
circle fibres, thus in a neighbourhood of the horizon one can integrate (\ref{eq_dxi}) to obtain (up to an additive constant which for 
simplicity we set to zero)~\cite{katona_supersymmetric_2023}
\begin{equation}
	x^i = -W^\psi K_0 \lambda\hat x^i +\ord(\lambda^2)\;, \label{eq_xNH}
\end{equation}
and the radial distance on $\mathbb{R}^3$ from a horizon component is
\begin{equation}
	r := \sqrt{x^ix^i} = |W^\psi K_0|\lambda + \ord(\lambda^2).
\end{equation}

For later reference, we now look at the next order in $\lambda$ and prove the following lemma.
\begin{lemma} \label{lem_WNH}
	Near a horizon component $g(W,W) = \alpha_0 + r \alpha_1 +\ord(r^2)$ for some constants $\alpha_0$, $\alpha_1$.
\end{lemma}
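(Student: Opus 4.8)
The plan is to work near a connected component of the horizon in the coordinates of Lemma \ref{lem_Wnull}. On a punctured neighbourhood of that component the metric has the form (\ref{eq_nullmetric}) with $W=\partial_u$, $V=\partial_v$; the functions $\mathcal{G},\mathcal{Q},\bm b$ depend only on the cartesian coordinates $x^i$ obtained from $\td x^i=\iota_W X^{(i)}$ (integrable, since the base is a simply connected punctured ball), and these $x^i$ collapse the horizon component to a single point, which we place at the origin of $\mathbb{R}^3$. By (\ref{eq_xNH}) the euclidean radius satisfies $r=|W^\psi K_0|\lambda+\ord(\lambda^2)$, so an expansion in $r$ coincides with one in the Gaussian null coordinate $\lambda$ to the relevant orders, and $g(W,W)=g_{uu}=-\mathcal{Q}/\mathcal{G}+\mathcal{G}^2 b^i b^i$. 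Hence the lemma reduces to controlling the behaviour of $\mathcal{G},\mathcal{Q},\bm b$ near the origin.

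First I would pin down the singularities. Since $\mathcal{G}^{-1}=-g(V,W)$ (see (\ref{eq_Gdef})) is smooth and vanishes linearly in $\lambda$, hence in $r$, at the horizon, and $\mathcal{G}$ is harmonic, Bôcher's theorem gives $\mathcal{G}=q/r+g_0$ with $g_0$ smooth harmonic on the ball and $q>0$ by (\ref{eq_Gpos}). By (\ref{eq_magnetic_pot_null}), $\td(\mathcal{K}/\mathcal{G})=2\sqrt3\,\iota_W F$ is smooth, so $\mathcal{K}/\mathcal{G}$ is bounded and $\mathcal{K}=k/r+k_0$ with $k_0$ smooth harmonic. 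The crucial step is to improve the decay of $\bm b$: by (\ref{eq_GK}), $\td(\mathcal{G}^3\bm b)=\star_3(\mathcal{G}\,\td\mathcal{K}-\mathcal{K}\,\td\mathcal{G})$ is an exact $2$-form on the punctured ball (the null chart, and hence $\bm b$, being single-valued there), so its flux through small coordinate spheres vanishes; evaluating the flux in the $r\to0$ limit yields $4\pi\bigl(q\,k_0(0)-k\,g_0(0)\bigr)=0$, which removes the would-be leading $\ord(r^{-2})$ term of $\mathcal{G}\,\td\mathcal{K}-\mathcal{K}\,\td\mathcal{G}$, leaving it $\ord(r^{-1})$. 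Therefore $\mathcal{G}^3\bm b=\ord(1)$ and $\bm b=\ord(r^3)$, so $\mathcal{G}^2 b^i b^i=\ord(r^4)$, and the source in (\ref{eq_Fcalc}) is $\ord(r)$ (Lipschitz at the origin); consequently $\mathcal{Q}=m/r+q_0$ with $q_0$ of class $C^2$ at the origin, the pole coefficient $m$ being fixed by the requirement that $g(W,W)=-\mathcal{Q}/\mathcal{G}+\ord(r^4)$ stay bounded.

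To finish, with $\mathcal{Q}=m/r+q_0$, $\mathcal{G}=q/r+g_0$ and $q_0,g_0$ differentiable at $r=0$ one has $r\mathcal{Q}=m+q_0(0)\,r+\ord(r^2)$ and $r\mathcal{G}=q+g_0(0)\,r+\ord(r^2)$, hence
\begin{equation*}
	g(W,W)=-\frac{\mathcal{Q}}{\mathcal{G}}+\ord(r^4)=-\frac{m}{q}-\frac{q\,q_0(0)-m\,g_0(0)}{q^2}\,r+\ord(r^2),
\end{equation*}
which is the claim with $\alpha_0=-m/q$ and $\alpha_1=-(q\,q_0(0)-m\,g_0(0))/q^2$ (as a check, $\alpha_0=(W^\psi K_0)^2$ by the leading near-horizon data). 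The main obstacle is the flux argument: establishing that $\td(\mathcal{G}^3\bm b)$ is genuinely exact on the punctured ball, i.e.\ that the chart of Lemma \ref{lem_Wnull} extends over a whole horizon component, which relies on the punctured ball being simply connected with vanishing $H^1$. Without the resulting improvement $\bm b=\ord(r^3)$ — if one had only $\bm b=\ord(r^2)$, with a correspondingly $\ord(r^{-1})$ source for $\mathcal{Q}$ — the particular solution of (\ref{eq_Fcalc}) picks up an $r\log r$ contribution and one obtains only the weaker estimate $g(W,W)=\alpha_0+\alpha_1 r+\ord(r^2\log r)$. (Alternatively one can bypass the $\mathbb{R}^3$ description and expand the near-horizon equations (\ref{eq_dh})--(\ref{eq_dZ}) together with the Maxwell/Bianchi identities to first order in $\lambda$, showing the $\ord(\lambda)$ correction to $\gamma_{\psi\psi}$ is constant on the horizon cross-section.)
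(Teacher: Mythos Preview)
Your primary route—working in the $\mathbb{R}^3$ picture and using a flux argument for $\mathcal{G}^3\bm b$—is genuinely different from the paper's. The paper stays entirely in Gaussian null coordinates: it expands $h$ and $Z^{(i)}$ to first subleading order in $\lambda$, uses (\ref{eq_dh})--(\ref{eq_dZ}) to parametrise the corrections by functions $h_{1,\psi}$ and $\zeta^i$ on $S^2$, writes $g(W,W)$ in terms of $h_{1,\psi}+2\zeta^i\hat x^i$, and then imposes the Bianchi identity for (\ref{eq_MxwNH}) at second-to-leading order to show that $h_{1,\psi}+\zeta^i\hat x^i$ and $\zeta^i\hat x^i$ are separately harmonic on the compact $S^2$ and hence constant. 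This is precisely the ``alternative'' you sketch in your final parenthetical. After the proof the paper also rederives the constraint $\gamma_{-1}\kappa_0-\kappa_{-1}\gamma_0=0$ by comparing the two expressions (\ref{eq_Maxwell_null}) and (\ref{eq_MxwNH}) for $F$ on the horizon—still not a flux argument.

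Your flux argument has a real gap. Stokes gives $\int_{S^2}\td(\mathcal{G}^3\bm b)=0$ only if $\mathcal{G}^3\bm b$ is a single-valued $1$-form on the whole sphere, i.e.\ if the null chart of Lemma~\ref{lem_Wnull}—in particular a global choice of $v$—extends over an entire tubular neighbourhood of the horizon component. Lemma~\ref{lem_Wnull} is purely local, and under the residual freedom (\ref{eq_gauge_freedom})--(\ref{eq_gauge_freedom_Qa}) the object $\mathcal{G}^3\bm b$ is defined only up to an exact form; whether the closed $2$-form $\star_3(\mathcal{G}\,\td\mathcal{K}-\mathcal{K}\,\td\mathcal{G})$ is globally exact is exactly what you are trying to decide. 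Your justification (``punctured ball simply connected with vanishing $H^1$'') points to the wrong obstruction: exactness of a closed $2$-form is governed by $H^2(B^3\setminus\{0\})\cong\mathbb{R}$, which does not vanish. One can in fact argue that the null chart globalises here (e.g.\ by building it from the global Gaussian null chart and following the coordinate changes in the proof of Lemma~\ref{lem_Wnull}), but that is additional work you have not supplied, and at this stage of the paper the global $\mathbb{R}^3$ structure of the orbit space has not yet been established—that is the content of Section~\ref{sec_orbit}. This is why the paper's own proof avoids the $\mathbb{R}^3$ description and stays in Gaussian null coordinates.
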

\begin{proof}
	From the leading order metric and (\ref{eq_WNH}) follows that $g(W,W)= (W^\psi K_0)^2 +\ord(r)$ thus $\alpha_0=(W^\psi K_0)^2$ is 
	trivially constant. In the second-to-leading order we introduce
	\begin{align}
		Z^{(i)} &= K_0\left[\hat x^i \td\psi + \td \hat x^i + \lambda\left(Z^{(i)}_{1,\psi}\td \psi + \hat Z^{(i)}_1 \right) +\ord(\lambda^2)\right]\;, \\
		h &= -\td\psi + \lambda(h_{1,\psi}\td\psi + \hat h_1) +\ord(\lambda^2)\;,
	\end{align}
	where $Z^{(i)}_{1,\psi}$ and $ h_{1,\psi}$ are functions, and $\hat Z^{(i)}_1$ and $\hat h_1$ are one-forms on $S^2$. (Note that $h$ and 
	$Z^{(i)}$ cannot depend on $\psi$, since $\partial_\psi$ is Killing that also preserves $X^{(i)}$.) (\ref{eq_dh}) implies that 
	\begin{equation}
		\hat h_1 =  \td h_{1,\psi} \;, \label{eq_dh1}
	\end{equation}
	and from (\ref{eq_dZ}) it follows that
	\begin{align}
		Z^{(i)}_{1,\psi} = 2h_{1,\psi}\hat x^i + 4\zeta^i\;, \qquad \hat Z^{(i)} = 2h_{1,\psi}\td \hat x^{i}+2\td \zeta^i\;
	\end{align}
	for some functions $\zeta^i$ on $S^2$. The norm of $W$ is given by
	\begin{equation}
		g(W,W) = \left(K_0W^\psi\right)^2 + 4|K_0W^\psi| \left(h_{1,\psi}+ 2 \zeta^i\hat x^i\right)r + \ord(r^2).\label{eq_WnormNH}
	\end{equation}

	We now show that $h_{1,\psi}$ and $\zeta^i\hat x^i$ are constants. Bianchi identity for (\ref{eq_MxwNH}) in second-to-leading order 
	yields that 
	\begin{align}
		&\td\star_2\td\left(h_{1,\psi}+ \zeta^i\hat x^i\right) = 0\,\label{eq_hpsi}\\
		&\epsilon_{ijk}\hat x^i\td \langle\td \hat x^j, \td \zeta^k\rangle = \epsilon_{ijk}\hat x^i\zeta^k\td x^j\;,\label{eq_zeta}
	\end{align}
	where $\star_2$ and $\langle,\rangle$ are the Hodge star operator and the inverse metric on the unit $S^2$, where the orientation is 
	related to the horizon orientation by $\epsilon_\gamma = K_0^3 \td \psi \wedge \epsilon_{S^2}$. From (\ref{eq_hpsi}) and 
	the compactness of $S^2$ follows that $h_{1,\psi}+ \zeta^i\hat x^i$ is constant. (\ref{eq_zeta}) can be written as 
	\begin{equation}
		\star_2\td\star_2\td\left(\zeta^i\td \hat x^i\right) = -\td\left(\zeta^i\hat x^i\right)\;.\label{eq_zetax}
	\end{equation}
	Acting with $\star_2\td \star_2$ on (\ref{eq_zetax}) yields that $\zeta^i\hat x^i$ is harmonic on $S^2$, hence it is constant. Thus, 
	both $h_{1,\psi}$ and $\zeta^i\hat x^i$ are constants, and the norm of $W$ has the claimed form.
\end{proof}

The above proof solely relies on the near-horizon geometry, and does not use the form of the solution in the DOC. If one uses (\ref{eq_nullmetric}),  
(\ref{eq_Maxwell_null}), and Lemma  \ref{lem_Wnull}, there is an alternative way of proving that $h_{1,\psi}$ and $\zeta^i\hat x^i$ are constants. Integrating
(\ref{eq_dxi}) to second-to-leading order yields 
\begin{equation}
		r = |W^\psi K_0|\left[\lambda + \left(h_{1,\psi}+2\zeta^i\hat x^i\right)\lambda^2\right] + \ord (\lambda^3)\;.\label{eq_rlambda}
\end{equation}
Using this, the inner product of Killing fields becomes
\begin{equation}
		g(V, W) = \lambda\iota_Wh = -\frac{W^\psi r}{|K_0W^\psi|} +\frac{2r^2}{W^\psi K_0^2}\left(h_{1,\psi}+\zeta^i\hat x^i\right) + \ord(r^3)\;.
\end{equation}
On the other hand, from (\ref{eq_nullmetric}) $g(V,W)=-\mathcal{G}^{-1}$, so $\mathcal{G}$ diverges as $\sim1/r$ and harmonicity on 
$\mathbb{R}^3$ implies that 
\begin{equation}
	h_{1,\psi}+ \zeta^i\hat x^i = C \label{eq_hzeta_C}
\end{equation}
for some constant $C$.

The two expressions for the Maxwell field (\ref{eq_Maxwell_null}) and (\ref{eq_MxwNH}) evaluated on the horizon using (\ref{eq_Xinull}), 
(\ref{eq_XiNH}), and (\ref{eq_xNH}) yield respectively
\begin{align}
	F|_{\lambda=0} &= -\frac{\operatorname{sgn}(W^\psi K_0)(\gamma_{-1}\kappa_0-\kappa_{-1}\gamma_0)}{2\sqrt{3}\gamma_{-1}^2}\td \lambda\wedge \td\psi + \frac{\sqrt{3}}{4}\gamma_{-1}\operatorname{sgn}(W^\psi K_0) \epsilon_{ijk}\hat x^i\td \hat x^j\wedge \td \hat x^k \;,\label{eq_MXWcompare2a}\\
	F|_{\lambda =0} &= \frac{K_0}{\sqrt{3}}\epsilon_{ijk}\hat x^i\langle \td \hat x^j,\td\zeta^k\rangle\td \lambda\wedge \td\psi + \frac{\sqrt{3}}{4} K_0 \epsilon_{ijk}\hat x^i\td \hat x^j\wedge \td \hat x^k +\nonumber\\
					& \quad + \frac{K_0}{\sqrt{3}}\td\lambda \wedge \epsilon_{ijk}\left(2\hat x^j\zeta^k \td\hat x^i + \langle\td \hat x^j,\td\zeta^k\rangle\td \hat x^i\right)\;,\label{eq_MXWcompare2}
\end{align}
where $\mathcal{G}=:\gamma_{-1}/r+ \gamma_0 + \ord(r)$, $\mathcal{K}=:\kappa_{-1}/r+ \kappa_0 + \ord(r)$ with constants 
$\gamma_{-1}, \gamma_0, \kappa_{-1}, \kappa_0$, where we used the boundedness of $\mathcal{K}/\mathcal{G}$ and the harmonicity of 
$\mathcal{K}$. Thus, by comparison one obtains
\begin{align}
	K_0\epsilon_{ijk}\hat x^i \langle\td \hat x^j,\td \zeta^k\rangle &= -\frac{1}{2}\operatorname{sgn}(W^\psi K_0)\gamma_{-1}^{-2}(\gamma_{-1}\kappa_0-\kappa_{-1}\gamma_0)\;, \label{eq_zdx1}\\
	2\zeta^i\td\hat x^i &= \hat x^i\td\zeta^i\;, \label{eq_zdx2}
\end{align}
where for (\ref{eq_zdx2}) we have applied $\star_2$ on the one-form in the second line of (\ref{eq_MXWcompare2}). Taking $\star_2\td$ of 
(\ref{eq_zdx2}) and substituting into (\ref{eq_zdx1}) yields 
\begin{equation}
	\gamma_{-1}\kappa_0-\kappa_{-1}\gamma_0=0\;,\label{eq_c1NH}
\end{equation}
and therefore the first term of (\ref{eq_MXWcompare2a}) vanishes, which, to leading order, corresponds to the first term of (\ref{eq_MxwNH}). 
Thus, the Maxwell field must be of the form
\begin{align}
	F &=  -\frac{\sqrt{3}}{2}\star_\gamma(h+\lambda\partial_\lambda h) +\ord(\lambda)\td\lambda +\ord(\lambda^2)\td v+\ord(\lambda^2)\td z^i \nonumber\\
	&=\frac{\sqrt{3}K_0}{4}\epsilon_{ijk} \hat x^i \td \hat x^j\wedge \td \hat x^k-\frac{\sqrt{3}K_0\lambda}{2}\left[(2\zeta^i\hat x^i - \langle\td \zeta^i,\td \hat x^i\rangle)\epsilon_{klm}\hat x^k\td \hat x^l\wedge\td\hat x^m \right.\nonumber\\
	&\qquad\qquad\qquad\qquad\qquad\left.+ 2\td\psi \wedge\star_2(\zeta^i\td\hat x^i)\right]+\ord(\lambda)\td\lambda +\ord(\lambda^2)\td v+\ord(\lambda^2)\td z^i\;,\label{eq_Maxwell_leading}
\end{align}
where we used (\ref{eq_hzeta_C}). The $\td\lambda\wedge\td \psi \wedge \td \hat x$ terms of the Bianchi identity for (\ref{eq_Maxwell_leading}) imply 
\begin{equation}
	\zeta^i\td\hat x^i =0 \implies \zeta^i = \zeta\hat x^i,
\end{equation}
for some function $\zeta(\hat x)$, and by (\ref{eq_zdx2}) $\zeta = \zeta^i \hat x^i$ is a constant, and by (\ref{eq_hzeta_C}) so is $h_{1,\psi}$. 
\begin{corollary} \label{cor_Q}
	There exists a gauge choice of (\ref{eq_gauge_freedom}-\ref{eq_gauge_freedom_Qa}) such that around a horizon component $\mathcal{Q}= q_{-1}/r + q_0 + \ord (r)$ for some constants $q_{-1}, q_{0}$.
\end{corollary}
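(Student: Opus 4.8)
The plan is to read $\mathcal{Q}$ off the metric (\ref{eq_nullmetric}) in the gauge of Lemma \ref{lem_Wnull}, where all metric functions depend only on $\bm x$ and $W=\partial_u$, $V=\partial_v$. Its $\partial_u\partial_u$ component is $g(W,W)=-\mathcal{G}^{-1}\mathcal{Q}+\mathcal{G}^2 b^i b^i$, i.e. $\mathcal{Q}=\mathcal{G}^3 b^i b^i-\mathcal{G}\,g(W,W)$, and this identity survives the residual gauge freedom (\ref{eq_gauge_freedom})--(\ref{eq_gauge_freedom_Qa}), since that freedom leaves $W=\partial_u$ and $\mathcal{G}^{-1}=-g(W,V)$ untouched. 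I would then bound the two terms on the right separately near a horizon component, writing $r$ for the euclidean distance on $\mathbb{R}^3$ to the corresponding centre.

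The term $\mathcal{G}\,g(W,W)$ is immediate from what is already in hand: $\mathcal{G}$ is harmonic on $\mathbb{R}^3$ away from the centres and, by $\mathcal{G}^{-1}=\pm\sqrt N\to 0$ at the horizon, diverges like $1/r$, hence has a simple pole there, $\mathcal{G}=\gamma_{-1}/r+\gamma_0+\ord(r)$ with $\gamma_{-1},\gamma_0$ constants; and Lemma \ref{lem_WNH} gives $g(W,W)=\alpha_0+\alpha_1 r+\ord(r^2)$ with $\alpha_0,\alpha_1$ constant. Multiplying, $\mathcal{G}\,g(W,W)=\gamma_{-1}\alpha_0/r+(\gamma_{-1}\alpha_1+\gamma_0\alpha_0)+\ord(r)$, whose pole coefficient and constant term are genuine constants.

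The substantive step is to show that, after a suitable gauge choice, $\mathcal{G}^3 b^i b^i=\ord(r)$. From (\ref{eq_GK}) together with (\ref{eq_magnetic_pot_null}), $\mathcal{G}^3\bm b$ satisfies $\star_3\td(\mathcal{G}^3\bm b)=\mathcal{G}\td\mathcal{K}-\mathcal{K}\td\mathcal{G}$; using that $\mathcal{G}$ and $\mathcal{K}$ are harmonic with simple poles at the centre (so $\mathcal{G}=\gamma_{-1}/r+\ord(1)$, $\mathcal{K}=\kappa_{-1}/r+\ord(1)$) and the constraint (\ref{eq_c1NH}), the $\ord(r^{-3})$ and $\ord(r^{-2})$ contributions to $\mathcal{G}\td\mathcal{K}-\mathcal{K}\td\mathcal{G}$ cancel, so this right-hand side is only $\ord(r^{-1})$. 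Thus the $2$-form $\Theta:=\td(\mathcal{G}^3\bm b)$ is $\ord(r^{-1})$ near the centre and, being globally exact, has vanishing periods; on the (simply connected) punctured neighbourhood of the centre a standard Poincaré homotopy-operator estimate, applied after extending $\Theta$ across the centre, then produces a primitive $\beta$ of $\Theta$ that is bounded near the centre. Since $\mathcal{G}^3\bm b-\beta$ is closed, hence $=\td h$ for a function $h(\bm x)$, the gauge freedom (\ref{eq_gauge_freedom})--(\ref{eq_gauge_freedom_Qa}) lets me pass to a gauge where $\mathcal{G}^3\bm b=\beta=\ord(1)$, whence $b^i=\mathcal{G}^{-3}\,\ord(1)=\ord(r^3)$ and $\mathcal{G}^3 b^i b^i=\ord(r^3)\subseteq\ord(r)$. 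Feeding this back into $\mathcal{Q}=\mathcal{G}^3 b^i b^i-\mathcal{G}\,g(W,W)$ gives $\mathcal{Q}=q_{-1}/r+q_0+\ord(r)$ with $q_{-1}=-\gamma_{-1}\alpha_0$ and $q_0=-(\gamma_{-1}\alpha_1+\gamma_0\alpha_0)$ constant, as claimed.

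I expect the third paragraph to be the delicate part: establishing that the curl of $\mathcal{G}^3\bm b$ is no worse than $\ord(r^{-1})$ — which is exactly where the earlier identity (\ref{eq_c1NH}), itself extracted from the near-horizon form of $F$ and from $\zeta^i=\zeta\hat x^i$, is needed — and checking that the gauge freedom is genuinely large enough to absorb the closed remainder, i.e. that every closed $1$-form on a punctured ball about the centre is the differential of an admissible gauge parameter $h(\bm x)$.
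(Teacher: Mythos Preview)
Your proof is correct and follows the same route as the paper: both compute $\mathcal{Q}=\mathcal{G}^3|\bm b|^2-\mathcal{G}\,g(W,W)$, invoke Lemma~\ref{lem_WNH} together with the harmonic expansion of $\mathcal{G}$ for the second term, and then argue that the $\bm b$-contribution is subleading in a suitable gauge using (\ref{eq_GK}). The paper is terser --- it simply asserts that $\mathcal{G},\mathcal{K}=\ord(r^{-1})$ gives $\mathcal{G}^2|\bm b|^2=\ord(r^2)$ with an appropriate gauge --- whereas you make the mechanism explicit via (\ref{eq_c1NH}) and a homotopy estimate; two small comments: the $\ord(r^{-3})$ part of $\mathcal{G}\,\td\mathcal{K}-\mathcal{K}\,\td\mathcal{G}$ already cancels by antisymmetry without (\ref{eq_c1NH}) (it is only the $\ord(r^{-2})$ term that (\ref{eq_c1NH}) kills), and ``extending $\Theta$ across the centre'' is imprecise since $\Theta=\ord(r^{-1})$ does not extend continuously --- but the radial homotopy integral still converges (the integrand is bounded) and yields $\td\beta=\Theta$ on the punctured ball, so your argument goes through unchanged.
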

\begin{proof}
	The invariant $g(W,W) = -\mathcal{Q}/\mathcal{G} + \mathcal{G}^2|\bm b|^2$. By (\ref{eq_GK}) and using that $\mathcal{K}=\ord(r^{-1})$ 
	and $\mathcal{G}=\ord(r^{-1})$, one can see that (with an appropriate gauge choice of (\ref{eq_gauge_freedom}-\ref{eq_gauge_freedom_Qa})) we have $\mathcal{G}^2|\bm b|^2=\ord(r^2)$. 
	Thus, from harmonicity of $\mathcal{G}$ and Lemma \ref{lem_WNH} follows the claim.
\end{proof}

\pagebreak[1]
\section{Orbit space and the general global solution} \label{sec_orbit}

The next step is to determine the structure and topology of the orbit space. We will show that even if the DOC is not simply connected, 
it is still possible to define invariants (scalar functions) from certain closed 1-forms that are left invariant by $V,W$ and thus descend to the 
three-dimensional orbit space $\hat\Sigma:= \llangle\mathcal{M}\rrangle/[\mathbb{R}_V\times U(1)_W]$. Using these functions, one can 
`almost globally' define the harmonic functions, and prove that the solution is globally defined by a set of harmonic functions of multi-centred 
type, i.e. with isolated singularities where they diverge as $1/r$ on $\mathbb{R}^3$.

\subsection{Orbit space analysis and invariants}

Let us now look at the structure of the orbit space $\hat\Sigma$, which has been analysed in detail in~\cite{hollands_further_2011}. It 
is a topological manifold~\cite{fintushel_circle_1977, fintushel_classification_1978} with boundary $\partial\hat\Sigma=S_\infty^2\cup_i S_i^2\cup_j \hat H_j$, 
where $S^2_\infty$ is a 2-sphere at infinity, $S_i^2$ are non-isolated fixed points corresponding to `bolts'~\cite{gibbons_classification_1979}, 
and $\hat H_j$ are quotients of horizon components. Generally, the orbit space can be written as $\hat\Sigma = \hat F\cup \hat E\cup\hat L$, 
where $\hat F$, $\hat E$, $\hat L$ denotes fixed points ($U(1)$ isotropy), exceptional orbits  (discrete isotropy), and regular 
orbits (trivial isotropy), respectively. $\hat L$ is open in $\hat \Sigma$, and has the structure of a smooth manifold. 
$\hat E$ consists of curves in $\hat\Sigma$ that are either closed, end on an isolated fixed point, or on a horizon component $\hat H_i$. 
$\hat F$ consists of isolated fixed points or the previously mentioned 2-spheres $S_i^2$.

A crucial observation is that the orbit space of the asymptotic region $\hat\Sigma_0:=\Sigma_0/U(1)_W\simeq \mathbb{R}^3\setminus B^3$ 
is simply connected, so we can use topological censorship for Kaluza-Klein asymptotics (Theorem 5.5 of \cite{chrusciel_topological_2009}) 
to conclude that $\hat\Sigma=\Sigma/U(1)_W$ is simply connected. This, however, is not sufficient to define potentials for closed 1-forms, 
as the orbit space, in general, fails to be a {\it smooth} manifold at fixed points and exceptional orbits. To rule the existence of the latter
out, we have the following lemma.

\begin{lemma}
	There are no exceptional orbits in $\llangle\mathcal{M}\rrangle$. \label{lem_exceptional}
\end{lemma}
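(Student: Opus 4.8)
The plan is to show that an exceptional orbit would force the existence of a $U(1)$-invariant two-plane distribution that is incompatible with the three linearly independent closed two-forms $X^{(i)}$ (the Killing spinor bilinears). First I would recall the local structure near an exceptional orbit: by the slice theorem for the $U(1)$ action, a neighbourhood of a point on an exceptional orbit with isotropy $\mathbb{Z}_k$ ($k\ge 2$) looks like $U(1)\times_{\mathbb{Z}_k}(\mathbb{R}^2\times\mathbb{R}^2)$, where $\mathbb{Z}_k$ acts on the first (normal) $\mathbb{R}^2$ factor by rotation through $2\pi/k$ and trivially on the orbit direction, and on the second $\mathbb{R}^2$ by some rotation; the point being that the tangent space to the exceptional-orbit curve together with the Killing field $W$ span a $W$-invariant (in fact pointwise $W$-fixed up to the flow) $2$-plane, and the normal $\mathbb{R}^2$ is rotated nontrivially by $W$. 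Concretely, near such an orbit there are coordinates in which $W$ generates a rotation in a $2$-plane $P$ at the (would-be) exceptional locus, just as at a genuine fixed-point set, the difference being that $W$ does not vanish there.

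The key step is then to evaluate the constraints coming from $\mathcal{L}_W X^{(i)}=0$, $\iota_V X^{(i)}=0$, and the algebra (\ref{eq_XXspacetime}) at a point $p$ on the exceptional orbit. Since $W$ acts on $T_p\mathcal{M}$ by a rotation fixing the directions along $W$ and along the exceptional curve and rotating the normal $2$-plane $P$, any $W$-invariant $2$-form must, restricted to the span of $\{W, \partial_{\text{curve}}\}\oplus P$, be a combination of the area form of $P$ and forms built from the fixed directions; in particular the three $X^{(i)}_p$ would have to live in a space of $W$-invariant two-forms that is too small to accommodate three forms satisfying the quaternionic relations (\ref{eq_XXspacetime}) with $\iota_V X^{(i)}=0$. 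This is exactly the mechanism already used in the paper to analyse bolts (non-isolated fixed points $S_i^2$), where $W$-invariance of the hyper-Kähler triple forces a $2$-dimensional fixed locus; an exceptional orbit would be the "codimension-two rotating" analogue, and the same linear-algebra obstruction — three anti-self-dual symplectic forms cannot all be invariant under a nontrivial rotation in a single $2$-plane in four dimensions — rules it out. I would phrase this on the hyper-Kähler base $\mathcal{B}$ in the timelike case (where $W$ is triholomorphic by Lemma 1 of \cite{katona_supersymmetric_2023}, so it must preserve all three complex structures, but a nontrivial rotation in $T_p\mathcal{B}$ preserves at most one), and separately in the null case using (\ref{eq_Xinull}) and Lemma \ref{lem_Wnull}, where $X^{(i)}=\td u\wedge\td x^i$ and $W=\partial_u$ acts trivially, so again no exceptional orbit is possible.

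The main obstacle I anticipate is making the slice-theorem picture rigorous enough that the linearised $W$-action near the exceptional orbit is genuinely a rotation in a well-defined $2$-plane — one must be careful that the isotropy representation is as claimed and that "exceptional" (discrete but nontrivial isotropy, codimension-two in the orbit space) is the only alternative to fixed points and regular orbits for a $U(1)$ action on a $5$-manifold, which is standard (Seifert-type structure, cf. \cite{hollands_further_2011,fintushel_circle_1977,fintushel_classification_1978}) but needs to be invoked correctly. A secondary point is that the argument must be run pointwise at $p$ but then one concludes the whole exceptional curve is excluded; since $\hat E$ was described as consisting of curves that are closed or end on isolated fixed points or horizon components, excluding a single interior point of such a curve suffices. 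I would also remark that this is the step where triholomorphicity (equivalently, $\mathcal{L}_W X^{(i)}=0$ for all three $i$) is essential: a generic axial Killing field preserving only the metric could have exceptional orbits, so the conclusion genuinely uses the supersymmetry-compatibility assumption \ref{ass5D_W}.
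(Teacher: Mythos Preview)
Your core idea is correct and matches the paper's: the nontrivial isotropy rotation in a single $2$-plane forces the three $W$-invariant $2$-forms $X^{(i)}$ (already annihilated by $V$) into a $2$-dimensional subspace, contradicting their linear independence from (\ref{eq_XXspacetime}).

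The paper differs organisationally by running a single uniform argument on the spacetime rather than splitting into timelike and null cases. It constructs coordinates $(\tau,\lambda,x,y,z)$ around the exceptional orbit via the tubular neighbourhood theorem (this is exactly the rigorous form of your slice-theorem picture), so that $V=\partial_\tau$ and $W=\tfrac{p}{2}\partial_\lambda+\tfrac{n}{2}(x\partial_y-y\partial_x)$, and then computes $\lie_W X^{(i)}|_E=0$ for the $ax$ and $ay$ components ($a=\lambda,z$) to obtain the harmonic-oscillator equation $\partial_\lambda^2 X^{(i)}_{ax}=-(n/p)^2 X^{(i)}_{ax}$; $2\pi$-periodicity in $\lambda$ then forces these components to vanish unless $n/p\in\mathbb{Z}$, and either alternative leads to $X^{(i)}\in\spn\{\td\lambda\wedge\td z,\td x\wedge\td y\}$. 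This unified spacetime treatment sidesteps a small gap in your timelike branch: working on the hyper-K\"ahler base $\mathcal{B}$ presupposes $f\neq 0$ at the exceptional orbit, which is not guaranteed --- in the timelike case $\{f=0\}$ has empty interior but may still be non-empty, and nothing prevents the exceptional orbit from lying inside it. Your null-case route via Lemma~\ref{lem_Wnull} is fine.
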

\begin{proof}
	Let us assume for contradiction that there exists an exceptional orbit $E_e\subset \llangle\mathcal{M}\rrangle$ with isotropy group $\mathbb{Z}_p$, 
	and let $e\in E_e$ be a point on that orbit, i.e. $e^{i2\pi/p}\cdot e =e$, where $e^{i\alpha}\in U(1)$ with $\alpha\sim\alpha+2\pi$. First, we will 
	construct a local chart around $E_e$ as follows.
	
	By assumption $S_e:=\operatorname{span}\{W_e, V_e\}\subset T_e\mathcal{M}$ is timelike and non-degenerate, hence $S_e^\perp$ is spacelike. Since 
	$\lie_WV=\lie_WW=0$ and $W$ is Killing, $e^{i2\pi/p}{}_*$ is an isometry that preserves $S_e$ and $S^\perp_e$, so $\mathcal{R}:=e^{i2\pi/p}{}_*|_{S^\perp_e}$ is a three-dimensional orientation-preserving\footnote{
		This can be seen by the fact that $W$ preserves $V^\flat\wedge W^\flat\wedge \iota_WX^{(1)}\wedge \iota_WX^{(2)}\wedge \iota_WX^{(3)}\neq0$ 
		for $N>0$.
	} rotation. Furthermore, $\mathcal{R}^p=\operatorname{Id}_{S^\perp_e}$, thus
	$\mathcal{R}$ must be a rotation by $2\pi n/p$ for some $n\in\mathbb{Z}$. Let $Z_e\in S^\perp_e$ be the unit vector preserved by 
	$\mathcal{R}$, and $X_e, Y_e\in S^\perp_e$ such that $\{Z_e, X_e, Y_e\}$ is an orthonormal frame in $S^\perp_e$ and 
	\begin{equation}
		\mathcal{R}\begin{pmatrix}
			X_e\\ Y_e
		\end{pmatrix}=
		\begin{pmatrix}
			\cos(2\pi n/p) & \sin(2\pi n/p)\\
			-\sin(2\pi n/p) & \cos(2\pi n/p)
		\end{pmatrix}
		\begin{pmatrix}
			X_e\\ Y_e
		\end{pmatrix}\;. \label{eq_rotate_XY}
	\end{equation}
	Now let us extend ${Z_e, X_e, Y_e}$ along $E_e$ by 
	\begin{align}
		\lie_W Z=0\;, \qquad \lie_W X = -\frac{n}{2} Y\;, \qquad \lie_W Y = \frac{n}{2} X\;, \label{eq_extend_XY}
	\end{align}
	(The factor of $2$ only appears because $W$ is normalised such that it is $4\pi$-periodic.) Simply Lie-dragging $X,Y$ along $W$ would rotate them 
	according to (\ref{eq_rotate_XY}), which is cancelled by the right-hand side of (\ref{eq_extend_XY}) (recall that the exceptional orbit is 
	$2\pi/p$-periodic), so extending $X$ and $Y$ as in (\ref{eq_extend_XY})
	is necessary in order for them to be single-valued along $E_e$. Finally, let us extend $X, Y, Z$ along the integral curves of $V$ to some 2-surface $E$ by 
	\begin{equation}
		\lie_VX=\lie_VY=\lie_VZ=0.
	\end{equation}
	This is possible since $[V,W]=0$. Thus, we have constructed an orthonormal frame along $E$ in the normal 
	bundle $NE$. Let $\lambda$ parametrise the integral curve of $2p^{-1}W$ along $E$, so 
	that $\lambda\sim \lambda + 2\pi$ on $E$, and $\tau$ be the affine parameter distance from $E_e$ along an integral curve of $V$. By the 
	Tubular Neighbourhood Theorem (see e.g.~\cite{lee_introduction_2018}) we can introduce 
	coordinates $\{x, y, z\}$ by exponentiating linear combinations of $\{X, Y, Z\}$ such that $E$ is at $x=y=z=0$, which together with $\lambda, \tau$ 
	form a local chart in some neighbourhood $U$ of $E_e$. Since $W, V$ are Killing, they map geodesics to geodesics, hence $V=\partial_\tau$, 
	and components of $W$ can be deduced from its action on the orthonormal frame in $NE$, that is (cf. (2.20) of~\cite{hollands_further_2011})
	\begin{equation}
		W=\frac{p}{2}\partial_\lambda+\frac{n}{2}(x\partial_y-y\partial_x).
	\end{equation} 
	
	Now we look at $X^{(i)}$ on $E$. From (\ref{eq_VX}) it follows that $X^{(i)}$ have no $\tau$ leg. The two-forms are preserved by $W$, thus 
	\begin{align}
		0=\lie_{W} X^{(i)}_{a x}|_E &= \frac{p}{2}\partial_\lambda X^{(i)}_{a x} + \frac{n}{2} X^{(i)}_{a y}\;,\\
		0=\lie_{W} X^{(i)}_{a y}|_E &= \frac{p}{2}\partial_\lambda X^{(i)}_{a y} - \frac{n}{2} X^{(i)}_{a x}\;, 
	\end{align}
	for $a=\lambda, z$. Taking a $\lambda$ derivative of these equations, and substituting 
	the originals back yields that 
	\begin{equation}
		\partial_\lambda^2X^{(i)}_{a x} = -\frac{n^2}{p^2}X^{(i)}_{a x}, \label{eq_Xlx}
	\end{equation}
	and similarly for $X^{(i)}_{ay}$. Equation (\ref{eq_Xlx}) admits solutions which are $2\pi p/n$-periodic in $\lambda$. 
	The exceptional orbits, however, are $2\pi$-periodic in $\lambda$, which means $n/p\in \mathbb{Z}$. 
	In that case the neighbouring orbits (for $x,y>0$) would have the same isotropy as the exceptional one, which cannot happen.
	Hence, the only solution to (\ref{eq_Xlx}) is $X^{(i)}_{a x}=X^{(i)}_{a y}=0$ for $a=\lambda, z$.
	Thus, at each point of the exceptional orbit $X^{(i)}\in \operatorname{span}\{\td \lambda\wedge \td z, \td x\wedge \td y\}$, which is two-dimensional, contradicting 
	linear independence of the three two-forms following from (\ref{eq_XXspacetime}).
\end{proof}

\noindent{\bf Remark.} The key assumptions for Lemma \ref{lem_exceptional} is that the axial Killing field preserves three linearly 
independent 2-forms possibly having legs in four directions (recall that they have no $\td v$ legs). This follows from supersymmetry in both timelike
and null case, but it is a more general result.
\\

From the assumption that the span of Killing fields is timelike (assumption \ref{ass5D_timelike}), it follows that $F\subset\operatorname{int}\mathcal{B}$, so $F$ is a set of 
fixed points of a triholomorphic Killing field, which must be isolated (see~\cite{gibbons_hidden_1988} and the proof of Lemma 8 
in~\cite{katona_supersymmetric_2023}). This rules out any `bolts'~\cite{gibbons_classification_1979} in the orbit space. Thus, we have the 
following result (cf. Lemma 8 of~\cite{katona_supersymmetric_2023}).

\begin{corollary}
	The orbit space $\hat\Sigma = \hat L\cup\hat F$, where $\hat L$ corresponds to regular orbits, and $\hat F$ is a finite  
	set of points corresponding to isolated fixed points of the $U(1)$ action. Its boundary is $\partial\hat \Sigma= \hat H \cup S_\infty^2$. \label{cor_orbitspace}
\end{corollary}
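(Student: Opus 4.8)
The plan is to feed Lemma \ref{lem_exceptional} into the general description of the orbit space recalled above, and then pin down $\hat F$ using assumption \ref{ass5D_timelike} together with the structure of triholomorphic $U(1)$ actions on hyper-K\"ahler four-manifolds. Recall that in general $\hat\Sigma = \hat F\cup\hat E\cup\hat L$ with $\partial\hat\Sigma = S^2_\infty\cup_i S_i^2\cup_j\hat H_j$, where the $S_i^2$ are the images of two-dimensional (``bolt'') fixed-point sets. Lemma \ref{lem_exceptional} gives $\hat E=\emptyset$ at once, so it remains to show that $\hat F$ is a finite set of isolated points and that the $S_i^2$ are absent.

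First I would locate the fixed-point set $\mathcal{F}=\{p\in\llangle\mathcal{M}\rrangle \mid W_p=0\}$. In the null case it is empty: since $\operatorname{span}\{V,W\}$ is timelike by assumption \ref{ass5D_timelike} while $V$ is null, $W$ cannot vanish anywhere on $\llangle\mathcal{M}\rrangle$ (equivalently $N>0$, as already observed). In the timelike case, at a point where $W$ vanishes the span of the Killing fields reduces to the line $\mathbb{R}V$, which must therefore be timelike; hence $\mathcal{F}\subset\widetilde{\mathcal{M}}$ and its image lies in $\operatorname{int}\mathcal{B}$. So $F$ is the fixed-point set of the triholomorphic $U(1)$ action generated by $W$ on the hyper-K\"ahler four-manifold $(\mathcal{B},h,X^{(i)})$, and the standard local normal-form analysis near such a fixed point (\cite{gibbons_hidden_1988}; cf.\ the proof of Lemma 8 of \cite{katona_supersymmetric_2023}) shows it must be isolated — the Gibbons-Hawking potential $H$ develops a simple ``nut'' pole there. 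In particular no two-dimensional fixed-point component can occur, so all $S_i^2$ are absent and $\partial\hat\Sigma = S^2_\infty\cup\hat H$.

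Finiteness is then bookkeeping with the compactness assumption \ref{ass5D_compact}: the closure of the Cauchy surface consists of a compact set plus one asymptotically Kaluza-Klein end, on which $W$ acts freely (Lemma \ref{lem_Wasymp}); hence every point of $\mathcal{F}$ and every connected component of the horizon lies in the compact part. Isolated points cannot accumulate in a compact set, and the horizon admits a compact cross-section (assumption \ref{ass5D_horizon}), so both $\hat F$ and $\hat H$ have finitely many components. This yields $\hat\Sigma = \hat L\cup\hat F$ with $\partial\hat\Sigma = \hat H\cup S^2_\infty$, as claimed.

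The only genuinely non-trivial ingredient is the isolatedness of fixed points of a triholomorphic $U(1)$ on a hyper-K\"ahler four-manifold, which is what excludes bolts; everything else is assembling Lemma \ref{lem_exceptional}, assumption \ref{ass5D_timelike}, and the compactness hypothesis.
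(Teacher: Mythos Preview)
Your proof is correct and follows essentially the same route as the paper: the paper combines Lemma \ref{lem_exceptional} (no exceptional orbits) with the observation, stated just before the corollary, that assumption \ref{ass5D_timelike} forces $\mathcal{F}\subset\operatorname{int}\mathcal{B}$, whereupon the triholomorphic $U(1)$ structure on the hyper-K\"ahler base makes all fixed points isolated (ruling out bolts), citing the same sources you do. You are more explicit than the paper about the finiteness step via assumption \ref{ass5D_compact}, which the paper subsumes into its reference to Lemma 8 of \cite{katona_supersymmetric_2023}.
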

\begin{corollary}
	$\hat L$ is simply connected. \label{cor_Lsc}
\end{corollary}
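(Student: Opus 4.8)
The plan is to identify $\hat L$ with $\hat\Sigma$ minus the finite set $\hat F$ of interior points and then transfer the already-established triviality of $\pi_1(\hat\Sigma)$ to $\hat L$ via the Seifert--van Kampen theorem. By Corollary \ref{cor_orbitspace}, $\hat L=\hat\Sigma\setminus\hat F$ with $\hat F=\{p_1,\dots,p_k\}$ a finite subset of $\operatorname{int}\hat\Sigma$ (the fixed-point set $F$ lies in $\operatorname{int}\mathcal{B}$, so it descends to interior points of the orbit space; exceptional orbits, which would have introduced non-manifold points, were excluded in Lemma \ref{lem_exceptional}, and bolts were ruled out in Corollary \ref{cor_orbitspace}). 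Since $\hat\Sigma$ is a topological $3$-manifold with boundary and each $p_i$ is an interior point, one can choose pairwise disjoint open sets $U_i\subset\operatorname{int}\hat\Sigma$ with homeomorphisms $U_i\cong\mathbb{R}^3$ sending $p_i\mapsto 0$.

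First I would record the local picture: each $U_i$ is simply connected, and $U_i\cap\hat L=U_i\setminus\{p_i\}\cong\mathbb{R}^3\setminus\{0\}$, which deformation retracts onto $S^2$ and is therefore path-connected and simply connected. Next, setting $\hat\Sigma_0:=\hat L$ and $\hat\Sigma_j:=\hat L\cup U_1\cup\dots\cup U_j$, one has $\hat\Sigma_k=\hat\Sigma$ and $\hat\Sigma_j=\hat\Sigma_{j-1}\cup U_j$ with $\hat\Sigma_{j-1}\cap U_j=U_j\setminus\{p_j\}$, using that the $U_i$ are mutually disjoint. All sets involved are open in $\hat\Sigma$, and $\hat\Sigma_{j-1}$, $U_j$ and $\hat\Sigma_{j-1}\cap U_j$ are path-connected (the first because $\hat\Sigma$ is a connected manifold of dimension $\ge 2$ and deleting finitely many points preserves path-connectedness). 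Seifert--van Kampen then presents $\pi_1(\hat\Sigma_j)$ as the free product of $\pi_1(\hat\Sigma_{j-1})$ and $\pi_1(U_j)$ amalgamated over $\pi_1(U_j\setminus\{p_j\})$; since the latter two groups are trivial, $\pi_1(\hat\Sigma_j)\cong\pi_1(\hat\Sigma_{j-1})$. Iterating from $j=1$ to $k$ gives $\pi_1(\hat L)\cong\pi_1(\hat\Sigma)$, which is trivial by the topological-censorship argument preceding Lemma \ref{lem_exceptional}; together with path-connectedness of $\hat L$ this yields the claim.

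The argument is essentially routine, so there is no serious obstacle; the one point that genuinely needs the preceding work is that the deleted points are honest Euclidean interior points of the topological manifold $\hat\Sigma$ rather than orbifold-type or bolt singularities, which is exactly what Lemma \ref{lem_exceptional} and Corollary \ref{cor_orbitspace} guarantee. Alternatively one could simply invoke the general fact that $M\setminus(\text{finite set})\hookrightarrow M$ is a $\pi_1$-isomorphism for any topological manifold $M$ of dimension $\ge 3$, but writing out the van Kampen step keeps everything self-contained and avoids any appeal to transversality in the topological category.
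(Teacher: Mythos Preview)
Your argument is correct and follows the same approach as the paper: the paper's proof simply states that a simply connected 3-manifold remains simply connected after removing finitely many points, and your Seifert--van Kampen argument is precisely the standard way to justify this (indeed, it matches a more detailed version the authors had drafted but suppressed). There is nothing to add.
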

\begin{proof}
	Corollary \ref{cor_orbitspace} is an immediate consequence of Lemma \ref{lem_exceptional}. Corollary \ref{cor_Lsc} follows from the 
	fact that a simply connected 3-manifold stays simply connected after the removal of finite many points.
\end{proof}

\begin{lemma}
	The equations 
	\begin{align}
		\iota_WX^{(i)}=\td x^i, \qquad \iota_WF=\frac{\sqrt{3}}{2}\td \Psi \label{eq_xdef}
	\end{align}
	define smooth functions $x^i, \Psi$ globally on $\llangle \mathcal{M}\rrangle\cup \mathcal{H}$ (up to an additive constant). \label{lem_scalarinvariants}
\end{lemma}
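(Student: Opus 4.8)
The plan is to exploit the two ingredients established above: that the closed $U(1)$-invariant $1$-forms $\iota_WX^{(i)}$ and $\iota_WF$ descend to the orbit space $\hat\Sigma$, and that $\hat\Sigma$ is simply connected (topological censorship for Kaluza-Klein asymptotics, via Corollary \ref{cor_orbitspace} and the fact that $\hat\Sigma = \hat L\cup\hat F$ with $\hat F$ finite). First I would note that each of the $1$-forms in (\ref{eq_xdef}) is closed: $\td(\iota_WX^{(i)}) = \lie_WX^{(i)} - \iota_W\td X^{(i)} = 0$ by (\ref{eq_Wbilinears}) and closedness of $X^{(i)}$, and similarly $\td(\iota_WF) = \lie_WF - \iota_W\td F = 0$ since $W$ preserves $F$ and $F$ is closed. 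Moreover $\iota_W$ of each of these $1$-forms vanishes (for $\iota_WX^{(i)}$ by antisymmetry; for $\iota_WF$ one checks $\iota_V\iota_WF = -\iota_W\iota_VF = \tfrac{\sqrt3}{2}\iota_W\td f = 0$ using (\ref{eq_VF}) and (\ref{eq_Wbilinears}), and $\iota_W\iota_WF=0$ trivially, so the form is both $V$- and $W$-horizontal), and all are $\lie_V$- and $\lie_W$-invariant; hence they are pullbacks of closed $1$-forms on the orbit space $\hat\Sigma$.

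The main step is then to integrate these closed $1$-forms. On the smooth locus $\hat L$, which is simply connected by Corollary \ref{cor_Lsc}, a closed $1$-form is exact, so $x^i$ and $\Psi$ are defined there up to an additive constant; the delicate point is extending across the fixed-point set $\hat F$ and across the horizon quotients $\hat H_j$, where $\hat\Sigma$ need not be a smooth manifold. For the fixed points: near an isolated fixed point of the triholomorphic $U(1)$ the base metric is smoothly $\mathbb{R}^4$ (or an orbifold $\mathbb{R}^4/\mathbb{Z}_k$) with $W$ acting as a rotation, and the $1$-forms $\iota_WX^{(i)}$, $\iota_WF$ are smooth $W$-invariant horizontal $1$-forms on this local model; one integrates them on the punctured ball and checks that the resulting potentials extend continuously (indeed smoothly) over the centre, using that the period of the $1$-form around any small loop linking the fixed point vanishes (the loop is contractible in the local $\mathbb{R}^4$-model, or one computes directly that $\iota_WX^{(i)}$ restricted to such a loop integrates to zero). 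For the horizon: by the near-horizon analysis of Section \ref{sec_NHnull} (and its counterpart for the timelike case in \cite{katona_supersymmetric_2023}), $\iota_WX^{(i)}\propto\td\lambda$ near each horizon component, so $x^i$ extends to a constant on each $\hat H_j$; and $\iota_WF$ near the horizon is likewise controlled — in the null case by (\ref{eq_magnetic_pot_null}) together with the boundedness of $\mathcal{K}/\mathcal{G}$ and the computation leading to (\ref{eq_c1NH}), and in the timelike case by the near-horizon form of the Maxwell field — so $\Psi$ extends smoothly to $\mathcal{H}$ as well.

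Concretely I would argue as follows. Fix a base point and define $x^i(p) = \int_\gamma \iota_WX^{(i)}$, $\Psi(p) = \tfrac{2}{\sqrt3}\int_\gamma \iota_WF$ for a path $\gamma$ from the base point to $p$ inside $\llangle\mathcal{M}\rrangle\cup\mathcal{H}$; well-definedness amounts to showing every closed loop in $\llangle\mathcal{M}\rrangle\cup\mathcal{H}$ has zero period. Since the $1$-forms descend to $\hat\Sigma$, it suffices to kill periods of loops in $\hat\Sigma$. A loop in $\hat\Sigma$ can be homotoped off the finite set $\hat F$ and off the horizon boundary strata into $\hat L$ (a loop meeting $\hat H_j$ can be pushed into the interior since $\hat H_j$ is a boundary component), and in $\hat L$ it is contractible by Corollary \ref{cor_Lsc}, so its period vanishes. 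Smoothness of the resulting functions on $\llangle\mathcal{M}\rrangle$ is immediate off $\hat F$ and $\mathcal{H}$; at these exceptional loci it follows from the explicit local models just described. The main obstacle is precisely this last local regularity analysis at the fixed points and at the horizon — verifying that the integrated potentials are not merely continuous but smooth there, which requires invoking the detailed near-horizon geometry and the known smooth structure of the Gibbons-Hawking base near a triholomorphic fixed point; everything else is a soft topological argument that parallels the asymptotically flat treatment of \cite{katona_supersymmetric_2023}, the only genuinely new input being that simple connectedness of $\hat\Sigma$ (rather than of the DOC itself) now has to be used.
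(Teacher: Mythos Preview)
Your proposal is correct and follows essentially the same route as the paper: show the closed $1$-forms are $V,W$-horizontal and invariant, descend them to $\hat L$, integrate there using Corollary \ref{cor_Lsc}, then extend to the fixed points and the horizon. The paper's extension step is simpler than you anticipate --- no detailed near-horizon analysis or orbifold model is invoked for this lemma; for each isolated fixed point one just integrates on a small simply connected $4$-ball and matches the constant, and for the horizon one uses that $W|_{\mathcal H}\neq 0$ and each $\hat{\mathcal H}_i\simeq S^2$ is simply connected, so the potentials extend there directly.
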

\begin{proof}
	From their definition, we have $\iota_W\td x^i=\iota_W\iota_W X^{(i)}=0$, $\lie_W\td x^i =\td(\iota_W \td x^i)= 0$, 
	$\iota_V\td x^i = \iota_V\iota_WX^{(i)}=0$ by (\ref{eq_VX}), $\lie_V\td x^i =\td(\iota_V \td x^i)= 0$ and similarly for 
	$\td\Psi$ using (\ref{eq_VF}), so they descend as smooth 1-forms on $\hat L$. Since $\hat L$ is a simply connected smooth manifold by Corollary \ref{cor_Lsc}, 
	they define smooth functions $x^i$ and $\Psi$ on $\hat L$. In fact, since $W|_\mathcal{H}\neq0$ and the orbits of connected horizon components 
	$\hat{\mathcal{H}_i} = \mathcal{H}_i/(\mathbb{R}\times U(1))\simeq S^2$ are simply connected~\cite{hollands_further_2011}, the functions 
	can be defined on the horizon as well. Then we can uplift these functions to $(\llangle \mathcal{M}\rrangle\cup \mathcal{H}\setminus\mathcal{F})/\mathbb{R}$ by 
	$\lie_Wx^i=\lie_W\Psi=0$. We then extend the functions to the isolated fixed points continuously. This is possible, since the functions 
	can be integrated on some small 4-ball around a fixed point with a constant chosen such that the function agrees with the one on 
	$(\llangle \mathcal{M}\rrangle\cup \mathcal{H}\setminus\mathcal{F})/\mathbb{R}$. Finally, we uplift $x^i$ and $\Psi$ to the spacetime 
	by $\lie_Vx^i=\lie_V\Psi=0$.

\end{proof}

\subsection{General global solution}

We are now ready to deduce the global form of the general solution that satisfies our assumptions, heavily relying on 
results of~\cite{reall_higher_2004,breunholder_moduli_2019,katona_supersymmetric_2023}. 
\begin{theorem}
	For any solution $(\mathcal{M}, g, F)$ of $D=5$ minimal supergravity satisfying assumptions \ref{ass5D_susy}-\ref{ass5D_smooth}, the DOC 
	is globally determined by a set of harmonic functions 
	($H,K,L,M$ when $f\not\equiv 0$, $\mathcal{G},\mathcal{Q}_0,\mathcal{K}$ when $f\equiv0$) on $\mathbb{R}^3\setminus\cup_{i=1}^N \{\bm a_i\}$ 
	of the form
	\begin{equation}
		H(\bm x) = h +\sum_{i=1}^N \frac{h_i}{|\bm x-\bm a_i|}\;, \label{eq_multi-centred}
	\end{equation}
	where $h, h_i$ are constants and $\bm a_i$ are the positions of centres corresponding to connected components of the horizon or 
	fixed points of the axial symmetry (the latter only possible when $f\not\equiv 0$).

	In the timelike case ($f\not\equiv 0$) the metric is of Gibbons-Hawking form 
	\begin{equation}
		g = -f^2(\td t +\omega_\psi(\td\psi +\chi)+\hat\omega)^2 + \frac{1}{Hf}(\td\psi +\chi)^2 + \frac{H}{f}\td x^i \td x^i\;, \label{eq_gGH}
	\end{equation}
	determined by (\ref{eq_omegapsi}-\ref{eq_f}), and the Maxwell field is determined by (\ref{eq_Maxwell}). In particular, (\ref{eq_gGH}) and 
	(\ref{eq_Maxwell}) smoothly extends in coordinates $(t,\psi\, x^i)$ to regions where $f=0$.

	In the null case ($f\equiv 0$) the solution 
	is determined by (\ref{eq_nullmetric}-\ref{eq_Maxwell_null}), where for the inhomogeneous part $\mathcal{Q}_I:=\mathcal{Q}-\mathcal{Q}_0$, which 
	satisfies (\ref{eq_Fcalc}), we impose the boundary conditions that $\mathcal{Q}_I$ is bounded on $\mathbb{R}^3$ and vanishes as $|\bm x|\to \infty$.
\label{thm_necessary}
\end{theorem}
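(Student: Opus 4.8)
The plan is to combine the local results of \cite{gauntlett_all_2003} recalled in Section \ref{sec_recap} with the global structure of the orbit space obtained in Corollaries \ref{cor_orbitspace}--\ref{cor_Lsc} and Lemma \ref{lem_scalarinvariants}. First consider the timelike case, where $\widetilde{\mathcal{M}}$ is dense in the DOC by Lemma \ref{lem_timelike_null}. By Lemma \ref{lem_scalarinvariants} the functions $x^i$ defined by $\iota_W X^{(i)}=\td x^i$ exist globally on $\llangle\mathcal{M}\rrangle\cup\mathcal{H}$, and by (\ref{eq_dxinorm}) together with $N>0$ away from $\mathcal{F}$, the map $\bm x$ is a submersion onto an open subset of $\mathbb{R}^3$ there; on $\widetilde{\mathcal{M}}\setminus\mathcal{F}$ we recover the Gibbons-Hawking description with $H=f/N$, and the local result of \cite{gauntlett_all_2003} (equations (\ref{eq_omegapsi})--(\ref{eq_Maxwell})) expresses the whole solution in terms of four functions $H,K,L,M$ on the image. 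The key point is that $K,L,M$ are likewise globally defined as $U(1)\times\mathbb{R}$-invariant scalars: $K$ comes from the magnetic potential $\iota_W F$ of Lemma \ref{lem_scalarinvariants} (giving $\Psi$, hence $K/H$ and then $K$ via $H$), while $L$ and $M$ are fixed algebraically by $f$, $\omega_\psi$, $N$ and the already-defined quantities through (\ref{eq_omegapsi})--(\ref{eq_f}). Since $\hat L$ is simply connected (Corollary \ref{cor_Lsc}), the closed $1$-forms entering $\hat\omega$ and $\chi$ integrate to global potentials on the orbit space, so the metric (\ref{eq_gGH}) and Maxwell field (\ref{eq_Maxwell}) are globally valid in the chart $(t,\psi,x^i)$, and extend smoothly across $f=0$ since there $N>0$ and $(t,\psi,x^i)$ remain good coordinates.

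The remaining task is to identify the image of $\bm x$ and the singularity structure of the harmonic functions. The image is $\mathbb{R}^3$ minus the images of the fixed-point set $\mathcal{F}$ and the horizon components; by Corollary \ref{cor_orbitspace} $\mathcal{F}$ is a finite set of isolated points, and each connected horizon component maps to a single point of $\mathbb{R}^3$ (this was shown near-horizon in Section \ref{sec_NHnull} in the null case, and the analogous statement in the timelike case follows from $\iota_W X^{(i)}\propto\td\lambda$ on $\mathcal{H}$, cf. \cite{katona_supersymmetric_2023}). So one gets finitely many centres $\bm a_1,\dots,\bm a_N$. That $\bm x$ is onto a neighbourhood of infinity with the right asymptotics is exactly Lemma \ref{lem_AKKGH} (asymptotically timelike) or Lemma \ref{lem_nullR3} (asymptotically null). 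Near a fixed point, smoothness of the $5$-metric forces $H\sim h_i/r$ (a Kaluza-Klein-monopole / Taub-NUT centre), and near a horizon component the near-horizon analysis (Lemma \ref{lem_WNH}, Corollary \ref{cor_Q}, and the corresponding timelike computation) forces each of $H,K,L,M$ to have at most a simple pole; boundedness away from the centres plus harmonicity then gives the multi-centred form (\ref{eq_multi-centred}) by a standard removable-singularity / Liouville argument on $\mathbb{R}^3$.

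In the null case the argument is parallel but shorter: $V$ is globally null on the DOC, $N>0$ everywhere (no fixed points, by assumption \ref{ass5D_timelike}), and Lemma \ref{lem_Wnull} puts the metric in the form (\ref{eq_nullmetric}) with $W=\partial_u$ and all data $u$-independent. Then $\mathcal{G}=\sqrt{N}>0$ is globally defined by (\ref{eq_Gdef}); $\mathcal{K}$ is globally defined via the magnetic potential (\ref{eq_magnetic_pot_null}) using simple connectedness of $\hat L$; and $\mathcal{Q}$ is obtained from (\ref{eq_Fcalc}) once we split off the harmonic part $\mathcal{Q}_0$ and impose the stated boundary conditions on $\mathcal{Q}_I$. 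The horizon topology is $S^2\times S^1$ by Section \ref{sec_NHnull}, each component maps to a single centre with $\mathcal{G},\mathcal{K}$ having simple poles there (Section \ref{sec_NHnull} and Corollary \ref{cor_Q}), and $\mathbb{R}^3$-asymptotics come from Lemma \ref{lem_nullR3}; harmonicity then yields (\ref{eq_multi-centred}) for $\mathcal{G}$ and $\mathcal{K}$.

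I expect the main obstacle to be the same one flagged in the introduction: the DOC (and its orbit space) is not simply connected, so {\it a priori} the closed $1$-forms defining $x^i$, $\Psi$, $\chi$, $\hat\omega$ need not have global potentials. This is resolved only after (i) ruling out exceptional orbits (Lemma \ref{lem_exceptional}), (ii) ruling out bolts so that $\hat F$ is isolated points (Corollary \ref{cor_orbitspace}), and (iii) invoking Kaluza-Klein topological censorship to get $\hat\Sigma$, hence $\hat L$, simply connected (Corollary \ref{cor_Lsc}). Once those are in hand, the globalisation of the harmonic functions is routine, and the only genuinely computational part left is matching the near-horizon and near-fixed-point expansions to show the poles are at worst simple — which has essentially been done already in Section \ref{sec_NHnull} (null case) and reduces to the corresponding Gibbons-Hawking computation of \cite{katona_supersymmetric_2023} (timelike case).
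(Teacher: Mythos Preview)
Your overall strategy matches the paper's proof closely, and the identification of the main obstacle (simple connectedness of the orbit space) and its resolution via Lemma \ref{lem_exceptional}, Corollary \ref{cor_orbitspace} and Corollary \ref{cor_Lsc} is exactly right. There are, however, two points where your sketch is genuinely incomplete compared to the paper.

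First, you never establish that $\bm x:\hat L\to\mathbb{R}^3\setminus\{\bm a_1,\dots,\bm a_N\}$ is a \emph{global diffeomorphism}. You note that $\bm x$ is a submersion where $N>0$ and that it is onto a neighbourhood of infinity (Lemmas \ref{lem_AKKGH} and \ref{lem_nullR3}), and you assert that ``the image is $\mathbb{R}^3$ minus the images of the fixed-point set and the horizon components'', but you give no argument for injectivity or for surjectivity onto the whole complement. Without this, the harmonic functions live on some a priori unknown Riemannian $3$-manifold, and neither B\^ocher's theorem nor the Liouville argument applies. The paper closes this gap by invoking Lemma~9 of \cite{katona_supersymmetric_2023}, whose hypotheses (no exceptional orbits, horizon components mapping to points, asymptotic chart) you have in fact verified; you should cite it explicitly.

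Second, your mechanism for globally defining $K,L,M$ and for the smooth extension across $\{f=0\}$ is imprecise. Saying ``$K$ comes from $\Psi$, hence $K/H$'' is not quite right (the relation is $K=(f\Psi-g(V,W))/N$, not $K=H\Psi$), and ``$L,M$ are fixed algebraically'' hides the actual content. The paper's argument is to use the invariant expressions (\ref{eq_harmonicinvariant_1})--(\ref{eq_harmonicinvariant}) (equivalently Lemma~1 of \cite{breunholder_moduli_2019}), which are manifestly smooth wherever $N>0$ --- including where $f=0$ --- and which make the harmonic functions genuine functions on the orbit space. The same formulas are what drive the fixed-point analysis: $H=f/N$ diverges at a fixed point because $N\to 0$ while $f\neq 0$, and then B\^ocher's theorem (rather than an appeal to ``Taub-NUT smoothness'') gives the simple pole; the other harmonic functions inherit the same singularity from (\ref{eq_harmonicinvariant_1})--(\ref{eq_harmonicinvariant}).
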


\begin{proof}
	The functions $x^i$ are globally defined on $\llangle\mathcal{M}\rrangle \cup \mathcal{H}$ by Lemma \ref{lem_scalarinvariants}. By Lemma 
	\ref{lem_timelike_null} $V$ is either globally null, or $V$ is timelike on some dense subset of $\llangle\mathcal{M}\rrangle$. The 
	near-horizon analysis of Section \ref{sec_NHnull} for the null case, and that of~\cite{reall_higher_2004, katona_supersymmetric_2023} 
	for the timelike case implies that each connected component of the horizon is mapped to a single point of $\mathbb{R}^3$ by 
	$\bm x$ (Lemma 6 of~\cite{katona_supersymmetric_2023}). Furthermore, there are no exceptional orbits of $W$ (Lemma \ref{lem_exceptional}). 
	Thus, we can apply Lemma 9 of ~\cite{katona_supersymmetric_2023} to deduce that $\bm x: \hat L\to \mathbb{R}^3\setminus \bm x(\hat{\mathcal{H}}\cup \hat F)$ 
	is a global diffeomorphism. Hence, the three-dimensional orbit space with each horizon component added as a single point, 
	$\hat\Sigma\cup_i\{\mathcal{H}_i\}$ is in bijection with $\mathbb{R}^3$. We next analyse the null and timelike cases in turn.
	\\

	\noindent \underline{\bf Null case ($f\equiv 0$):} We have seen that the solution must have the form 
	(\ref{eq_nullmetric}-\ref{eq_Maxwell_null}). $\bm a_i$ must correspond to horizon components, as fixed points are ruled out (see discussion 
	after Lemma \ref{lem_Wnull}). $\mathcal{G}$ is globally defined and non-zero on $\llangle\mathcal{M}\rrangle$ by (\ref{eq_Gdef}). This 
	means that $\mathcal{G}$ is a non-zero harmonic function on $\mathbb{R}^3\setminus \cup_i\{\bm a_i\}$, which diverges at each $\bm a_i$ 
	as $\sim1/|\bm x-\bm a_i|$. This follows from the near-horizon analysis (see proof of Lemma \ref{lem_WNH}), or by B\^{o}cher's Theorem 
	(see e.g. \cite{axler_harmonic_2001}). This determines the singular structure of $\mathcal{G}$, and its regular part is a harmonic 
	function on $\mathbb{R}^3$ which approaches $\mathcal{G}\to \tilde L^{-1}$ by (\ref{eq_metric_AKK}), (\ref{eq_Gdef}) and 
	(\ref{eq_Vnull}), therefore it is constant.

	From (\ref{eq_magnetic_pot_null}) and (\ref{eq_xdef}) follows that (up to an additive constant) 
	\begin{equation}
		\mathcal{K} = 3\mathcal{G}\Psi\;, \label{eq_Kdef}
	\end{equation}
	which defines $\mathcal{K}$ on $\llangle\mathcal{M}\rrangle$. It also follows that $\mathcal{K}$ has (at most) simple poles at horizon 
	components. Since $\td \Psi\sim F\sim \ord(|\bm x|^{-\tau/2-1})$ (see Remark after Definition \ref{def_KK}), $\Psi\to\const$ at infinity, which 
	we are free to choose to be zero. (This corresponds to the `gauge' freedom $\mathcal{K}\to \mathcal{K}+ c \mathcal{G}$ for some constant 
	$c$.) Thus, the regular part of $\mathcal{K}$ on $\mathbb{R}^3$ is constant (in fact zero in this `gauge').

	Finally, let us look at $\mathcal{Q}$. The norm of the axial Killing field $g(W,W) = -\frac{\mathcal{Q}}{\mathcal{G}}+ \mathcal{G}^2|\bm b|^2$.
	From the asymptotic behaviour of $\mathcal{G}$ and $\mathcal{K}$, by (\ref{eq_GK}) we deduce that $|\bm b|=\ord(|\bm x|^{-1})$, and thus 
	$\mathcal{Q}=-\tilde L+\ord(|\bm x|^{-1})$. By Corollary \ref{cor_Q}, near a horizon component there exists a gauge such that 
	\begin{equation}
		\mathcal{Q}= \frac{q^i_{-1}}{|\bm x-\bm a_i|} + q^i_0 + \ord(|\bm x-\bm a_i|),
	\end{equation}
	with constants $q^i_{-1}$ and $q^i_0$. It follows that we can impose the following boundary conditions on the inhomogeneous part 
	$\mathcal{Q}_I$: we require that it vanishes at infinity, and it is bounded at each horizon component. This fixes the inhomogeneous 
	part\footnote{This can be seen by the fact that the difference of any two such particular solutions is a bounded harmonic function on 
	$\mathbb{R}^3\setminus\cup_i\{\bm a_i\}$ that goes to zero at infinity, and thus identically zero.}, and thus the homogeneous part 
	$\mathcal{Q}_0=q^i_{-1}/|\bm x-\bm a_i|+\ord(1)$ at the $i^{th}$ centre and approaches $-\tilde L$ at infinity.
	By the same arguments as for $\mathcal{G}$, $\mathcal{Q}_0$ has the claimed form.
	\\

	\noindent \underline{\bf Timelike case ($f\not\equiv 0$):}
	We have already established that the base has Gibbons-Hawking form on $\widetilde{\mathcal{M}}\setminus\mathcal{F}$. Now let us focus on 
	the associated harmonic functions. Since $N>0$ on $\llangle\mathcal{M}\rrangle\setminus\mathcal{F}$, by Lemma 1 
	of~\cite{breunholder_moduli_2019} the metric (\ref{eq_gGH}) is smooth, invertible with smooth inverse, the Maxwell field given by (\ref{eq_Maxwell}) is 
	smooth, and the harmonic functions are smooth, and they are invariantly defined by
	\begin{align}
		H  = \frac{f}{N} \; , \qquad \qquad &\qquad \qquad L=\frac{fg(W, W)+ 2g(V, W)\Psi-f\Psi^2}{N} \; , \label{eq_harmonicinvariant_1}\\
		K = \frac{f\Psi - g(V, W)}{N} \; , \quad   &M = \frac{g(W,W)g(V, W)-3f\Psi g(W,W)-3\Psi^2g(V, W)+f\Psi^3}{2N} \; , \label{eq_harmonicinvariant}
	\end{align}
	even on the set $f=0$, where $N$ must be non-zero by our assumption that the Killing fields span a timelike vector space.
	Here we used that $\Psi$ is globally defined due to Lemma \ref{lem_scalarinvariants} up to an additive constant, changing of which 
	corresponds to shifting the harmonic functions as in (\ref{eq_harmonic_gauge}).

	Now that we have established that the harmonic functions are smooth at generic points of the spacetime, let us look at their behaviour 
	at fixed points of $W$ and the horizon. For fixed points, note that the expression for $H$ in (\ref{eq_harmonicinvariant}) implies 
	that the zeros of $H$ and $f$ must coincide on $\llangle\mathcal{M}\rrangle$. Also, $f(p)\neq0$ at a fixed point $p$ and (by continuity) 
	on some neighbourhood, thus $H$ is non-zero in some neighbourhood of $p$. As $N(p)=0$, by  (\ref{eq_harmonicinvariant}) $H$ must diverge 
	at $p$ and by B\^ocher's theorem $H$ must have a simple pole at $\bm x(p) = \bm a_i$ for some $i$. From (\ref{eq_harmonicinvariant}) it 
	follows that all harmonic functions have the same type of singularity at $\bm a_i$. For the horizon components, using the near-horizon 
	analysis of~\cite{reall_higher_2004}, triholomorphicity implies that the harmonic functions have (at most) simple poles at the horizon 
	components (Lemma 9 of~\cite{katona_supersymmetric_2023}).

	This fixes the form of the harmonic functions up to a globally defined harmonic function on $\mathbb{R}^3$, which, as in the 
	null case, is determined by the asymptotic conditions. By calculating the invariants $f, \Psi, g(V, W), g(W,W)$ at infinity and using 
	(\ref{eq_harmonicinvariant}) we find that the regular part of the harmonic functions must be a constant. Depending on whether 
	$f\to1$ or $f\to 0$ at infinity (asymptotically timelike and null cases respectively), the values of these constants are different.

	In the asymptotically timelike case by Lemma \ref{lem_AKKGH} $f^2 = 1+\ord(|\bm x|^{-\tau})$, $g(W, W) = \tilde L^2 + \ord(|\bm x|^{-\tau})$, 
	and $g(V,W) = -\tilde L\gamma v_H+\ord(|\bm x|^{-\tau})$, thus $N = \gamma^2\tilde L^2 + \ord(|\bm x|^{-\tau})$. Without loss of generality 
	we can require that $f\to 1$. Then 
	using (\ref{eq_harmonicinvariant}) we see that $H=(\tilde{L}\gamma)^{-2}+\ord(|\bm x|^{-1})$, where the fall-off of the subleading terms are determined 
	by harmonicity on $\mathbb{R}^3$. The Maxwell field (\ref{eq_Ftimelike}) falls off as $F= \ord(|\bm x|^{-1-\tau/2})$, hence by (\ref{eq_xdef})
	$\Psi = \Psi_0 + \ord(|\bm x|^{-\tau/2})$, where $\Psi_0$ is a constant of integration, which we are free to set $\Psi_0 = -\tilde L\gamma v_H$ 
	for convenience\footnote{We choose the constant such that the regular part of $K$ vanishes at infinity. This is the same gauge as 
	the one used in \cite{elvang_supersymmetric_2005} and \cite{tomizawa_kaluza-klein_2018}}. This implies through  
	(\ref{eq_harmonicinvariant}) that near spatial infinity 
	\begin{align}
		H = (\tilde{L}\gamma)^{-2}+\ord(|\bm x|^{-1})\;, \quad L = 1+ \ord(|\bm x|^{-1}), \quad K = \ord(|\bm x|^{-1}), \quad M =\tilde L\gamma v_H+\ord(|\bm x|^{-1}) \;,\label{eq_harmonic_as_TL}
	\end{align}
	where the subleading fall-off has again been determined by harmonicity.

	In the asymptotically null case we obtain by a similar calculation that
	\begin{align}
		H = \ord(|\bm x|^{-1})\;, \quad L = \ord(|\bm x|^{-1})\;,\quad K = \tilde{L}^{-1}+\ord(|\bm x|^{-1})\;, \quad M = -\tilde{L}/2 + \ord(|\bm x|^{-1})\;,\label{eq_harmonic_as_Null}
	\end{align}
	where for simplicity we set $\Psi_0=0$.
\end{proof}

\noindent {\bf Remark.} Note that for all cases harmonicity on $\mathbb{R}^3$ sets $\tau=1$ in Definition \ref{def_KK} for the fall-off at infinity.

\section{Regularity and asymptotic conditions} \label{sec_sufficient}

Theorem \ref{thm_necessary} necessarily includes the most general global solution under the stated assumptions, however it does not guarantee that all such solutions 
are smooth, asymptotically Kaluza-Klein black hole solutions. In order to establish the sufficient criteria for this, we need to check 
the asymptotics and regularity of these solutions. This analysis is different in the timelike ($f\not\equiv0$) and globally null ($f\equiv0$) 
case, so we consider them in turn.

\subsection{Timelike case}

By Theorem \ref{thm_necessary} the solution is globally determined by harmonic functions 
\begin{align}
	H(\bm x) = \sum_{i=1}^N \frac{h_i}{|\bm x-\bm a_i|} + h\;, \qquad K(\bm x) = \sum_{i=1}^N \frac{k_i}{|\bm x-\bm a_i|} + k\;, \nonumber\\
	L(\bm x) = \sum_{i=1}^N \frac{l_i}{|\bm x-\bm a_i|} + l\;, \qquad M(\bm x) = \sum_{i=1}^N \frac{m_i}{|\bm x-\bm a_i|} + m\;, \label{eq_harmonicansatz}
\end{align}
and the asymptotic values of the harmonic functions depend on whether $f\to1$ or $f\to0$ as $|\bm x|\to\infty$, as in 
(\ref{eq_harmonic_as_TL}-\ref{eq_harmonic_as_Null}), that is respectively
\begin{align}
	h = (\tilde L\gamma)^{-2}\;, \qquad k = 0\;, &\qquad l = 1\;, \qquad m = \tilde L\gamma v_H\;, \label{eq_harmoniccosntants_AT}\\
	&\text{ and }\nonumber\\
	h = 0\;, \qquad k =\tilde{L}^{-1} \;, &\qquad l = 0\;, \qquad m = -\tilde{L}/2\;. \label{eq_harmoniccosntants_AN} 
\end{align}
First we analyse the asymptotic geometry, then we determine the regularity conditions near the horizon and around fixed points.

\subsubsection{Geometry at spatial infinity}\label{ssec_asymptotics}

At spatial infinity ($|\bm x|\to\infty$), we can expand the harmonic functions as 
\begin{equation}
	H(\bm x) = h + \frac{\sum_{i=1}^N h_i}{|\bm x|} + \ord(|\bm x|^{-2})\;, 
\end{equation}
and similarly for $K,L,M$.

As we will shortly see, in the timelike case the geometry of the `sphere' at infinity is governed by 
the second-to-leading order terms in $H$. After integration, 
(\ref{eq_chieq}) yields 
\begin{equation}
	\chi = \tilde \chi_0 \td \phi + \tilde h_0\cos\theta\td \phi + \ord(r^{-2})\td x^i\;,
\end{equation}
where we defined $\tilde h_0 :=\sum_{i=1}^Nh_i$, on $\mathbb{R}^3$ we use standard spherical coordinates ($r, \theta, \phi$), and $\tilde\chi_0$ 
is a constant of integration. We can set the latter to an arbitrary value by a coordinate change $(\psi,\phi)\to(\psi+c\phi,\phi)$ which shifts 
$\tilde\chi_0\to\tilde\chi_0-c$. We will shortly see that the bundle structure requires that $\tilde h_0$ is an 
integer, so for convenience we work in a gauge in which $\tilde\chi_0\equiv \tilde h_0\mod 2$.

We now look at the geometry of a constant time hypersurface in the limit $r\to\infty$, in which the spatial metric has the form 
\begin{align}
	g\big|_{t=\const}= \begin{cases}
		\frac{\td r^2}{\gamma^2\tilde L^2}+\tilde h_0^2\tilde L^2\left(\frac{\td \psi+\tilde\chi_0\td \phi}{\tilde h_0}+\cos\theta \td \phi\right)^2 + \frac{r^2}{\gamma^2\tilde L^2}(\td\theta^2 + \sin^2\theta\td\phi^2) +\ldots,\, &\text{ if } \tilde h_0\neq0\;, \\
		\frac{\td r^2}{\gamma^2\tilde L^2}+\tilde L^2\left(\td \psi+\tilde\chi_0\td \phi\right)^2 + \frac{r^2}{\gamma^2\tilde L^2}(\td\theta^2 + \sin^2\theta\td\phi^2)+\ldots,\; &\text{ if } \tilde h_0=0 \;,
	\end{cases}
\end{align}
with $\gamma=1$ in the asymptotically null case, and $\ldots$ represent lower order terms in each metric component. It is explicit that at 
constant $r$ the metric takes the local form that of a squashed $S^3$ if $\tilde h_0\neq0$, or $S^2\times S^1$ if $\tilde h_0 =0$. In the 
latter case the angles are identified as $(\psi,\phi)\sim(\psi+4\pi,\phi)\sim(\psi,\phi+2\pi)$.

In the locally spherical case, let us define $\phi^\pm : = (\psi +(\tilde \chi_0 \pm \tilde h_0)\phi)/\tilde h_0$ so that the leading order angular metric 
(on the `sphere' at infinity) takes the form 
\begin{equation}
	g\big|_{t, r} = \tilde h_0^2\tilde L^2\left(\td\phi^\pm \mp (1\mp\cos\theta)\td\phi\right)^2 + r^2\gamma^{-2}\tilde L^{-2}(\td\theta^2 + \sin^2\theta\td\phi^2) + \ldots \;.
\end{equation}
In these coordinates the $U(1)$ connection $\mp (1\mp\cos\theta)\td\phi$ is regular on the northern and southern hemisphere, respectively, so we use 
$\phi^\pm$ as vertical coordinates on the fibres on the N/S hemisphere. Independent $4\pi$-periodicity of $\psi$ for each $\phi, \theta$ 
(which we assume by Definition \ref{def_KK}) implies that $\phi^\pm$ are $4\pi/\tilde h_0$-periodic. Therefore, we may parametrise $U(1)$ 
(now identified with the complex unit circle) as $\exp(i \tilde h_0 \phi^\pm/2)$. Also, $\phi^+ = \phi^- +2\phi$, so the transition function 
between the N/S hemisphere is $\exp(i\tilde h_0 \phi)$, which is single-valued on the equator only if $\tilde h_0$ is an integer as previously 
stated. Also, independent periodicity of $\phi\sim\phi+2\pi$ for fixed $\phi^\pm$ implies
\begin{equation}
	(\psi, \phi)\sim(\psi + 2\pi(\tilde \chi_0 \mp\tilde h_0), \phi + 2\pi).
\end{equation}
Since we have set $\tilde \chi_0 \mp\tilde h_0$ to be even, and $\psi$ is independently $4\pi$-periodic, this implies that 
\begin{equation}
	(\psi, \phi)\sim(\psi , \phi + 2\pi).
\end{equation}
One can check that the geometry of the `sphere' at infinity is $L(|\tilde h_0|, 1)$ or $S^3$ if $|\tilde h_0|=1$.

In summary, the geometry at infinity is $S^2\times S^1$ for $\tilde h_0 =0$, $S^3$ for $\tilde h_0 =\pm 1$, and $L(p, 1)$ for 
$|\tilde h_0|=p\in \mathbb{Z}$. In all cases, the angles are identified as $(\psi,\phi)\sim(\psi+4\pi,\phi)\sim(\psi,\phi+2\pi)$.

\subsubsection{Regularity}\label{ssec_regularity}

Regularity has to be established at \begin{enumerate*}[label=(\roman*)]\item generic points, \item the horizon, \item at fixed points of $W$.
\end{enumerate*} Recall that assumption \ref{ass5D_timelike} implies that $N>0$ on $\llangle\mathcal{M}\rrangle\setminus\mathcal{F}$.
For generic points, smoothness of the solution is established by Lemma 1 of~\cite{breunholder_moduli_2019}, which states that if 
\begin{equation}
	N^{-1}=K^2 +HL >0 \label{eq_condN}
\end{equation} and $H,K,L,M$ are smooth, then $(g, F)$ is smooth, and $g$ is invertible with smooth inverse. Therefore, (\ref{eq_condN}) must 
be imposed on the harmonic functions everywhere on their domain. Much like in the asymptotically flat case, it is currently an open problem 
to reformulate (\ref{eq_condN}) as an explicit condition on the parameters $h_i,k_i, l_i, \bm a_i$.

The smoothness of a solution determined by harmonic functions of the form (\ref{eq_harmonicansatz}) has been analysed for the  
asymptotically flat case, and the sufficient conditions have been determined at the horizon and fixed points 
in~\cite{breunholder_moduli_2019} for axisymmetric, and in~\cite{katona_supersymmetric_2023} for general (non-symmetric) harmonic functions.
Ref.~\cite{katona_supersymmetric_2023} \begin{enumerate*}[label=(\alph*)]\item\label{assumption_harmonic} uses a {\it local} expansion of harmonic functions in terms of 
spherical harmonics, \item\label{assumption_id} assumes that the periodicity of the angular coordinates are determined by asymptotic conditions, 
and they are $\psi\sim\psi+4\pi$ and $\phi\sim\phi+2\pi$ independently, where $\phi$ is the azimuthal angle in a spherical coordinate system of the 
$\mathbb{R}^3$ base of (\ref{eq_gGH}).\end{enumerate*} For \ref{assumption_harmonic}, around each centre (which we take to be the origin) 
we expand the harmonic functions as 
\begin{equation}
	H = \frac{h_{-1}}{r} + h_0 + \sum_{\substack{k=1\\|m|\le k}}^\infty h_{km}r^k Y_k^m(\theta, \phi)\;,\label{eq_harmonic_expansion}
\end{equation}
and similarly for the other harmonic functions. This has the same form irrespective of asymptotics, thus 
the same sufficient conditions hold for asymptotically Kaluza-Klein spacetimes. For \ref{assumption_id} we have seen in Section 
\ref{ssec_asymptotics}, that the angle coordinates $\psi, \phi$ admit the same identifications as in the asymptotically flat case. 
This means that the regularity analysis is identical. We here only present the results, details can be found in~\cite{katona_supersymmetric_2023}.

If the centre corresponds to a horizon component, existence of a coordinate transformation to Gaussian 
null coordinates (and also positivity of the horizon area) requires that 
\begin{equation}
	-h^2_{-1}m^2_{-1}-3h_{-1}k_{-1}l_{-1}m_{-1}+h_{-1}l^3_{-1}-2k_{-1}^3m_{-1}+\frac{3}{4}k_{-1}^2l_{-1}^2>0 \; . \label{ineq_horizon}
\end{equation}
Integrating (\ref{eq_chieq}) and (\ref{eq_omegahat}) yields
\begin{align}
	\chi &= (\chi_0 +h_{-1}\cos\theta)\td\phi + \tilde\chi \;, \\
	\omega &= (\omega_0 +\omega_{-1} \cos \theta) \td \phi +\tilde\omega \;,
\end{align}
where $\chi_0$, $\omega_0$ are constants of integration, 
\begin{equation}
	\omega_{-1} := h_0m_{-1}-m_0h_{-1}+\tfrac{3}{2}(k_0l_{-1}-l_0k_{-1}) \;, 
\end{equation}
and $\tilde\chi$, $\tilde\omega$ contain higher order terms in $r$. Smoothness at the axes $\theta =0, \pi$ is equivalent to 
\begin{equation}
	\omega_{-1} = \omega_0=0\; . \label{eq_omegareq_hor}
\end{equation}
The horizon topology is determined by
\begin{equation}
	h_{-1}\in\mathbb{Z}\;. \label{eq_hor_h}
\end{equation} 
For $h_{-1}=0$ the topology is $S^2\times S^1$, 
for $h_{-1}=\pm 1$ it is $S^3$, otherwise $L(|h_{-1}|,1)$. Correct identification of the angles requires that 
\begin{equation}
	\chi_0\equiv h_{-1}\mod 2 \;.  \label{eq_hor_chicond}
\end{equation}

If the centre is a fixed point of $W$, then there is a curvature singularity at the centre unless
\begin{align}
	h_{-1}=\pm 1 \;.\label{eq_fixed_h}
\end{align}
The Killing fields having timelike span implies that $f\neq0$ at the centre, which is equivalent to
\begin{equation}
	l_{-1}+h_{-1}k_{-1}^2=0 \;,\label{eq_fcond1}
\end{equation}
and the spacetime has the correct signature around a fixed point if and only if 
\begin{equation}
	h_{-1}( l_0- h_0 k_{-1}^2+ 2 h_{-1} k_{-1} k_0) >0  \; .\label{eq_fcond2}
\end{equation}
Smoothness of $\omega$ is equivalent to 
\begin{align}
	m_{-1} &= \tfrac{1}{2} k_{-1}^3   \; ,  \label{eq_ompsicond1} \\
	\omega_{-1} &= \omega_0=0\; , \label{eq_omegareq_fix}
\end{align}
and the correct identification of the angle coordinates, similarly to the horizon, requires that 
\begin{equation}
	\chi_0\equiv 1\mod 2\;. \label{eq_fix_chicond}
\end{equation}

\subsection{Null case}\label{ssec_sufficient_null}

By Theorem \ref{thm_necessary} the solution is determined by the harmonic functions on $\mathbb{R}^3\setminus\cup_i\{\bm a_i\}$ 
\begin{align}
	\mathcal{G}(\bm x) = \frac{1}{\tilde L}+\sum_{i=1}^N \frac{g_i}{|\bm x-\bm a_i|}\;,\quad \mathcal{K}(\bm x) = \sum_{i=1}^N \frac{k_i}{|\bm x-\bm a_i|}\;,\quad\mathcal{Q}_0(\bm x)=-\tilde L +\sum_{i=1}^N \frac{f_i}{|\bm x-\bm a_i|}\;.
\end{align}

Our assumption that there exists a timelike linear combination of the Killing fields in the DOC (assumption \ref{ass5D_timelike}) excludes 
any fixed point of $W$, hence all centres must correspond to connected components of the horizon. We start our analysis at the horizon.

Let us focus on a single horizon component, around which we will use standard spherical coordinates $(r, \theta, \phi)$ on $\mathbb{R}^3$, 
so the horizon is at the origin. Then we can expand locally the harmonic functions and, using Corollary \ref{cor_Q}, $\mathcal{Q}$ as
\begin{align}
	\mathcal{G} = \frac{\gamma_{-1}}{r} + \gamma_0 + \tilde{\mathcal{G}}\;, \quad \mathcal{Q} = \frac{q_{-1}}{r}+ q_0 + \tilde{\mathcal{Q}}\;, \quad \mathcal{K} = \frac{\kappa_{-1}}{r}+ \kappa_0+ \tilde{\mathcal{K}}\;,\label{eq_harmonic_exp_null}
\end{align} 
with some constants $\gamma_{-1}, \gamma_0, q_{-1},q_0, \kappa_{-1},\kappa_0$. The quantities with tilde are of $\ord(r)$, and in the 
case of $\tilde{\mathcal{G}},\tilde{\mathcal{K}}$ harmonic\footnote{$\tilde{\mathcal{Q}}$ contains the inhomogeneous part of $\mathcal{Q}$, 
hence not harmonic.}. Recall that $\mathcal{G}$ is positive in the DOC and must diverge at the horizon, which means that 
\begin{equation}
	\gamma_{-1}>0.\label{eq_horizon_G}
\end{equation}
In Section \ref{sec_NHnull} we have seen that the axial Killing field $W$ is spacelike on the horizon. Recall that 
$0<g(W,W)= -\mathcal{Q}/\mathcal{G}+\mathcal{G}^2|\bm b|^2$ and by (\ref{eq_GK}),
$\mathcal{G}^2|\bm b|^2 = \ord(r^2)$ (with a suitable gauge choice of (\ref{eq_gauge_freedom}-\ref{eq_gauge_freedom_Qa})), which implies that 
\begin{equation}
	q_{-1}<0.\label{eq_horizon_q}
\end{equation}

Equation (\ref{eq_GK}) can be locally integrated to obtain
\begin{align}
	\bm c:=\mathcal{G}^3\bm b = (c_{-1}\cos\theta+c_0)\td\phi + \tilde{\bm c}\;,\nonumber\\
	\tilde{\bm c}:= \sum_{\substack{l\ge1\\|m|\le l}}\frac{c_{lm}r^l}{l}\star_2\td Y_{l}^m\;,
\end{align}
where $c_0$ is a constant of integration, $Y_l^m$ are the spherical harmonics, $\star_2$ is the Hodge star on the two-sphere, and the 
coefficients are defined by
\begin{align}
	c_{-1} := \gamma_0\kappa_{-1}-\kappa_0\gamma_{-1}\;, \qquad \sum_{\substack{l\ge1\\|m|\le l}}c_{lm}r^lY_{l}^m:=\gamma_{-1}\tilde{\mathcal{K}}-\kappa_{-1}\tilde{\mathcal{G}}\;.\label{eq_harmonic_nullexpansion}
\end{align}
Note that $\tilde{\bm c}$ is analytic in $r$ and smooth on the two-sphere. The metric (\ref{eq_nullmetric}) has a coordinate singularity 
at the horizon. The coordinate change that removes this singularity is of the form
\begin{equation}
	\td v' = \td v - \left(\frac{A_0}{r^2}+\frac{A_1}{r}\right)\td r\;, \qquad \td u' = \td u -\frac{B}{r}\td r\;,
\end{equation}
with $A_0, A_1, B$ constants. In these coordinates the metric becomes
\begin{align}
	g &= -\frac{2\td u \td v}{\mathcal{G}} - \frac{2B \td v \td r}{r\mathcal{G}} + 2\left(-\frac{A_0+B\mathcal{Q}r}{\mathcal{G}r^2} -\frac{A_1}{\mathcal{G}r}+ \frac{B|\bm c|^2}{\mathcal{G}^4r}\right) \td u \td r\nonumber\\
	&\qquad + \frac{2\bm c \td u}{\mathcal{G}}  + \left(\mathcal{G}^2-\frac{2A_0B+2A_1Br+B^2 \mathcal{Q}r}{\mathcal{G}r^3}+\frac{B^2|\bm c|^2}{\mathcal{G}^4r^2}\right)\td r^2 + 2\frac{B\bm c\td r}{\mathcal{G}r} \nonumber\\
	&\qquad\qquad +\left(-\frac{\mathcal{Q}}{\mathcal{G}}+\frac{|\bm c|^2}{\mathcal{G}^4}\right)\td u^2 + \mathcal{G}^2r^2(\td \theta^2 + \sin^2\theta\td\phi^2)\;,\label{eq_metric_horizon_null}
\end{align}
where the norm is with respect to the flat metric on $\mathbb{R}^3$. Setting the $1/r^2$ and $1/r$ terms in $g_{rr}$ and the $1/r$ terms in 
$g_{ur}$ to zero is equivalent to 
\begin{align}
	A_0 &= \mp \sqrt{-q_{-1}\gamma_{-1}^3}\;, \qquad A_1 = \pm \frac{\sqrt{\gamma_{-1}} (q_0\gamma_{-1}+3q_{-1}\gamma_0)}{2\sqrt{-q_{-1}}}\;, \nonumber\\
	&\qquad \qquad B = \mp \sqrt{-\frac{\gamma_{-1}^3}{q_{-1}}}\;. \label{eq_coord_params}
\end{align}
The near-horizon geometry ($(r, v)\to(\varepsilon r,v/\varepsilon)$, in the limit $\varepsilon\to0$) is given by 
\begin{equation}
	g_{NH}= -\frac{2r\td v \td u}{\gamma_{-1}} \pm2\sqrt{-\frac{\gamma_{-1}}{q_{-1}}}\td v \td r  -\frac{ q_{-1}}{\gamma_{-1}}\td u^2 + \gamma_{-1}^2(\td\theta^2 + \sin^2\theta\td\phi^2)\;,
\end{equation}
which corresponds to a black ring, in agreement with the near-horizon analysis in Section \ref{sec_NHnull}. 
Regularity of (\ref{eq_metric_horizon_null}) at the axes $\theta=0,\pi$ requires that $\bm c$ is a smooth 1-form on $S^2$. Setting $c_0=\mp1$ 
makes it smooth at $\theta=0, \pi$, respectively, however the required coordinate change between the two charts 
(covering the northern and southern hemisphere) by (\ref{eq_gauge_freedom}-\ref{eq_gauge_freedom_Qa}) is $v'' = v' - 2\phi$, which is well-defined only if $v$ is periodic, but that cannot happen. Hence, 
smoothness at the horizon is equivalent to 
\begin{align}
	c_0 &= 0\;,\label{eq_c0}\\
	c_{-1} &= \gamma_0\kappa_{-1}-\kappa_0\gamma_{-1}=0\;. \label{eq_cconstraint}
\end{align}
Note that the same constraint as (\ref{eq_cconstraint}) for $\mathcal{K}$ and $\mathcal{G}$ has been derived in the 
near-horizon analysis in (\ref{eq_c1NH}).

The regularity condition (\ref{eq_cconstraint}) has the following consequence.
\begin{lemma}
	$\mathcal{K}=0$ on $\mathbb{R}^3$. \label{lem_K0}
\end{lemma}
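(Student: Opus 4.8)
The plan is to show that the regularity constraint (\ref{eq_cconstraint}), which holds at \emph{every} horizon centre $\bm a_i$, forces the harmonic function $\mathcal{K}$ to be a constant multiple of $\mathcal{G}$, and then to pin down that constant using the asymptotic normalisation of the two functions. Recall from Theorem \ref{thm_necessary} that $\mathcal{G} = \tilde L^{-1} + \sum_i g_i/|\bm x - \bm a_i|$ with $g_i = \gamma_{-1}^{(i)} > 0$ by (\ref{eq_horizon_G}), and $\mathcal{K} = \sum_i k_i/|\bm x - \bm a_i|$ with $k_i = \kappa_{-1}^{(i)}$, after the gauge choice that sets the regular part of $\mathcal{K}$ at infinity to zero. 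The constraint (\ref{eq_cconstraint}) at the $i$-th centre reads $\gamma_0^{(i)}\kappa_{-1}^{(i)} - \kappa_0^{(i)}\gamma_{-1}^{(i)} = 0$, where $\gamma_0^{(i)}, \kappa_0^{(i)}$ are the values at $\bm a_i$ of the regular parts $\mathcal{G} - \gamma_{-1}^{(i)}/|\bm x - \bm a_i|$ and $\mathcal{K} - \kappa_{-1}^{(i)}/|\bm x - \bm a_i|$ respectively.

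First I would consider the auxiliary function $\Phi := \gamma_{-1}^{(j)}\mathcal{K} - \kappa_{-1}^{(j)}\mathcal{G}$ for a fixed index $j$ — actually it is cleaner to argue as follows. Suppose first that some $k_j \neq 0$; after relabelling assume $k_1 \neq 0$, and set $c := k_1/g_1$. Consider $\Phi := \mathcal{K} - c\,\mathcal{G}$, which is harmonic on $\mathbb{R}^3 \setminus \cup_i\{\bm a_i\}$. By construction the pole of $\Phi$ at $\bm a_1$ cancels, so $\Phi$ is regular there; at $\bm a_1$ its value is $\Phi(\bm a_1) = \kappa_0^{(1)} - c\,\gamma_0^{(1)} - c/\tilde L \cdot$(correction) — more precisely, writing everything in terms of the regular parts at $\bm a_1$, the constraint (\ref{eq_cconstraint}) at $\bm a_1$ says exactly $\kappa_0^{(1)}/\gamma_0^{(1)} = \kappa_{-1}^{(1)}/\gamma_{-1}^{(1)} = c$, hence $\Phi(\bm a_1) = 0$. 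But $\Phi$ is harmonic away from the \emph{other} centres $\bm a_2, \dots, \bm a_N$, and near $\bm a_1$ it is harmonic and vanishes at $\bm a_1$; that alone is not enough. The real input is that $\Phi$ must vanish \emph{identically}: I would instead run the argument at all centres simultaneously. Since the regular part of $\mathcal{K}$ vanishes at infinity while that of $\mathcal{G}$ tends to $\tilde L^{-1} \neq 0$, $\Phi = \mathcal{K} - c\,\mathcal{G}$ tends to $-c/\tilde L$ at infinity.

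The key step is then this: $\Phi$ is a harmonic function on $\mathbb{R}^3 \setminus \cup_i\{\bm a_i\}$ whose singular part at each $\bm a_i$ is $(k_i - c\,g_i)/|\bm x - \bm a_i|$, and whose behaviour at infinity is constant. By B\^{o}cher's theorem the only way $\Phi$ can fail to extend to a genuine harmonic function on all of $\mathbb{R}^3$ is through these simple poles; if all residues $k_i - c\,g_i$ vanished, $\Phi$ would be a bounded harmonic function on $\mathbb{R}^3$, hence constant, equal to $-c/\tilde L$, which combined with $k_i = c\,g_i$ for all $i$ gives $\mathcal{K} = c\,\mathcal{G} - c/\tilde L + (\text{stuff})$ — wait, this needs care: $\mathcal{K} = c\,\mathcal{G} + \Phi = c(\tilde L^{-1} + \sum g_i/|\cdot|) + (-c/\tilde L) = c\sum g_i/|\bm x - \bm a_i|$, which is \emph{consistent} but not yet zero. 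So vanishing of all residues is not automatic and must be extracted from (\ref{eq_cconstraint}). The honest route: apply the constraint (\ref{eq_cconstraint}) at centre $i$, which expresses $\kappa_0^{(i)}$ in terms of $\kappa_{-1}^{(i)}$, $\gamma_{-1}^{(i)}$, $\gamma_0^{(i)}$. Since $\gamma_0^{(i)} = \tilde L^{-1} + \sum_{l\neq i} g_l/|\bm a_i - \bm a_l|$ and $\kappa_0^{(i)} = \sum_{l \neq i} k_l/|\bm a_i - \bm a_l|$, the constraint becomes a system of $N$ linear equations in the unknowns $k_1, \dots, k_N$: explicitly $\gamma_{-1}^{(i)}\big(\sum_{l\neq i} k_l/|\bm a_i - \bm a_l|\big) = k_i\big(\tilde L^{-1} + \sum_{l\neq i}g_l/|\bm a_i - \bm a_l|\big)$. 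I would show this homogeneous system has only the trivial solution. The cleanest argument: suppose not all $k_i$ vanish, pick $i_0$ maximising $|k_i/g_i|$ over $i$, and derive a contradiction from the triangle-type inequality obtained by dividing the $i_0$-th equation by $g_{i_0} = \gamma_{-1}^{(i_0)}$, using $g_l > 0$ and the strictly positive extra term $\tilde L^{-1}$ on the right — this is a standard maximum-principle / diagonal-dominance argument, and the strict positivity of $\tilde L^{-1}$ is what makes it strict. Hence all $k_i = 0$, and with the gauge-fixed vanishing constant part, $\mathcal{K} \equiv 0$ on $\mathbb{R}^3$.

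The main obstacle I anticipate is organising the bookkeeping of regular/singular parts at each centre correctly — in particular making sure the quantities $\gamma_0^{(i)}, \kappa_0^{(i)}$ in (\ref{eq_cconstraint}) are identified with the sums over the \emph{other} centres plus the constant term — and then verifying that the resulting linear system is strictly diagonally dominant so the only solution is $k_i = 0$ for all $i$. Once that linear-algebra lemma is in place the conclusion is immediate; no hard analysis is involved beyond B\^{o}cher's theorem, which has already been invoked in the proof of Theorem \ref{thm_necessary}.
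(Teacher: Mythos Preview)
Your proposal is correct and, once you abandon the false start with $\Phi=\mathcal{K}-c\,\mathcal{G}$, lands on essentially the same argument as the paper: write the constraint (\ref{eq_cconstraint}) at every centre as a homogeneous linear system in the $k_i$ and show it has only the trivial solution using the strict positivity of $\tilde L^{-1}$ and all $g_i$. The paper phrases this last step as invertibility of a strictly diagonally dominant matrix, while you give the equivalent hands-on maximum-principle proof (choose $i_0$ maximising $|k_i/g_i|$ and derive $|k_{i_0}/g_{i_0}|\,\tilde L^{-1}\le 0$); these are the same idea.
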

\begin{proof}
	By Theorem \ref{thm_necessary} $\mathcal{G}(\bm x)= \sum_i^N \gamma_i/|\bm x-\bm a_i| + \tilde{L}^{-1}$ and 
	$\mathcal{K}(\bm x)= \sum_i^N \kappa_i/|\bm x-\bm a_i|$, and all centres correspond to horizon components. In terms of the 
	parameters of the harmonic functions, (\ref{eq_cconstraint}) at each centre (horizon component) is equivalent to
	\begin{equation}
		\sum_{\substack{j=1}}^N\frac{\gamma_j\kappa_i-\gamma_i\kappa_j}{a_{ij}}+\tilde L^{-1}\kappa_i=0\qquad \text{ for each }i,\label{eq_kconstraint}
	\end{equation}
	where $a_{ij} := |\bm a_i-\bm a_j| + \delta_{ij}$\footnote{The second term is added so that the sum in (\ref{eq_kconstraint}) can be 
	taken over all indices, the numerator will cancel for the case when $i=j$.}. We look at this as a system of $N$ linear equations 
	for $N$ unknown $\kappa_i$ of the form $A_{ij}\kappa_j=0$, where the matrix $A$ is given by 
	\begin{equation}
		A_{ij} = \left(\tilde L^{-1}+\sum_{k=1}^N\frac{\gamma_k}{a_{ik}}\right)\delta_{ij} - \frac{\gamma_i}{a_{ij}}\;.
	\end{equation}
	Since $\tilde L^{-1}$ and $\gamma_i$ are positive, $A$ is a strictly row diagonally dominant matrix, that is 
	\begin{equation}
		|A_{ii}| = \tilde L^{-1}+\sum_{k\neq i}^N\frac{\gamma_k}{a_{ik}} > \sum_{k\neq i}^N\frac{\gamma_k}{a_{ik}} = \sum_{k\neq i}^N|A_{ik}|\qquad \text{for each }i.
	\end{equation}
	As a consequence, $A$ is invertible (see e.g. (5.6.17) of~\cite{horn_matrix_1985}), and thus $\kappa_i =0$ for all $i$, so 
	$\mathcal{K}\equiv 0$.
\end{proof}
\begin{corollary} \label{cor_a}
	$\bm b=0$, and thus $\mathcal{Q}=\mathcal{Q}_0$ is harmonic.
\end{corollary}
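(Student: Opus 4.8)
The plan is to substitute Lemma \ref{lem_K0} into the equation (\ref{eq_GK}) that defines $\bm b$, and then exhaust the residual gauge freedom. Since Lemma \ref{lem_K0} gives $\mathcal{K}\equiv 0$ on $\mathbb{R}^3$, the right-hand side of (\ref{eq_GK}) vanishes identically, so $\star_3\td(\mathcal{G}^3\bm b)=0$, i.e.\ $\mathcal{G}^3\bm b$ is a closed $1$-form on $\mathbb{R}^3\setminus\cup_{i=1}^N\{\bm a_i\}$. This punctured space is simply connected (finitely many points removed from $\mathbb{R}^3$, exactly as used in Corollary \ref{cor_Lsc}), so $\mathcal{G}^3\bm b=\td h$ for some smooth function $h$ on $\mathbb{R}^3\setminus\cup_i\{\bm a_i\}$, unique up to an additive constant.

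Next I would invoke the residual gauge freedom (\ref{eq_gauge_freedom}-\ref{eq_gauge_freedom_Qa}): the change $v'=v+h(\bm x)$ is a globally well-defined diffeomorphism of the DOC, because $h$ is a genuine single-valued function of the globally defined coordinates $x^i$, and it replaces $\bm b$ by $\bm b'=\bm b-\mathcal{G}^{-3}\td h=0$. One should check that this gauge is compatible with the near-horizon behaviour recorded in Corollary \ref{cor_Q}, but this follows from the smoothness of $\bm c=\mathcal{G}^3\bm b$ on $S^2$ established in Section \ref{ssec_sufficient_null}, so $h$ extends appropriately to each horizon component. Thus, after this gauge transformation, $\bm b=0$.

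Finally, with $\bm b=0$ one has $D_{\bm b}=0$ and $\mathcal{W}_{ij}=-\delta_{ij}D_{\bm b}\mathcal{G}-\mathcal{G}\partial_jb^i=0$, so every term on the right-hand side of (\ref{eq_Fcalc}) vanishes and $\mathcal{Q}$ is harmonic on $\mathbb{R}^3\setminus\cup_i\{\bm a_i\}$. Hence $\mathcal{Q}_I=\mathcal{Q}-\mathcal{Q}_0$ is harmonic, bounded on $\mathbb{R}^3$, and vanishes at infinity; by the removable-singularity theorem for harmonic functions it extends to a bounded harmonic function on all of $\mathbb{R}^3$, which must be constant and therefore identically zero. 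Thus $\mathcal{Q}=\mathcal{Q}_0$ is harmonic, as claimed.

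The only genuinely delicate point is the global validity of the gauge transformation $v'=v+h$: one must confirm $h$ is single-valued (which is guaranteed by simple connectedness of the punctured $\mathbb{R}^3$) and that the change of coordinates does not spoil regularity at the horizon components. Everything else is a direct substitution into (\ref{eq_GK}) and (\ref{eq_Fcalc}).
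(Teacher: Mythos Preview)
Your proof is correct and follows essentially the same approach as the paper. The paper's proof is terser: it cites Lemma~\ref{lem_K0} together with (\ref{eq_c0}) to conclude that $\bm b$ is pure gauge, then notes that the right-hand side of (\ref{eq_Fcalc}) vanishes so $\mathcal{Q}_I$ is harmonic, and invokes the boundary conditions of Theorem~\ref{thm_necessary} to get $\mathcal{Q}_I=0$. Your explicit appeal to simple connectedness of the punctured $\mathbb{R}^3$ is precisely what underlies the paper's claim that $\bm b$ is pure gauge (and in fact subsumes the separate citation of (\ref{eq_c0}), since a smooth closed $1$-form on a simply connected domain cannot carry a $c_0\,\td\phi$ piece), and your removable-singularity plus Liouville argument is exactly the content of the footnote attached to Theorem~\ref{thm_necessary}.
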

\begin{proof}
	As a consequence of Lemma \ref{lem_K0} and (\ref{eq_c0}), using (\ref{eq_GK}), $\bm b$ is pure gauge, so it can be set to zero 
	by a transformation of the form (\ref{eq_gauge_freedom}-\ref{eq_gauge_freedom_Qa}). Thus, the right-hand side of (\ref{eq_Fcalc}) vanishes, 
	hence the inhomogeneous part $\mathcal{Q}_I$ is harmonic, and by our boundary conditions in Theorem \ref{thm_necessary}, $\mathcal{Q}_I=0$.
\end{proof}
As a consequence, the Maxwell field is simply given by
\begin{equation}
	F = \frac{\sqrt{3}}{2}\star_3\td \mathcal{G}\;.\label{eq_Mxw_null_K0}
\end{equation}
In (\ref{eq_harmonic_exp_null}) $\gamma_0+\tilde{\mathcal{G}}$ is smooth at $r=0$, thus so is their contribution in (\ref{eq_Mxw_null_K0}). Also,
\begin{equation}
	\star_3\td \frac{\gamma_{-1}}{r} = -\gamma_{-1}\sin\theta\td\theta\wedge\td\phi\;,
\end{equation} 
which is smooth on $S^2$, so the Maxwell field is smooth at the horizon. We have thus seen that given (\ref{eq_horizon_G}-\ref{eq_horizon_q}), 
(\ref{eq_cconstraint}), the solution is smooth at the horizon.

In the DOC $\mathcal{G}>0$ and $\mathcal{Q}$ are smooth, hence all metric components are smooth. The metric is invertible, 
its inverse is given by
\begin{equation}
	g^{-1} = \mathcal{G}\mathcal{Q}\partial_v\otimes\partial_v -2\mathcal{G}\partial_u\odot\partial_v + \mathcal{G}^{-2}\partial_i\otimes\partial_i \;,\label{eq_inversemetric}
\end{equation}
which is explicitly smooth. The Maxwell field is trivially smooth on the DOC by (\ref{eq_Mxw_null_K0}). This concludes the regularity analysis.

\section{Classification theorem for supersymmetric Kaluza-Klein black holes} \label{ssec_classifictaion}

We are now ready to present our main result, the classification theorem for supersymmetric, axisymmetric, asymptotically Kaluza-Klein black 
holes (cf. Theorem 3 of~\cite{katona_supersymmetric_2023}). Recall that by Lemma \ref{lem_timelike_null} the supersymmetric Killing field is 
either generically timelike or globally null. We present the two cases in turn.
\begin{theorem}\label{thm_classification}
	An asymptotically Kaluza-Klein (in the sense of Definition \ref{def_KK}), supersymmetric black hole or soliton solution 
	$(\mathcal{M}, g, F)$ of $D=5$ minimal supergravity with an axial symmetry satisfying assumptions \ref{ass5D_susy}-\ref{ass5D_smooth} 
	of Section \ref{sec_recap}, with supersymmetric Killing field which is not globally null,
	must have a Gibbons-Hawking base (wherever $f \neq 0$), and is globally determined by four associated harmonic functions, which are of 
	`multi-centred' form, i.e.
	\begin{align}
	H =  h+\sum_{i=1}^N \frac{h_i}{r_i} \; , && K =  k+\sum_{i=1}^N \frac{k_i}{r_i} \; , && L = l+ \sum_{i=1}^N \frac{l_i}{r_i} \; , &&M =m+	\sum_{i=1}^N  \frac{m_i}{r_i}  \; , \label{eq_thm_harmonic}
	\end{align}
	where  $r_i := |\bm x-\bm a_i|$, $\bm a_i=(x_i, y_i, z_i) \in \mathbb{R}^3$, and the parameters are given by
	\begin{align}
		h = (\tilde L\gamma)^{-2}\;, \qquad k = 0\;, &\qquad l = 1\;, \qquad m = \tilde L\gamma v_H\;, \label{eq_harmoniccosntants_AT_thm}\\
		&\text{ or }\nonumber\\
		h = 0\;, \qquad k =\tilde{L}^{-1} \;, &\qquad l = 0\;, \qquad m = -\tilde{L}/2\;, \label{eq_harmoniccosntants_AN_thm} 
	\end{align}
	in the asymptotically timelike ($f\to 1$) and asymptotically null case ($f\to0$), respectively, with constants $\tilde L>0$, 
	$|v_H|<1$ and $\gamma = (1-v_H^2)^{-1/2}$. The centres either correspond to fixed points of the axial Killing field, or connected 
	components of the horizon. The 1-forms can be written as 
	\begin{align}
		\chi &= \sum_{i=1}^N \left(\chi_0^i+ \frac{h_i(z-z_i)}{r_i}   \right) \td \phi_i \; , \qquad \xi =-  \sum_{i=1}^N   \frac{k_i(z-z_i)}{r_i}  \td \phi_i \; , \label{chixi_sol}\\ 
		\hat\omega & = \sum_{\substack{i,j=1 \\ i \neq j}}^N \left(h_im_j+\frac{3}{2}k_il_j\right)\beta_{ij}\; ,\label{omega_sol_expl}
	\end{align}
	where $\chi_0^i$ are integers such that 
	\begin{equation}
		\chi_0^i +h_i\in 2\mathbb{Z} \;, \label{eq_chi_int}
	\end{equation} 
	and
	\begin{align}
	\td \phi_i &:= \frac{(x-x_i)\td y- (y-y_i)\td x}{(x-x_i)^2+ (y-y_i)^2} \; , \\
		\beta_{ij}&:= \left(\frac{(\bm x-\bm a_i)\cdot (\bm a_i-\bm a_j)}{|\bm a_i-\bm a_j|r_i}-\frac{(\bm x-\bm a_j)\cdot (\bm a_i-\bm a_j)}{|\bm a_i-\bm a_j|r_j}-\frac{(\bm x-\bm a_i)\cdot (\bm x-\bm a_j)}{r_ir_j}+1\right)\nonumber \\
		&\qquad\qquad\times\frac{((\bm a_i-\bm a_j)\times(\bm x-\bm a_j))\cdot \td\bm x}{|(\bm a_i-\bm a_j)\times(\bm x-\bm a_j)|^2} \; . \label{eq_beta_sol}
	\end{align}
	The parameters $h_i, k_i, l_i, m_i$  must satisfy for each centre $i=1, \dots, N$, 
	\begin{align}
		hm_i -mh_i+\frac{3}{2}(kl_i-lk_i) + \sum_{\substack{j=1 \\j\neq i}}^N\frac{h_j m_i -m_j h_i +\frac{3}{2}(k_jl_i-l_jk_i)}{|\bm a_i -\bm a_j|}=0 \; , \label{eq_thm_ctr}
	\end{align}
	and for the asymptotically null case 
	\begin{equation}
		\exists i \text{ such that } h_i \neq0\;.\label{eq_thm_TLcond}
	\end{equation}
	Moreover, if $\bm a_i$ corresponds to a fixed point of the axial Killing field,
	\begin{align}
		h_i = \pm 1 \; , && l_i +h_ik_i^2=0 \; , && m_i = \frac{1}{2}k_i^3 \; , \label{eq_thm_fix}
	\end{align}
	satisfying
	\begin{align}
		h_il - hk_i^2h_i +2kk_i+ \sum_{\substack{j=1 \\j\neq i}}^N\frac{2k_ik_j-h_i(h_jk_i^2-l_j)}{|\bm a_i-\bm a_j|}>0 \; ,  \label{eq_thm_fh}
	\end{align}
	whereas if $\bm a_i$ corresponds to a horizon component $h_i\in\mathbb{Z}$ and
	\begin{align}
		-h_i^2m_i^2-3h_ik_il_im_i+h_il_i^3-2k_i^3m_i+\frac{3}{4}k_i^2l_i^2>0 \; . \label{eq_thm_hor}
	\end{align}
	The horizon topology is $S^1\times S^2$ if $h_i = 0$, $S^3$ if $h_i = \pm 1$ and a lens space
	$L(|h_i|, 1)$ otherwise.	
	Finally, for all $\bm x \in \mathbb{R}^3\setminus\{\bm a_1, \dots, \bm a_N\}$, the harmonic functions must satisfy
	\begin{align}
		K^2+HL >0. \label{eq_thm_N}
	\end{align}
	The topology of the `sphere' at infinity ($t=\const$, $|\bm x|\to\infty$) is $S^3$ for $\tilde h_0=\pm1$, $S^2\times S^1$ for $\tilde h_0=0$, 
	or $L(|\tilde h_0|, 1)$ otherwise, where $\tilde h_0 = \sum_{i=1}^Nh_i$.
 \end{theorem}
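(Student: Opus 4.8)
The plan is to assemble the results of Sections~\ref{sec_orbit}--\ref{sec_sufficient}. By Theorem~\ref{thm_necessary}, since $V$ is not globally null we are in the timelike case, so the DOC is a Gibbons--Hawking spacetime (\ref{eq_gGH}) with Maxwell field (\ref{eq_Maxwell}), globally determined by harmonic functions $H,K,L,M$ on $\mathbb{R}^3\setminus\{\bm a_1,\dots,\bm a_N\}$ of multi-centred form (\ref{eq_multi-centred}), each $\bm a_i$ being a horizon component or an isolated fixed point of $W$ (the horizonless ``soliton'' case being all centres of the latter type). The regular parts are fixed by the asymptotic analysis: in the asymptotically timelike case, normalising $V$ so that $f\to1$, Lemma~\ref{lem_AKKGH} gives (\ref{eq_harmonic_as_TL}) with $h=(\tilde L\gamma)^{-2}$, $\gamma=(1-v_H^2)^{-1/2}$, $|v_H|<1$, $\tilde L>0$; in the asymptotically null case Theorem~\ref{thm_necessary} gives (\ref{eq_harmonic_as_Null}); these are the constants (\ref{eq_harmoniccosntants_AT_thm})--(\ref{eq_harmoniccosntants_AN_thm}). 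Integrating (\ref{eq_chieq}), (\ref{eq_xi}) and (\ref{eq_omegahat}) term by term on $\mathbb{R}^3\setminus\{\bm a_i\}$ then reproduces (\ref{chixi_sol})--(\ref{eq_beta_sol}): each $h_i/r_i$ in $H$ contributes $h_i(z-z_i)r_i^{-1}\td\phi_i$ to $\chi$, with closed-form integration constants written $\chi_0^i\td\phi_i$, and similarly for $\xi$ (its constants fixed by the gauge choice in the theorem), while the bilinear source of (\ref{eq_omegahat}) splits into the dipole potentials $\beta_{ij}$ weighted by $h_im_j+\tfrac32 k_il_j$ plus monopole pieces proportional to $\td\phi_i$, exactly as in the asymptotically flat treatment of \cite{breunholder_moduli_2019,katona_supersymmetric_2023}.

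Next I would impose smoothness. At a generic point of $\mathbb{R}^3\setminus\{\bm a_i\}$, Lemma~1 of \cite{breunholder_moduli_2019} gives that $(g,F)$ is smooth with $g$ invertible iff $N^{-1}=K^2+HL>0$, i.e. (\ref{eq_thm_N}). Near each centre the local analysis of Section~\ref{ssec_regularity} applies verbatim, since the local expansion (\ref{eq_harmonic_expansion}) is asymptotics-independent and the identifications $(\psi,\phi)\sim(\psi+4\pi,\phi)\sim(\psi,\phi+2\pi)$ were shown in Section~\ref{ssec_asymptotics} to agree with the asymptotically flat case. Hence a horizon centre requires $h_i\in\mathbb{Z}$ (with topology $S^2\times S^1$, $S^3$ or $L(|h_i|,1)$ according as $h_i=0$, $\pm1$ or otherwise), the horizon-area condition (\ref{ineq_horizon}) $=$ (\ref{eq_thm_hor}), and $\chi_0^i\equiv h_i\bmod2$; a fixed-point centre requires (\ref{eq_fixed_h})--(\ref{eq_fcond1})--(\ref{eq_ompsicond1}) $=$ (\ref{eq_thm_fix}), the signature condition (\ref{eq_fcond2}), and $\chi_0^i\equiv1\bmod2$. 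Substituting the regular parts $h_0^i=h+\sum_{j\neq i}h_j|\bm a_i-\bm a_j|^{-1}$ (and likewise for $k,l,m$) and using $h_i=\pm1$ turns (\ref{eq_fcond2}) into (\ref{eq_thm_fh}); in both cases the integrality and parity of $\chi_0^i$ are (\ref{eq_chi_int}). Smoothness of $\omega$ on the axes forces $\omega_{-1}^i=\omega_0^i=0$ at each centre, and writing $\omega_{-1}^i=h_0^i m_i-m_0^i h_i+\tfrac32(k_0^i l_i-l_0^i k_i)$ in terms of the regular parts shows $\omega_{-1}^i=0$ is exactly the bubble equation (\ref{eq_thm_ctr}), while $\omega_0^i=0$ fixes the last constant so that $\hat\omega$ takes the form (\ref{omega_sol_expl}). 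Finally, in the asymptotically null case $h=0$, so $H\equiv0$ would force $f\equiv0$; thus some $h_i\neq0$ is necessary for being in the timelike case, giving (\ref{eq_thm_TLcond}), and it is sufficient by Theorem~\ref{thm_necessary}. The ``sphere at infinity'' topology is that found in Section~\ref{ssec_asymptotics} with $\tilde h_0=\sum_i h_i$.

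The one point that needs genuine care rather than bookkeeping is that the collected conditions are \emph{jointly} sufficient for a globally smooth, asymptotically Kaluza--Klein solution: one must check that, once (\ref{eq_thm_ctr}) holds at every centre, the globally integrated $\hat\omega$ has simultaneously vanishing Dirac--Misner strings at all centres and along the axis segments joining them, and that (\ref{eq_thm_N}) is indeed the only remaining obstruction away from the centres. This is where the (already essentially established) computations of \cite{breunholder_moduli_2019,katona_supersymmetric_2023} are invoked; with the angle identifications and local expansions shown to coincide with the asymptotically flat setting, those verifications carry over unchanged, and the rest of the proof is direct substitution.
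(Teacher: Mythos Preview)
Your proposal is correct and follows essentially the same approach as the paper's proof: invoke Theorem~\ref{thm_necessary} for the multi-centred form and asymptotic constants, integrate the 1-forms via the $\beta_{ij}$, and then impose the local regularity conditions of Section~\ref{ssec_regularity} centre by centre (which carry over verbatim from the asymptotically flat analysis since the local expansions and angle identifications coincide). The only small presentational difference is that the paper treats $\omega_0=0$ as an automatic consequence of $\omega_{-1}^i=0$ (because the explicit $\beta_{ij}$ are string-free), rather than as a separate condition fixing an integration constant, and the paper carries out the explicit check of Kaluza--Klein asymptotics at the end rather than deferring it to the references---this being the genuinely new step relative to~\cite{katona_supersymmetric_2023}.
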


 \begin{proof}
	The functional form of the harmonic functions (\ref{eq_thm_harmonic}-\ref{eq_harmoniccosntants_AN_thm}) is required by Theorem 
	\ref{thm_necessary} and (\ref{eq_harmoniccosntants_AT}-\ref{eq_harmoniccosntants_AN}). By Lemma \ref{lem_timelike_null} $f\neq 0$ 
	on a dense submanifold, and since zeros of $f$ and $H$ coincide by (\ref{eq_Ndef}) and assumption \ref{ass5D_timelike}, $H$ cannot 
	be identically zero, which implies (\ref{eq_thm_TLcond}). Smoothness requires 
	that $h_i=\pm 1$ at a fixed point (\ref{eq_fixed_h}), and $h_i\in \mathbb{Z}$ at a horizon component (\ref{eq_hor_h}), which also 
	determines the horizon topology. The 1-forms $\chi, \xi$ then obtained by simple integration of (\ref{eq_chieq}, \ref{eq_xi}), where we 
	introduced constants of integration $\chi^i_0$ such that $\chi^i_0+h_i\in2\mathbb{Z}$ for all $i$. The latter requirement follows 
	from the correct identification of the angles around a horizon component (\ref{eq_hor_chicond}) and fixed point (\ref{eq_fix_chicond}).
	For the integration of (\ref{eq_omegahat}), we follow~\cite{bena_black_2008,tomizawa_supersymmetric_2016,dunajski_einstein-maxwell_2007},  
	and introduce 1-forms $\beta_{ij}$ as a solution to 
	\begin{equation}
		\star_3\td\beta_{ij}=\frac{1}{r_i}\td\left(\frac{1}{r_j}\right)-\frac{1}{r_j}\td\left(\frac{1}{r_i}\right)+\frac{1}{r_{ij}}\td\left(\frac{1}{r_i}-\frac{1}{r_j}\right) \;, \label{eq_beta}
	\end{equation}
	with $r_{ij} = |\bm a_i-\bm a_j|$. One can easily check that $\beta_{ij}$ as given in (\ref{eq_beta_sol}) is a smooth 1-form away from the 
	centres on $\mathbb{R}^3$, in particular, it is free of string singularities. (\ref{eq_omegahat}) is then solved by 
	\begin{equation}
		\hat\omega = \sum_{\substack{i,j=1 \\ i\neq j}}^N \left(h_im_j+\frac{3}{2}k_il_j\right)\beta_{ij}   + \sum_{i=1}^N  \omega_{-1}^{i} \beta_i   \; ,
	\end{equation}
	where 
	\begin{equation}
		\omega_{-1}^i:=hm_i -mh_i+\frac{3}{2}(kl_i-lk_i) + \sum_{\substack{j=1 \\j\neq i}}^N\frac{h_j m_i -m_j h_i +\frac{3}{2}(k_jl_i-l_jk_i)}{|\bm a_i -\bm a_j|}\; .  \label{eq_omm1i}
	\end{equation} 
	Notice that for a local expansion around a centre as in (\ref{eq_harmonic_expansion}) the first two coefficients for the harmonic 
	functions are given by
	\begin{align}
		h_{-1}&= h_i \; , \qquad h_0 =h+\sum_{\substack{j=1 \\ j \neq i}}^N \frac{h_j}{r_{ij}} \; , \qquad  k_{-1}= k_i \; , \qquad k_0 = k+ \sum_{\substack{j=1 \\ j \neq i}}^N  \frac{k_j}{r_{ij}} \; , \\
		l_{-1} &= l_i \; , \qquad l_0 =l+ \sum_{\substack{j=1 \\ j \neq i}}^N  \frac{l_j}{r_{ij}} \; , \qquad   m_{-1}= m_i \; , \qquad m_0 =m+\sum_{\substack{j=1 \\ j \neq i}}^N  \frac{m_j}{r_{ij}}   \; .
	\end{align}
	Thus, we see that at each centre (\ref{eq_omegareq_fix}) and (\ref{eq_omegareq_hor}) is equivalent to $\omega_{-1}^i=0$, which yields (\ref{eq_thm_ctr}) 
	and (\ref{omega_sol_expl}). With the lack of string singularities in $\hat\omega$ all conditions of (\ref{eq_omegareq_fix}) and (\ref{eq_omegareq_hor}) are 
	satisfied. The remaining smoothness conditions at fixed points (\ref{eq_fixed_h}-\ref{eq_ompsicond1}) give (\ref{eq_thm_fix}-\ref{eq_thm_fh}) and at 
	a horizon component (\ref{ineq_horizon}) yields (\ref{eq_thm_hor}). (\ref{eq_thm_N}) is the necessary and sufficient condition for smoothness 
	of the solution at generic points (\ref{eq_condN}). Up until this point the proof is in essence identical to that of the asymptotically 
	flat classification in~\cite{katona_supersymmetric_2023}.

	\pagebreak[1]
	All that remains to be checked is that the solution is asymptotically Kaluza-Klein. In the asymptotically timelike case, using 
	(\ref{eq_omegapsi}) and (\ref{eq_f}) we have $f= 1 + \ord(|\bm x|^{-1})$, $\omega_\psi = \tilde L\gamma v_H+\ord(|\bm x|^{-1})$. 
	In the asymptotically null case it is simpler to evaluate the metric components directly by using (10) 
	of~\cite{breunholder_moduli_2019} to obtain
	\begin{align}
		g_{tt} = \ord(|\bm x|^{-1})\;, \; g_{t\psi} = -\tilde L+\ord(|\bm x|^{-1})\;, \; g_{\psi\psi} = \tilde L^2 + \ord(|\bm x|^{-1})\;, \; \frac{H}{f} = \tilde L^{-2}+\ord (|\bm x|^{-1})\;.
	\end{align}
	In both cases from (\ref{eq_beta}) and (\ref{omega_sol_expl}) $\hat\omega=\ord(|\bm x|^{-2})\td x^i$, and $\chi$ is given by 
	\begin{equation}
		\chi = \left(\tilde\chi_0+\frac{\tilde h_0 z}{r}\right)\td\phi + \ord(|\bm x|^{-2})\td x^i \;, 
	\end{equation}
	where 
	\begin{align}
		\tilde h_0 = \sum_{i=1}^N h_i \; ,&& \tilde \chi_0 = \sum_{i=1}^N \chi_0^i \;.
	\end{align}
	It is straightforward to check that the metric is asymptotically Kaluza-Klein with the coordinates defined in 
	(\ref{eq_tAKK}-\ref{eq_xAKK}) for the asymptotically timelike case, and (\ref{eq_xnull}-\ref{eq_psinull}) for the 
	asymptotically null case. From (\ref{eq_chi_int}) it follows that $\tilde h_0+ \tilde \chi_0\in 2\mathbb{Z}$, so the given solution is 
	compatible with our initial assumption made in Section \ref{ssec_asymptotics}, and the geometry of the `sphere' at infinity is indeed 
	that of $S^3$ (for $\tilde h_0=\pm1$), $S^2\times S^1$ (for $\tilde h_0=0$), or $L(|\tilde h_0|, 1)$. 
\end{proof}

\begin{theorem} \label{thm_classification_null}
	An asymptotically Kaluza-Klein (in the sense of Definition \ref{def_KK}), supersymmetric black hole or soliton solution 
	$(\mathcal{M}, g, F)$ of $D=5$ minimal supergravity with an axial symmetry satisfying assumptions \ref{ass5D_susy}-\ref{ass5D_smooth} of Section \ref{sec_recap}, for which 
	the supersymmetric Killing field $V$ is globally null, must be of the form 
	\begin{align}
		g &= -\frac{1}{\mathcal{G}}(\mathcal{Q}_0\td u^2 + 2\td u\td v) + \mathcal{G}^2\td x^i\td x^i\;,\label{thm_null_metric}\\
		F &= \frac{\sqrt{3}}{2}\star_3 \td \mathcal{G}\;,\label{thm_null_Maxwell}
	\end{align}
	with 
	\begin{align}
		\mathcal{G}(\bm x) = \tilde L^{-1} + \sum_{i=1}^N \frac{\gamma_i}{|\bm x-\bm a_i|}\;, \qquad \mathcal{Q}_0(\bm x) = -\tilde L + \sum_{i=1}^N \frac{q_i}{|\bm x-\bm a_i|} \label{thm_harmonic_null}
	\end{align}
	with constants
	\begin{align}
		\tilde L>0\;, \qquad \gamma_i>0\;, \qquad q_i < 0\;,\label{thm_Lgq}
	\end{align}
	and $\bm a_i\in \mathbb{R}^3$ correspond to connected components of the horizon 
	with topology $S^2\times S^1$. The topology at infinity is $S^2\times S^1$.
\end{theorem}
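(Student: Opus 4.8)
The proof will be an assembly of the results of Sections \ref{sec_recap}--\ref{sec_sufficient} specialised to the globally null case, so I only indicate how the pieces fit together. The plan is as follows.

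First I would invoke Theorem \ref{thm_necessary}. Since $V$ is globally null we are in its null case, so in the coordinates of Lemma \ref{lem_Wnull} (in which $W=\partial_u$ and all metric functions depend only on $\bm x$) the solution takes the form (\ref{eq_nullmetric})--(\ref{eq_Maxwell_null}), with $\mathcal{G},\mathcal{K},\mathcal{Q}_0$ harmonic on $\mathbb{R}^3\setminus\{\bm a_1,\dots,\bm a_N\}$ of the multi-centred shape (\ref{eq_multi-centred}). Assumption \ref{ass5D_timelike} excludes fixed points of $W$ in the DOC, so every centre $\bm a_i$ is a connected component of the horizon, and the near-horizon analysis of Section \ref{sec_NHnull} forces each such component to have topology $S^2\times S^1$. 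The constant parts of the harmonic functions are pinned down by the asymptotics: Lemma \ref{lem_nullR3} together with (\ref{eq_Gdef}) and (\ref{eq_Vnull}) gives $\mathcal{G}\to\tilde L^{-1}$, and the computation in the proof of Theorem \ref{thm_necessary} gives $\mathcal{Q}_0\to-\tilde L$ and, after fixing the gauge freedom $\mathcal{K}\mapsto\mathcal{K}+c\mathcal{G}$, $\mathcal{K}\to0$.

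Next I would impose regularity along the lines of Section \ref{ssec_sufficient_null}. Near a horizon component, $\mathcal{G}$ is positive on the DOC and diverges at the centre by (\ref{eq_Gdef})--(\ref{eq_Gpos}), which forces $\gamma_i>0$; spacelikeness of $W$ on the horizon together with Corollary \ref{cor_Q} and (\ref{eq_GK}) gives $q_i<0$. Passing to coordinates regular at the horizon as in (\ref{eq_metric_horizon_null}), smoothness at the axes $\theta=0,\pi$ of the terms built from $\bm b$ forces the constant of integration $c_0=0$ and the constraint $c_{-1}=\gamma_0\kappa_{-1}-\kappa_0\gamma_{-1}=0$ at every centre (a nonzero $c_0$ cannot be traded between the north and south charts via (\ref{eq_gauge_freedom})--(\ref{eq_gauge_freedom_Qa}) because $v$ is not periodic). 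The crucial step is then Lemma \ref{lem_K0}: writing $c_{-1}=0$ out at all $N$ centres gives a homogeneous linear system for $(\kappa_i)$ whose matrix is strictly row diagonally dominant since $\tilde L^{-1}$ and the $\gamma_i$ are positive, hence invertible, so $\mathcal{K}\equiv0$. By Corollary \ref{cor_a}, $\bm b$ is then pure gauge and may be set to zero, the source on the right of (\ref{eq_Fcalc}) vanishes, $\mathcal{Q}=\mathcal{Q}_0$ is globally harmonic, and the Maxwell field collapses to (\ref{eq_Mxw_null_K0}); this gives exactly (\ref{thm_null_metric})--(\ref{thm_harmonic_null}) with the positivity conditions (\ref{thm_Lgq}).

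Finally I would confirm smoothness and asymptotics. On the DOC $\mathcal{G}>0$ and $\mathcal{Q}_0$ is smooth, the metric is invertible with the smooth inverse (\ref{eq_inversemetric}), and (\ref{eq_Mxw_null_K0}) is manifestly smooth; smoothness at the horizon was checked in Section \ref{ssec_sufficient_null} using that $\star_3\td(\gamma_{-1}/r)=-\gamma_{-1}\sin\theta\,\td\theta\wedge\td\phi$ is smooth on $S^2$. Since $\mathcal{K}=0$ and hence $\bm b=0$, the circle fibration at infinity is trivial, and plugging $\mathcal{G}\to\tilde L^{-1}$, $\mathcal{Q}_0\to-\tilde L$ into (\ref{thm_null_metric}) in the coordinates of Lemma \ref{lem_nullR3} reproduces (\ref{eq_metric_AKK}), so the spacetime is asymptotically Kaluza-Klein with sphere at infinity $S^2\times S^1$; conversely, any data (\ref{thm_harmonic_null})--(\ref{thm_Lgq}) yields such a solution by the same analysis. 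The only genuinely non-routine ingredient is Lemma \ref{lem_K0} — that simultaneous axis-regularity at all centres kills $\mathcal{K}$ identically, which is what makes the null class so much more rigid than the timelike one; everything else is bookkeeping.
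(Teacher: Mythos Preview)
Your proposal is correct and follows essentially the same route as the paper's own proof: invoke Theorem \ref{thm_necessary} for the null case, use the regularity analysis of Section \ref{ssec_sufficient_null} (in particular Lemma \ref{lem_K0} and Corollary \ref{cor_a}) to kill $\mathcal{K}$ and $\bm b$, read off the sign constraints (\ref{thm_Lgq}), and verify the asymptotics. The only difference is cosmetic --- the paper additionally writes down the explicit asymptotic coordinate change $u^0=v$, $\tilde\psi=u-\tilde L^{-1}v$, $u^i=\tilde L^{-1}x^i$ to verify (\ref{eq_metric_AKK}) --- but your treatment is otherwise complete.
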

\begin{proof}
	Theorem \ref{thm_necessary} says that the solution is globally determined by three multi-centred harmonic functions through 
	(\ref{eq_nullmetric}) and (\ref{eq_GK}-\ref{eq_Fcalc}), where each centre corresponds to a connected component of the horizon.
	The constant terms in (\ref{thm_harmonic_null}) are fixed by asymptotic behaviour of the metric. Regularity of the horizon requires that 
	$\mathcal{K}=0$ by Lemma \ref{lem_K0}, and thus by Corollary \ref{cor_a} the metric and the Maxwell field are of the claimed form with 
	$\mathcal{Q}=\mathcal{Q}_0$. The constant $\tilde L$ is positive by Definition \ref{def_KK}, and (\ref{eq_horizon_G}-
	\ref{eq_horizon_q}) for each centre translate to (\ref{thm_Lgq}). In Section \ref{ssec_sufficient_null} we have seen that 
	this is sufficient for the solution to be smooth on and outside the horizon. One can also easily check that the metric asymptotes to 
	(\ref{eq_metric_AKK}) with coordinates defined by 
	\begin{align}
		u^0 = v\;, \qquad  \tilde \psi =u -\tilde L^{-1}v\;, \qquad u^i = \tilde L^{-1}x^i\;.
	\end{align}
	One can check that the geometry at infinity ($v=\const$, $r\to\infty$) is $S^2\times S^1$.
\end{proof}

\noindent{\bf Remarks.} 
\begin{enumerate}
	\item If one removes the condition (\ref{eq_thm_TLcond}) from Theorem \ref{thm_classification}, which violates the assumption that the solution 
	is in the timelike class, one exactly obtains the solutions in Theorem \ref{thm_classification_null} with 
	\begin{align}
		H = 0\;, \quad K = \mathcal{G}\;,\quad L=0\;, \quad M =\mathcal{Q}/2\;. \label{eq_nullTLidentification}
	\end{align}
	The proof of Theorem \ref{thm_classification} heavily relies on the timelike Gibbons-Hawking ansatz, so it is not obvious 
	{\it a priori} that one can relax (\ref{eq_thm_TLcond}). A possible explanation is that these solutions have a common six-dimensional origin, 
	from which one can obtain the timelike and null class by a Kaluza-Klein reduction along different directions~\cite{gutowski_all_2003}.\label{rem_TLnull}
	\item The null solutions in Theorem \ref{thm_classification_null} can also be obtained as a limit of certain asymptotically null 
	solutions in Theorem \ref{thm_classification}, where we define
	\begin{align}
		\tilde{\bm{x}} := \epsilon \bm x\;, \quad \tilde{\bm{a}}_i := \epsilon \bm a_i\;, \quad \gamma_i := \epsilon k_i\;, \quad q_i := 2\epsilon m_i\;,
	\end{align}
	and we take $\epsilon\to0$ while keeping $\tilde{\bm{x}}$, $\tilde{\bm{a}}_i$, $\gamma^i$ and $q^i$ fixed. Then $\tilde{\bm{x}}$ becomes 
	the cartesian coordinate of the null solutions and $\tilde{\bm{a}}_i$ are the positions of the centres. One can check that the parameter 
	constraints of Theorem \ref{thm_classification} are consistent with those of Theorem \ref{thm_classification_null}.
	\item In the null case the geometry of a spatial slice $\Sigma$ is given by 
	\begin{equation}
		g|_{v=\const} = -\mathcal{Q}_0\mathcal{G}^{-1}\td u^2 + \mathcal{G}^2\td x^i\td x^i\;,
	\end{equation}
	thus the DOC has the topology of a trivial circle fibration over $\mathbb{R}^3\setminus\cup_{i=1}^N{\bm a_i}$ (and hence the `sphere' at infinity has $S^2\times S^1$ geometry). This is a consequence of the lack of fixed points, and that all horizon components correspond to black rings. The latter is a necessary consequence of the near-horizon analysis, while the former follows from the assumption that the Killing fields have a timelike linear combination at each point of the DOC. It would be interesting to investigate whether solutions violating this assumption exist.
	\item The constants have the following physical meaning. $v_H$ is velocity of the horizon in the Kaluza-Klein direction with respect to 
	the asymptotic observer~\cite{elvang_supersymmetric_2005}, and $\gamma$ is the corresponding relativistic factor. This is apparent from 
	$V = \gamma(\partial_0 -v_H \tilde W)$, where $\tilde W$ is the unit vector in the Kaluza-Klein direction, and $V$ is tangent to the 
	generators of the horizon. $\tilde L$ sets the length of the Kaluza-Klein direction at infinity.
	\item Known constructions of supersymmetric Kaluza-Klein black holes of this theory \cite{elvang_supersymmetric_2005,gaiotto_5d_2006,bena_black_2005,ishihara_kaluza-klein_2006-1,ishihara_kaluza-klein_2006,nakagawa_charged_2008,matsuno_rotating_2008,tomizawa_charged_2008,tomizawa_squashed_2009,tomizawa_compactified_2010,tomizawa_exact_2011,tomizawa_general_2013,tomizawa_kaluza-klein_2018} use the timelike ansatz with a hyper-K\"ahler base, in particular (multi-)Taub-NUT space, hence they all belong to the asymptotically timelike case of Theorem \ref{thm_classification}. This can be seen from the asymptotic behaviour of the harmonic function $H$. For example in \cite{elvang_supersymmetric_2005} one centre corresponds to a `nut'-type fixed point (or a spherical black hole when they `hide' the nut singularity behind a horizon), while another one to a black ring. 
	\item As with flat asymptotics, in the timelike case it is not known whether (\ref{eq_thm_ctr}-\ref{eq_thm_N}) guarantees 
	that the DOC is globally hyperbolic. In fact, it is not clear what the sufficient conditions are for it to be stably causal ($g^{tt}<0$), 
	which is a consequence of global hyperbolicity. In \cite{avila_one_2018} it has been conjectured that positivity of $N^{-1}=K^2+HL$ (which is necessary for 
	smoothness at generic points) implies the lack of closed timelike curves for soliton solutions, which has been supported by numerical 
	evidence. In line with this conjecture, in \cite{breunholder_supersymmetric_2019,katona_supersymmetric_2023} numerical tests found no smooth 
	asymptotically flat black holes with positive ADM mass that violated stable causality. In contrast, in the null case by (\ref{eq_inversemetric})
	$g^{vv} = \mathcal{GQ}<0$, so the spacetime is stably causal automatically with no further constraints on the parameters.
	\item All solutions of Theorem \ref{thm_classification_null}, and those of Theorem \ref{thm_classification} for which $K\equiv0$ (which 
	includes all static solutions) have been argued to be exact string backgrounds~\cite{gutowski_all_2003}.
\end{enumerate}


\section{A classification of four-dimensional supersymmetric black holes from Kaluza-Klein reduction} \label{sec_reduction}

In this section we consider the dimensional reduction of the five-dimensional solutions classified in Theorem 
\ref{thm_classification}-\ref{thm_classification_null}, and determine the subclass for which the reduced solutions are smooth on and 
outside the horizon. We perform the Kaluza-Klein reduction along the direction of $W$ in coordinates adapted to it\footnote{
	This $\psi$ coordinate in general may be different from previous sections, but for certain gauge choice coincides with the 
	$\psi$ coordinate of the timelike case, or the $u$ coordinate of the null case.
}, so that 
$W=\partial_\psi$. For the dimensionally reduced theory, we will follow the 
field definitions of~\cite{elvang_supersymmetric_2005}, which are given by
\begin{align}
	&g =: e^{\Phi/\sqrt{3}}g^{(4)} + e^{-2\Phi/\sqrt{3}}(\td\psi +\mathcal{A})^2\;, &&A =: A^{(4)} + \rho \td \psi\;,\label{eq_def_metric}\\
	&F^{(4)} := \td A^{(4)} - \td\rho\wedge\mathcal{A}\;, && G^{(4)} := \td \mathcal{A}\;.\label{eq_def_Maxwell}
\end{align}
Here we used that since $\lie_WF=0$, we are free to work in a gauge in which $\lie_WA=0$. $g^{(4)}$ is the four-dimensional metric, $A^{(4)}$
and $\mathcal{A}$ are one-form potentials, and  $\Phi, \rho$ are scalar fields. 
The action (\ref{eq_action}) then reduces to
\begin{align}
	S &= \frac{1}{16\pi G_4}\int_{\mathcal{M}_4} \left(R^{(4)}\star 1 -\frac{1}{2}\star \td \Phi\wedge \td \Phi - 2 e^{2\Phi/\sqrt{3}}\star \td \rho\wedge \td \rho\right.\nonumber\\
	&\left.-\frac{1}{2}e^{-\sqrt{3}\Phi}G^{(4)}\wedge \star G^{(4)} - 2e^{-\Phi/\sqrt{3}}F^{(4)}\wedge\star F^{(4)} -\frac{8}{\sqrt{3}}\rho \td A^{(4)}\wedge\td A^{(4)} \right)\;,\label{eq_4action}
\end{align}
where $G_4 = G/4\pi$, and $R^{(4)}$ denotes the Ricci scalar of $g^{(4)}$.

It is important to establish which fields are physical, as we will require the smoothness of those only. Physical fields must be invariant
under five-dimensional coordinate changes of the form $\psi' = \psi + \mu(t, x^i)$ and gauge transformations which 
preserve the gauge condition $\mathcal{L}_W A =0$. Since $\iota_W\td A$ is invariant, and $0=\lie_W(A'-A)= \td (\iota_WA'-\iota_WA)$, 
the allowed gauge transformations must be of the form $A' = A + \td \lambda(t, x^i) + c \td\psi$ with some constant $c$. Under such  
transformations the fields transform as 
\begin{align}
	&A^{(4)}{}' = A^{(4)}+\td (\lambda-c\mu) -\rho\td \mu\;, \qquad \mathcal{A}' = \mathcal{A}-\td\mu\;,\qquad \rho' = \rho+c \;,\\
	g^{(4)}{}' &= g^{(4)}\;, \qquad F^{(4)}{}' = F^{(4)} \;, \qquad G^{(4)}{}'= G^{(4)}\;, \qquad \Phi' = \Phi\;, \qquad \td \rho' = \td \rho \;,\label{eq_physical}
\end{align}
hence the physical fields are those in (\ref{eq_physical}).
\\

\noindent{\bf Remarks.} 
\begin{enumerate}
	\item Even though the last term of (\ref{eq_4action}) contains gauge-dependent fields, the theory is gauge invariant since the five-dimensional 
	theory is. Indeed, one can check that the equations of motion of (\ref{eq_4action}) contain only the physical fields (\ref{eq_physical}).
	\item Ref. \cite{gaiotto_5d_2006} uses an alternative definition for the 2-form field $F^{(4)}$, that is
	\begin{equation}
		\tilde F^{(4)} : = F^{(4)}-\rho G^{(4)}= \td \left(A^{(4)} - \rho \mathcal{A}\right)\;,
	\end{equation}
	thus it is closed (as opposed to $F^{(4)}$). It is evident from its definition that smoothness of $\tilde F^{(4)}$ is equivalent to the 
	smoothness of $F^{(4)}$ (assuming the other fields in (\ref{eq_physical}) are smooth), since for solutions with a simply connected DOC 
	(which are the relevant ones here due to topological censorship), $\rho$ is globally defined by $\td\rho$ up to an additive constant, 
	so $\td\rho$ is smooth if and only if $\rho$ is smooth.
\end{enumerate}

Now we establish the subclass of solutions classified in Theorem \ref{thm_classification} and \ref{thm_classification_null} that reduce to a 
four-dimensional solution that is smooth on and outside the horizon. 
\begin{theorem}\label{thm_5Dto4D}
	A solution to five-dimensional minimal supergravity as in Theorem \ref{thm_classification} defines a four-dimensional, asymptotically flat 
	black hole solution of (\ref{eq_4action}) if and only if all centres correspond to horizon components (i.e. there are no fixed points 
	of the axial Killing field) and
	\begin{equation}
		D := \frac{3}{4}K^2L^2-2K^3M+HL^3-3HKLM-H^2M^2 >0 \label{eq_Dcond}
	\end{equation}
	for all $\bm x\in\mathbb{R}^3\setminus \{\bm a_1, \dots, \bm a_N\}$. Then the four-dimensional solution is given by 
	\begin{align}
		&g^{(4)} = - D^{-1/2}(\td t+ \hat\omega)^2 + D^{1/2}\td x^i\td x^i\;,\label{eq_4metric} \\
		&\Phi = -\frac{\sqrt{3}}{2}\log(DN^{2})\,, \qquad \rho = \frac{\sqrt{3}}{4}\frac{KL + 2HM}{K^2+HL}+c\;, \label{eq_4scalars}\\
		&G^{(4)} = \td \mathcal{A} = \td \left[\chi -\frac{2H^2M+3HKL+2K^3}{2D}(\td t+\hat\omega)\right]\;,\label{eq_4G} \\
		F^{(4)} = \frac{\sqrt{3}}{2}\td&\left[\frac{H}{K^2+HL}(\td t+\hat\omega)-\xi\right] + (\rho-c) \td \chi + \frac{2H^2M+3HKL+2K^3}{2D}\td\rho\wedge (\td t+\hat\omega)\label{eq_4F} \;,
	\end{align}
	where $ N^{-1} = K^2+HL$, $c$ is an arbitrary constant, $H,K,L,M$ are given by (\ref{eq_thm_harmonic}-\ref{eq_harmoniccosntants_AN_thm}), 
	and 1-forms $\chi, \hat\omega,\xi$ are given by (\ref{chixi_sol}-\ref{eq_beta_sol}).
\end{theorem}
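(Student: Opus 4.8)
The plan is to perform the Kaluza--Klein reduction of the timelike solutions of Theorem \ref{thm_classification} along $W$ explicitly, read off the four-dimensional fields in terms of the harmonic data, and then translate smoothness of those fields on and outside the horizon into the two stated conditions.

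First I would fix the gauge $\lie_W t = \lie_W A = 0$ (possible since $\lie_W F = 0$), so that $W = \partial_\psi$ and every function appearing in (\ref{eq_gGH}) and (\ref{eq_Maxwell}) is $\psi$-independent, and substitute into (\ref{eq_def_metric})--(\ref{eq_def_Maxwell}). Completing the square in $\td\psi$ in (\ref{eq_gGH}) immediately isolates $e^{-2\Phi/\sqrt3} = g(W,W) = (fH)^{-1} - f^2\omega_\psi^2$ and the connection $\mathcal A$, while $\rho = \iota_W A$ is the $\td\psi$-component of (\ref{eq_Maxwell}) (modulo the residual large gauge shift $A \to A + c\,\td\psi$, which is the origin of the constant $c$). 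The calculation hinges on the polynomial identity
\[
(K^2+HL)^3 - \bigl(K^3 + \tfrac32 HKL + H^2M\bigr)^2 = H^2 D \,,
\]
with $D$ the quartic (\ref{eq_Dcond}); together with $f = H/(K^2+HL)$ and $N^{-1}=K^2+HL$ it converts $g(W,W)$ into $D N^2$, giving $\Phi = -\tfrac{\sqrt3}{2}\log(DN^2)$, and a short further manipulation (using $e^{\Phi/\sqrt3}(H/f) = \sqrt D$) yields the remaining fields (\ref{eq_4metric})--(\ref{eq_4F}), in particular the form $g^{(4)} = -D^{-1/2}(\td t + \hat\omega)^2 + D^{1/2}\td x^i\td x^i$. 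Consistency of the reduction then guarantees that $(g^{(4)},\Phi,\rho,F^{(4)},G^{(4)})$ solves (\ref{eq_4action}).

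For the ``only if'' direction: a fixed point $p$ of $W$ lies in the DOC (by assumption \ref{ass5D_timelike}, $\mathcal F \subset \operatorname{int}\mathcal B$) but has $g(W,W)(p)=0$, forcing the physical scalar $\Phi\to+\infty$ there; hence a four-dimensional solution smooth on and outside the horizon requires $\mathcal F = \emptyset$, i.e. every centre is a horizon component. Since $N>0$ off $\mathcal F$ and $g(W,W) = DN^2$, finiteness of $\Phi$ and Lorentzian signature of $g^{(4)}$ both force $D>0$ on all of $\mathbb R^3\setminus\{\bm a_1,\dots,\bm a_N\}$. Conversely, assume $\mathcal F = \emptyset$ and $D>0$. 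Away from the centres the fields are manifestly smooth from their explicit rational expressions (using $D>0$, $K^2+HL = N^{-1} > 0$, and that $\hat\omega$ is free of string singularities by Theorem \ref{thm_classification}). Near a horizon component $\bm a_i$ one has $D \sim D_i/r^4$ with $D_i>0$ --- this $D_i$ being precisely the positive combination in (\ref{eq_thm_hor}) --- and $N \sim N_i r^2$ with $N_i>0$, so $g(W,W)\to D_i N_i^2 > 0$ and $\Phi$ stays bounded; since $W$ acts freely on and near the horizon (no fixed points in the DOC, $W$ spacelike on the horizon, and no exceptional orbits by Lemma \ref{lem_exceptional}) with norm bounded away from zero, the quotient of the smooth five-dimensional solution by $W$ is a smooth four-manifold and $g^{(4)},\Phi,\rho,F^{(4)},G^{(4)}$ extend smoothly onto the descended horizon, which is a degenerate Killing horizon of $\partial_t$ with compact $S^2$ cross-section. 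Finally $g^{(4)}$ is asymptotically flat because $D\to D_\infty>0$ and $\hat\omega = \ord(|\bm x|^{-2})$ as $|\bm x|\to\infty$ (from (\ref{eq_harmoniccosntants_AT_thm}), or (\ref{eq_harmoniccosntants_AN_thm}) in the asymptotically null case), after the constant rescaling of $t$ and $\bm x$ that absorbs $D_\infty$.

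The main obstacle is twofold. First, organising the reduction algebra so that the quartic $D$ and the identity above emerge cleanly: the intermediate expressions are bulky and the result is sensitive to the precise normalisations in (\ref{eq_omegapsi})--(\ref{eq_Maxwell}). Second, the horizon regularity in the ``if'' direction --- although $g^{(4)}(\partial_t,\partial_t)\to 0$ and the spatial part of $g^{(4)}$ diverges like $r^{-2}$ at a centre, this is the familiar smooth, extremal near-horizon form; the efficient route is the free-quotient argument above, which sidesteps an explicit passage to Gaussian null coordinates that would otherwise be the most tedious step.
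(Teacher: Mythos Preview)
Your approach is correct and essentially the same as the paper's. The paper streamlines the argument by writing the four-dimensional fields directly as five-dimensional invariants,
\[
g^{(4)} = g(W,W)^{1/2} g - g(W,W)^{-1/2}W^\flat\otimes W^\flat,\quad \Phi = -\tfrac{\sqrt{3}}{2}\log g(W,W),\quad \td\rho = -\iota_WF,\quad G^{(4)}= \td\bigl(W^\flat/g(W,W)\bigr),
\]
and similarly for $F^{(4)}$; since these are built from smooth five-dimensional data and $g(W,W)$ alone, smoothness on and outside the horizon is \emph{immediately} equivalent to $g(W,W)>0$ there, which together with $g(W,W)=N^2D$ gives the two conditions. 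This obviates your separate near-horizon check of $D\sim D_i/r^4$, $N\sim N_i r^2$ and the free-quotient argument, though those observations are correct and lead to the same conclusion. The explicit identity $(K^2+HL)^3-(K^3+\tfrac32 HKL+H^2M)^2=H^2D$ that you isolate is exactly what underlies the computation the paper defers to equation (10) of \cite{breunholder_moduli_2019}.
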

\begin{theorem}\label{thm_5Dto4Dnull}
	A solution to five-dimensional minimal supergravity as in Theorem \ref{thm_classification_null} defines a four-dimensional, asymptotically flat 
	black hole solution of (\ref{eq_4action}), given by 
	\begin{align}
		g^{(4)} &= -\frac{\td v^2}{\sqrt{-\mathcal{QG}^3}} + \sqrt{-\mathcal{QG}^3}\td x^i\td x^i\;, \qquad \Phi = -\frac{\sqrt{3}}{2}\log\left(-\frac{\mathcal{Q}}{\mathcal{G}}\right)\;,\label{eq_4metric_null}\\
		\rho &=0\;, \qquad F^{(4)} = \frac{\sqrt{3}}{2}\star_3\td\mathcal{G}\;, \qquad G^{(4)} = \mathcal{Q}^{-2}\td v\wedge \td \mathcal{Q}\;,\label{eq_4F_null}
	\end{align}
	where $\mathcal{G}$ and $\mathcal{Q}$ are given by (\ref{thm_harmonic_null}) with (\ref{thm_Lgq}).
\end{theorem}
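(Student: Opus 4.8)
The approach is to apply the Kaluza--Klein reduction of Section~\ref{sec_reduction} directly to the explicit null solution of Theorem~\ref{thm_classification_null}, reducing along $W=\partial_u$ so that the coordinate $u$ of (\ref{eq_nullmetric}) plays the role of the fibre coordinate $\psi$ in (\ref{eq_def_metric}), and then to verify the three assertions: that the reduced data take the stated form, that they are smooth on and outside the horizon, and that the four-dimensional metric is asymptotically flat. For the field identification I would complete the square in $\td u$ in (\ref{thm_null_metric}),
\begin{equation}
	g = -\frac{\mathcal{Q}_0}{\mathcal{G}}\Big(\td u + \frac{1}{\mathcal{Q}_0}\td v\Big)^2 + \frac{1}{\mathcal{G}\mathcal{Q}_0}\td v^2 + \mathcal{G}^2\td x^i\td x^i\;,
\end{equation}
and compare with (\ref{eq_def_metric}): this gives $e^{-2\Phi/\sqrt 3}=g(W,W)=-\mathcal{Q}_0/\mathcal{G}$ (strictly positive by (\ref{thm_Lgq}) and Corollary~\ref{cor_a}), $\mathcal{A}=\mathcal{Q}_0^{-1}\td v$, and hence $g^{(4)}=e^{-\Phi/\sqrt 3}(\mathcal{G}^{-1}\mathcal{Q}_0^{-1}\td v^2+\mathcal{G}^2\td x^i\td x^i)$, which simplifies to (\ref{eq_4metric_null}). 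Since $\iota_W F=0$ by (\ref{eq_Maxwell_null}) with $\mathcal{K}\equiv 0$ (Lemma~\ref{lem_K0}), we may work in a gauge with $\iota_W A=0$, whence $\rho\equiv 0$, $F^{(4)}=\td A^{(4)}=F=\frac{\sqrt 3}{2}\star_3\td\mathcal{G}$ by (\ref{eq_Mxw_null_K0}), and $G^{(4)}=\td\mathcal{A}=\mathcal{Q}_0^{-2}\td v\wedge\td\mathcal{Q}_0$, giving the remaining formulae in (\ref{eq_4F_null}). Because (\ref{eq_4action}) is by construction the reduction of (\ref{eq_action}) along $W$, these fields automatically solve the four-dimensional equations of motion, and, unlike in Theorem~\ref{thm_5Dto4D}, no extra conditions arise: the signature requirement $-\mathcal{Q}_0\mathcal{G}^3>0$ holds automatically on the domain of outer communication, and fixed points of $W$ are excluded by assumption~\ref{ass5D_timelike}.

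For asymptotic flatness I would use $\mathcal{G}=\tilde L^{-1}+\ord(|\bm x|^{-1})$ and $\mathcal{Q}_0=-\tilde L+\ord(|\bm x|^{-1})$ from (\ref{thm_harmonic_null}), so that $-\mathcal{Q}_0\mathcal{G}^3=\tilde L^{-2}+\ord(|\bm x|^{-1})$ and $g^{(4)}\to-\tilde L\,\td v^2+\tilde L^{-1}\delta_{ij}\td x^i\td x^j$; a constant rescaling of $v$ and of the $x^i$ puts this in standard Minkowski form with $\ord(|\bm x|^{-1})$ corrections, while $F^{(4)},G^{(4)}=\ord(|\bm x|^{-2})$ and $\Phi$ tends to a constant. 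Smoothness of the physical fields $g^{(4)},F^{(4)},G^{(4)},\Phi,\td\rho$ on the domain of outer communication is immediate, since $\mathcal{G}>0$ and $\mathcal{Q}_0<0$ are smooth there: $\Phi$ is smooth, $g^{(4)}$ is smooth and non-degenerate, $F^{(4)}$ is smooth by (\ref{eq_Mxw_null_K0}), $G^{(4)}=\mathcal{Q}_0^{-2}\td v\wedge\td\mathcal{Q}_0$ is smooth, and $\td\rho=0$.

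The main obstacle, exactly as in the five-dimensional analysis of Section~\ref{ssec_sufficient_null}, is to show that the reduced solution extends smoothly across each centre with a genuine (degenerate) event horizon. I expect the cleanest argument to be the following: the five-dimensional solution of Theorem~\ref{thm_classification_null} is already smooth on and outside the horizon; there $g(W,W)=e^{-2\Phi/\sqrt 3}=-\mathcal{Q}_0/\mathcal{G}$ is strictly positive and finite (near a horizon component $\mathcal{G}\sim\gamma_{-1}/r$ and $\mathcal{Q}_0\sim q_{-1}/r$ with $\gamma_{-1}>0>q_{-1}$, so $-\mathcal{Q}_0/\mathcal{G}\to-q_{-1}/\gamma_{-1}>0$); and $W$ acts freely there, having no fixed points (assumption~\ref{ass5D_timelike}) and no exceptional orbits (Lemma~\ref{lem_exceptional}). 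Hence the reduction is a smooth operation up to and including the horizon, and the $S^2\times S^1$ horizon descends to a smooth $S^2$. Alternatively, and more explicitly, one reduces the near-horizon metric $g_{NH}$ of Section~\ref{ssec_sufficient_null} along $\partial_u$: completing the square in $\td u$ one finds that the four-dimensional near-horizon geometry is $AdS_2\times S^2$, a regular degenerate horizon at which $V=\partial_v$ becomes null, and a coordinate redefinition $\td v'=\td v-(A_0r^{-2}+A_1r^{-1})\td r$ with suitable constants (as in Section~\ref{ssec_sufficient_null}) renders $g^{(4)},F^{(4)},G^{(4)}$ manifestly smooth there. Either route shows that the reduced data define a smooth, asymptotically flat, four-dimensional black hole solution of (\ref{eq_4action}) of the stated form.
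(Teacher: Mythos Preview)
Your proposal is correct and follows essentially the same approach as the paper: identify the reduced fields by direct computation from (\ref{thm_null_metric}--\ref{thm_null_Maxwell}) and (\ref{eq_def_metric}--\ref{eq_def_Maxwell}), argue smoothness on and outside the horizon from $g(W,W)>0$ together with the absence of fixed points, and verify asymptotic flatness from the leading behaviour of $\mathcal{G},\mathcal{Q}_0$. The only cosmetic difference is that the paper packages the smoothness argument via the invariant expressions (\ref{eq_5Dinvariants}) (so that smoothness of all reduced physical fields follows at once from $g(W,W)>0$ and smoothness of the five-dimensional data), whereas you obtain the same conclusion by explicitly completing the square and checking each field; your asymptotic rescaling also agrees with the paper's $\bar t=\tilde L^{1/2}v$, $\bar x^i=\tilde L^{-1/2}x^i$.
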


\begin{proof}[Proof of Theorem \ref{thm_5Dto4D} and \ref{thm_5Dto4Dnull}]
It is easy to see that the four-dimensional fields are related to five-dimensional smooth invariants by
\begin{align}
	g^{(4)} &= g(W,W)^{1/2} g - g(W,W)^{-1/2}W^\flat\otimes W^\flat\;, \qquad \Phi = -\frac{\sqrt{3}}{2}\log g(W,W)\;,\nonumber \\
	&\td\rho = -\iota_WF\;, \qquad F^{(4)} =F+\frac{(\iota_W F) \wedge W^\flat}{g(W,W)} \;, \qquad G^{(4)}= \td \left(\frac{W^\flat}{g(W,W)}\right)\;.\label{eq_5Dinvariants}
\end{align}
Hence, in Theorem \ref{thm_5Dto4D} the four-dimensional fields (\ref{eq_4metric}-\ref{eq_4F}) are smooth\footnote{
	Since they are invariant under $W$, i.e. $W\cdot X = 0=\mathcal{L}_WX$ for all $X$ in (\ref{eq_4metric}-\ref{eq_4G}), they descend to 
	$\mathcal{M}_4$ smoothly, which we can identify with the orbit space of the $U(1)$ action.
} if and only if $g(W,W)=N^2D>0$. Since $W\neq0$ on the horizon (Corollary 3 of \cite{katona_supersymmetric_2023}), this is satisfied if and 
only if \begin{enumerate*}[label=(\roman*)]\item there are no fixed points of $W$ in the DOC, i.e. all the centres of (\ref{eq_thm_harmonic}) correspond to horizon components, and 
\item $D>0$ away from the centres (\ref{eq_Dcond}).\end{enumerate*} A calculation using the explicit form of the solution together with 
equation (10) of \cite{breunholder_moduli_2019} yields the right-hand sides of (\ref{eq_4metric}-\ref{eq_4F}).

In the null case in Theorem \ref{thm_5Dto4Dnull} $g(W,W)>0$, so the four-dimensional fields are smooth, and 
(\ref{eq_4metric_null}-\ref{eq_4F_null}) comes from direct calculation using (\ref{thm_null_metric}-\ref{thm_null_Maxwell}) 
and (\ref{eq_def_metric}-\ref{eq_def_Maxwell}).

Finally, using the asymptotic explicit form of the harmonic functions, one can easily check that the four-metric approaches the Minkowski 
metric in coordinates\footnote{The factors of $\tilde L^{1/2}$ appear in (\ref{eq_AT_flat}-\ref{eq_AN_flat}) because we chose to reduce 
along a dimensionless coordinate $\psi$, so $g^{(4)}$ has length dimension $3$, and thus the asymptotic coordinates $\bar t$, $\bar x^i$ have $3/2$.} 
\begin{align}
	\bar t =  \tilde L^{1/2}\left(1-v_H^2\right)^{-1/2}t\;, \qquad&\qquad \bar x^i = \tilde L^{-1/2}\left(1-v_H^2\right)^{1/2}x^i\;,\label{eq_AT_flat}\\
	&\text{ and }\nonumber\\
	\bar t =  \tilde L^{1/2}v\;, \qquad&\qquad \bar x^i = \tilde L^{-1/2}x^i\;, \label{eq_AN_flat}
\end{align}
in the asymptotically timelike and null case\footnote{In the asymptotically but not globally null case we have $t$ instead of 
$v$ in (\ref{eq_AN_flat}).}, respectively.

\end{proof}

\noindent{\bf Remarks.} \begin{enumerate}
	\item As seen for the five-dimensional solutions (see Remark \ref{rem_TLnull} after Theorem \ref{thm_classification_null}), Theorem 
	\ref{thm_5Dto4D} can be extended, by omitting (\ref{eq_thm_TLcond}) from its assumptions, to include solutions of Theorem \ref{thm_5Dto4Dnull}. 
	The identification of harmonic functions is then given by (\ref{eq_nullTLidentification}). \label{rem_TLnull4D}
	\item For globally hyperbolic, hence stably causal, five-dimensional solutions of Theorem \ref{thm_5Dto4D}, $W$ must be spacelike in the DOC, so 
	(\ref{eq_Dcond}) must hold. In some neighbourhood of each horizon component (\ref{eq_thm_hor}) guarantees that (\ref{eq_Dcond}) 
	is satisfied, however there is no known sufficient condition for it to hold on the whole of the DOC. There is numerical evidence 
	that this does not restrict the moduli further than the smoothness conditions (see Remark after Theorem \ref{thm_classification}).
	For Theorem \ref{thm_5Dto4Dnull} there is no analogous requirement, as for all such five-dimensional solutions $g(W,W)>0$ in the DOC.
	\item The solutions of Theorem \ref{thm_5Dto4D} and \ref{thm_5Dto4Dnull} have been first described by Denef {\it et al.} 
	\cite{denef_supergravity_2000,denef_split_2001,bates_exact_2011}, with the explicit form of the solutions given in Section 4 
	of~\cite{bates_exact_2011}. Spherically symmetric solutions of the same form appear in~\cite{bertolini_n8_1999}. The connection to 
	five-dimensional solutions has been explored in detail in \cite{gaiotto_5d_2006,gaiotto_new_2006,elvang_supersymmetric_2005,behrndt_exploring_2006}. Here we extend this 
	connection by providing a classification of these solutions.
	\item The five dimensional Killing spinor locally defines a four dimensional Killing spinor of (\ref{eq_4action}) as shown in 
	Appendix \ref{app_KS}. We have not investigated the possible spin structures of $\mathcal{M}$ or $\mathcal{M}_{4}$ and their compatibility 
	(for more details see e.g.~\cite{figueroa-ofarrill_supersymmetric_2004}), hence the Killing spinor might be defined only up to a sign globally. 
	The proofs only use the Killing spinor bilinears which are invariant under such sign change.
\end{enumerate}

Now we establish the converse of Theorems \ref{thm_5Dto4D}-\ref{thm_5Dto4Dnull} to classify the asymptotically flat black hole solutions of (\ref{eq_4action}). For 
asymptotic flatness we use the following definition.\footnote{In this section, indices $abc\ldots$ denote four-dimensional spacetime indices 
in contrast to previous sections.}
\begin{definition}\label{def_AF}
	A four-dimensional spacetime is asymptotically flat if it has an end diffeomorphic to $\mathbb{R}\times(\mathbb{R}^3\setminus B^3)$, and on this 
	end the metric $g^{(4)}=-\td u^0\td u^0  + \delta_{ij}\td u^i\td u^j + \ord(R^{-\alpha})\td u^a\td u^b$ for some $\alpha>0$, where 
	$(u^0, u^i)$, $i=1,2,3$ are the pull-back of the cartesian coordinates on $\mathbb{R}\times\mathbb{R}^3$ and $R :=\sqrt{\delta_{ij} u^i u^j}$, 
	and the $k^{th}$ derivatives of the metric fall off as $\ord (R^{-\alpha-k})$ for $k=1,2$ in these coordinates.
\end{definition}

We assume that $(\mathcal{M}_4, g^{(4)}, \Phi,\rho,F^{(4)}, G^{(4)})$ is a solution of (\ref{eq_4action}) such that
\begin{enumerate}[label=(\roman*)]
	\item the solution admits a globally defined Killing spinor $\epsilon^{(4)}$, i.e. it is supersymmetric, \label{assumption_KS}
	\item the DOC, $\llangle\mathcal{M}_4\rrangle$ is globally hyperbolic,\label{assumption_GH}
	\item $\llangle\mathcal{M}_4\rrangle$ is asymptotically flat in the sense of Definition \ref{def_AF},
	\item the supersymmetric Killing field $V^{(4)}$ is complete, timelike on $\llangle\mathcal{M}_4\rrangle$, and in the asymptotic coordinates of 
	Definition \ref{def_AF} it is given by $V^{(4)}=\partial_0$, \label{assumption_Killing}
	\item the horizon $\mathcal{H}_4$ admits a smooth compact cross-section (which may not be connected),
	\item $\llangle\mathcal{M}_4\rrangle\cup \mathcal{H}_4$ admits a Cauchy surface $\Sigma_4$ that is a union of a compact set and an 
	asymptotically flat end,\label{assumption_cauchy}
	\item the metric and the fields are smooth ($C^\infty$) on and outside the horizon,
	\item there exists a gauge such that the $k^{th}$ derivatives of the fields $\Phi,\rho, \mathcal{A}, A^{(4)}$ in the 
	asymptotic chart fall off as $\ord(R^{-\beta-k})$ for $k=1,2$ and some $\beta>0$,\label{assumption_matter}
	\item $G^{(4)}$ is the curvature of a smooth connection $\eta$ on a principal $U(1)$-bundle over $\mathcal{M}_4$.\label{assumption_bundle}
\end{enumerate}

\begin{theorem}\label{thm_class_4D}
	Let $(\mathcal{M}_4, g^{(4)}, \Phi,\rho,F^{(4)}, G^{(4)})$ be a solution of (\ref{eq_4action}) satisfying assumptions 
	\ref{assumption_KS}-\ref{assumption_bundle}. Then it must belong to the class derived in Theorems \ref{thm_5Dto4D}-\ref{thm_5Dto4Dnull}. 
	In particular, it is globally determined by four harmonic functions of the form (\ref{eq_thm_harmonic}-\ref{eq_harmoniccosntants_AN_thm}) 
	satisfying (\ref{eq_thm_ctr}), (\ref{eq_thm_hor}-\ref{eq_thm_N}), (\ref{eq_Dcond}), and the solution is given by (\ref{eq_4metric}-\ref{eq_4F}) 
	together with (\ref{chixi_sol}-\ref{eq_beta_sol}).
\end{theorem}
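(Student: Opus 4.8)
The plan is to \emph{uplift} the four-dimensional solution to five dimensions using assumption \ref{assumption_bundle}, verify that the uplift satisfies every hypothesis of Theorems \ref{thm_classification}--\ref{thm_classification_null}, and then read off the four-dimensional solution from Theorems \ref{thm_5Dto4D}--\ref{thm_5Dto4Dnull}. Concretely, by \ref{assumption_bundle} there is a principal $U(1)$-bundle $\pi\colon\mathcal{M}\to\mathcal{M}_4$ with connection one-form $\eta=\td\psi+\mathcal{A}$ whose curvature is $\pi^*G^{(4)}$; normalising the fibre coordinate $\psi$ to be $4\pi$-periodic and reversing \eqref{eq_def_metric}--\eqref{eq_def_Maxwell}, set
\[
	g = e^{\Phi/\sqrt{3}}\,\pi^*g^{(4)} + e^{-2\Phi/\sqrt{3}}\,\eta^2\;, \qquad A = \pi^*A^{(4)} + \rho\,\eta\;,
\]
where $\rho$ is globally defined up to a constant because the four-dimensional DOC is simply connected by topological censorship. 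Since all the data are $\psi$-independent, $W:=\partial_\psi$ is a Killing field with $4\pi$-periodic orbits preserving $F=\td A$, and $W$ is spacelike wherever $\Phi$ is finite, in particular on the whole DOC; moreover, because \eqref{eq_4action} is precisely the Kaluza--Klein reduction of \eqref{eq_action} and the five-dimensional equations of motion for a $\psi$-independent configuration are equivalent to the four-dimensional ones, $(\mathcal{M},g,F)$ is a smooth solution of $D=5$ minimal supergravity.

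Next I would promote $\epsilon^{(4)}$ to a five-dimensional Killing spinor: for a $\psi$-independent configuration the five-dimensional Killing spinor equation decomposes into the four-dimensional one (the converse of the computation in Appendix \ref{app_KS}, reflecting that the reduction is a consistent supersymmetric truncation), so $\epsilon^{(4)}$ lifts to a $\psi$-independent Killing spinor $\epsilon$ on $\mathcal{M}$, whence $\mathcal{L}_W\epsilon=0$ and \eqref{eq_Wbilinears} holds for the bilinears $f,V,X^{(i)}$ (as in the remarks after Theorem \ref{thm_classification_null}, $\epsilon$ may be defined only up to sign, but only its bilinears enter). I then check assumptions \ref{ass5D_susy}--\ref{ass5D_smooth} in turn: \ref{ass5D_susy} is the preceding sentence; for \ref{ass5D_GH}, $\pi^{-1}(\Sigma_4)$ is a Cauchy surface for $\llangle\mathcal{M}\rrangle=\pi^{-1}(\llangle\mathcal{M}_4\rrangle)$ since the fibres are spacelike; \ref{ass5D_horizon} and \ref{ass5D_compact} follow by taking $\pi^{-1}$ of the four-dimensional horizon cross-section and Cauchy surface; \ref{ass5D_W} is the construction of $W$; \ref{ass5D_Vcomplete} follows from completeness of $\pi_*V$, from $[V,W]=0$, and from compactness of the fibres; for \ref{ass5D_stationary} and \ref{ass5D_timelike} one notes that $\pi_*V$ is a Killing field built from the four-dimensional bilinear, hence (up to the normalisation fixed by \ref{assumption_Killing}) equals $V^{(4)}=\partial_0$, so its canonical lift $\widehat{V^{(4)}}$ is a Killing field of $g$ with $\widehat{V^{(4)}}=V-cW$ for a constant $c$ (a $\psi$-independent vertical Killing field must be a constant multiple of $W$); thus $\widehat{V^{(4)}}\in\operatorname{span}\{V,W\}$ is the stationary Killing field, and $g(\widehat{V^{(4)}},\widehat{V^{(4)}})=e^{\Phi/\sqrt{3}}g^{(4)}(\partial_0,\partial_0)<0$ on the whole DOC by assumption \ref{assumption_Killing}; finally \ref{ass5D_smooth} is immediate.

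The delicate point is \ref{ass5D_AKK}, i.e. verifying that the uplift is asymptotically Kaluza--Klein in the precise sense of Definition \ref{def_KK}. On the asymptotically flat end of Definition \ref{def_AF}, assumption \ref{assumption_matter} gives $\Phi\to\Phi_\infty$, $\rho\to\rho_\infty$ and $\mathcal{A}$ approaching a connection on a $U(1)$-bundle over $S^2$ with some Chern number $n\in\mathbb{Z}$; setting $\tilde L=e^{-\Phi_\infty/\sqrt{3}}$ and rescaling the flat coordinates by powers of $e^{\Phi_\infty/\sqrt{3}}$, the metric $g$ approaches the model \eqref{eq_metric_AKK} of a circle fibration (trivial for $n=0$, a Kaluza--Klein monopole otherwise) over $\mathbb{R}\times(\mathbb{R}^3\setminus B^3)$, the fall-off rates in Definition \ref{def_AF} and in \ref{assumption_matter} translating into the metric and derivative fall-offs of \ref{def_KK_FO} with $\tau=\min(\alpha,\beta)$, while the Ricci condition \ref{def_KK_Ricci} follows from the resulting $F\sim\ord(\tilde r^{-\tau/2-1})$ (as in the remark after Definition \ref{def_KK}). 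With all assumptions established, Theorem \ref{thm_classification} applies if $V$ is not globally null and Theorem \ref{thm_classification_null} if it is, exhibiting $(\mathcal{M},g,F)$ as one of the classified five-dimensional solutions with axial Killing field $W$; reducing it back along $W$ (the inverse of the uplift) returns the original four-dimensional solution, which is smooth on and outside the horizon by hypothesis, and since $W$ is spacelike throughout the DOC it has no fixed points and $g(W,W)=N^2D>0$, so Theorem \ref{thm_5Dto4D} (together with Remark \ref{rem_TLnull4D} to absorb the globally null case via \eqref{eq_nullTLidentification}) forces the four-dimensional solution into the form \eqref{eq_4metric}--\eqref{eq_4F} with harmonic functions \eqref{eq_thm_harmonic}--\eqref{eq_harmoniccosntants_AN_thm} obeying \eqref{eq_thm_ctr}, \eqref{eq_thm_hor}--\eqref{eq_thm_N} and \eqref{eq_Dcond}; these are the multi-black holes of \cite{denef_supergravity_2000,denef_split_2001,bates_exact_2011}. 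The main obstacle, as indicated, is the asymptotic analysis of the uplift needed for \ref{ass5D_AKK}, and, secondarily, pinning down that $\pi_*V=V^{(4)}$ and that global hyperbolicity is inherited through the spacelike circle fibration.
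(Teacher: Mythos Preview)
Your strategy—uplift along the $U(1)$-bundle, verify the hypotheses of Theorems \ref{thm_classification}--\ref{thm_classification_null}, and reduce back via Theorems \ref{thm_5Dto4D}--\ref{thm_5Dto4Dnull}—is exactly the paper's, and most of your verifications (global hyperbolicity via $\pi^{-1}(\Sigma_4)$, the asymptotic analysis with $\tau=\min(\alpha,\beta)$, the timelike-span condition via $N=-e^{-\Phi/\sqrt{3}}g^{(4)}(V^{(4)},V^{(4)})>0$) match the paper closely.

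There is, however, a gap in your verification of assumption \ref{ass5D_stationary}. The horizontal lift $\widehat{V^{(4)}}$ is \emph{not} in general a Killing field of $g$: one computes $\lie_{\widehat{V^{(4)}}}\eta=\iota_{\widehat{V^{(4)}}}\td\eta=\pi^*(\iota_{V^{(4)}}G^{(4)})$, and the ``electric'' part $\iota_{V^{(4)}}G^{(4)}$ is generically nonzero for these solutions (cf.\ \eqref{eq_4G}). Hence $V-\widehat{V^{(4)}}$, while vertical and $\psi$-independent, is not Killing, so your parenthetical argument does not apply to it and you cannot conclude that the coefficient is constant. The paper avoids this by a gauge-fixing argument: since $V$ preserves $g$ and $W$, it preserves $\eta=W^\flat/g(W,W)$; choosing a trivialisation with $\lie_V\mathcal{A}=0$ gives $0=\lie_V\eta=\lie_V\td\psi=\td V^\psi$, so $V^\psi=c$ is constant and $V=\partial_0+c\partial_\psi$ in these coordinates, exhibiting the stationary field $\partial_0=V-cW\in\operatorname{span}\{V,W\}$ directly. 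With this amendment the rest of your argument goes through unchanged.
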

\begin{proof}
	By assumption \ref{assumption_bundle} we can uplift the solution to five dimensions, identifying $\mathcal{M}$ with the total 
	space of the $U(1)$-bundle, on which we define the five-dimensional metric $g$ and Maxwell field $F$ as 
	\begin{align}
		g=e^{\Phi/\sqrt{3}}\pi^*g^{(4)} + e^{-2\Phi/\sqrt{3}}\eta^2\;, \qquad F = \pi^*F^{(4)}+\pi^*\td \rho \wedge \eta\,,\label{eq_g_5D_def}
	\end{align}
	where $\pi:\mathcal{M}\to\mathcal{M}_4$ is the bundle projection. The Killing spinor $\epsilon^{(4)}$ lifts to a five-dimensional 
	Killing spinor $\epsilon$, invariant under the $U(1)$-symmetry~\cite{figueroa-ofarrill_kaluza-klein_2022} (also see Appendix 
	\ref{app_KS}). Thus, $(\mathcal{M},g,F, \epsilon)$ is a supersymmetric solution of (\ref{eq_action}) 
	(as we have just undone the dimensional reduction). We will now show that it satisfies the assumptions of Theorem \ref{thm_classification} 
	or \ref{thm_classification_null}, hence $(\mathcal{M}_4,$ $g^{(4)},$ $\Phi,\rho,$ $F^{(4)},$ $G^{(4)})$ must belong to the class of 
	Theorem \ref{thm_5Dto4D} or \ref{thm_5Dto4Dnull}. Note that we include the latter solutions by not assuming (\ref{eq_thm_TLcond}) 
	for the harmonic function $H$ (see also Remark \ref{rem_TLnull4D} after Theorem \ref{thm_5Dto4Dnull}).

	We first prove that the five-dimensional DOC, $\llangle\mathcal{M}\rrangle$ is globally hyperbolic by showing that 
	$\Sigma = \pi^{-1}(\Sigma_4)$ is a Cauchy-surface. Let $p\in \llangle\mathcal{M}\rrangle$, $\gamma$ an inextendible 
	causal curve through $p$ in $(\mathcal{M}, g)$, and $U$ its tangent vector. From causality of $\gamma$ 
	\begin{equation}
		0\ge g(U,U) = e^{\Phi/\sqrt{3}}g^{(4)}(\pi_*U,\pi_*U) + e^{-2\Phi/\sqrt{3}}[\eta(U)]^2\ge e^{\Phi/\sqrt{3}}g^{(4)}(\pi_*U,\pi_*U)\;,
	\end{equation}
	thus $\pi_*U$ defines a causal curve $\gamma_4$ in $\mathcal{M}_4$. By assumption \ref{assumption_GH} $\gamma_4$ goes through $\Sigma_4$, 
	hence $\gamma$ goes through $\Sigma$. Acausality of $\Sigma$ follows from a similar argument, therefore it is a Cauchy surface, and 
	$\llangle\mathcal{M}\rrangle$ is globally hyperbolic.

	Next we show that $(\mathcal{M}, g)$ is asymptotically Kaluza-Klein according to Definition \ref{def_KK}. By compactness of 
	the fibres and assumption \ref{assumption_cauchy}, Definition \ref{def_KK}\ref{def_KK_top} is satisfied. The Dirac-currents of 
	$\epsilon$ and $\epsilon^{(4)}$ define the supersymmetric Killing fields $V$ and $V^{(4)}$ on $\mathcal{M}$ and $\mathcal{M}_4$,
	respectively. Let $W$ be the generator of the $U(1)$ action normalised such that its integral curves are $4\pi$-periodic\footnote{
		Here we use an uncanonical parametrisation of $U(1)$, where the parameter is $4\pi$-periodic.
	}.
	$[W,V]=0$ since $\epsilon$ is $U(1)$-invariant, and $\pi_*V=V^{(4)}$ (for details see Appendix \ref{app_KS}). 
	Let us adapt local coordinates to the vertical vector field so that $W=\partial_\psi$ and $\psi\sim\psi+4\pi$. It is obvious 
	that $W$ is a Killing field of $g$ (that also preserves $F$). In such a chart the connection is given by $\eta=\td\psi+\mathcal{A}$. 
	$V$ preserves $g$ and $W$, so it must also preserve $\eta = W^\flat/g(W,W)$. We may partially fix the gauge by requiring 
	$\lie_V\mathcal{A}=0$, then we have
	\begin{equation}
		0=\lie_V\td\psi = \td\iota_V\td\psi = \td V^\psi \;,\label{eq_Agauge}
	\end{equation}
	thus $V^\psi=c$ for some constant $c$. By assumption \ref{assumption_Killing} on the asymptotic end $\pi_*V =V^{(4)}= \partial_0$, hence 
	\begin{equation}
		V = \partial_0 + c\partial_\psi\;.\label{eq_V}
	\end{equation}
	It follows from assumption \ref{assumption_matter} that
	\begin{equation}
		\Phi = \Phi_0 + \ord(R^{-\beta}) \;, \qquad G^{(4)} = \td\mathcal{A} = \ord (R^{-\beta-1})\td u^a\wedge\td u^b\;. \label{eq_G_FO}
	\end{equation}
	Our gauge condition (\ref{eq_Agauge}) and (\ref{eq_V}) imply
	\begin{equation} 
		\td\iota_{\partial_0}\mathcal{A} = -\iota_{\partial_0}\td\mathcal{A} =  \ord (R^{-\beta-1})\td u^a\; \implies \mathcal{A}_0 = \tilde c + \ord(R^{-\beta})\;,
	\end{equation}
	with some constant $\tilde c$. Equation (\ref{eq_G_FO}) implies that the rest of $\mathcal{A}$ can be written in a gauge without changing the 
	form of the Killing fields such that
	\begin{equation}
		\mathcal{A} = \left(\tilde c+\ord(R^{-\beta})\right)\td u^0 + \ord(R^{-\beta})\td u^i\;.
	\end{equation}
	The leading order behaviour of the five-dimensional metric then becomes
	\begin{equation}
		g = e^{\Phi_0/\sqrt{3}}\eta_{ab}\td u^a\td u^b + e^{-2\Phi_0/\sqrt{3}}(\td\psi+\tilde c \td u^0)^2 + \ord(R^{-\tau})\td u^\mu \td u^\nu
	\end{equation}
	for some $\tau= \min\{\alpha, \beta\}$, and $\eta_{ab}$ denoting the $4D$ Minkowski metric. By defining $\psi' = \psi +\tilde c u^0$ and rescaling 
	the coordinates by constants, we get a metric of the form (\ref{eq_metric_AKK}), with $\Phi_0$ determining the asymptotic length scale $\tilde L$ 
	of the Kaluza-Klein direction. By assumption \ref{assumption_matter} and because there 
	is no dependence on $\psi$, the first two derivatives of the metric have the fall-off as in  Definition \ref{def_KK}\ref{def_KK_FO}, 
	and the components of the Riemann tensor fall off as $\ord(R^{-\tau-2})$. 
	We also see from (\ref{eq_V}) and from the final coordinate change that $\partial_0$ is a constant linear combination of $V$ and $W$.

	Finally, we need to check if the span of the supersymmetric and $U(1)$ Killing field is timelike. For this, by (\ref{eq_Ndef}) and 
	(\ref{eq_g_5D_def}) on $\llangle\mathcal{M}\rrangle$
	\begin{align}
		N = -\left[ e^{\Phi/\sqrt{3}}\pi^*g^{(4)}(V,V) + e^{-2\Phi/\sqrt{3}}[\eta(V)]^2\right]&e^{-2\Phi/\sqrt{3}} + \left[e^{-2\Phi/\sqrt{3}}\eta(V)\right]^2 = \nonumber\\
		=-e^{-\Phi/\sqrt{3}}g^{(4)}&(V^{(4)},V^{(4)})\;,
	\end{align}
	which is positive on $\llangle\mathcal{M}\rrangle$ by assumption \ref{assumption_Killing}. That is, at each point of 
	$\llangle\mathcal{M}\rrangle$ the determinant of the inner product matrix of Killing fields is negative, hence there 
	exists a timelike linear combination of $V$ and $W$.
\end{proof}

\noindent{\bf Remarks.}\begin{enumerate}
	\item Let us emphasise that we have not assumed any isometry apart from stationarity, which is guaranteed by supersymmetry (for a class of 
	$D=4$ supergravities see \cite{tod_more_1995}). Indeed, generically, solutions of Theorem \ref{thm_class_4D} only have a single Killing field.
	\item Assumption \ref{assumption_bundle} quantises the magnetic charges of the black holes associated with $G^{(4)}$, which in terms of the harmonic functions means that 
	$h_i\in \mathbb{Z}$ for $H$ in (\ref{eq_thm_harmonic}). Omitting this requirement leads to a more general class of black holes in four dimensions, 
	however those cannot be uplifted to get a smooth black hole solution in five dimensions.
	\item The requirement that the supersymmetric Killing field is timelike in assumption \ref{assumption_Killing} was also required for uniqueness 
	of four-dimensional supersymmetric, asymptotically flat black holes in minimal supergravity \cite{chrusciel_israel-wilson-perjes_2006}, 
	which shows that the general solution belongs to the Majumdar-Papapetrou (MP) class. Alternatively, in minimal supergravity, one can assume 
	the existence of a maximal hypersurface with a finite number of asymptotically flat or weakly cylindrical ends, and prove that the 
	Killing field is strictly timelike and static (see Theorem 1.2 of \cite{chrusciel_israel-wilson-perjes_2006}). 
	This reasoning would not work for the black holes considered in the current work, as these solutions are not static in general.
	\item The black holes of Theorem \ref{thm_5Dto4Dnull} are a generalisation of the aforementioned MP black holes. They are static solutions 
	that depend on two harmonic function on $\mathbb{R}^3$, and they are, in general, solutions to the full theory (\ref{eq_4action}). 
	If $\mathcal{Q}=-\mathcal{G}$, we obtain solutions to four-dimensional minimal supergravity\footnote{Consistent truncation of (\ref{eq_4action}) 
	to $4D$ Einstein-Maxwell is achieved by $\Phi=0$, $\td \rho=0$, $F^{(4)} = \frac{\sqrt{3}}{2}\star_4 G^{(4)}$~\cite{gauntlett_all_2003}.} 
	(Einstein-Maxwell), which describe magnetically charged MP black holes. In detail, 
	\begin{align}
		g^{(4)} = -\mathcal{G}^{-2}\td v^2 + \mathcal{G}^2\td x^i\td x^i\;,\qquad F^{(4)} = \frac{\sqrt{3}}{2}\star_3\td \mathcal{G}\;.
	\end{align}
	In contrast, the electrically charged and dyonic MP black holes (with quantised charges) belong to the asymptotically timelike class of 
	Theorem \ref{thm_5Dto4D} with harmonic functions $K=0$, $L = \gamma^2 H$, $M = \gamma^3 v_H H$, and parameter $\tilde L =1$. For the 
	detailed derivation see the end of Section 3.7 in~\cite{gauntlett_all_2003}, with the only difference that we used `gauge' freedom 
	(\ref{eq_harmonic_gauge}) to set $K\equiv0$ in agreement with our previous choice in Theorem \ref{thm_necessary} for the asymptotic values 
	of the harmonic functions. The solution is given by 
	\begin{align}
		g^{(4)} = -(\gamma H)^{-2}\td t^2 + (\gamma H)^2\td x^i\td x^i\;,\quad
		F^{(4)} = \frac{\sqrt{3}}{2}\left[(\gamma H)^{-2} \td t\wedge \td H+ \gamma v_H\star_3\td H\right]\;.
	\end{align}
	Thus, for $0<|v_H|<1$ we obtain dyonic MP black holes, while $v_H=0$ yields electrostatic MP black holes. \label{rem_MP}
	\item In Remark \ref{rem_MP} all static, supersymmetric solutions of minimal four-dimensional supergravity (with quantised charges) have been obtained. 
	It is an interesting task to determine all static solutions of the full theory (\ref{eq_4action}) (again, with quantised charges). This 
	includes all solutions in Theorem \ref{thm_5Dto4Dnull}, but also requires deriving the set of harmonic functions that yield $\td\hat\omega=0$ and 
	satisfy the constraints of Theorem \ref{thm_classification} and \ref{thm_5Dto4D}. Staticity in the timelike case implies that $f^{-1}$ 
	is harmonic~\cite{gauntlett_all_2003}, thus by (\ref{eq_f}) $K = cH$ for some constant $c$ (so we are in the asymptotically timelike case by 
	(\ref{eq_harmoniccosntants_AT_thm}-\ref{eq_harmoniccosntants_AN_thm})). By (\ref{eq_omegahat}) $\td\hat\omega=0$ requires $M+3cL/3 = kH$ 
	for some other constant $k$. Again, changing the `gauge' of (\ref{eq_harmonic_gauge}) such that $K\equiv 0$ yields that $M= c'H$ for a 
	constant $c'$, and $L$ is unconstrained. $H, L, c'$ must be such that (\ref{eq_thm_hor}-\ref{eq_thm_N}) and (\ref{eq_Dcond}) are satisfied. 
	Asymptotic constants (\ref{eq_harmoniccosntants_AT_thm}) fix $c' = \tilde L^3\gamma^3v_H$. Inequality (\ref{eq_thm_N}) yields (together 
	with (\ref{eq_harmoniccosntants_AT_thm})) that $H>0$ and $L>0$ on their domain, i.e. $h_i>0$ 
	and $l_i>0$. Then (\ref{eq_Dcond}) (which implies (\ref{eq_thm_hor})) is satisfied if and only if $l_i>|c'|^{2/3}h_i$ at each centre, and 
	does not impose any further constraint on the parameters $v_H$ and $\tilde L$. In summary, as in the null case (Theorem \ref{thm_5Dto4Dnull}), 
	the static solutions of the timelike case are determined by two harmonic functions, with parameters satisfying 
	$l_i> \tilde L \gamma |v_H|^{2/3}h_i >0$ and $h_i\in \mathbb{Z}$.
\end{enumerate}

\noindent{\bf Acknowledgements.} I would like to thank James Lucietti for suggesting this project, the many helpful conversations, and 
his comments on the manuscript. This work is supported by an EPSRC studentship.

\appendix
\section{Asymptotic fall-off of Riemann tensor} \label{app_asymptotics}

We obtain the following estimate on the curvature of the Levi-Civita connection for $g$ on the asymptotically Kaluza-Klein end.
\begin{claim*}
	The components of the Riemann tensor of $g$ fall off as $R^\mu{}_{\nu\lambda\kappa}=\ord(\tilde{r}^{-\tau-2})$ 
	in asymptotic coordinates of Definition \ref{def_KK}.
\end{claim*}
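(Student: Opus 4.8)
The plan is to derive the stated fall-off of the Riemann tensor purely from the metric asymptotics postulated in Definition~\ref{def_KK}. The key input is part~\ref{def_KK_FO}: the metric differs from the flat model $\tilde g$ in \eqref{eq_metric_AKK} by $\ord(\tilde r^{-\tau})$, and crucially its mixed derivatives satisfy $\partial_i^l\partial_{\tilde\psi}^k g_{\nu\rho}=\ord(\tilde r^{-\tau-l})$ for $1\le k+l\le 3$. Note that no fall-off in the $u^0$ direction is needed because $\partial_0$ is a Killing field, so all metric components (and hence all curvature components) are independent of $u^0$. First I would record the consequence that the inverse metric has the same structure, $g^{\mu\nu}=\tilde g^{\mu\nu}+\ord(\tilde r^{-\tau})$ with analogous derivative decay; this follows from $g^{-1}=\tilde g^{-1}-\tilde g^{-1}(g-\tilde g)\tilde g^{-1}+\dots$ and the fact that $\tilde g^{-1}$ has bounded entries (here one uses $\tilde L>0$ so the Kaluza-Klein block is non-degenerate), differentiating the Neumann-type series term by term and counting derivatives.

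Next I would bound the Christoffel symbols $\Gamma^\mu_{\nu\rho}=\tfrac12 g^{\mu\sigma}(\partial_\nu g_{\sigma\rho}+\partial_\rho g_{\sigma\nu}-\partial_\sigma g_{\nu\rho})$. Since $\tilde g$ has constant components in these coordinates, every $\Gamma$ involves at least one derivative of $g-\tilde g$ (or of $g^{-1}-\tilde g^{-1}$). Splitting by which index is a $\partial_0$, a $\partial_i$, or a $\partial_{\tilde\psi}$ and applying the derivative estimates from Definition~\ref{def_KK}\ref{def_KK_FO} together with the inverse-metric bound yields precisely \eqref{eq_falloff}: $\Gamma^\mu_{i\nu}=\ord(\tilde r^{-\tau-1})$, $\Gamma^\mu_{0\nu}=\ord(\tilde r^{-\tau-1})$, $\Gamma^\mu_{\tilde\psi\tilde\psi}=\ord(\tilde r^{-\tau})$ (the last is weaker because $\partial_{\tilde\psi}$ does not improve the power of $\tilde r$, and the dangerous $\partial_{\tilde\psi}g_{\tilde\psi\tilde\psi}$ term is $\ord(\tilde r^{-\tau})$). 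One also gets $\partial_j\Gamma^\mu_{\nu\rho}=\ord(\tilde r^{-\tau-2})$ for the purely spatial derivative of the spatial-index Christoffels, again by the $l=2$ case of the derivative hypothesis, and similarly the $\partial_{\tilde\psi}$-derivatives lose no power.

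Then I would plug these into $R^\mu{}_{\nu\lambda\kappa}=\partial_\lambda\Gamma^\mu_{\kappa\nu}-\partial_\kappa\Gamma^\mu_{\lambda\nu}+\Gamma^\mu_{\lambda\sigma}\Gamma^\sigma_{\kappa\nu}-\Gamma^\mu_{\kappa\sigma}\Gamma^\sigma_{\lambda\nu}$. The quadratic terms are $\ord(\tilde r^{-2\tau-2})$ at worst, hence $\ord(\tilde r^{-\tau-2})$ since $\tau>0$. For the $\partial\Gamma$ terms, a spatial derivative of a spatial-index $\Gamma$ gives $\ord(\tilde r^{-\tau-2})$ directly; the only potential loss comes from differentiating the weaker $\Gamma^\mu_{\tilde\psi\tilde\psi}=\ord(\tilde r^{-\tau})$, but a $\partial_i$ on it produces $\ord(\tilde r^{-\tau-1})$ from the $l=1$ hypothesis — still not good enough by itself, so here I would use that in the Riemann tensor such a term always appears paired (by the symmetries / the structure of the curvature of a metric close to a flat product) with another factor carrying the compensating $\tilde r^{-1}$, or more simply invoke the assumption \ref{def_KK_Ricci} on $\partial_i R_{\mu\nu}$ as the technical crutch it is meant to be. Indeed the cleanest route, and the one the paper flags ("Alternatively, one can assume the fall-off of the Riemann tensor directly"), is: combine the bound $\partial\Gamma,\Gamma\Gamma=\ord(\tilde r^{-\tau-2})$ on all components that involve at least one spatial leg, handle the all-$\tilde\psi$-and-$0$ components separately using that the relevant metric block is $u^0$-independent and differs from $\tilde L^2\td\tilde\psi^2$ by terms whose $\partial_{\tilde\psi}$-derivatives up to order three are controlled, and finally note the $\tilde r^{-\tau-2}$ fall-off of $R_{\mu\nu}$ together with the algebraic relation between $\operatorname{Riem}$ and $\operatorname{Ric}$ in the asymptotic regime fixes the remaining trace-part components. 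The main obstacle is exactly this bookkeeping around the $\tilde\psi\tilde\psi$ Christoffel symbol, whose one-derivative-worse decay must be shown not to contaminate the curvature; resolving it uses either the pairing structure of $R$ for a near-product metric or, pragmatically, the extra hypothesis~\ref{def_KK_Ricci}.
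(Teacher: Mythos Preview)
Your overall strategy is right, but there is a genuine gap in the Christoffel estimates. You assert that Definition~\ref{def_KK}\ref{def_KK_FO} alone yields $\Gamma^\mu_{i\nu}=\ord(\tilde r^{-\tau-1})$ and $\Gamma^\mu_{0\nu}=\ord(\tilde r^{-\tau-1})$ for \emph{all} $\nu$. This fails when $\nu=\tilde\psi$: for instance $\Gamma^\mu_{i\tilde\psi}$ contains the term $\tfrac12 g^{\mu\sigma}\partial_{\tilde\psi}g_{\sigma i}$, and the hypothesis $\partial_i^l\partial_{\tilde\psi}^k g=\ord(\tilde r^{-\tau-l})$ with $l=0$, $k=1$ gives only $\partial_{\tilde\psi}g_{\sigma i}=\ord(\tilde r^{-\tau})$. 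A pure $\tilde\psi$-derivative costs nothing in the $\tilde r$-power, so from the metric assumptions alone every mixed symbol $\Gamma^\mu_{a\tilde\psi}$ is only $\ord(\tilde r^{-\tau})$, not $\ord(\tilde r^{-\tau-1})$. This is not confined to the $\Gamma^\mu_{\tilde\psi\tilde\psi}$ component you single out; it contaminates many Riemann components with a $\tilde\psi$ leg, and your ``pairing structure'' hand-wave does not repair it.

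The paper closes this gap by a step you have missed: the Ricci assumption~\ref{def_KK_Ricci} is used not as a late patch but to \emph{improve} the $\tilde\psi$-derivative of the metric itself. Expanding $R_{ab}=\ord(\tilde r^{-\tau-2})$ in Christoffels isolates a leading piece $-\tilde L^{-2}\partial_{\tilde\psi}^2 g_{ab}$; combined with the crucial observation that $4\pi$-periodicity in $\tilde\psi$ and the mean-value theorem give $\partial_{\tilde\psi}^2 g_{ab}=\ord(\tilde r^{-\alpha})\Rightarrow\partial_{\tilde\psi} g_{ab}=\ord(\tilde r^{-\alpha})$, one iterates to obtain $\partial_{\tilde\psi} g_{ab}=\ord(\tilde r^{-\tau-1})$. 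Only after this improvement do the Christoffel bounds \eqref{eq_falloff} actually hold. The paper then uses $\partial_i R_{ab}=\ord(\tilde r^{-\tau-2})$ to control $\partial_a\Gamma^\mu_{b\nu}$, and finishes the remaining Riemann components by the Ricci decomposition $R_{ab}=R^c{}_{acb}+R^{\tilde\psi}{}_{a\tilde\psi b}$ together with index symmetries and the algebraic Bianchi identity, rather than any appeal to a near-product structure.
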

\begin{proof}
	We will use $ijk...$ indices for $\{u^1, u^2, u^3\}$, $abc...$ for $\{u^0, u^i\}$, and $\mu\nu...$  for all coordinates $\{u^a, \tilde\psi\}$.
	Since $\partial_0$ is Killing, all $u^0$ derivatives vanish. It is easy to check from Definition \ref{def_KK} that the 
	Christoffel-symbols have the following fall-off
	\begin{align}
		&\Gamma^c_{ab} = -\frac{1}{2}g^{c\tilde{\psi}}\partial_{\tilde{\psi}} g_{ab} + \ord(\tilde{r}^{-\tau-1})\;, 
		&&\Gamma^{\tilde{\psi}}_{ab} = -\frac{1}{2}g^{\tilde{\psi}\tilde{\psi}}\partial_{\tilde{\psi}} g_{ab} + \ord(\tilde{r}^{-\tau-1})\;, \nonumber\\
		&\Gamma^a_{\tilde{\psi} b} = \frac{1}{2}g^{ac}\partial_{\tilde{\psi}} g_{bc} + \ord(\tilde{r}^{-\tau})\partial_{\tilde{\psi}} g_{cb} + \ord(\tilde{r}^{-\tau-1})\;,
		&&\Gamma^{\tilde{\psi}}_{a\tilde{\psi}} =\ord(\tilde{r}^{-\tau})\partial_{\tilde{\psi}} g_{ca} + \ord(\tilde{r}^{-\tau-1}) \;, \\
		&\Gamma^a_{\tilde{\psi} \tilde{\psi}} = \ord(\tilde{r}^{-\tau}),
		&&\Gamma^{\tilde{\psi}}_{\tilde{\psi}\tilde{\psi}} = \ord(\tilde{r}^{-\tau}) \;.\nonumber
	\end{align}
	Now we use the assumed fall-off of the Ricci tensor in Definition \ref{def_KK} to get
	\begin{align}
		\ord(\tilde{r}^{-\tau-2}) = R_{ab} = R^c{}_{acb} + R^{\tilde{\psi}}{}_{a\tilde \psi b} = -\frac{1}{\tilde L^2}\partial_{\tilde{\psi}}^2g_{ab} + \ord(\tilde{r}^{-\tau})\partial_{\tilde{\psi}} g_{cd}  + \ord(\tilde{r}^{-\tau-1}) \;. \label{eq_hpsiij}
	\end{align}
	One can show using boundedness of the ${\tilde{\psi}}$ direction and the Mean Value Theorem that 
	$\partial_{\tilde{\psi}}^2g_{ab}=\ord(\tilde{r}^{-\alpha})\implies\partial_{\tilde{\psi}} g_{ab}=\ord(\tilde{r}^{-\alpha})$. 
	Using this iteratively with (\ref{eq_hpsiij}), one deduces that $\partial_{\tilde{\psi}} g_{ab}=\ord(\tilde{r}^{-\tau-1})$. Thus, 
	\begin{equation}
		\Gamma^\mu_{a\nu} = \ord(\tilde{r}^{-\tau-1})\;,\qquad \Gamma_{\tilde{\psi}\tilde{\psi}}^\mu=\ord(\tilde{r}^{-\tau}).
	\end{equation} 
	Taking a derivative $\partial_c$ of (\ref{eq_hpsiij}) and using our assumption about the fall-off of $\partial_cR_{ab}$, by the 
	same argument for $\partial_a\partial_{\tilde{\psi}} g_{bc}$, we obtain $\partial_a\Gamma^\mu_{b\nu} = \ord(\tilde{r}^{-\tau-2}).$

	It immediately follows that $R^a{}_{bcd}=\ord(\tilde{r}^{-\tau-2})$, $R^{\tilde{\psi}}{}_{abc}=\ord(\tilde{r}^{-\tau-2})$, and by the 
	fall-off of the  Ricci tensor so does $R^{\tilde{\psi}}{}_{a{\tilde{\psi}} b}=- R^c{}_{ac b}+\ord(\tilde{r}^{-\tau-2})=\ord(\tilde{r}^{-\tau-2})$. 
	Using symmetries of the Riemann tensor, we also have
	\begin{align}
		(\eta_{cd}+\ord(\tilde{r}^{-\tau}))R^d{}_{a\tilde{\psi} b} = -g_{c\tilde{\psi}}R^{\tilde{\psi}}{}_{a{\tilde{\psi}} b} + g_{{\tilde{\psi}}{\tilde{\psi}}}R^{\tilde{\psi}}{}_{bca}+g_{{\tilde{\psi}} d}R^d{}_{bca} = \ord(\tilde{r}^{-\tau-2}) \nonumber\\\implies R^c{}_{a{\tilde{\psi}} b} =\ord(\tilde{r}^{-\tau-2})\;.
	\end{align}
	Using the algebraic Bianchi identity, we obtain $R^c{}_{{\tilde{\psi}} ab}=-R^c{}_{b{\tilde{\psi}} a}+R^c{}_{a{\tilde{\psi}} b}= \ord(\tilde{r}^{-\tau-2})$, 
	and by the fall-off of the  Ricci tensor 
	$R^{\tilde{\psi}}{}_{{\tilde{\psi}}{\tilde{\psi}} a}= -R^d{}_{{\tilde{\psi}} d a}+\ord(\tilde{r}^{-\tau-2}) = \ord(\tilde{r}^{-\tau-2})$. Finally, we have
	\begin{align}
		(\eta_{cd}+\ord(\tilde{r}^{-\tau}))R^d{}_{{\tilde{\psi}}{\tilde{\psi}} b} = -g_{c\psi}R^{\tilde{\psi}}{}_{{\tilde{\psi}}{\tilde{\psi}} b} + g_{{\tilde{\psi}}{\tilde{\psi}}}R^{\tilde{\psi}}{}_{bc{\tilde{\psi}}}+g_{{\tilde{\psi}} d}R^d{}_{bc{\tilde{\psi}}} = \ord(\tilde{r}^{-\tau-2}) \nonumber\\\implies R^c{}_{{\tilde{\psi}}{\tilde{\psi}}b} =\ord(\tilde{r}^{-\tau-2})\;.
	\end{align}
\end{proof}

When $f\neq0$ and the base is well-defined as a Riemannian manifold, it is Ricci flat. The above argument works with just replacing $abc...$ 
indices with $ijk...$.

\section{Killing spinor equation in four and five dimensions} \label{app_KS}

In this section we reduce the five-dimensional Killing spinor equation to four dimensions, and derive the four-dimensional Killing spinor. 
We also prove that its Dirac-current is the four-dimensional projection of the five-dimensional supersymmetric Killing field.

The Killing spinor equation (KSE) of five-dimensional minimal supergravity is given by~\cite{gauntlett_all_2003}
\begin{equation}
	\left(\nabla_a+\frac{1}{4\sqrt{3}}F_{bc}\left(\gamma_a{}^{bc}+4\delta_a^b\gamma^c\right)\right)\epsilon=0\;,\label{eq_5DKSE}
\end{equation}
where $abc\dots$ are five-dimensional orthonormal frame indices, $\nabla_a$ is the spinor Levi-Civita connection, and the gamma matrices 
satisfy $\{\gamma_a,\gamma_b\}=-2g_{ab}$. Let us now choose a frame $\{e_a\} =\{e_A, e_5\}$, $A = 0, \dots, 3$  and co-frame such that 
\begin{align}
	&e^5 := e^{-\Phi/\sqrt{3}}\eta\;, \qquad e^A := e^{\Phi/2\sqrt{3}}E^A\;,\qquad \lie_We^a =0\;,
\end{align}
where $\eta$ is the $U(1)$-connection, and $\{E^A\}_{A=0}^3$ is a co-frame of the four dimensional metric $g^{(4)}$, and the $U(1)$ Killing 
field in this frame is $W = e^{-\Phi/\sqrt{3}}e_5$.

The lie derivative of the Killing spinor with respect to $W$ (which we assume to vanish) is 
defined as~\cite{fatibene_geometric_1996}
\begin{align}
	0=\lie_W\epsilon &:= \nabla_W \epsilon -\frac{1}{4}\td W^\flat \cdot \epsilon = W(\epsilon) -\frac{1}{8}\left[2\iota_W\omega_{ab} + (\td W^\flat)_{ab}\right]\gamma^a\gamma^b\epsilon\; , \label{eq_lie_spinor}
\end{align}
where $\omega_{ab}$ is the spin-connection. One can check that in this frame $(\td W^\flat)_{ab}=-2\iota_W\omega_{ab}$, hence 
(\ref{eq_lie_spinor}) simplifies to 
\begin{equation}
	\lie_W\epsilon = e^{-\Phi/\sqrt{3}}e_5(\epsilon)=0\;. \label{eq_WKS}
\end{equation}

We now derive the four-dimensional Killing spinor equation from the five-dimensional one. Let us write the following field components in the 
four-dimensional co-frame:
\begin{align}
	&\td\Phi =: \Phi_AE^A\;, \qquad \qquad\td\rho =: \rho_AE^A\;, \nonumber\\
	&G^{(4)}=: \frac{1}{2}G^{(4)}_{AB}E^A\wedge E^B\;, \qquad F^{(4)}=: \frac{1}{2}F^{(4)}_{AB}E^A\wedge E^B\;.
\end{align} 
In this frame (\ref{eq_5DKSE}) is given by 
\begin{align}
	KSE_A = &e^{-\Phi/2\sqrt{3}}\left(\nabla^{(4)}_A -\frac{1}{4\sqrt{3}}\Phi_B\gamma_A{}^B-\frac{e^{-\sqrt{3}\Phi/2}}{4}G^{(4)}_{AB}\gamma^B\gamma^5 +\frac{e^{-\Phi/2\sqrt{3}}}{4\sqrt{3}}F^{(4)}_{BC}\left(\gamma^{BC}\gamma_A+2\delta_A^B\gamma^C\right)\right.\nonumber\\
	&\qquad\left.+\frac{e^{\Phi/\sqrt{3}}}{2\sqrt{3}}\rho_B\left(\gamma_A{}^B\gamma^5+2\delta_A^B\gamma^5\right)\right)\epsilon=0\;,\label{eq_KSEA}\\
	KSE_5 = &\left(\frac{e^{-\Phi/2\sqrt{3}}}{2\sqrt{3}}\Phi_B\gamma^5\gamma^B+\frac{e^{-2\Phi/\sqrt{3}}}{8}G^{(4)}_{BC}\gamma^{BC}+\frac{e^{-\Phi/\sqrt{3}}}{4\sqrt{3}}F^{(4)}_{BC}\gamma_5\gamma^{BC}-\frac{e^{\Phi/2\sqrt{3}}}{\sqrt{3}}\rho_B\gamma^B\right)\epsilon=0\;,\label{eq_KSE5}
\end{align}
where $\nabla^{(4)}$ denotes the four-dimensional spinor Levi-Civita connection, and for (\ref{eq_KSE5}) we used that $0=e_5(\epsilon)$ by 
(\ref{eq_WKS}).

Following \cite{gauntlett_all_2003} we choose the four dimensional orientation $\eta^{(4)}$ such that $-e^5\wedge\eta^{(4)}$ is positively oriented. 
Then by multiplying (\ref{eq_KSE5}) by $\gamma_A\gamma^5$, substituting into (\ref{eq_KSEA}), and making use of the gamma matrix identity 
$\varepsilon_{abcde}\gamma^{de}=-2\gamma_{abc}$, after some gamma-matrix algebra we get that (\ref{eq_KSEA}-\ref{eq_KSE5}) is equivalent to
\begin{align}
	\left(\nabla^{(4)}_A-\frac{1}{4\sqrt{3}}\Phi_A+\left(\frac{e^{-\sqrt{3}\Phi/2}}{16}\star G^{(4)}_{BC}+\frac{\sqrt{3}e^{-\Phi/2\sqrt{3}}}{8} F^{(4)}_{BC}\right)\gamma^{BC}\gamma_A+\frac{\sqrt{3}e^{\Phi/\sqrt{3}}}{2}\rho_A\gamma^5\right)\epsilon=0\;, \label{eq_KSEAmod}\\
	\left(\Phi_A\gamma^A-\frac{\sqrt{3}e^{-\sqrt{3}\Phi/2}}{4}\star G^{(4)}_{AB}\gamma^{AB}+\frac{e^{-\Phi/2\sqrt{3}}}{2}F^{(4)}_{AB}\gamma^{AB}+2e^{\Phi/\sqrt{3}}\rho_A\gamma^5\gamma^A\right)\epsilon=0\;.\label{eq_KSE5mod}
\end{align}

In general, a Killing spinor parallel with respect to a supercovariant connection $D_A\epsilon^{(4)} =: \nabla_A\epsilon -\beta_A\epsilon^{(4)}=0$, 
defines a Killing vector by its Dirac-current $\overline\epsilon^{(4)}\gamma^A\epsilon^{(4)}$. The Killing equation for the Dirac-current holds 
only if the supercovariant connection satisfies
\begin{equation}
	\gamma_0\beta_{(A}^\dagger\gamma_0\gamma_{B)}+\gamma_{(B}\beta_{A)} =0\;.\label{eq_SCreq}
\end{equation}
This condition is not satisfied by the $\Phi_A$ term in (\ref{eq_KSEAmod}), therefore $\epsilon$ must be rescaled so that this term is cancelled. This 
is achieved by setting 
\begin{equation}
	\epsilon^{(4)}:= e^{-\Phi/4\sqrt{3}}\epsilon\;,
\end{equation}
so that (\ref{eq_KSEAmod}-\ref{eq_KSE5mod}) becomes
\begin{align}
	\left(\nabla^{(4)}_A+\left(\frac{e^{-\sqrt{3}\Phi/2}}{16}\star G^{(4)}_{BC}+\frac{\sqrt{3}e^{-\Phi/2\sqrt{3}}}{8} F^{(4)}_{BC}\right)\gamma^{BC}\gamma_A+\frac{\sqrt{3}e^{\Phi/\sqrt{3}}}{2}\rho_A\gamma^5\right)\epsilon^{(4)}=0\;, \label{eq_KSE4A}\\
	\left(\Phi_A\gamma^A-\frac{\sqrt{3}e^{-\sqrt{3}\Phi/2}}{4}\star G^{(4)}_{AB}\gamma^{AB}+\frac{e^{-\Phi/2\sqrt{3}}}{2}F^{(4)}_{AB}\gamma^{AB}+2e^{\Phi/\sqrt{3}}\rho_A\gamma^5\gamma^A\right)\epsilon^{(4)}=0\;, \label{eq_KSE45}
\end{align}
which correspond to the gravitino and dilatino transformations, respectively. Conversely, a solution of (\ref{eq_KSE4A}-\ref{eq_KSE45}) defines 
a $W$-invariant solution of (\ref{eq_5DKSE}). Also, the rescaling of the Killing spinor compensates for 
the rescaling of the tetrads, thus, we have that
\begin{equation}
	V^{(4)}=\overline\epsilon^{(4)}\gamma^A\epsilon^{(4)}E_A=\overline\epsilon\gamma^A\epsilon e^{-\Phi/2\sqrt{3}}E_A = \overline\epsilon\gamma^A\epsilon e_A = \pi_*\overline\epsilon\gamma^a\epsilon e_a=\pi_*V,
\end{equation}
where in the penultimate step we used that $\pi_*e_5 =0$ since $e_5$ is vertical.
The supercovariant connection of (\ref{eq_KSE4A}) satisfies (\ref{eq_SCreq}), so the four-dimensional Dirac current $V^{(4)}$ is Killing, 
and coincides with the projection of the five-dimensional Killing vector field, as claimed. 

\section*{Declarations}

\noindent{\bf Competing interests.} The author has no relevant financial or non-financial interests to disclose.
\\

\noindent{\bf Funding.} The author is supported by an EPSRC studentship.
\\

\noindent{\bf Data availability.} Data sharing is not applicable to this article as no datasets were generated or analysed during the current study.

\bibliographystyle{sn-mathphys}
\bibliography{ref}


\begin{thebibliography}{76}
\ifx \bisbn   \undefined \def \bisbn  #1{ISBN #1}\fi
\ifx \binits  \undefined \def \binits#1{#1}\fi
\ifx \bauthor  \undefined \def \bauthor#1{#1}\fi
\ifx \batitle  \undefined \def \batitle#1{#1}\fi
\ifx \bjtitle  \undefined \def \bjtitle#1{#1}\fi
\ifx \bvolume  \undefined \def \bvolume#1{\textbf{#1}}\fi
\ifx \byear  \undefined \def \byear#1{#1}\fi
\ifx \bissue  \undefined \def \bissue#1{#1}\fi
\ifx \bfpage  \undefined \def \bfpage#1{#1}\fi
\ifx \blpage  \undefined \def \blpage #1{#1}\fi
\ifx \burl  \undefined \def \burl#1{\textsf{#1}}\fi
\ifx \doiurl  \undefined \def \doiurl#1{\url{https://doi.org/#1}}\fi
\ifx \betal  \undefined \def \betal{\textit{et al.}}\fi
\ifx \binstitute  \undefined \def \binstitute#1{#1}\fi
\ifx \binstitutionaled  \undefined \def \binstitutionaled#1{#1}\fi
\ifx \bctitle  \undefined \def \bctitle#1{#1}\fi
\ifx \beditor  \undefined \def \beditor#1{#1}\fi
\ifx \bpublisher  \undefined \def \bpublisher#1{#1}\fi
\ifx \bbtitle  \undefined \def \bbtitle#1{#1}\fi
\ifx \bedition  \undefined \def \bedition#1{#1}\fi
\ifx \bseriesno  \undefined \def \bseriesno#1{#1}\fi
\ifx \blocation  \undefined \def \blocation#1{#1}\fi
\ifx \bsertitle  \undefined \def \bsertitle#1{#1}\fi
\ifx \bsnm \undefined \def \bsnm#1{#1}\fi
\ifx \bsuffix \undefined \def \bsuffix#1{#1}\fi
\ifx \bparticle \undefined \def \bparticle#1{#1}\fi
\ifx \barticle \undefined \def \barticle#1{#1}\fi
\ifx \bconfdate \undefined \def \bconfdate #1{#1}\fi
\ifx \botherref \undefined \def \botherref #1{#1}\fi
\ifx \url \undefined \def \url#1{\textsf{#1}}\fi
\ifx \bchapter \undefined \def \bchapter#1{#1}\fi
\ifx \bbook \undefined \def \bbook#1{#1}\fi
\ifx \bcomment \undefined \def \bcomment#1{#1}\fi
\ifx \oauthor \undefined \def \oauthor#1{#1}\fi
\ifx \citeauthoryear \undefined \def \citeauthoryear#1{#1}\fi
\ifx \endbibitem  \undefined \def \endbibitem {}\fi
\ifx \bconflocation  \undefined \def \bconflocation#1{#1}\fi
\ifx \arxivurl  \undefined \def \arxivurl#1{\textsf{#1}}\fi
\csname PreBibitemsHook\endcsname

\bibitem{heusler_black_1996}
\begin{bbook}
\bauthor{\bsnm{Heusler}, \binits{M.}}:
\bbtitle{Black {Hole} {Uniqueness} {Theorems}}.
\bpublisher{Cambridge University Press},
\blocation{Cambridge}
(\byear{1996}).
\burl{https://doi.org/10.1017/CBO9780511661396}
\end{bbook}
\endbibitem

\bibitem{chrusciel_stationary_2012}
\begin{barticle}
\bauthor{\bsnm{Chrusciel}, \binits{P.T.}},
\bauthor{\bsnm{Lopes~Costa}, \binits{J.}},
\bauthor{\bsnm{Heusler}, \binits{M.}}:
\batitle{{Stationary Black Holes: Uniqueness and Beyond}}.
\bjtitle{Living Rev. Rel.}
\bvolume{15},
\bfpage{7}
(\byear{2012})
{\href{https://arxiv.org/abs/1205.6112}{{arXiv:1205.6112}}}
{[gr-qc]}.
\doiurl{10.12942/lrr-2012-7}
\end{barticle}
\endbibitem

\bibitem{hollands_black_2012}
\begin{barticle}
\bauthor{\bsnm{Hollands}, \binits{S.}},
\bauthor{\bsnm{Ishibashi}, \binits{A.}}:
\batitle{{Black hole uniqueness theorems in higher dimensional spacetimes}}.
\bjtitle{Class. Quant. Grav.}
\bvolume{29},
\bfpage{163001}
(\byear{2012})
{\href{https://arxiv.org/abs/1206.1164}{{arXiv:1206.1164}}}
{[gr-qc]}.
\doiurl{10.1088/0264-9381/29/16/163001}
\end{barticle}
\endbibitem

\bibitem{emparan_rotating_2002}
\begin{barticle}
\bauthor{\bsnm{Emparan}, \binits{R.}},
\bauthor{\bsnm{Reall}, \binits{H.S.}}:
\batitle{{A Rotating black ring solution in five-dimensions}}.
\bjtitle{Phys. Rev. Lett.}
\bvolume{88},
\bfpage{101101}
(\byear{2002})
{\href{https://arxiv.org/abs/hep-th/0110260}{{arXiv:hep-th/0110260}}}.
\doiurl{10.1103/PhysRevLett.88.101101}
\end{barticle}
\endbibitem

\bibitem{myers_black_1986}
\begin{barticle}
\bauthor{\bsnm{Myers}, \binits{R.C.}},
\bauthor{\bsnm{Perry}, \binits{M.J.}}:
\batitle{{Black Holes in Higher Dimensional Space-Times}}.
\bjtitle{Annals Phys.}
\bvolume{172},
\bfpage{304}
(\byear{1986}).
\doiurl{10.1016/0003-4916(86)90186-7}
\end{barticle}
\endbibitem

\bibitem{galloway_generalization_2006}
\begin{barticle}
\bauthor{\bsnm{Galloway}, \binits{G.J.}},
\bauthor{\bsnm{Schoen}, \binits{R.}}:
\batitle{{A Generalization of Hawking's black hole topology theorem to higher dimensions}}.
\bjtitle{Commun. Math. Phys.}
\bvolume{266},
\bfpage{571}--\blpage{576}
(\byear{2006})
{\href{https://arxiv.org/abs/gr-qc/0509107}{{arXiv:gr-qc/0509107}}}.
\doiurl{10.1007/s00220-006-0019-z}
\end{barticle}
\endbibitem

\bibitem{hollands_further_2011}
\begin{barticle}
\bauthor{\bsnm{Hollands}, \binits{S.}},
\bauthor{\bsnm{Holland}, \binits{J.}},
\bauthor{\bsnm{Ishibashi}, \binits{A.}}:
\batitle{{Further restrictions on the topology of stationary black holes in five dimensions}}.
\bjtitle{Annales Henri Poincare}
\bvolume{12},
\bfpage{279}--\blpage{301}
(\byear{2011})
{\href{https://arxiv.org/abs/1002.0490}{{arXiv:1002.0490}}}
{[gr-qc]}.
\doiurl{10.1007/s00023-011-0079-2}
\end{barticle}
\endbibitem

\bibitem{hollands_higher_2007}
\begin{barticle}
\bauthor{\bsnm{Hollands}, \binits{S.}},
\bauthor{\bsnm{Ishibashi}, \binits{A.}},
\bauthor{\bsnm{Wald}, \binits{R.M.}}:
\batitle{{A Higher dimensional stationary rotating black hole must be axisymmetric}}.
\bjtitle{Commun. Math. Phys.}
\bvolume{271},
\bfpage{699}--\blpage{722}
(\byear{2007})
{\href{https://arxiv.org/abs/gr-qc/0605106}{{arXiv:gr-qc/0605106}}}.
\doiurl{10.1007/s00220-007-0216-4}
\end{barticle}
\endbibitem

\bibitem{moncrief_symmetries_2008}
\begin{barticle}
\bauthor{\bsnm{Moncrief}, \binits{V.}},
\bauthor{\bsnm{Isenberg}, \binits{J.}}:
\batitle{{Symmetries of Higher Dimensional Black Holes}}.
\bjtitle{Class. Quant. Grav.}
\bvolume{25},
\bfpage{195015}
(\byear{2008})
{\href{https://arxiv.org/abs/0805.1451}{{arXiv:0805.1451}}}
{[gr-qc]}.
\doiurl{10.1088/0264-9381/25/19/195015}
\end{barticle}
\endbibitem

\bibitem{hollands_stationary_2009}
\begin{barticle}
\bauthor{\bsnm{Hollands}, \binits{S.}},
\bauthor{\bsnm{Ishibashi}, \binits{A.}}:
\batitle{{On the `Stationary Implies Axisymmetric' Theorem for Extremal Black Holes in Higher Dimensions}}.
\bjtitle{Commun. Math. Phys.}
\bvolume{291},
\bfpage{443}--\blpage{471}
(\byear{2009})
{\href{https://arxiv.org/abs/0809.2659}{{arXiv:0809.2659}}}
{[gr-qc]}.
\doiurl{10.1007/s00220-009-0841-1}
\end{barticle}
\endbibitem

\bibitem{friedman_topological_1995}
\begin{barticle}
\bauthor{\bsnm{Friedman}, \binits{J.L.}},
\bauthor{\bsnm{Schleich}, \binits{K.}},
\bauthor{\bsnm{Witt}, \binits{D.M.}}:
\batitle{{Topological censorship}}.
\bjtitle{Phys. Rev. Lett.}
\bvolume{71},
\bfpage{1486}--\blpage{1489}
(\byear{1993})
{\href{https://arxiv.org/abs/gr-qc/9305017}{{arXiv:gr-qc/9305017}}}.
\doiurl{10.1103/PhysRevLett.71.1486}.
\bcomment{[Erratum: Phys.Rev.Lett. 75, 1872 (1995)]}
\end{barticle}
\endbibitem

\bibitem{chrusciel_topological_2009}
\begin{barticle}
\bauthor{\bsnm{Chrusciel}, \binits{P.T.}},
\bauthor{\bsnm{Galloway}, \binits{G.J.}},
\bauthor{\bsnm{Solis}, \binits{D.}}:
\batitle{{Topological censorship for Kaluza-Klein space-times}}.
\bjtitle{Annales Henri Poincare}
\bvolume{10},
\bfpage{893}--\blpage{912}
(\byear{2009})
{\href{https://arxiv.org/abs/0808.3233}{{arXiv:0808.3233}}}
{[gr-qc]}.
\doiurl{10.1007/s00023-009-0005-z}
\end{barticle}
\endbibitem

\bibitem{hollands_uniqueness_2008}
\begin{barticle}
\bauthor{\bsnm{Hollands}, \binits{S.}},
\bauthor{\bsnm{Yazadjiev}, \binits{S.}}:
\batitle{{Uniqueness theorem for 5-dimensional black holes with two axial Killing fields}}.
\bjtitle{Commun. Math. Phys.}
\bvolume{283},
\bfpage{749}--\blpage{768}
(\byear{2008})
{\href{https://arxiv.org/abs/0707.2775}{{arXiv:0707.2775}}}
{[gr-qc]}.
\doiurl{10.1007/s00220-008-0516-3}
\end{barticle}
\endbibitem

\bibitem{hollands_uniqueness_2008-1}
\begin{barticle}
\bauthor{\bsnm{Hollands}, \binits{S.}},
\bauthor{\bsnm{Yazadjiev}, \binits{S.}}:
\batitle{{A Uniqueness theorem for 5-dimensional Einstein-Maxwell black holes}}.
\bjtitle{Class. Quant. Grav.}
\bvolume{25},
\bfpage{095010}
(\byear{2008})
{\href{https://arxiv.org/abs/0711.1722}{{arXiv:0711.1722}}}
{[gr-qc]}.
\doiurl{10.1088/0264-9381/25/9/095010}
\end{barticle}
\endbibitem

\bibitem{hollands_uniqueness_2011}
\begin{barticle}
\bauthor{\bsnm{Hollands}, \binits{S.}},
\bauthor{\bsnm{Yazadjiev}, \binits{S.}}:
\batitle{{A Uniqueness theorem for stationary Kaluza-Klein black holes}}.
\bjtitle{Commun. Math. Phys.}
\bvolume{302},
\bfpage{631}--\blpage{674}
(\byear{2011})
{\href{https://arxiv.org/abs/0812.3036}{{arXiv:0812.3036}}}
{[gr-qc]}.
\doiurl{10.1007/s00220-010-1176-7}
\end{barticle}
\endbibitem

\bibitem{reall_higher_2004}
\begin{barticle}
\bauthor{\bsnm{Reall}, \binits{H.S.}}:
\batitle{{Higher dimensional black holes and supersymmetry}}.
\bjtitle{Phys. Rev. D}
\bvolume{68},
\bfpage{024024}
(\byear{2003})
{\href{https://arxiv.org/abs/hep-th/0211290}{{arXiv:hep-th/0211290}}}.
\doiurl{10.1103/PhysRevD.70.089902}.
\bcomment{[Erratum: Phys.Rev.D 70, 089902 (2004)]}
\end{barticle}
\endbibitem

\bibitem{emparan_new_2010}
\begin{barticle}
\bauthor{\bsnm{Emparan}, \binits{R.}},
\bauthor{\bsnm{Harmark}, \binits{T.}},
\bauthor{\bsnm{Niarchos}, \binits{V.}},
\bauthor{\bsnm{Obers}, \binits{N.A.}}:
\batitle{{New Horizons for Black Holes and Branes}}.
\bjtitle{JHEP}
\bvolume{04},
\bfpage{046}
(\byear{2010})
{\href{https://arxiv.org/abs/0912.2352}{{arXiv:0912.2352}}}
{[hep-th]}.
\doiurl{10.1007/JHEP04(2010)046}
\end{barticle}
\endbibitem

\bibitem{dias_instability_2010}
\begin{barticle}
\bauthor{\bsnm{Dias}, \binits{O.J.C.}},
\bauthor{\bsnm{Figueras}, \binits{P.}},
\bauthor{\bsnm{Monteiro}, \binits{R.}},
\bauthor{\bsnm{Reall}, \binits{H.S.}},
\bauthor{\bsnm{Santos}, \binits{J.E.}}:
\batitle{{An instability of higher-dimensional rotating black holes}}.
\bjtitle{JHEP}
\bvolume{05},
\bfpage{076}
(\byear{2010})
{\href{https://arxiv.org/abs/1001.4527}{{arXiv:1001.4527}}}
{[hep-th]}.
\doiurl{10.1007/JHEP05(2010)076}
\end{barticle}
\endbibitem

\bibitem{katona_supersymmetric_2023}
\begin{barticle}
\bauthor{\bsnm{Katona}, \binits{D.}},
\bauthor{\bsnm{Lucietti}, \binits{J.}}:
\batitle{{Supersymmetric Black Holes with a Single Axial Symmetry in Five Dimensions}}.
\bjtitle{Commun. Math. Phys.}
\bvolume{399}(\bissue{2}),
\bfpage{1151}--\blpage{1201}
(\byear{2023})
{\href{https://arxiv.org/abs/2206.11782}{{arXiv:2206.11782}}}
{[hep-th]}.
\doiurl{10.1007/s00220-022-04576-7}
\end{barticle}
\endbibitem

\bibitem{breckenridge_d--branes_1997}
\begin{barticle}
\bauthor{\bsnm{Breckenridge}, \binits{J.C.}},
\bauthor{\bsnm{Myers}, \binits{R.C.}},
\bauthor{\bsnm{Peet}, \binits{A.W.}},
\bauthor{\bsnm{Vafa}, \binits{C.}}:
\batitle{{D-branes and spinning black holes}}.
\bjtitle{Phys. Lett. B}
\bvolume{391},
\bfpage{93}--\blpage{98}
(\byear{1997})
{\href{https://arxiv.org/abs/hep-th/9602065}{{arXiv:hep-th/9602065}}}.
\doiurl{10.1016/S0370-2693(96)01460-8}
\end{barticle}
\endbibitem

\bibitem{elvang_supersymmetric_2004}
\begin{barticle}
\bauthor{\bsnm{Elvang}, \binits{H.}},
\bauthor{\bsnm{Emparan}, \binits{R.}},
\bauthor{\bsnm{Mateos}, \binits{D.}},
\bauthor{\bsnm{Reall}, \binits{H.S.}}:
\batitle{{A Supersymmetric black ring}}.
\bjtitle{Phys. Rev. Lett.}
\bvolume{93},
\bfpage{211302}
(\byear{2004})
{\href{https://arxiv.org/abs/hep-th/0407065}{{arXiv:hep-th/0407065}}}.
\doiurl{10.1103/PhysRevLett.93.211302}
\end{barticle}
\endbibitem

\bibitem{gauntlett_concentric_2005}
\begin{barticle}
\bauthor{\bsnm{Gauntlett}, \binits{J.P.}},
\bauthor{\bsnm{Gutowski}, \binits{J.B.}}:
\batitle{{Concentric black rings}}.
\bjtitle{Phys. Rev. D}
\bvolume{71},
\bfpage{025013}
(\byear{2005})
{\href{https://arxiv.org/abs/hep-th/0408010}{{arXiv:hep-th/0408010}}}.
\doiurl{10.1103/PhysRevD.71.025013}
\end{barticle}
\endbibitem

\bibitem{horowitz_comments_2017}
\begin{barticle}
\bauthor{\bsnm{Horowitz}, \binits{G.T.}},
\bauthor{\bsnm{Kunduri}, \binits{H.K.}},
\bauthor{\bsnm{Lucietti}, \binits{J.}}:
\batitle{{Comments on Black Holes in Bubbling Spacetimes}}.
\bjtitle{JHEP}
\bvolume{06},
\bfpage{048}
(\byear{2017})
{\href{https://arxiv.org/abs/1704.04071}{{arXiv:1704.04071}}}
{[hep-th]}.
\doiurl{10.1007/JHEP06(2017)048}
\end{barticle}
\endbibitem

\bibitem{breunholder_supersymmetric_2019}
\begin{barticle}
\bauthor{\bsnm{Breunh\"older}, \binits{V.}},
\bauthor{\bsnm{Lucietti}, \binits{J.}}:
\batitle{{Supersymmetric black hole non-uniqueness in five dimensions}}.
\bjtitle{JHEP}
\bvolume{03},
\bfpage{105}
(\byear{2019})
{\href{https://arxiv.org/abs/1812.07329}{{arXiv:1812.07329}}}
{[hep-th]}.
\doiurl{10.1007/JHEP03(2019)105}
\end{barticle}
\endbibitem

\bibitem{strominger_microscopic_1996}
\begin{barticle}
\bauthor{\bsnm{Strominger}, \binits{A.}},
\bauthor{\bsnm{Vafa}, \binits{C.}}:
\batitle{{Microscopic origin of the Bekenstein-Hawking entropy}}.
\bjtitle{Phys. Lett. B}
\bvolume{379},
\bfpage{99}--\blpage{104}
(\byear{1996})
{\href{https://arxiv.org/abs/hep-th/9601029}{{arXiv:hep-th/9601029}}}.
\doiurl{10.1016/0370-2693(96)00345-0}
\end{barticle}
\endbibitem

\bibitem{kunduri_supersymmetric_2014}
\begin{barticle}
\bauthor{\bsnm{Kunduri}, \binits{H.K.}},
\bauthor{\bsnm{Lucietti}, \binits{J.}}:
\batitle{{Supersymmetric Black Holes with Lens-Space Topology}}.
\bjtitle{Phys. Rev. Lett.}
\bvolume{113}(\bissue{21}),
\bfpage{211101}
(\byear{2014})
{\href{https://arxiv.org/abs/1408.6083}{{arXiv:1408.6083}}}
{[hep-th]}.
\doiurl{10.1103/PhysRevLett.113.211101}
\end{barticle}
\endbibitem

\bibitem{tomizawa_supersymmetric_2016}
\begin{barticle}
\bauthor{\bsnm{Tomizawa}, \binits{S.}},
\bauthor{\bsnm{Nozawa}, \binits{M.}}:
\batitle{{Supersymmetric black lenses in five dimensions}}.
\bjtitle{Phys. Rev. D}
\bvolume{94}(\bissue{4}),
\bfpage{044037}
(\byear{2016})
{\href{https://arxiv.org/abs/1606.06643}{{arXiv:1606.06643}}}
{[hep-th]}.
\doiurl{10.1103/PhysRevD.94.044037}
\end{barticle}
\endbibitem

\bibitem{tomizawa_asymptotically_2017}
\begin{barticle}
\bauthor{\bsnm{Tomizawa}, \binits{S.}},
\bauthor{\bsnm{Okuda}, \binits{T.}}:
\batitle{{Asymptotically flat multiblack lenses}}.
\bjtitle{Phys. Rev. D}
\bvolume{95}(\bissue{6}),
\bfpage{064021}
(\byear{2017})
{\href{https://arxiv.org/abs/1701.06402}{{arXiv:1701.06402}}}
{[hep-th]}.
\doiurl{10.1103/PhysRevD.95.064021}
\end{barticle}
\endbibitem

\bibitem{breunholder_moduli_2019}
\begin{barticle}
\bauthor{\bsnm{Breunh\"older}, \binits{V.}},
\bauthor{\bsnm{Lucietti}, \binits{J.}}:
\batitle{{Moduli space of supersymmetric solitons and black holes in five dimensions}}.
\bjtitle{Commun. Math. Phys.}
\bvolume{365}(\bissue{2}),
\bfpage{471}--\blpage{513}
(\byear{2019})
{\href{https://arxiv.org/abs/1712.07092}{{arXiv:1712.07092}}}
{[hep-th]}.
\doiurl{10.1007/s00220-018-3215-8}
\end{barticle}
\endbibitem

\bibitem{tomizawa_kaluza-klein_2018}
\begin{barticle}
\bauthor{\bsnm{Tomizawa}, \binits{S.}}:
\batitle{{Kaluza-Klein black lens in five dimensions}}.
\bjtitle{Phys. Rev. D}
\bvolume{98}(\bissue{2}),
\bfpage{024012}
(\byear{2018})
{\href{https://arxiv.org/abs/1803.11470}{{arXiv:1803.11470}}}
{[hep-th]}.
\doiurl{10.1103/PhysRevD.98.024012}
\end{barticle}
\endbibitem

\bibitem{bena_bubbling_2006}
\begin{barticle}
\bauthor{\bsnm{Bena}, \binits{I.}},
\bauthor{\bsnm{Warner}, \binits{N.P.}}:
\batitle{{Bubbling supertubes and foaming black holes}}.
\bjtitle{Phys. Rev. D}
\bvolume{74},
\bfpage{066001}
(\byear{2006})
{\href{https://arxiv.org/abs/hep-th/0505166}{{arXiv:hep-th/0505166}}}.
\doiurl{10.1103/PhysRevD.74.066001}
\end{barticle}
\endbibitem

\bibitem{bena_black_2008}
\begin{barticle}
\bauthor{\bsnm{Bena}, \binits{I.}},
\bauthor{\bsnm{Warner}, \binits{N.P.}}:
\batitle{{Black holes, black rings and their microstates}}.
\bjtitle{Lect. Notes Phys.}
\bvolume{755},
\bfpage{1}--\blpage{92}
(\byear{2008})
{\href{https://arxiv.org/abs/hep-th/0701216}{{arXiv:hep-th/0701216}}}.
\doiurl{10.1007/978-3-540-79523-0_1}
\end{barticle}
\endbibitem

\bibitem{kunduri_black_2014}
\begin{barticle}
\bauthor{\bsnm{Kunduri}, \binits{H.K.}},
\bauthor{\bsnm{Lucietti}, \binits{J.}}:
\batitle{{Black hole non-uniqueness via spacetime topology in five dimensions}}.
\bjtitle{JHEP}
\bvolume{10},
\bfpage{082}
(\byear{2014})
{\href{https://arxiv.org/abs/1407.8002}{{arXiv:1407.8002}}}
{[hep-th]}.
\doiurl{10.1007/JHEP10(2014)082}
\end{barticle}
\endbibitem

\bibitem{gauntlett_all_2003}
\begin{barticle}
\bauthor{\bsnm{Gauntlett}, \binits{J.P.}},
\bauthor{\bsnm{Gutowski}, \binits{J.B.}},
\bauthor{\bsnm{Hull}, \binits{C.M.}},
\bauthor{\bsnm{Pakis}, \binits{S.}},
\bauthor{\bsnm{Reall}, \binits{H.S.}}:
\batitle{{All supersymmetric solutions of minimal supergravity in five- dimensions}}.
\bjtitle{Class. Quant. Grav.}
\bvolume{20},
\bfpage{4587}--\blpage{4634}
(\byear{2003})
{\href{https://arxiv.org/abs/hep-th/0209114}{{arXiv:hep-th/0209114}}}.
\doiurl{10.1088/0264-9381/20/21/005}
\end{barticle}
\endbibitem

\bibitem{bena_one_2005}
\begin{barticle}
\bauthor{\bsnm{Bena}, \binits{I.}},
\bauthor{\bsnm{Warner}, \binits{N.P.}}:
\batitle{{One ring to rule them all ... and in the darkness bind them?}}
\bjtitle{Adv. Theor. Math. Phys.}
\bvolume{9}(\bissue{5}),
\bfpage{667}--\blpage{701}
(\byear{2005})
{\href{https://arxiv.org/abs/hep-th/0408106}{{arXiv:hep-th/0408106}}}.
\doiurl{10.4310/ATMP.2005.v9.n5.a1}
\end{barticle}
\endbibitem

\bibitem{hawking_gravitational_1977}
\begin{barticle}
\bauthor{\bsnm{Hawking}, \binits{S.W.}}:
\batitle{{Gravitational Instantons}}.
\bjtitle{Phys. Lett. A}
\bvolume{60},
\bfpage{81}
(\byear{1977}).
\doiurl{10.1016/0375-9601(77)90386-3}
\end{barticle}
\endbibitem

\bibitem{gibbons_gravitational_1978}
\begin{barticle}
\bauthor{\bsnm{Gibbons}, \binits{G.W.}},
\bauthor{\bsnm{Hawking}, \binits{S.W.}}:
\batitle{{Gravitational Multi - Instantons}}.
\bjtitle{Phys. Lett. B}
\bvolume{78},
\bfpage{430}
(\byear{1978}).
\doiurl{10.1016/0370-2693(78)90478-1}
\end{barticle}
\endbibitem

\bibitem{gibbons_hidden_1988}
\begin{barticle}
\bauthor{\bsnm{Gibbons}, \binits{G.W.}},
\bauthor{\bsnm{Ruback}, \binits{P.J.}}:
\batitle{{The Hidden Symmetries of Multicenter Metrics}}.
\bjtitle{Commun. Math. Phys.}
\bvolume{115},
\bfpage{267}
(\byear{1988}).
\doiurl{10.1007/BF01466773}
\end{barticle}
\endbibitem

\bibitem{elvang_supersymmetric_2005}
\begin{barticle}
\bauthor{\bsnm{Elvang}, \binits{H.}},
\bauthor{\bsnm{Emparan}, \binits{R.}},
\bauthor{\bsnm{Mateos}, \binits{D.}},
\bauthor{\bsnm{Reall}, \binits{H.S.}}:
\batitle{{Supersymmetric 4-D rotating black holes from 5-D black rings}}.
\bjtitle{JHEP}
\bvolume{08},
\bfpage{042}
(\byear{2005})
{\href{https://arxiv.org/abs/hep-th/0504125}{{arXiv:hep-th/0504125}}}.
\doiurl{10.1088/1126-6708/2005/08/042}
\end{barticle}
\endbibitem

\bibitem{gaiotto_5d_2006}
\begin{barticle}
\bauthor{\bsnm{Gaiotto}, \binits{D.}},
\bauthor{\bsnm{Strominger}, \binits{A.}},
\bauthor{\bsnm{Yin}, \binits{X.}}:
\batitle{{5D black rings and 4D black holes}}.
\bjtitle{JHEP}
\bvolume{02},
\bfpage{023}
(\byear{2006})
{\href{https://arxiv.org/abs/hep-th/0504126}{{arXiv:hep-th/0504126}}}.
\doiurl{10.1088/1126-6708/2006/02/023}
\end{barticle}
\endbibitem

\bibitem{bena_black_2005}
\begin{barticle}
\bauthor{\bsnm{Bena}, \binits{I.}},
\bauthor{\bsnm{Kraus}, \binits{P.}},
\bauthor{\bsnm{Warner}, \binits{N.P.}}:
\batitle{{Black rings in Taub-NUT}}.
\bjtitle{Phys. Rev. D}
\bvolume{72},
\bfpage{084019}
(\byear{2005})
{\href{https://arxiv.org/abs/hep-th/0504142}{{arXiv:hep-th/0504142}}}.
\doiurl{10.1103/PhysRevD.72.084019}
\end{barticle}
\endbibitem

\bibitem{ishihara_kaluza-klein_2006-1}
\begin{barticle}
\bauthor{\bsnm{Ishihara}, \binits{H.}},
\bauthor{\bsnm{Matsuno}, \binits{K.}}:
\batitle{{Kaluza-Klein black holes with squashed horizons}}.
\bjtitle{Prog. Theor. Phys.}
\bvolume{116},
\bfpage{417}--\blpage{422}
(\byear{2006})
{\href{https://arxiv.org/abs/hep-th/0510094}{{arXiv:hep-th/0510094}}}.
\doiurl{10.1143/PTP.116.417}
\end{barticle}
\endbibitem

\bibitem{ishihara_kaluza-klein_2006}
\begin{barticle}
\bauthor{\bsnm{Ishihara}, \binits{H.}},
\bauthor{\bsnm{Kimura}, \binits{M.}},
\bauthor{\bsnm{Matsuno}, \binits{K.}},
\bauthor{\bsnm{Tomizawa}, \binits{S.}}:
\batitle{{Kaluza-Klein Multi-Black Holes in Five-Dimensional Einstein-Maxwell Theory}}.
\bjtitle{Class. Quant. Grav.}
\bvolume{23},
\bfpage{6919}--\blpage{6926}
(\byear{2006})
{\href{https://arxiv.org/abs/hep-th/0605030}{{arXiv:hep-th/0605030}}}.
\doiurl{10.1088/0264-9381/23/23/019}
\end{barticle}
\endbibitem

\bibitem{nakagawa_charged_2008}
\begin{barticle}
\bauthor{\bsnm{Nakagawa}, \binits{T.}},
\bauthor{\bsnm{Ishihara}, \binits{H.}},
\bauthor{\bsnm{Matsuno}, \binits{K.}},
\bauthor{\bsnm{Tomizawa}, \binits{S.}}:
\batitle{{Charged Rotating Kaluza-Klein Black Holes in Five Dimensions}}.
\bjtitle{Phys. Rev. D}
\bvolume{77},
\bfpage{044040}
(\byear{2008})
{\href{https://arxiv.org/abs/0801.0164}{{arXiv:0801.0164}}}
{[hep-th]}.
\doiurl{10.1103/PhysRevD.77.044040}
\end{barticle}
\endbibitem

\bibitem{matsuno_rotating_2008}
\begin{barticle}
\bauthor{\bsnm{Matsuno}, \binits{K.}},
\bauthor{\bsnm{Ishihara}, \binits{H.}},
\bauthor{\bsnm{Nakagawa}, \binits{T.}},
\bauthor{\bsnm{Tomizawa}, \binits{S.}}:
\batitle{{Rotating Kaluza-Klein Multi-Black Holes with Godel Parameter}}.
\bjtitle{Phys. Rev. D}
\bvolume{78},
\bfpage{064016}
(\byear{2008})
{\href{https://arxiv.org/abs/0806.3316}{{arXiv:0806.3316}}}
{[hep-th]}.
\doiurl{10.1103/PhysRevD.78.064016}
\end{barticle}
\endbibitem

\bibitem{tomizawa_charged_2008}
\begin{barticle}
\bauthor{\bsnm{Tomizawa}, \binits{S.}},
\bauthor{\bsnm{Ishibashi}, \binits{A.}}:
\batitle{{Charged Black Holes in a Rotating Gross-Perry-Sorkin Monopole Background}}.
\bjtitle{Class. Quant. Grav.}
\bvolume{25},
\bfpage{245007}
(\byear{2008})
{\href{https://arxiv.org/abs/0807.1564}{{arXiv:0807.1564}}}
{[hep-th]}.
\doiurl{10.1088/0264-9381/25/24/245007}
\end{barticle}
\endbibitem

\bibitem{tomizawa_squashed_2009}
\begin{barticle}
\bauthor{\bsnm{Tomizawa}, \binits{S.}},
\bauthor{\bsnm{Ishihara}, \binits{H.}},
\bauthor{\bsnm{Matsuno}, \binits{K.}},
\bauthor{\bsnm{Nakagawa}, \binits{T.}}:
\batitle{{Squashed Kerr-Godel Black Holes: Kaluza-Klein Black Holes with Rotations of Black Hole and Universe}}.
\bjtitle{Prog. Theor. Phys.}
\bvolume{121},
\bfpage{823}--\blpage{841}
(\byear{2009})
{\href{https://arxiv.org/abs/0803.3873}{{arXiv:0803.3873}}}
{[hep-th]}.
\doiurl{10.1143/PTP.121.823}
\end{barticle}
\endbibitem

\bibitem{tomizawa_compactified_2010}
\begin{botherref}
\oauthor{\bsnm{Tomizawa}, \binits{S.}}:
{Compactified black holes in five-dimensional $U(1)^3$ ungauged supergravity}
(2010)
{\href{https://arxiv.org/abs/1009.3568}{{arXiv:1009.3568}}}
{[hep-th]}
\end{botherref}
\endbibitem

\bibitem{tomizawa_exact_2011}
\begin{barticle}
\bauthor{\bsnm{Tomizawa}, \binits{S.}},
\bauthor{\bsnm{Ishihara}, \binits{H.}}:
\batitle{{Exact solutions of higher dimensional black holes}}.
\bjtitle{Prog. Theor. Phys. Suppl.}
\bvolume{189},
\bfpage{7}--\blpage{51}
(\byear{2011})
{\href{https://arxiv.org/abs/1104.1468}{{arXiv:1104.1468}}}
{[hep-th]}.
\doiurl{10.1143/PTPS.189.7}
\end{barticle}
\endbibitem

\bibitem{tomizawa_general_2013}
\begin{barticle}
\bauthor{\bsnm{Tomizawa}, \binits{S.}},
\bauthor{\bsnm{Mizoguchi}, \binits{S.}}:
\batitle{{General Kaluza-Klein black holes with all six independent charges in five-dimensional minimal supergravity}}.
\bjtitle{Phys. Rev. D}
\bvolume{87}(\bissue{2}),
\bfpage{024027}
(\byear{2013})
{\href{https://arxiv.org/abs/1210.6723}{{arXiv:1210.6723}}}
{[hep-th]}.
\doiurl{10.1103/PhysRevD.87.024027}
\end{barticle}
\endbibitem

\bibitem{tomizawa_uniqueness_2010}
\begin{barticle}
\bauthor{\bsnm{Tomizawa}, \binits{S.}}:
\batitle{{Uniqueness theorems for Kaluza-Klein black holes in five-dimensional minimal supergravity}}.
\bjtitle{Phys. Rev. D}
\bvolume{82},
\bfpage{104047}
(\byear{2010})
{\href{https://arxiv.org/abs/1007.1183}{{arXiv:1007.1183}}}
{[hep-th]}.
\doiurl{10.1103/PhysRevD.82.104047}
\end{barticle}
\endbibitem

\bibitem{gaiotto_new_2006}
\begin{barticle}
\bauthor{\bsnm{Gaiotto}, \binits{D.}},
\bauthor{\bsnm{Strominger}, \binits{A.}},
\bauthor{\bsnm{Yin}, \binits{X.}}:
\batitle{{New connections between 4-D and 5-D black holes}}.
\bjtitle{JHEP}
\bvolume{02},
\bfpage{024}
(\byear{2006})
{\href{https://arxiv.org/abs/hep-th/0503217}{{arXiv:hep-th/0503217}}}.
\doiurl{10.1088/1126-6708/2006/02/024}
\end{barticle}
\endbibitem

\bibitem{behrndt_exploring_2006}
\begin{barticle}
\bauthor{\bsnm{Behrndt}, \binits{K.}},
\bauthor{\bsnm{Lopes~Cardoso}, \binits{G.}},
\bauthor{\bsnm{Mahapatra}, \binits{S.}}:
\batitle{{Exploring the relation between 4-D and 5-D BPS solutions}}.
\bjtitle{Nucl. Phys. B}
\bvolume{732},
\bfpage{200}--\blpage{223}
(\byear{2006})
{\href{https://arxiv.org/abs/hep-th/0506251}{{arXiv:hep-th/0506251}}}.
\doiurl{10.1016/j.nuclphysb.2005.10.026}
\end{barticle}
\endbibitem

\bibitem{denef_supergravity_2000}
\begin{barticle}
\bauthor{\bsnm{Denef}, \binits{F.}}:
\batitle{{Supergravity flows and D-brane stability}}.
\bjtitle{JHEP}
\bvolume{08},
\bfpage{050}
(\byear{2000})
{\href{https://arxiv.org/abs/hep-th/0005049}{{arXiv:hep-th/0005049}}}.
\doiurl{10.1088/1126-6708/2000/08/050}
\end{barticle}
\endbibitem

\bibitem{denef_split_2001}
\begin{barticle}
\bauthor{\bsnm{Denef}, \binits{F.}},
\bauthor{\bsnm{Greene}, \binits{B.R.}},
\bauthor{\bsnm{Raugas}, \binits{M.}}:
\batitle{{Split attractor flows and the spectrum of BPS D-branes on the quintic}}.
\bjtitle{JHEP}
\bvolume{05},
\bfpage{012}
(\byear{2001})
{\href{https://arxiv.org/abs/hep-th/0101135}{{arXiv:hep-th/0101135}}}.
\doiurl{10.1088/1126-6708/2001/05/012}
\end{barticle}
\endbibitem

\bibitem{bates_exact_2011}
\begin{barticle}
\bauthor{\bsnm{Bates}, \binits{B.}},
\bauthor{\bsnm{Denef}, \binits{F.}}:
\batitle{{Exact solutions for supersymmetric stationary black hole composites}}.
\bjtitle{JHEP}
\bvolume{11},
\bfpage{127}
(\byear{2011})
{\href{https://arxiv.org/abs/hep-th/0304094}{{arXiv:hep-th/0304094}}}.
\doiurl{10.1007/JHEP11(2011)127}
\end{barticle}
\endbibitem

\bibitem{gibbons_higher-dimensional_1995}
\begin{barticle}
\bauthor{\bsnm{Gibbons}, \binits{G.W.}},
\bauthor{\bsnm{Horowitz}, \binits{G.T.}},
\bauthor{\bsnm{Townsend}, \binits{P.K.}}:
\batitle{{Higher dimensional resolution of dilatonic black hole singularities}}.
\bjtitle{Class. Quant. Grav.}
\bvolume{12},
\bfpage{297}--\blpage{318}
(\byear{1995})
{\href{https://arxiv.org/abs/hep-th/9410073}{{arXiv:hep-th/9410073}}}.
\doiurl{10.1088/0264-9381/12/2/004}
\end{barticle}
\endbibitem

\bibitem{kayani_symmetry_2018}
\begin{barticle}
\bauthor{\bsnm{Kayani}, \binits{U.}}:
\batitle{{Symmetry enhancement of extremal horizons in D = 5 supergravity}}.
\bjtitle{Class. Quant. Grav.}
\bvolume{35}(\bissue{12}),
\bfpage{125013}
(\byear{2018})
{\href{https://arxiv.org/abs/1801.08833}{{arXiv:1801.08833}}}
{[hep-th]}.
\doiurl{10.1088/1361-6382/aac30c}
\end{barticle}
\endbibitem

\bibitem{gutowski_all_2003}
\begin{barticle}
\bauthor{\bsnm{Gutowski}, \binits{J.B.}},
\bauthor{\bsnm{Martelli}, \binits{D.}},
\bauthor{\bsnm{Reall}, \binits{H.S.}}:
\batitle{{All Supersymmetric solutions of minimal supergravity in six- dimensions}}.
\bjtitle{Class. Quant. Grav.}
\bvolume{20},
\bfpage{5049}--\blpage{5078}
(\byear{2003})
{\href{https://arxiv.org/abs/hep-th/0306235}{{arXiv:hep-th/0306235}}}.
\doiurl{10.1088/0264-9381/20/23/008}
\end{barticle}
\endbibitem

\bibitem{chrusciel_killing_2006}
\begin{barticle}
\bauthor{\bsnm{Chrusciel}, \binits{P.T.}},
\bauthor{\bsnm{Maerten}, \binits{D.}}:
\batitle{{Killing vectors in asymptotically flat space-times. II. Asymptotically translational Killing vectors and the rigid positive energy theorem in higher dimensions}}.
\bjtitle{J. Math. Phys.}
\bvolume{47},
\bfpage{022502}
(\byear{2006})
{\href{https://arxiv.org/abs/gr-qc/0512042}{{arXiv:gr-qc/0512042}}}.
\doiurl{10.1063/1.2167809}
\end{barticle}
\endbibitem

\bibitem{beig_killing_1996}
\begin{barticle}
\bauthor{\bsnm{Beig}, \binits{R.}},
\bauthor{\bsnm{Chrusciel}, \binits{P.T.}}:
\batitle{{Killing vectors in asymptotically flat space-times: 1. Asymptotically translational Killing vectors and the rigid positive energy theorem}}.
\bjtitle{J. Math. Phys.}
\bvolume{37},
\bfpage{1939}--\blpage{1961}
(\byear{1996})
{\href{https://arxiv.org/abs/gr-qc/9510015}{{arXiv:gr-qc/9510015}}}.
\doiurl{10.1063/1.531497}
\end{barticle}
\endbibitem

\bibitem{chrusciel_invariant_1988}
\begin{barticle}
\bauthor{\bsnm{Chrusciel}, \binits{P.T.}}:
\batitle{{On the Invariant Mass Conjecture in General Relativity}}.
\bjtitle{Commun. Math. Phys.}
\bvolume{120},
\bfpage{233}
(\byear{1988}).
\doiurl{10.1007/BF01217963}
\end{barticle}
\endbibitem

\bibitem{fintushel_circle_1977}
\begin{barticle}
\bauthor{\bsnm{Fintushel}, \binits{R.}}:
\batitle{Circle actions on simply connected 4-manifolds}.
\bjtitle{Transactions of the American Mathematical Society}
\bvolume{230},
\bfpage{147}--\blpage{171}
(\byear{1977}).
\doiurl{10.1090/S0002-9947-1977-0458456-6}
\end{barticle}
\endbibitem

\bibitem{fintushel_classification_1978}
\begin{barticle}
\bauthor{\bsnm{Fintushel}, \binits{R.}}:
\batitle{Classification of circle actions on 4-manifolds}.
\bjtitle{Transactions of the American Mathematical Society}
\bvolume{242},
\bfpage{377}--\blpage{390}
(\byear{1978}).
\doiurl{10.1090/S0002-9947-1978-0496815-7}
\end{barticle}
\endbibitem

\bibitem{gibbons_classification_1979}
\begin{barticle}
\bauthor{\bsnm{Gibbons}, \binits{G.W.}},
\bauthor{\bsnm{Hawking}, \binits{S.W.}}:
\batitle{{Classification of Gravitational Instanton Symmetries}}.
\bjtitle{Commun. Math. Phys.}
\bvolume{66},
\bfpage{291}--\blpage{310}
(\byear{1979}).
\doiurl{10.1007/BF01197189}
\end{barticle}
\endbibitem

\bibitem{lee_introduction_2018}
\begin{bbook}
\bauthor{\bsnm{Lee}, \binits{J.M.}}:
\bbtitle{Introduction to {Riemannian} {Manifolds}}.
\bsertitle{Graduate {Texts} in {Mathematics}},
vol. \bseriesno{176}.
\bpublisher{Springer},
\blocation{Cham}
(\byear{2018}).
\doiurl{10.1007/978-3-319-91755-9}
\end{bbook}
\endbibitem

\bibitem{axler_harmonic_2001}
\begin{bbook}
\bauthor{\bsnm{Axler}, \binits{S.}},
\bauthor{\bsnm{Bourdon}, \binits{P.}},
\bauthor{\bsnm{Ramey}, \binits{W.}}:
\bbtitle{Harmonic {Function} {Theory}}.
\bsertitle{Graduate {Texts} in {Mathematics}},
vol. \bseriesno{137}.
\bpublisher{Springer},
\blocation{New York, NY}
(\byear{2001}).
\doiurl{10.1007/978-1-4757-8137-3}
\end{bbook}
\endbibitem

\bibitem{horn_matrix_1985}
\begin{bbook}
\bauthor{\bsnm{Horn}, \binits{R.A.}},
\bauthor{\bsnm{Johnson}, \binits{C.R.}}:
\bbtitle{Matrix {Analysis}}.
\bpublisher{Cambridge University Press},
\blocation{Cambridge}
(\byear{1985}).
\doiurl{10.1017/CBO9780511810817}
\end{bbook}
\endbibitem

\bibitem{dunajski_einstein-maxwell_2007}
\begin{barticle}
\bauthor{\bsnm{Dunajski}, \binits{M.}},
\bauthor{\bsnm{Hartnoll}, \binits{S.A.}}:
\batitle{{Einstein-Maxwell gravitational instantons and five dimensional solitonic strings}}.
\bjtitle{Class. Quant. Grav.}
\bvolume{24},
\bfpage{1841}--\blpage{1862}
(\byear{2007})
{\href{https://arxiv.org/abs/hep-th/0610261}{{arXiv:hep-th/0610261}}}.
\doiurl{10.1088/0264-9381/24/7/010}
\end{barticle}
\endbibitem

\bibitem{avila_one_2018}
\begin{barticle}
\bauthor{\bsnm{Avila}, \binits{J.}},
\bauthor{\bsnm{Ramirez}, \binits{P.F.}},
\bauthor{\bsnm{Ruiperez}, \binits{A.}}:
\batitle{{One Thousand and One Bubbles}}.
\bjtitle{JHEP}
\bvolume{01},
\bfpage{041}
(\byear{2018})
{\href{https://arxiv.org/abs/1709.03985}{{arXiv:1709.03985}}}
{[hep-th]}.
\doiurl{10.1007/JHEP01(2018)041}
\end{barticle}
\endbibitem

\bibitem{bertolini_n8_1999}
\begin{barticle}
\bauthor{\bsnm{Bertolini}, \binits{M.}},
\bauthor{\bsnm{Trigiante}, \binits{M.}},
\bauthor{\bsnm{Fre}, \binits{P.}}:
\batitle{{N=8 BPS black holes preserving 1/8 supersymmetry}}.
\bjtitle{Class. Quant. Grav.}
\bvolume{16},
\bfpage{1519}--\blpage{1543}
(\byear{1999})
{\href{https://arxiv.org/abs/hep-th/9811251}{{arXiv:hep-th/9811251}}}.
\doiurl{10.1088/0264-9381/16/5/305}
\end{barticle}
\endbibitem

\bibitem{figueroa-ofarrill_supersymmetric_2004}
\begin{barticle}
\bauthor{\bsnm{Figueroa-O'Farrill}, \binits{J.M.}},
\bauthor{\bsnm{Simon}, \binits{J.}}:
\batitle{{Supersymmetric Kaluza-Klein reductions of AdS backgrounds}}.
\bjtitle{Adv. Theor. Math. Phys.}
\bvolume{8}(\bissue{2}),
\bfpage{217}--\blpage{317}
(\byear{2004})
{\href{https://arxiv.org/abs/hep-th/0401206}{{arXiv:hep-th/0401206}}}.
\doiurl{10.4310/ATMP.2004.v8.n2.a1}
\end{barticle}
\endbibitem

\bibitem{figueroa-ofarrill_kaluza-klein_2022}
\begin{barticle}
\bauthor{\bsnm{Figueroa-O'Farrill}, \binits{J.}},
\bauthor{\bsnm{Franchetti}, \binits{G.}}:
\batitle{{Kaluza\textendash{}Klein reductions of maximally supersymmetric five-dimensional Lorentzian spacetimes}}.
\bjtitle{Class. Quant. Grav.}
\bvolume{39}(\bissue{21}),
\bfpage{215009}
(\byear{2022})
{\href{https://arxiv.org/abs/2207.07430}{{arXiv:2207.07430}}}
{[hep-th]}.
\doiurl{10.1088/1361-6382/ac9108}
\end{barticle}
\endbibitem

\bibitem{tod_more_1995}
\begin{barticle}
\bauthor{\bsnm{Tod}, \binits{K.P.}}:
\batitle{{More on supercovariantly constant spinors}}.
\bjtitle{Class. Quant. Grav.}
\bvolume{12},
\bfpage{1801}--\blpage{1820}
(\byear{1995}).
\doiurl{10.1088/0264-9381/12/7/020}
\end{barticle}
\endbibitem

\bibitem{chrusciel_israel-wilson-perjes_2006}
\begin{barticle}
\bauthor{\bsnm{Chrusciel}, \binits{P.T.}},
\bauthor{\bsnm{Reall}, \binits{H.S.}},
\bauthor{\bsnm{Tod}, \binits{P.}}:
\batitle{{On Israel-Wilson-Perjes black holes}}.
\bjtitle{Class. Quant. Grav.}
\bvolume{23},
\bfpage{2519}--\blpage{2540}
(\byear{2006})
{\href{https://arxiv.org/abs/gr-qc/0512116}{{arXiv:gr-qc/0512116}}}.
\doiurl{10.1088/0264-9381/23/7/018}
\end{barticle}
\endbibitem

\bibitem{fatibene_geometric_1996}
\begin{bchapter}
\bauthor{\bsnm{Fatibene}, \binits{L.}},
\bauthor{\bsnm{Ferraris}, \binits{M.}},
\bauthor{\bsnm{Francaviglia}, \binits{M.}},
\bauthor{\bsnm{Godina}, \binits{M.}}:
\bctitle{{A Geometric definition of Lie derivative for spinor fields}}.
In: \bbtitle{{6th International Conference on Differential Geometry and Applications}}
(\byear{1996})
\end{bchapter}
\endbibitem

\end{thebibliography}
\end{document}